\documentclass[onecolumn,aps,nofootinbib,superscriptaddress,tightenlines,notitlepage,11pt,pra]{revtex4-1}
\usepackage{newfloat}
\usepackage{lipsum}  
\usepackage{latexsym}
\usepackage{dcolumn}
\usepackage{graphicx}
\usepackage{amssymb}
\usepackage{amsmath}
\usepackage{amsfonts}
\usepackage{mathrsfs}
\usepackage{float}
\usepackage{color}
\usepackage{algorithm}
\usepackage{algpseudocode}
\usepackage{xcolor}
\usepackage{courier}
\usepackage{enumitem}
\usepackage{bbold}
\usepackage{mathtools}
\usepackage[breaklinks,colorlinks = true,linkcolor = blue,urlcolor  = blue,citecolor = red,anchorcolor = blue,]{hyperref}
\usepackage{appendix}

% Environment commands

\newcommand{\bra}[1]{\langle #1|}

\newcommand{\ket}[1]{|#1\rangle}

\newcommand{\ketbra}[2]{|#1\rangle\!\langle #2|}
\newcommand{\vertiii}[1]{{\left\vert\kern-0.25ex\left\vert\kern-0.25ex\left\vert #1 \right\vert\kern-0.25ex\right\vert\kern-0.25ex\right\vert}}
\newcommand{\Tr}[1]{\text{Tr}\left(#1\right)}
\newcommand{\qedsymbol}{\rule{0.5em}{0.5em}}

\def\cB{\mathcal B}

\def\cD{\mathcal D}
\def\cE{\mathcal E}
\def\cF{\mathcal F}
\def\cG{\mathcal G}

\def\cI{\mathcal I}

\def\cL{\mathcal L}
\def\cM{\mathcal M}
\def\cN{\mathcal N}
\def\cP{\mathcal P}

\def\cS{\mathcal S}
\def\cT{\mathcal T}

\def\cX{\mathcal X}

\DeclareMathOperator*{\argmax}{arg\,max}

\newtheorem{theorem}{Theorem}[section]
\newtheorem{corollary}[theorem]{Corollary}

\newtheorem{prop}[theorem]{Proposition}

\newtheorem{lemma}[theorem]{Lemma}

\newtheorem{remark}[theorem]{Remark}
\newenvironment{proof}[1][Proof]{\noindent\textbf{#1.} }{\ \rule{0.5em}{0.5em}}
\newtheorem{definition}[theorem]{Definition}

\allowdisplaybreaks

%%%%%%%%%%%%%%%%%%%%%%%%%%%%%%%%%%%%%%%%%%%%%%%%%%%%%%%%%%%%%%%%%%%%%%%%%%%%%%%%%

\begin{document}

\title{Moderate deviation expansion for fully quantum tasks}

\author{Navneeth Ramakrishnan}
\altaffiliation{Part of this work appeared in the Information Theory Workshop 2021 \cite{9457638}.}
\affiliation{Department of Computing, Imperial College London, London, UK}
\affiliation{Center for Quantum Technologies, National University of Singapore, Singapore}

\author{Marco Tomamichel}
\affiliation{Center for Quantum Technologies, National University of Singapore, Singapore}
\affiliation{Department of Electrical and Computer Engineering, National University of Singapore, Singapore}

\author{Mario Berta}
\altaffiliation{This work was completed prior to MB joining the AWS Center for Quantum Computing.}
\affiliation{Department of Computing, Imperial College London, London, UK}
\affiliation{Institute for Quantum Information and Matter, California Institute of Technology, Pasadena, USA}
\affiliation{AWS Center for Quantum Computing, Pasadena, USA}

%%%%%%%%%%%%%%%%%%%%%%%%%%%%%%%%%%%%%%%%%%%%%%%%%%%%%%%%%%%%%%%%%%%%%%%%%%%%%%%%%

\begin{abstract}
The moderate deviation regime is concerned with the finite block length trade-off between communication cost and error for information processing tasks in the asymptotic regime, where the communication cost approaches a capacity-like quantity and the error vanishes at the same time. We find exact characterisations of these trade-offs for a variety of fully quantum communication tasks, including quantum source coding, quantum state splitting, entanglement-assisted quantum channel coding, and entanglement-assisted quantum channel simulation. The main technical tool we derive is a tight relation between the partially smoothed max-information and the hypothesis testing relative entropy. This allows us to obtain the expansion of the partially smoothed max-information for i.i.d.\ states in the moderate deviation regime.
\end{abstract}

\maketitle
%\tableofcontents

%%%%%%%%%%%%%%%%%%%%%%%%%%%%%%%%%%%%%%%%%%%%%%%%%%%%%%%%%%%%%%%%%%%%%%%%%%%%%%%%%

\section{Overview}

Fundamental tasks in information theory are often characterized in the asymptotic setting: One considers i.i.d.\ copies of a resource such as a channel or a state, takes the limit of the number of copies to infinity, and then enforces that the error vanishes in the asymptotic limit to obtain an optimal coding rate~\cite{shannon1948mathematical}. While such results are fundamental theoretical bounds, practical settings involve only finitely many copies of the resource\,---\,the so-called finite block length setting. It is therefore of interest to also understand the quantitative trade-off between the error and the rate for a given block length.

In the moderate deviation setting one obtains a rate and an error in terms of the block length $n$, such that the rate approaches a capacity-like quantity and the error vanishes as the code length $n$ increases. In contrast, second-order (small deviation) analyses~\cite{strassen1962asymptotische, li2014second, tomamichel2013hierarchy, hayashi2009information, polyanskiy2010channel} result in a rate that approaches the capacity but the error is only bounded by some constant. As for error exponent (large deviation) analyses~\cite{csiszar1971error, han1989strong, arimoto1973converse, dueck1979reliability, mosonyi2017strong}, these result in an error that vanishes exponentially with $n$ but the difference between the rate and the capacity is a constant term. The moderate deviation regime is natural to consider since it achieves both these goals simultaneously and several classical information-theoretic tasks have been analysed in this setting. Classical channel coding has been studied in~\cite{altuug2014moderate, polyanskiy2010channel}, while asymmetric binary hypothesis testing has been studied in~\cite{sason2012moderate}. Moderate deviation analyses for several classical-quantum coding tasks and quantum hypothesis testing are found in~\cite{rouze2017finite, cheng2017moderate, chubb2017moderate, cheng2020non}. Other settings such as majorization-based resource interconversion have also been explored \cite{chubb2019moderate}. Recently, moderate deviation analyses have been done for private communication over wiretap channels and entropy accumulation~\cite{shen2022strong}, quantum soft covering~\cite{shen2022optimal, cheng2022error}, and privacy amplification against quantum side information~\cite{shen2022strong, shen2022optimal}.

A moderate sequence is defined as a sequence of positive numbers $\{a_n\}$ such that $\lim_{n\rightarrow\infty}a_n = 0$ and $\lim_{n\rightarrow\infty}a_n\sqrt{n} = \infty$~\cite{chubb2017moderate}. We furthermore introduce the strict moderate deviation regime, where we impose additionally that the moderate sequence satisfies $\lim_{n\rightarrow\infty}(na_n^2)^{-1}\log n = 0$. An example of a strictly moderate sequence is $\{n^{-\alpha}\}$ for $n\in\mathbb{N}$ and $\alpha\in\left(0,\frac{1}{2}\right)$ and in fact, this definition has been used in classical works to define a moderate sequence~\cite{hayashi2016uniform, watanabe2017finite, hayashi2020finite}. Within the moderate deviation regime, one can consider the low-error case where the error in the protocol is chosen to be $\varepsilon_n = e^{-a_n^2n}$ and the high-error case where the error is chosen to be $\varepsilon_n = 1 - e^{-a_n^2n}$.\footnote{We resolve either the high-error or the low-error for each quantum information-theoretic task we consider.}

As an illustrative example of our work, consider the task of quantum source coding: Alice holds a quantum state $\rho_A^{\otimes n}$ and the goal is to send this state to Bob using a noiseless channel while preserving correlations with a purifying reference quantum register. A fundamental result in quantum information theory~\cite{schumacher1995quantum} states that the asymptotic rate at which this can be achieved is given by the von Neumann entropy $S(A)_\rho$. We consider the setting for finite $n\in\mathbb N$ and aim to achieve quantum source coding such that the output of the protocol $\tilde{\rho}_{B^n}$ is $\varepsilon_n$-close in purified distance to the ideal output state $\rho_B^{\otimes n}=\mathcal{I}_{A^n\to B^n}\left(\rho_A^{\otimes n}\right)$. We show in Theorem~\ref{thm:source_coding_ea_low_error} that the minimal quantum communication cost $q^\star_{\varepsilon_n}(\rho_A^{\otimes n})$ of $\varepsilon_n$-error quantum source coding of $\rho_{A}^{\otimes n}$ with $\varepsilon_n = e^{-na^2_n}$ for a moderate sequence $\{a_n\}$ is given by 
\begin{align}
\frac{1}{n}q^\star_{\varepsilon_n}(\rho_{A}^{\otimes n}) =S(A)_\rho +  2a_n\sqrt{V(A)_\rho} + o(a_n)\,,
\end{align}
where $V(A)_\rho = \text{Tr}\left(\rho(\log\rho)^2\right) - S(A)_\rho^2$ denotes the quantum varentropy and $o(a_n)=f(n)$ for functions with $\lim_{n\rightarrow\infty}f(n)a_n^{-1} = 0$.

We show similar moderate deviations analyses for quantum state splitting~\cite{Abeyesinghe09, anshu2017quantum} (Theorem~\ref{thm:state_splitting_mod_devation}), entanglement-assisted quantum channel coding~\cite{bennett2002entanglement} (Theorem~\ref{thm:channel_coding_mod_dev}), and quantum channel simulation~\cite{bennett2002entanglement, bennett2009quantum, berta2011quantum} (Theorem~\ref{thm:channel_sim_mod_dev_final_result}). For quantum source coding, both second-order~\cite{abdelhadi2020second} and error exponent~\cite{hayashi2002exponents, hayashi2009information} analyses already exist and it is possible to derive our moderate deviation results from these results. Similarly, second-order analysis~\cite{datta2016second} for entanglement-assisted channel coding may also be used to derive our moderate deviation results.

Some of our technical discussion may be of independent interest: First, we use a fidelity-based channel distinguishability measure\,---\,the channel purified distance (Definition \ref{def:purified_diamond_distance_original})\,---\,which emerges as a natural fully quantum measure to analyse channel coding and channel simulation problems (versus trace-distance-based measures such as the diamond distance). Second, we determine when the tight triangle inequality for the purified distance \cite[Proposition 3.16]{tomamichel2015quantum} holds (Lemma~\ref{lem:purified_dist_triangle_ineq}). Third, from this refined triangle inequality, we derive a tight entropy inequality between the min-relative entropy and the hypothesis testing relative entropy (Lemma~\ref{lem:dh_dmin_bound}). At the same time, we then find the expansion of the partially smoothed max-information for i.i.d.\ states in the moderate deviation regime.

Our paper is organized as follows. We introduce the relevant notation in Section~\ref{sec:notation}, followed by our analysis of quantum state splitting in Section~\ref{sec:state split}\,---\,where we also introduce the necessary mathematical moderate deviation tools. Quantum source coding is discussed in Section~\ref{sec:extensions}, quantum channel simulation in Section~\ref{sec:channel_sim_mod_dev}, and quantum channel coding in Section~\ref{sec:channel_coding}. We conclude and discuss some open problems in Section~\ref{sec:conclusion}. Various technical proofs are deferred to Appendices \ref{app:state_splitting_proofs}--\ref{app:meta_converse_comparison}. 

\textit{Note added:} During finalization of our work we became aware of the related concurrent work~\cite{li2021reliable} by Li and Yao.

%%%%%%%%%%%%%%%%%%%%%%%%%%%%%%%%%%%%%%%%%%%%%%%%%%%%%%%%%%%%%%%%%%%%%%%%%%%%%%%%%%%%%%%%%%%%%%%%%%%%%%%%%%%%%%%%

\section{Notation}\label{sec:notation}

\subsection{Mathematics}

The exponential and logarithm functions, denoted by $\exp$ and $\log$ respectively, are taken with respect to base 2 unless otherwise specified. The set of real numbers is denoted by $\mathbb{R}$ and the set of natural numbers by $\mathbb{N}$. We use the big-O and little-o notation which are defined as follows: For functions $f(n), g(n)$ on $\mathbb{N}$, we denote $f(n) = O(g(n))$ if there exist some positive constants $c$ and $n_0$ such that for all $n\geq n_0$, it holds that $|f(n)|\leq c |g(n)|$. We denote $f(n) = o(g(n))$ if $\lim_{n\rightarrow\infty}\frac{f(n)}{g(n)} = 0$.

%%%%%%%%%%%%%%%%%%%%%%%%%%%%%%%%%%%%%%%%%%%%%%%%%%%%%%%%%%%%%%%%%%%%%%%%%%%%%%%%%%%%%%%%%%%%%%%%%%%%%%%%%%%%%%%%

\subsection{Quantum states}

We associate quantum registers with Hilbert spaces and assume them to be finite-dimensional. We label quantum registers and the associated Hilbert spaces with capital letters, e.g., $A$, $B$, and denote their dimension by $|A|$, $|B|$, and so on. The notation $A\cong B$ states that $A$ and $B$ are isomorphic. The tensor product of spaces $A$ and $B$ is denoted by $AB=A\otimes B$. The set of linear operators on $A$ is denoted by $\cL(A)$ and the set of positive semi-definite operators on $A$ is denoted by $\cP(A)$, where for elements $\rho_A\in\cP(A)$ we write $\rho_A\succeq 0$. The set of positive semi-definite operators with $\Tr{\rho_A}\leq 1$\,---\,called sub-normalized quantum states\,---\, is denoted by $\cS_{\leq}(A)$ and the set of positive semi-definite operators with unit trace\,---\,called normalized quantum states or quantum states for short\,---\,is denoted by $\cS(A)$. If the rank of $\rho_A\in\cS_\leq(A)$ is one, then the (sub-normalized) quantum state is called pure. An extension of $\rho_{A}\in\cS_{\leq}(A)$ is $\rho_{AB}\in\cS_{\leq}(AB)$ such that $\text{Tr}_B(\rho_{AB}) = \rho_A$, where $\text{Tr}_B$ denotes taking a trace over the quantum register $B$ (see~\cite[Definition 4.3.4]{wilde2011classical} for a definition of the partial trace). If an extension $\rho_{AB}$ of $\rho_A\in\cS_\leq$ is pure, then it is called a purification.

When the registers associated with a state are clear from context or do not need to be explicitly stated, we drop them for simplicity of notation. For $\rho, \sigma\succeq0$ with the support of $\rho$ contained in the support of $\sigma$, we write $\rho\ll \sigma$. Orthogonality for $\rho,\sigma\succeq0$ is defined as $\rho\sigma=\sigma\rho = 0$ and denoted as $\rho\perp\sigma$. The identity operator on a $d$-dimensional quantum register is denoted by $I_d$. The generalized inverse~\cite{penrose1955generalized} of $\rho\in\cS_{\leq}(A)$ is denoted by $\rho^{-1}$ and is the inverse on the support of the quantum state. 

%%%%%%%%%%%%%%%%%%%%%%%%%%%%%%%%%%%%%%%%%%%%%%%%%%%%%%%%%%%%%%%%%%%%%%%%%%%%%%%%%%%%%%%%%%%%%%%%%%%%%%%%%%%%%%%%

\subsection{Quantum operations}

A quantum channel $\cN_{A \rightarrow B}$ is a linear completely positive trace-preserving (CPTP) map taking $\cL(A)$ to $\cL(B)$. The identity quantum channel between isomorphic quantum registers $A$ and $B$ is denoted by $\cI_{A\rightarrow B}$ and we abbreviate $\cI_{A\rightarrow A}=\cI_A$. We denote composition of two quantum channels $\cN_{A\rightarrow B}$ and $\cM_{B\rightarrow C}$ by $\cM_{B\to C}\circ \cN_{A\to B}$.

A positive operator valued measure (POVM) is a finite set $\{M_i\}_{i=1}^n$ of $d$-dimensional positive definite matrices $M_i\succeq0$ such that $\sum_{i=1}^n M_i = I_d$ for some $d, n\in\mathbb{N}$.

%%%%%%%%%%%%%%%%%%%%%%%%%%%%%%%%%%%%%%%%%%%%%%%%%%%%%%%%%%%%%%%%%%%%%%%%%%%%%%%%%%%%%%%%%%%%%%%%%%%%%%%%%%%%%%%%

\subsection{Distance measures}

The trace norm of a linear operator $M$ is given by $\|M\|_1 = \mathrm{Tr}\left(\sqrt{M^\dagger M}\right)$, where $M^\dagger$ is the transpose conjugate of $M$. The purified distance between $\rho,\sigma \in \cS_{\leq}(A)$ is given by $P(\rho, \sigma)=\sqrt{1-\bar{F}(\rho, \sigma)^2}$, where $\bar{F}(\rho,\sigma)$ is the generalized fidelity defined as \cite{tomamichel2012framework}
\begin{align}
\bar{F}(\rho, \sigma)=\|\sqrt{\rho \oplus(1-\Tr{\rho})} \sqrt{\sigma \oplus(1-\Tr{\sigma})}\|_{1}\,.
\end{align}

Note that if either $\rho$ or $\sigma$ have unit trace, then the generalized fidelity is equal to the standard fidelity $F(\rho,\sigma) = \|\sqrt{\rho} \sqrt{\sigma}\|_{1}$. For $\rho,\sigma\in\cS_{\leq}(A)$ with $P(\rho,\sigma)\leq\varepsilon$ we write $\rho\approx_{\varepsilon}\sigma$. We define the $\varepsilon$-ball around $\rho\in\cS_{\leq}(A)$ by $\mathcal{B}^{\varepsilon}(\rho)=\left\{\bar{\rho} \in \mathcal{S}_{\leq}(A): \rho\approx_\varepsilon\bar{\rho} \leq \varepsilon\right\}$. For $\rho,\sigma\in\cS(A)$ the purified distance is bounded by the trace distance as \cite{fuchs1999cryptographic}
\begin{align}
\sqrt{2\|\rho - \sigma\|_1} \geq P(\rho,\sigma) \geq \|\rho - \sigma\|_1\,.
\end{align}

%%%%%%%%%%%%%%%%%%%%%%%%%%%%%%%%%%%%%%%%%%%%%%%%%%%%%%%%%%%%%%%%%%%%%%%%%%%%%%%%%%%%%%%%%%%%%%%%%%%%%%%%%%%%%%%%

\subsection{Entropic quantities}

For $\rho\in\cS(A)$, the von Neumann entropy is defined as
\begin{align}
S(A)_\rho = S(\rho)=- \Tr{\rho \log \rho}\,.
\end{align}
For $\rho\in\cS_{\leq}(A)$ and $\sigma\in\cP(A)$ the quantum relative entropy is defined as
\begin{align}
D(\rho \| \sigma)= \Tr{\rho \left(\log \rho -\log \sigma \right)}
\end{align}
when $\rho\ll\sigma$ and $\infty$ otherwise. The relative entropy variance is given by \cite{tomamichel2013hierarchy, li2014second}
\begin{align}
V(\rho\|\sigma) = \Tr{\rho(\log\rho - \log\sigma)^2} - D(\rho\|\sigma)^2
\end{align}
when $\rho\ll\sigma$ and $\infty$ otherwise. Following the classical definition \cite{kontoyiannis2013optimal}, we define the quantum varentropy of $\rho\in\cS(A)$ to be
\begin{align}
V(A)_\rho = \Tr{\rho(\log\rho)^2} - S(A)_\rho^2\,.
\end{align}
Furthermore, we define the channel based entropic quantities for quantum channels $\cN_{A\rightarrow B}$ as
\begin{align}\label{eq:channel-capacities}
C(\cN) = \max\limits_{\sigma_{AR}\in\cS(AR)} \frac{1}{2}I(B:R)_{(\cN\otimes \cI)(\sigma)}\quad\text{and}\quad V_{\max}(B:R)_{\cN} = \max\limits_{\sigma'\in\Pi(\cN)}V(B:R)_{(\cN\otimes \cI)(\sigma')}\, ,
\end{align}
where $R$ is any purifying quantum register of the input of the channel and
\begin{align}
\label{eq:capacity_achieving_input_set}
\Pi(\cN) = \argmax\limits_{\tau} I(B:R)_{(\cN\otimes \cI)(\tau)}
\end{align}
is called the set of capacity achieving inputs for $\cN$.

The sandwiched R\'enyi relative entropies for $\rho\in\cS(A)$ and $\sigma\in\cP(A)$ are defined~\cite{muller2013quantum, Wilde14} as follows: For $\alpha \in(0,1) \cup(1, \infty)$, if either $\alpha<1$ and $\rho \not \perp \sigma$ or $\alpha>1$ and $\rho \ll \sigma$, we have 
\begin{align}
\widetilde{D}_{\alpha}(\rho \| \sigma)=\frac{1}{\alpha-1} \log \operatorname{Tr}\left(\left(\sigma^{\frac{1-\alpha}{2 \alpha}} \rho \sigma^{\frac{1-\alpha}{2 \alpha}}\right)^{\alpha}\right)\,.
\end{align} 
Furthermore, for $\alpha = \{0,1,\infty\}$ the respective limits are employed. The max-relative entropy of $\rho\in\cS_{\leq}(A)$ with respect to $\sigma\in\cP(A)$ is given by \cite{datta2009min, jain2002privacy} 
\begin{align}
D_{\max}(\rho||\sigma) = \min\left\{\lambda\in\mathbb{R}:2^\lambda\sigma\succeq\rho\right\}\,,
\end{align}
where $\rho\succeq\sigma$ means that $\rho-\sigma\succeq0$. Note that we have $D_{\max}(\rho||\sigma)=\widetilde{D}_{\infty}(\rho||\sigma)$. The min-relative entropy is given by
\begin{align}
D_{\min}(\rho\|\sigma) = -\log \|\sqrt{\rho} \sqrt{\sigma}\|_{1}^{2} = -\log F(\rho, \sigma)^2\,.
\end{align}
Note that we have $D_{\min}(\rho||\sigma)=\widetilde{D}_{1/2}(\rho||\sigma)$.

The quantum mutual information for $\rho_{AB}\in\cS_{\leq}(AB)$ is defined as
\begin{align}
I(A:B)_\rho = D(\rho_{AB}\|\rho_A\otimes\rho_B)
\end{align}
and the mutual information variance as
\begin{align}
V(A:B)_\rho = V(\rho_{AB}\|\rho_A\otimes\rho_B)\,.
\end{align}
The max-information that $B$ has about $A$ for $\rho_{AB}\in\cS(AB)$ is \cite{berta2011quantum}
\begin{align}
I_{\max}(A;B)_\rho = \min_{\sigma_B\in\cS(B)}D_{\max}(\rho_{AB}||\rho_A\otimes \sigma_B)\,.
\end{align}
The generalized sandwiched R\'enyi mutual information for $\rho_{AB}\in\cS(AB)$ and $\tau_A\in\cP(A)$ with $\rho_A\ll\tau_A$ is \cite{gupta2015multiplicativity}
\begin{align}
\widetilde{I}_{\alpha}\left(\rho_{A B} \| \tau_{A}\right)=\min_{\sigma_{B} \in \mathcal{S}(B)} \widetilde{D}_{\alpha}\left(\rho_{A B} \| \tau_{A} \otimes \sigma_{B}\right)\,.
\end{align}

%%%%%%%%%%%%%%%%%%%%%%%%%%%%%%%%%%%%%%%%%%%%%%%%%%%%%%%%%%%%%%%%%%%%%%%%%%%%%%%%%%%%%%%%%%%%%%%%%%%%%%%%%%%%%%%%

\subsection{Smoothed entropic quantities}
The quantum hypothesis testing relative entropy is defined for $\varepsilon\in [0,1)$, $\rho\in\cS(A)$, and $\sigma\in\cP(A)$ as \cite{hiai1991proper, nagaoka2000strong, wang2012one} 
\begin{align}
D_{h}^{\varepsilon}(\rho \| \sigma)=-\log \inf\limits_{\text{Tr}(\Lambda \rho) \geq 1-\varepsilon\atop0 \leq \Lambda \leq I} \text{Tr}(\Lambda \sigma)\,.
\end{align}
The information spectrum relative entropy is defined for $\varepsilon\in(0,1)$, $\rho\in\cS(A)$, and $\sigma\in\cP(A)$ as \cite{datta2014second, tomamichel2013hierarchy} 
\begin{align}
\underline{D}_{s}^{\varepsilon}(\rho \| \sigma)=\sup \Big\{\gamma \Big| \operatorname{Tr}\big(\left(\rho-2^{\gamma} \sigma\right)\left\{\rho>2^{\gamma} \sigma\right\}_{+}\big) \geq 1-\varepsilon\Big\},
\end{align}
where $\{\cdot\}_+$ is the projector on the positive part of the argument. The information spectrum entropy is defined as $\bar{H}_{s}^{\varepsilon}(\rho)=-\underline{D}_{s}^{\varepsilon}(\rho \| I)$ \cite{datta2014second}.

One defines further smooth entropy measures by extremizing over a set of states in a small neighbourhood~\cite{renner2005security}. The smooth max-relative entropy between $\rho_A\in\cS_{\leq}(A)$ and $\sigma_A\in\cS_{\leq}(A)$ is given as 
\begin{align}
D^{\varepsilon}_{\max}(\rho_{A}\|\sigma_A) =\inf\limits_{\bar{\rho}\in\cB^{\varepsilon}(\rho)} D_{\max}(\bar{\rho}_A\|\sigma_A)
\end{align}
and the smooth min-relative entropy is given as \cite{dupuis2014generalized}
\begin{align}
D^{\varepsilon}_{\min}(\rho_{A}\|\sigma_A) =\sup\limits_{\bar{\rho}\in\cB^{\varepsilon}(\rho)} D_{\min}(\bar{\rho}_A\|\sigma_A)\,.
\end{align}
The smooth max-information that $B$ has about $A$ for $\rho_{AB}\in\cS(AB)$ is given by \cite{berta2011quantum}
\begin{align}
I_{\max}^{\varepsilon}(A; B)_{\rho}=\inf\limits_{\bar{\rho} \in \mathcal{B}^{\varepsilon}\left(\rho\right)}\ I_{\max }(A;B)_{\bar{\rho}}\,.
\end{align}
We define the partially smoothed max-information as~\cite{anshu2020partially}
\begin{align}
I^{\varepsilon}_{\max}(\dot{A};B)_{\rho} = \inf_{\bar{\rho}_{AB}\in\cB^{\varepsilon}(\rho_{AB})\atop\bar{\rho}_A = \rho_A} I_{\max}(A;B)_{\bar{\rho}}\,.
\end{align}
Note that for all the above quantities, we may replace the infimum (supremum) with the minimum (maximum) since the set of subnormalized quantum states is closed and bounded and the functions being extremized are continuous.
%%%%%%%%%%%%%%%%%%%%%%%%%%%%%%%%%%%%%%%%%%%%%%%%%%%%%%%%%%%%%%%%%%%%%%%%%%%%%%%%%%%%%%%%%%%%%%%%%%%%%%%%%%%%%%%%

\section{Quantum state splitting}\label{sec:state split}

\subsection{Task}

Here, we introduce quantum state splitting first formulated in~\cite{Abeyesinghe09}, with more refined versions appearing in~\cite{anshu2017quantum, anshu2020partially, berta2011quantum}.

\begin{definition}[One-shot state splitting]\label{def:state_splitting}
Let $\rho_{AB}\in \cS(AB)$ and $\varepsilon\in[0,1]$. A one-shot quantum state splitting protocol consists of:
\begin{enumerate}
    \item Quantum registers Q, K, L and $A_1\cong B$
    \item A resource $\omega_{KL}\in\cS(KL)$ 
    \item An encoding quantum channel $\cE_{AA_1K\rightarrow AQ}$
    \item A decoding quantum channel $\cD_{QL\rightarrow B}$.
\end{enumerate}
A $\{q,\varepsilon\}$-one-shot quantum state splitting protocol of $\rho_{AB}$ is such that $q = \log|Q|$ with
\begin{align}
\text{$(\cD\circ\cE)(\rho_{AA_1R}\otimes\omega_{KL})\approx_{\varepsilon}\rho_{ABR}$ for any extension $\rho_{AA_1R}$ of $\rho_{AA_1}$.}
\end{align}
The minimal quantum communication cost of state splitting $\rho_{AB}$ is defined as\footnote{Since free entanglement can be included in the resource state $\omega_{KL}$, we equivalently have that the classical communication $2q$.}
\begin{align}
q^\star_{\varepsilon}(\rho_{AB}) = \min\Big\{q\in\mathbb{N} : \exists\text{ a} \ \{q, \varepsilon\}\text{-one-shot quantum state splitting protocol of } \rho_{AB}\Big\}\,.
\end{align}
\end{definition}

\begin{remark}\label{rmk:equivalence_of_purifications_state_splitting}
Instead of considering an arbitrary extension $\rho_{ABR}$ in Definition~\ref{def:state_splitting}, one can choose any fixed purification of $\rho_{AB}$ in the analysis of quantum state splitting. Such a choice can be made without loss of generality because any two purifications are related by an isometry on the reference quantum register and such isometries commute with the protocol. Since the purified distance is non-increasing under partial trace, a $\{q,\varepsilon\}$-one-shot quantum state splitting protocol of a fixed purification is also a $\{q,\varepsilon\}$-one-shot quantum state splitting protocol of all extensions of $\rho_{AB}$.
\end{remark}

An alternative but equivalent description of quantum state splitting is through a quantum channel $\cT^\sigma_{AA_1\rightarrow AB}$ such that
\begin{align}\label{eqn:alternate_state_splitting_description}
    \cT^{\sigma} = \cD\circ\cE\circ\cP^\sigma
\end{align}
where $\cP^\sigma$ is the preparation quantum channel of the resource state $\sigma_{KL}$. We note that $\cT^{\sigma}(\cdot) = (\cD\circ\cE)(\cdot\otimes\sigma_{KL})$. The minimal quantum communication cost is then quantified as follows.

\begin{theorem}[One-shot quantum state splitting \cite{anshu2017quantum, anshu2020partially}]\label{thm:state_splitting_smooth_imax}
Let $\rho_{AB}\in\cS(AB)$, $\varepsilon\in(0,1]$, and $\delta\in(0,\varepsilon]$. Then, the minimal quantum communication cost of $\varepsilon$-error quantum state splitting of $\rho_{AB}$ is bounded as
\begin{align}
   \frac{1}{2}I^\varepsilon_{\max}(\dot{R};B)_{\rho} \leq q^{\star}_{\varepsilon}(\rho_{AB}) \leq \frac{1}{2}I^{\varepsilon - \delta}_{\max}(\dot{R};B)_{\rho} + \log\frac{2}{\delta}\,,
\end{align}
where $\rho_{ABR}\in\cS(ABR)$ is any purification of $\rho_{AB}$.
\end{theorem}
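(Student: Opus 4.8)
The plan is to prove the two bounds by separate, fairly standard arguments, with essentially all of the difficulty in the achievability (upper) bound.

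\emph{Lower bound.} I would inspect an arbitrary $\{q,\varepsilon\}$-one-shot state splitting protocol at the state obtained right after Alice's encoding, $\tau_{AQLR} := \cE_{AA_1K\to AQ}(\rho_{AA_1R}\otimes\omega_{KL})$, where $\rho_{AA_1R}$ is a fixed purification of $\rho_{AA_1}$ (legitimate by Remark~\ref{rmk:equivalence_of_purifications_state_splitting}), and write $\tilde\rho_{ABR} := \cD_{QL\to B}(\tau_{AQLR})$ for the output. Two structural facts do the work. First, since neither $\cE$ nor $\cD$ acts on $R$, the reference marginal is preserved exactly: $\tilde\rho_R = \rho_R$. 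Second, since the resource $\omega_{KL}$ is uncorrelated with $R$ and $\cE$ acts trivially on $L$, we have $\tau_{RL} = \rho_R\otimes\omega_L$, so $I_{\max}(R;L)_\tau = 0$. Using the elementary dimension bound $I_{\max}(R;QL)_\tau \le I_{\max}(R;L)_\tau + 2\log|Q|$, which follows in one line from the pinching inequality $\tau_{RQL}\preceq|Q|\,\tau_{RL}\otimes I_Q$, we obtain $I_{\max}(R;QL)_\tau\le 2q$; data processing of $I_{\max}(R;\cdot)$ under Bob's decoding channel then gives $I_{\max}(R;B)_{\tilde\rho}\le 2q$. Finally, the protocol guarantee $\tilde\rho_{ABR}\approx_\varepsilon\rho_{ABR}$ together with monotonicity of the purified distance under $\text{Tr}_A$ yields $\tilde\rho_{RB}\in\cB^\varepsilon(\rho_{RB})$, and combined with $\tilde\rho_R = \rho_R$ this makes $\tilde\rho_{RB}$ admissible in the partial smoothing. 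Hence $I^\varepsilon_{\max}(\dot R;B)_\rho\le I_{\max}(R;B)_{\tilde\rho}\le 2q$, and minimizing over protocols gives $\frac12 I^\varepsilon_{\max}(\dot R;B)_\rho\le q^\star_\varepsilon(\rho_{AB})$.

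\emph{Upper bound.} I would reduce to the classical-communication version of state splitting: since superdense coding converts $2q'$ bits of classical Alice-to-Bob communication into $q'$ qubits of quantum communication while consuming only $q'$ ebits of entanglement\,---\,which are free here and can be folded into $\omega_{KL}$ (cf.\ the footnote to Definition~\ref{def:state_splitting})\,---\,it suffices to exhibit a classical-communication state-splitting protocol with error $\varepsilon$ and cost $I^{\varepsilon-\delta}_{\max}(\dot R;B)_\rho + 2\log\frac2\delta$. That protocol I would build from the (coherent, smoothed) convex-split lemma: pick the near-optimal smoothed state $\bar\rho_{RB}$ with $\bar\rho_R=\rho_R$, $\bar\rho_{RB}\approx_{\varepsilon-\delta}\rho_{RB}$, and $I_{\max}(R;B)_{\bar\rho}=:k$, together with the optimizing $\sigma_B$ so that $\bar\rho_{RB}\preceq 2^k\rho_R\otimes\sigma_B$; let the resource consist of $N$ of order $2^k/\delta^2$ shared copies of a purification of $\sigma_B$ and a shared uniform index register of dimension $N$; have Alice coherently inject her $A_1$-register into the index slot via the Uhlmann/rejection-sampling isometry associated with $\bar\rho_{RB}\preceq 2^k\rho_R\otimes\sigma_B$; have her send the index; and have Bob keep the indicated slot as $B$. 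The convex-split lemma then bounds the purified distance of the output from $\rho_{ABR}$ by the lemma's error ($\le\delta$) plus the smoothing radius ($\le\varepsilon-\delta$), i.e.\ by $\varepsilon$, while $\log N$ is $k + 2\log\frac2\delta$ up to the explicit constant; halving via superdense coding gives $q^\star_\varepsilon(\rho_{AB})\le\frac12 I^{\varepsilon-\delta}_{\max}(\dot R;B)_\rho + \log\frac2\delta$. (Alternatively, the decoupling theorem applied to $A_1$ together with a Uhlmann decoder for Bob yields the same bound.)

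\emph{Main obstacle.} I expect the achievability analysis to be the delicate part. The subtleties are: carrying out the convex-split construction coherently, so that the output is close to the \emph{purification} $\rho_{ABR}$\,---\,i.e.\ correlations with $R$ are faithfully reproduced\,---\,rather than merely to $\rho_{AB}$; using the smoothed form of the convex-split lemma so that the smoothing radius $\varepsilon-\delta$ is spent exactly once and adds cleanly to the $\delta$ coming from the lemma, which is precisely where \emph{partial} rather than full smoothing is what makes achievability match the converse; and passing from the usual trace-distance/relative-entropy statement of the lemma to a purified-distance statement without spoiling constants, for which one uses its fidelity-based coherent version and carefully tracks additive terms to land on the stated $\log\frac2\delta$. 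The converse, in contrast, is an exercise in the basic properties of $I_{\max}$ (data processing and the dimension bound) once the two structural facts above have been isolated.
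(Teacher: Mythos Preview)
Your proposal is correct and follows essentially the same route as the paper. The converse is identical to the paper's argument (data processing of $I_{\max}$, the dimension bound/non-lockability $I_{\max}(R;QL)\le I_{\max}(R;L)+2\log|Q|$, and the observation $\tau_R=\rho_R$ to land in the partial-smoothing set), and your achievability is the convex-split construction of \cite{anshu2017quantum} adapted to the partially smoothed max-information exactly as the paper does: pick $\bar\rho_{RB}\approx_{\varepsilon-\delta}\rho_{RB}$ with $\bar\rho_R=\rho_R$, use $N\approx 2^k/\delta^2$ copies of a purification of the optimizing $\sigma_B$, apply Uhlmann isometries coherently, send the index, and halve the cost by superdense coding. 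One small remark: your worry about converting a trace-distance convex-split statement to purified distance is a non-issue here, since the paper's convex-split lemma (Lemma~\ref{lem:convex_split}) is already stated as a fidelity bound $F\ge\sqrt{1-\delta^2}$, which gives purified distance $\le\delta$ directly and makes the triangle-inequality bookkeeping $(\varepsilon-\delta)+\delta=\varepsilon$ exact.
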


For completeness we include the proof for our exact setting in Appendix~\ref{app:state_splitting_proofs}. From Theorem~\ref{thm:state_splitting_smooth_imax}, one directly obtains the quantum communication cost of quantum state splitting of i.i.d.\ states in the asymptotic limit. Namely, by Lemma~\ref{lem:theorem4_anshu2020partially}, one can bound the partially smoothed max-information with the smoothed max-information, and then the asymptotic equipartition property for the smoothed max-information states that \cite[Corollary B.22]{berta2011quantum}
\begin{align}
\lim _{\varepsilon \rightarrow 0} \lim _{n \rightarrow \infty} \frac{1}{n} I_{\max }^{\varepsilon}(A: B)_{\rho^{\otimes n}}=I(A: B)_{\rho}\,.
\end{align}
Hence, the minimal quantum communication cost $q^\star_\varepsilon(\rho^{\otimes n}_{AB})$ of $\varepsilon$-error one-shot quantum state splitting of $\rho_{AB}^{\otimes n}$ satisfies
\begin{align}\label{eq:splitting-iid}
\lim\limits_{\varepsilon\rightarrow 0}\lim\limits_{n\rightarrow\infty} \frac{q^\star_\varepsilon(\rho^{\otimes n}_{AB})}{n} = \frac{1}{2}I(R:B)_\rho\,.
\end{align}

%%%%%%%%%%%%%%%%%%%%%%%%%%%%%%%%%%%%%%%%%%%%%%%%%%%%%%%%%%%%%%%%%%%%%%%%%%%%%%%%%%%%%%%%%%%%%%%%%%%%%%%%%%%%%%%%

\subsection{Moderate deviation expansion}

We investigate as a refinement of Eq.~\eqref{eq:splitting-iid}, namely, the minimal quantum communication cost of one-shot state splitting protocols for i.i.d.\ states in the low-error moderate deviation regime. We start with moderate sequences as defined in~\cite{chubb2017moderate} and additionally define strictly moderate sequences.

\begin{definition}[Moderate sequences]\label{def:moderate_sequence}
A sequence of non-negative numbers $\{a_n\}$ with $n\in\mathbb{N}$ is a moderate sequence if $\lim_{n\rightarrow\infty}a_n = 0$ and $\lim_{n\rightarrow\infty}na^2_n=\infty$. A strictly moderate sequence is a moderate sequence that additionally satisfies $\lim _{n \rightarrow \infty} \frac{\log n}{n a_n^2}=0$. 
\end{definition}

In particular, the sequence $\{n^{\alpha}\}$ for $n\in\mathbb{N}$ and $-1/2<\alpha<0$ is a strictly moderate sequence. For $\varepsilon_n = e^{-a_n^2n}$ for some moderate sequence $\{a_n\}$, we investigate $\varepsilon_n$-one-shot quantum state splitting protocols of $\rho_{AB}^{\otimes n}$ and investigate how $q^\star_{\varepsilon_n}(\rho^{\otimes n})$ behaves as a function of $n$.

In the moderate deviation regime, the communication cost in our derivations will contain higher order terms that depend on $a_n$. The lemma below shows that for moderate sequences, the error can be multiplied by a constant and still leave the communication cost unchanged up to $o(a_n)$ terms. Similarly, if the error is multiplied by a $\text{poly}(n)$ factor for strictly moderate sequences, the communication cost remains unchanged up to $o(a_n)$ terms.

\begin{lemma}\label{lem:moderate_seq_constant_or_polyn}
For $k>0$, moderate sequences $\{a_n\}$, and $\varepsilon_n = e^{-a_n^2n}$, it holds that
\begin{align}
k\varepsilon_n = \exp\left(-n\left(a^2_n - \frac{\log k}{n}\right)\right) = \exp\left(-nb_n^2\right)
\end{align}
for some moderate sequence $\{b_n\}$. In particular, for $\eta > 0$, there exists $n\in\mathbb{N}$ sufficiently large such that $b_n \leq a_n + \eta a_n$. If $\{a_n\}$ is a strictly moderate sequence, then $\text{poly(n)}\varepsilon_n = \exp(-nb_n^2)$ for a moderate sequence $\{b_n\}$ and for $\eta'>0$, there exists sufficiently large $n\in\mathbb{N}$ such that $b_n \leq a_n + \eta' a_n$.
\end{lemma}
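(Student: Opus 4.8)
The plan is to prove both assertions by the same device: push the prefactor ($k$, or the polynomial) into the exponent and check directly that the resulting exponent has the form $-nb_n^2$ with $\{b_n\}$ a (strictly) moderate sequence.

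\textit{Constant prefactor.} The first displayed identity is pure algebra: since $k=\exp(\log k)$, we get $k\varepsilon_n=\exp(\log k)\exp(-na_n^2)=\exp(-n(a_n^2-\tfrac{\log k}{n}))$, so the natural choice is $b_n:=(a_n^2-\tfrac{\log k}{n})^{1/2}$. This is a non-negative real number once $na_n^2\geq\log k$, which holds for all $n$ past some $n_0$ because $na_n^2\to\infty$; on the finitely many $n\leq n_0$ one simply sets $b_n:=a_n$, which changes none of the asymptotic statements. Then $b_n^2=a_n^2-\tfrac{\log k}{n}\to0$ and $nb_n^2=na_n^2-\log k\to\infty$, so $\{b_n\}$ is moderate. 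For the quantitative part I would use $\tfrac{b_n^2}{a_n^2}=1-\tfrac{\log k}{na_n^2}\to1$ (valid for large $n$, where $a_n>0$ and $na_n^2\to\infty$), hence $\tfrac{b_n}{a_n}\to1$, so that for any fixed $\eta>0$ one has $b_n\leq(1+\eta)a_n$ for all sufficiently large $n$, irrespective of the sign of $\log k$.

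\textit{Polynomial prefactor.} Write the prefactor as $p(n)$; without loss of generality $p$ tends to $+\infty$ (a bounded positive prefactor is already covered by the first part), so $p(n)>0$ and $0\leq\log p(n)\leq\gamma\log n$ for some constant $\gamma>0$ and all large $n$. Exactly as above, $p(n)\varepsilon_n=\exp(-n(a_n^2-\tfrac{\log p(n)}{n}))$, so I would set $b_n:=(a_n^2-\tfrac{\log p(n)}{n})^{1/2}$. The radicand is non-negative once $na_n^2\geq\log p(n)$, for which it suffices that $\tfrac{na_n^2}{\log n}\geq\gamma$; this holds for all large $n$ exactly because $\{a_n\}$ is \emph{strictly} moderate, that is, $\tfrac{na_n^2}{\log n}\to\infty$. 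The same hypothesis gives $\tfrac{\log p(n)}{na_n^2}\leq\tfrac{\gamma\log n}{na_n^2}\to0$, whence $b_n^2\to0$, $nb_n^2=na_n^2-\log p(n)\to\infty$, and $\tfrac{b_n^2}{a_n^2}=1-\tfrac{\log p(n)}{na_n^2}\to1$. So $\{b_n\}$ is moderate and $b_n\leq(1+\eta')a_n$ for all large $n$, as claimed.

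\textit{Where the work is.} The algebra above is routine; the only step that needs genuine care is the non-negativity and divergence of the exponent in the polynomial case, where one has to weigh $\log n$ (coming from $\log p(n)$) against $na_n^2$. This is precisely the point at which a plain moderate sequence does not suffice and the strict moderate hypothesis is invoked, so the real content of the lemma is the identification of that hypothesis. One can moreover note that $\{b_n\}$ is again strictly moderate in the polynomial case, since $\tfrac{nb_n^2}{\log n}=\tfrac{na_n^2}{\log n}-\tfrac{\log p(n)}{\log n}\to\infty$, although only moderateness is asserted.
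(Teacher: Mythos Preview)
Your proof is correct and follows essentially the same approach as the paper: define $b_n=(a_n^2-\tfrac{\log k}{n})^{1/2}$ (respectively with $\log p(n)$), and then argue that $b_n/a_n\to1$. The paper does the last step via the explicit bound $\sqrt{1-x}\leq 1-\tfrac{x}{2}$ rather than your limit argument, but this is a cosmetic difference; your treatment is in fact slightly more careful, since you address the non-negativity of the radicand for small $n$, which the paper glosses over.
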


\begin{proof}
For the first case we have that for sufficiently large $n$
\begin{align}
b_n = \sqrt{a_n^2 - \frac{\log k}{n}}=a_n\left(\sqrt{1 - \frac{\log k}{na_n^2}}\right)\leq a_n\left(1 - \frac{\log k}{2na_n^2}\right)\leq a_n + \eta a_n\,,
\end{align}
where the first inequality uses $\sqrt{1-x}\leq 1 - \frac{x}{2}$ for $x\leq 1$ and the second inequality uses the fact that $\lim_{n\rightarrow\infty}na_n^2 = \infty$. The second case follows the same argument except that we replace the constant $k$ with $\text{poly}(n)$ and note that for strictly moderate sequences, $\lim_{n\rightarrow\infty}\frac{\log n^r}{na_n^2} = 0$ for any constant $r\in\mathbb{R}$.
\end{proof}

\vspace{5pt}

In the following, the main idea is to obtain an expansion for the partially smoothed max-information of i.i.d.\ states. This is done by bounding the partially smoothed max-information with the hypothesis testing relative entropy and then using the expansion of the hypothesis testing relative entropy in the moderate deviation regime. Our main result in this section is the following characterization.

\begin{theorem}[State splitting moderate deviation]\label{thm:state_splitting_mod_devation}
The minimal quantum communication cost of $\varepsilon_n$-error state splitting of $\rho_{AB}^{\otimes n}$ for $\rho_{AB}\in\cS(AB)$ with $\varepsilon_n = e^{-na^2_n}$ for a moderate sequence $\{a_n\}$ has the asymptotic expansion
\begin{align}
\frac{1}{n}q^\star_{\varepsilon_n}(\rho_{AB}^{\otimes n}) =\;& \frac{1}{2}I(R:B)_\rho +  a_n\sqrt{V(R:B)_\rho}+ o(a_n)\,,
\end{align}
where $\rho_{ABR}$ is any purification of $\rho_{AB}$.
\end{theorem}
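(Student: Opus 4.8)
The plan is to combine the one-shot bounds of Theorem~\ref{thm:state_splitting_smooth_imax} with a moderate deviation expansion of the partially smoothed max-information $I^{\varepsilon_n}_{\max}(\dot R;B)_{\rho^{\otimes n}}$ evaluated on i.i.d.\ states. The key is that, as announced in the paragraph preceding the theorem, the partially smoothed max-information should be controlled by a hypothesis testing relative entropy: I expect a two-sided bound of the form $D_h^{\varepsilon'}(\rho_{RB}^{\otimes n}\|\rho_R^{\otimes n}\otimes\rho_B^{\otimes n}) \lesssim I^{\varepsilon_n}_{\max}(\dot R;B)_{\rho^{\otimes n}} \lesssim D_h^{\varepsilon''}(\rho_{RB}^{\otimes n}\|\rho_R^{\otimes n}\otimes\rho_B^{\otimes n})$, where $\varepsilon'$ and $\varepsilon''$ differ from $\varepsilon_n$ by at most a constant or a polynomial-in-$n$ factor (this is precisely the role of Lemma~\ref{lem:dh_dmin_bound}, and Lemma~\ref{lem:theorem4_anshu2020partially} for relating partial to full smoothing). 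The reason the fixed-marginal constraint $\bar\rho_R=\rho_R$ in the partially smoothed quantity matters is that it pins the reference against which the max-relative entropy is taken, so that the optimal $\sigma_B$ can be taken to be $\rho_B$ and the whole quantity collapses to a smoothed $D_{\max}$ of $\rho_{RB}$ relative to the fixed product $\rho_R\otimes\rho_B$, which by the refined min/max duality of Lemma~\ref{lem:dh_dmin_bound} is sandwiched by $D_h$ of the same pair.

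Granting such a sandwich, the next step is to invoke the moderate deviation expansion of the hypothesis testing relative entropy for i.i.d.\ states. This is the content of the classical/quantum moderate deviation results already cited in the introduction (\cite{chubb2017moderate, cheng2017moderate, rouze2017finite}): for $\varepsilon_n=e^{-na_n^2}$ with $\{a_n\}$ moderate one has
\begin{align}
\frac{1}{n}D_h^{\varepsilon_n}\!\left(\rho_{RB}^{\otimes n}\big\|\rho_R^{\otimes n}\otimes\rho_B^{\otimes n}\right) = I(R:B)_\rho - \sqrt{2V(R:B)_\rho}\,a_n + o(a_n)\,,
\end{align}
using $D(\rho_{RB}\|\rho_R\otimes\rho_B)=I(R:B)_\rho$ and $V(\rho_{RB}\|\rho_R\otimes\rho_B)=V(R:B)_\rho$. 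Wait—I should be careful about the sign and the factor: the low-error case $\varepsilon_n\to 0$ for $D_h$ gives a \emph{negative} second-order correction, and indeed one wants $q^\star$ to be \emph{larger} than the capacity, so this sign convention needs reconciling. The resolution is that here we are smoothing the state in $I^{\varepsilon}_{\max}$ (an infimum over a ball), which corresponds to the \emph{high-error} regime for the underlying $D_h$ quantity after duality: smoothing $D_{\max}$ down by allowing error $\varepsilon_n$ translates, via the min/max relation, into a $D_h$ with error $1-\varepsilon_n\to 1$. In that high-error regime $\frac{1}{n}D_h^{1-\varepsilon_n} = I(R:B)_\rho + \sqrt{2V(R:B)_\rho}\,a_n + o(a_n)$, and then $\frac12$ of this reproduces exactly $\frac12 I(R:B)_\rho + a_n\sqrt{V(R:B)_\rho} + o(a_n)$. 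So the plan is: (i) reduce $I^{\varepsilon_n}_{\max}(\dot R;B)_{\rho^{\otimes n}}$ to a smoothed $D_{\max}(\rho_{RB}^{\otimes n}\|\rho_R^{\otimes n}\otimes\rho_B^{\otimes n})$; (ii) convert this to a high-error $D_h$ of the same pair via Lemma~\ref{lem:dh_dmin_bound}, absorbing the $\log(2/\delta)$ and polynomial smoothing slack using Lemma~\ref{lem:moderate_seq_constant_or_polyn}; (iii) plug in the high-error moderate deviation expansion of $D_h$; (iv) feed the result into the two-sided one-shot bound of Theorem~\ref{thm:state_splitting_smooth_imax}, choosing $\delta$ (e.g.\ $\delta_n$ a slowly vanishing sequence or a polynomially small quantity) so that the $\log(2/\delta)/n$ additive term is $o(a_n)$ — this is where the \emph{strictly} moderate hypothesis, or at least the flexibility of Lemma~\ref{lem:moderate_seq_constant_or_polyn}, is used to keep the gap between the achievability and converse smoothing parameters negligible at order $a_n$.

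The main obstacle I anticipate is step (i)–(ii): making the reduction from the partially smoothed max-information to a clean $D_h$ quantity \emph{tight enough} that the $o(a_n)$ error bar survives. A crude application of the standard (non-tight) triangle inequality for the purified distance, or of a loose min/max relative entropy bound, would introduce a constant additive gap between the upper and lower expansions and destroy the matching second-order term. This is exactly why the paper first develops the sharp triangle inequality (Lemma~\ref{lem:purified_dist_triangle_ineq}) and the sharp $D_{\min}$/$D_h$ inequality (Lemma~\ref{lem:dh_dmin_bound}): the achievability and converse must be pushed to the same $D_h^{\varepsilon}$ up to a subpolynomial change in $\varepsilon$, and then Lemma~\ref{lem:moderate_seq_constant_or_polyn} guarantees such a change is invisible at order $a_n$. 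A secondary technical point is verifying that the optimizing $\sigma_B$ in the definition of $I_{\max}$ may indeed be taken to be $\rho_B$ (or close enough to it) once the marginal on $R$ is fixed; this should follow from a symmetrization/convexity argument on the i.i.d.\ structure, but it needs to be stated carefully since the smoothed state $\bar\rho_{RB}$ need not be i.i.d.\ even though its $R$-marginal is. Modulo these points, the argument is a direct assembly of Theorem~\ref{thm:state_splitting_smooth_imax}, Lemmas~\ref{lem:dh_dmin_bound} and~\ref{lem:moderate_seq_constant_or_polyn}, and the known moderate deviation expansion of $D_h$.
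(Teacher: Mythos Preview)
Your overall architecture is right: sandwich $I^{\varepsilon_n}_{\max}(\dot R;B)_{\rho^{\otimes n}}$ between two hypothesis testing quantities, invoke the moderate deviation AEP for $D_h$ (Proposition~\ref{prop:chubbs_dh_expansion}), and feed everything into Theorem~\ref{thm:state_splitting_smooth_imax}. The upper bound works essentially as you say: relax the partial smoothing to ordinary smoothing via Lemma~\ref{lem:theorem4_anshu2020partially}, pick $\sigma_{B^n}=\rho_B^{\otimes n}$, and convert $D_{\max}^{\varepsilon}$ to $D_h^{1-\varepsilon^2}$ (this is Lemma~\ref{lem:thm4_anshu2019minimax}, not Lemma~\ref{lem:dh_dmin_bound}). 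The $\log(2/\delta)$ term is harmless already for ordinary moderate sequences by taking $\delta=\varepsilon_n/2$, so you do not need the strictly moderate hypothesis here.

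The genuine gap is in your lower bound. You assert that because the $R$-marginal is pinned, ``the optimal $\sigma_B$ can be taken to be $\rho_B$''. But the minimization over $\sigma_{B^n}$ is still present, and for a \emph{lower} bound you cannot simply plug in $\sigma_{B^n}=\rho_B^{\otimes n}$ (that gives an upper bound). Your proposed symmetrization/convexity fix is vague and, even if it can be made to work via a universal-state argument, would introduce $\mathrm{poly}(n)$ factors and force the strictly moderate hypothesis, which the theorem does not assume. The paper sidesteps this entirely: it relaxes the fixed-marginal constraint, writes the inner quantity as the sandwiched R\'enyi mutual information $\widetilde I_\infty(\tilde\rho_{R^nB^n}\|\rho_R^{\otimes n})$, and applies the duality of Lemma~\ref{lem:duality_renyi_relative_entropies} to convert it to $-\widetilde I_{1/2}(\tilde\rho_{R^nR'^n}\|(\rho_R^{\otimes n})^{-1})$ on the purifying system. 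The point of the duality is that the troublesome $\min_{\sigma_B}$ inside $\widetilde I_\infty$ becomes a $\min_{\sigma_{R'}}$ inside $\widetilde I_{1/2}$ with the \emph{opposite} sign, so now choosing $\sigma_{R'^n}=\rho_{R'}^{\otimes n}$ gives the inequality in the right direction. This lands you at $-D_{\min}^{\varepsilon_n}(\rho_{RR'}^{\otimes n}\|(\rho_R^{-1})^{\otimes n}\otimes\rho_{R'}^{\otimes n})$, and \emph{this} is where Lemma~\ref{lem:dh_dmin_bound} enters: it bounds $D_{\min}^{\varepsilon_n}$ above by $D_h^{k\varepsilon_n^2}$, i.e.\ a \emph{low}-error $D_h$ on the dual pair. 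The low-error expansion then gives $-D + \sqrt{4V}\,a_n$, and a second application of duality (for $D$ and $V$) returns $I(R:B)_\rho + 2\sqrt{V(R:B)_\rho}\,a_n$, matching the upper bound.

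So Lemma~\ref{lem:dh_dmin_bound} is not a device to sandwich $D_{\max}$ by $D_h$ ``of the same pair'' as you wrote; it is the $D_{\min}\!\to\! D_h$ step that appears only after passing to the complementary system via R\'enyi duality. That duality step is the missing idea in your proposal.
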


The proof is developed in the following subsections.

%%%%%%%%%%%%%%%%%%%%%%%%%%%%%%%%%%%%%%%%%%%%%%%%%%%%%%%%%%%%%%%%%%%%%%%%%%%%%%%%%%%%%%%%%%%%%%%%%%%%%%%%%%%%%%%%

\subsection{Technical tools}

As the first step for the proof of Theorem \ref{thm:state_splitting_mod_devation}, we clarify when the tighter triangle inequality for the purified distance~\cite[Proposition 3.16]{tomamichel2015quantum} holds. The following lemmas may be of independent interest since this tighter triangle inequality yields tight bounds between the smoothed min-relative entropy and the smoothed max-relative entropy, as well as between the smoothed min-relative entropy and the hypothesis testing relative entropy.

\begin{lemma}[Tight triangle inequality purified distance]\label{lem:purified_dist_triangle_ineq}
For $\rho,\sigma,\tau \in \cS(A)$ we have
\begin{align}\label{eq:tighter-triangle}
P(\rho,\sigma)^2 + P(\sigma,\tau)^2 \leq 1\quad\Rightarrow\quad P(\rho,\tau) \leq P(\rho,\sigma)F(\sigma,\tau) + P(\sigma,\tau)F(\rho,\sigma)\,.
\end{align}
\end{lemma}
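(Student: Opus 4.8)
The plan is to translate everything into the language of the Bures angle and invoke its ordinary (geodesic) triangle inequality. Since $\rho,\sigma,\tau$ are all normalized, the generalized fidelity coincides with the standard fidelity, so $P(\mu,\nu)=\sqrt{1-F(\mu,\nu)^2}$ for each pair, and I would introduce the Bures angle $A(\mu,\nu):=\arccos F(\mu,\nu)\in[0,\tfrac{\pi}{2}]$, so that $F(\mu,\nu)=\cos A(\mu,\nu)$ and $P(\mu,\nu)=\sin A(\mu,\nu)$. Writing $a=A(\rho,\sigma)$, $b=A(\sigma,\tau)$, $c=A(\rho,\tau)$, the target inequality becomes $\sin c\le \sin a\cos b+\cos a\sin b=\sin(a+b)$, and the hypothesis $P(\rho,\sigma)^2+P(\sigma,\tau)^2\le1$ becomes $\sin^2 a+\sin^2 b\le1$.

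First I would recall the Bures-angle triangle inequality $c\le a+b$. This follows from Uhlmann's theorem: fixing a purification $\psi_\sigma$ of $\sigma$ in a space whose ancilla has dimension at least $|A|$, one can choose purifications $\psi_\rho,\psi_\tau$ of $\rho,\tau$ in the same space achieving $F(\rho,\sigma)=|\langle\psi_\rho|\psi_\sigma\rangle|$ and $F(\sigma,\tau)=|\langle\psi_\sigma|\psi_\tau\rangle|$; then $F(\rho,\tau)\ge|\langle\psi_\rho|\psi_\tau\rangle|$ gives $c\le\angle(\psi_\rho,\psi_\tau)$, where $\angle(u,v)=\arccos|\langle u|v\rangle|$ is the angle between unit vectors, which itself satisfies the triangle inequality $\angle(\psi_\rho,\psi_\tau)\le\angle(\psi_\rho,\psi_\sigma)+\angle(\psi_\sigma,\psi_\tau)=a+b$. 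This is standard (it is the statement that $A$ is a metric, cf.\ \cite{tomamichel2015quantum}), so I would include only a one-line reminder rather than a full reproof.

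Second, I would extract from the hypothesis that $a+b\le\tfrac{\pi}{2}$: from $\sin^2 a+\sin^2 b\le1$ we get $\sin b\le\sqrt{1-\sin^2 a}=\cos a=\sin(\tfrac{\pi}{2}-a)$, and since both $b$ and $\tfrac{\pi}{2}-a$ lie in $[0,\tfrac{\pi}{2}]$, where $\sin$ is increasing, this gives $b\le\tfrac{\pi}{2}-a$. Combining with $c\le a+b$ yields $c\le a+b\le\tfrac{\pi}{2}$, so applying monotonicity of $\sin$ on $[0,\tfrac{\pi}{2}]$ gives $\sin c\le\sin(a+b)$, which is precisely the claim once $\sin(a+b)$ is re-expanded and translated back to $P$ and $F$.

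The only genuinely nontrivial point — and the one I would be careful about — is this passage from the hypothesis to $a+b\le\tfrac{\pi}{2}$: without it the bound can fail, since $\sin(a+b)$ may be strictly smaller than $\sin c$ when $a+b>\tfrac{\pi}{2}$ (for instance $a=b=\tfrac{\pi}{2}$ forces $\sin(a+b)=0$). I would also double-check the degenerate cases where some angle is $0$ or $\tfrac{\pi}{2}$ (equal or orthogonal states), but these are handled automatically by the same chain of inequalities, and everything else is routine Bures-angle trigonometry.
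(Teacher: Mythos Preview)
Your proposal is correct and is essentially the same approach as the paper's: both work via the Bures angle, reduce the hypothesis $P(\rho,\sigma)^2+P(\sigma,\tau)^2\le1$ to the angle condition $a+b\le\pi/2$, and then use the sine addition formula together with the Bures-angle triangle inequality. The only difference is packaging: the paper invokes \cite[Proposition~3.16]{tomamichel2015quantum} for the step $\sin c\le\sin(a+b)$ under $a+b\le\pi/2$ and then proves the equivalence of the two hypotheses, whereas you re-derive that step directly from the metric property of the Bures angle and monotonicity of $\sin$ on $[0,\pi/2]$, making your argument slightly more self-contained.
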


\begin{proof}
The proof of~\cite[Proposition 3.16]{tomamichel2015quantum} shows that Eq.~\eqref{eq:tighter-triangle} holds when $\sin^{-1}(P(\rho,\sigma)) + \sin^{-1}(P(\sigma,\tau)) \leq \pi/2$. We show that this condition is equivalent to $P(\rho,\sigma)^2 + P(\sigma,\tau)^2 \leq 1$. By the monotonicity of the cosine function on $[0,\pi]$, we have
\begin{align}
    &\sin^{-1}(P(\rho,\sigma)) + \sin^{-1}(P(\sigma, \tau)) \leq \frac{\pi}{2} \\
    \iff &\cos (\sin^{-1}(P(\rho,\sigma))+\sin^{-1}(P(\sigma, \tau))) \geq \cos \left(\frac{\pi}{2}\right) \\
    \iff &\cos (\sin^{-1}(P(\rho,\sigma)))\cos(\sin^{-1}(P(\sigma, \tau))) - \sin (\sin^{-1}(P(\rho,\sigma)))\sin(\sin^{-1}(P(\sigma, \tau))) \geq 0 \\
    \iff &\sqrt{1-P(\rho,\sigma)^{2}} \sqrt{1-P(\sigma, \tau)^{2}}-P(\rho,\sigma) P(\sigma, \tau) \geq 0 \\
    \iff &P(\rho,\sigma)^2 + P(\sigma, \tau)^2 \leq 1\,,
\end{align}
where the third line follows by using the addition formula for cosine and the fourth line uses that $\cos^2(\theta) = 1- \sin^2(\theta)$. 
\end{proof}
\vspace{5pt}

The following lemma was already known for $\rho_{AB},\sigma_{AB} = I_A\otimes\rho_B\in\cS(AB)$ \cite[Remark 5.6]{tomamichel2012framework} but we get it for general quantum states.

\begin{lemma}\label{lem:dmax_dmin_bound}
Let $\rho\in\cS(A)$, $\sigma\in\cP(A)$, and $\varepsilon,\varepsilon'\in[0,1]$ with $\varepsilon^2 + \varepsilon'^2 \leq 1$. Then, we have that
\begin{align}
    D^\varepsilon_{\min}(\rho\|\sigma) &\leq D^{\varepsilon'}_{\max}(\rho\|\sigma) - \log\left(1 - \left(\varepsilon\sqrt{1-\varepsilon'^2}+\varepsilon'\sqrt{1-\varepsilon^2}\right)^2\right)\,.
\end{align}
\end{lemma}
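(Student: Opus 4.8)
The plan is to reduce the statement to a per-state inequality and then feed the tight triangle inequality of Lemma~\ref{lem:purified_dist_triangle_ineq} into a fidelity estimate. First I would dispose of the trivial case $D^{\varepsilon'}_{\max}(\rho\|\sigma)=\infty$; otherwise set $\lambda:=D^{\varepsilon'}_{\max}(\rho\|\sigma)<\infty$ and let $\hat\rho\in\cB^{\varepsilon'}(\rho)$ be a minimizer, so that $2^\lambda\sigma\succeq\hat\rho$ (the minimum is attained, cf.\ the attainment remark in Section~\ref{sec:notation}). Since $D^\varepsilon_{\min}(\rho\|\sigma)=\max_{\bar\rho\in\cB^\varepsilon(\rho)}D_{\min}(\bar\rho\|\sigma)$, it is enough to prove $D_{\min}(\bar\rho\|\sigma)\le\lambda-\log(1-\delta^2)$ for an arbitrary $\bar\rho\in\cB^\varepsilon(\rho)$, where $\delta:=\varepsilon\sqrt{1-\varepsilon'^2}+\varepsilon'\sqrt{1-\varepsilon^2}$.

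The second step is a one-line reduction using that $F(\bar\rho,\cdot)$ is monotone non-decreasing in the positive semidefinite order and positively homogeneous of degree $1/2$: from $2^\lambda\sigma\succeq\hat\rho$ we get $2^{\lambda/2}F(\bar\rho,\sigma)=F(\bar\rho,2^\lambda\sigma)\ge F(\bar\rho,\hat\rho)$, hence
\begin{align}
D_{\min}(\bar\rho\|\sigma)=-\log F(\bar\rho,\sigma)^2\ \le\ \lambda-\log F(\bar\rho,\hat\rho)^2\ =\ D_{\max}(\hat\rho\|\sigma)+D_{\min}(\bar\rho\|\hat\rho)\,.
\end{align}
Thus everything reduces to the estimate $F(\bar\rho,\hat\rho)^2\ge 1-\delta^2$.

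For this I would invoke Lemma~\ref{lem:purified_dist_triangle_ineq} on the triple $(\bar\rho,\rho,\hat\rho)$: since $P(\bar\rho,\rho)^2+P(\rho,\hat\rho)^2\le\varepsilon^2+\varepsilon'^2\le1$, and since $(p,p')\mapsto p\sqrt{1-p'^2}+p'\sqrt{1-p^2}$ is nondecreasing in each argument on $\{p^2+p'^2\le1\}$, the tight triangle inequality gives $P(\bar\rho,\hat\rho)\le\varepsilon\sqrt{1-\varepsilon'^2}+\varepsilon'\sqrt{1-\varepsilon^2}=\delta$. The subtlety I expect to be the main obstacle is that the optimizers $\bar\rho,\hat\rho$ need not be normalized, so a priori this only bounds the \emph{generalized} fidelity $\bar F$, and $\bar F\ge F$ goes the wrong way. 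To close this gap I would argue at the level of purifications, exploiting that $\rho$ has unit trace. Fix a unit purification $\ket{\rho}$ and, by Uhlmann's theorem, purifications $\ket{\bar\rho},\ket{\hat\rho}$ of $\bar\rho,\hat\rho$ in a common space with $\braket{\rho}{\bar\rho}=F(\rho,\bar\rho)$ and $\braket{\rho}{\hat\rho}=F(\rho,\hat\rho)$; because $\Tr\rho=1$ the constraints $\bar\rho\in\cB^\varepsilon(\rho)$, $\hat\rho\in\cB^{\varepsilon'}(\rho)$ translate to $F(\rho,\bar\rho)\ge\sqrt{1-\varepsilon^2}$ and $F(\rho,\hat\rho)\ge\sqrt{1-\varepsilon'^2}$. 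Decomposing $\ket{\bar\rho},\ket{\hat\rho}$ into their components parallel and orthogonal to $\ket{\rho}$ and using $\braket{\bar\rho}{\bar\rho}=\Tr\bar\rho\le1$, $\braket{\hat\rho}{\hat\rho}=\Tr\hat\rho\le1$, Cauchy--Schwarz gives
\begin{align}
F(\bar\rho,\hat\rho)\ \ge\ \mathrm{Re}\,\braket{\bar\rho}{\hat\rho}\ \ge\ F(\rho,\bar\rho)F(\rho,\hat\rho)-\sqrt{\big(1-F(\rho,\bar\rho)^2\big)\big(1-F(\rho,\hat\rho)^2\big)}\ \ge\ \sqrt{(1-\varepsilon^2)(1-\varepsilon'^2)}-\varepsilon\varepsilon'\,,
\end{align}
the last inequality by monotonicity of the middle expression in $F(\rho,\bar\rho),F(\rho,\hat\rho)$. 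It then remains to check the elementary identity $1-\delta^2=\big(\sqrt{(1-\varepsilon^2)(1-\varepsilon'^2)}-\varepsilon\varepsilon'\big)^2$ together with $\sqrt{(1-\varepsilon^2)(1-\varepsilon'^2)}\ge\varepsilon\varepsilon'$ (both equivalent to $\varepsilon^2+\varepsilon'^2\le1$), which upon squaring yield $F(\bar\rho,\hat\rho)^2\ge1-\delta^2$. Substituting back into the chain above and maximizing over $\bar\rho\in\cB^\varepsilon(\rho)$ completes the proof.
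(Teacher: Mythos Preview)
Your proof is correct and follows essentially the same route as the paper: fix optimizers $\hat\rho\in\cB^{\varepsilon'}(\rho)$ and $\bar\rho\in\cB^{\varepsilon}(\rho)$, use operator monotonicity and homogeneity of the fidelity to reduce to the estimate $F(\bar\rho,\hat\rho)^2\ge 1-\delta^2$, and then establish the latter via a tight-triangle type argument. The one place where you diverge is in how you obtain that last bound. The paper simply invokes Lemma~\ref{lem:purified_dist_triangle_ineq} together with the monotonicity Lemma~\ref{lem:purified_dist_bound_monotone} to get $P(\bar\rho,\hat\rho)\le\delta$ and then writes $F(\bar\rho,\hat\rho)^2=1-P(\bar\rho,\hat\rho)^2$. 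You correctly flag that, since $\bar\rho$ and $\hat\rho$ may be subnormalized, this identity is really $\bar F^2=1-P^2$, and the inequality $\bar F\ge F$ points the wrong way for the chain. Your Uhlmann/Cauchy--Schwarz argument on purifications aligned to the normalized state $\rho$ bypasses this by bounding the \emph{standard} fidelity $F(\bar\rho,\hat\rho)$ directly, and the elementary identity $1-\delta^2=\big(\sqrt{(1-\varepsilon^2)(1-\varepsilon'^2)}-\varepsilon\varepsilon'\big)^2$ then closes the argument. So your proof is a more carefully justified version of the paper's on this step, at the cost of reproving (in effect) the content of Lemma~\ref{lem:purified_dist_triangle_ineq} by hand.
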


\begin{proof}
Let $\lambda = D^{\varepsilon'}_{\max}(\rho\|\sigma) = D_{\max}(\tilde{\rho}\|\sigma)$ for $\tilde{\rho}\approx_{\varepsilon'} \rho$, it then holds that $\tilde{\rho}\leq 2^\lambda\sigma$. For some $\bar{\rho}\approx_{\varepsilon}\rho$ we have that
\begin{align}
    D^{\varepsilon}_{\min}(\rho\|\sigma) = D_{\min}(\bar{\rho}\|\sigma) &= -\log F(\bar{\rho},\sigma)^2 \\
    &\leq -\log F(\bar{\rho}, 2^{-\lambda}\tilde{\rho})^2 \\
    &= \lambda -\log F(\bar{\rho},\tilde{\rho})^2 \\
    &= D^{\varepsilon'}_{\max}(\rho\|\sigma) - \log(1 - P(\bar{\rho},\tilde{\rho})^2) \\
    &\leq D^{\varepsilon'}_{\max}(\rho\|\sigma) - \log\left(1 - \left(\varepsilon\sqrt{1-\varepsilon'^2}+\varepsilon'\sqrt{1-\varepsilon^2}\right)^2\right)\,,
\end{align}
where the last step is via the tighter triangle inequality for the purified distance (Lemma \ref{lem:purified_dist_triangle_ineq}) and a basic monotonicity argument (Lemma \ref{lem:purified_dist_bound_monotone}).
\end{proof}
\vspace{5pt}

Note that the constraint on the choices of $(\varepsilon, \varepsilon')$ is less stringent compared to \cite[Remark 5.6]{tomamichel2012framework}, which is due to the tighter triangle inequality. Next, we have the following upper bond on the min-relative entropy in terms of the hypothesis testing relative entropy.

\begin{lemma}\label{lem:dh_dmin_bound}
Let $\rho,\sigma\in\cP(A)$ and $\varepsilon\in\left(0,k^{-1/2}\right]$ for $k>1$. Then, we have that
\begin{align}
D^\varepsilon_{\min}(\rho\|\sigma) \leq D^{k\varepsilon^2}_{h}(\rho\|\sigma) - \log\left(1 - \left(\varepsilon^2\sqrt{k}+\sqrt{1- k\varepsilon^2}\sqrt{1-\varepsilon^2}\right)^2\right)\,.
\end{align}
\end{lemma}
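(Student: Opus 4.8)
The plan is to relate $D^\varepsilon_{\min}$ to $D^{k\varepsilon^2}_h$ by using a near-optimal test for the hypothesis testing quantity to construct a good state in the smoothing ball of $\rho$. Let $\delta = k\varepsilon^2$ and let $\Lambda$ with $0\le\Lambda\le I$ and $\Tr{\Lambda\rho}\ge 1-\delta$ achieve (up to an infinitesimal) the optimum in $D^\delta_h(\rho\|\sigma)$, so that $\Tr{\Lambda\sigma}=2^{-D^\delta_h(\rho\|\sigma)}$. First I would form the post-measurement operator $\bar\rho := \sqrt{\Lambda}\,\rho\,\sqrt{\Lambda}$, which is positive semidefinite with $\Tr{\bar\rho}\le 1$, and estimate its purified distance to $\rho$. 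The standard gentle-measurement bound gives $F(\rho,\bar\rho)^2 \ge \Tr{\Lambda\rho}\ge 1-\delta$ (using $F(\rho,\sqrt{\Lambda}\rho\sqrt{\Lambda})\ge \Tr{\Lambda\rho}$ after accounting for subnormalization via the generalized fidelity), hence $P(\rho,\bar\rho)\le\sqrt{\delta}=\sqrt{k}\,\varepsilon$; this is where the constraint $\varepsilon\le k^{-1/2}$ enters, ensuring $\sqrt{k}\,\varepsilon\le 1$ so the purified distance is well-defined.

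The second ingredient is an upper bound on $D_{\min}(\bar\rho\|\sigma)$. Since $\bar\rho = \sqrt{\Lambda}\rho\sqrt{\Lambda}$ and $\sqrt{\Lambda}\le I$, operator monotonicity-type manipulations under the trace norm give $F(\bar\rho,\sigma)^2 = \|\sqrt{\bar\rho}\sqrt{\sigma}\|_1^2 \ge \big(\Tr{\sqrt{\Lambda}\,\text{(something)}}\big)$; more carefully, I would use the variational form $F(\bar\rho,\sigma) = \max_{U}|\Tr{\sqrt{\bar\rho}\,U\sqrt{\sigma}}|$ or the Alberti/Uhlmann bound $F(\bar\rho,\sigma)^2\ge \Tr{\Lambda\sigma}^{\text{-type lower bound}}$. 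The cleanest route: $F(\bar\rho,\sigma)\ge \Tr{\sqrt{\Lambda}\rho^{1/2}\sigma^{1/2}\cdot(\dots)}$ is awkward, so instead I would bound $D_{\min}(\bar\rho\|\sigma) = -\log F(\bar\rho,\sigma)^2$ directly by noting $F(\bar\rho,\sigma)^2 \ge |\Tr{\Lambda^{1/2}\rho^{1/2}\sigma^{1/2}\Lambda^{1/2}\cdot\dots}|$... Actually the clean statement I want is $F(\sqrt{\Lambda}\rho\sqrt{\Lambda},\sigma)^2 \ge \Tr{\Lambda\sigma}\cdot\frac{1}{?}$ — the correct and standard fact is $\|\sqrt{\sqrt{\Lambda}\rho\sqrt{\Lambda}}\sqrt{\sigma}\|_1 \ge \Tr{\sqrt{\Lambda}\sqrt{\rho}\ldots}$; rather than chase this, the robust approach is: since $0\le\Lambda\le I$ we have $\sqrt{\Lambda}\le I$ hence $\bar\rho\le\rho$ is false in general, so instead use $F(\bar\rho,\sigma)^2\ge F(\Lambda\rho\Lambda,\sigma)$-monotonicity... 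I will use the operator-inequality-free fact $F(\bar\rho,\sigma)\ge\Tr{\Lambda\sigma}^{1/2}\Tr{\Lambda\rho}^{1/2}$ is \emph{not} generally true either. The reliable tool is: $2^{-D_{\min}(\bar\rho\|\sigma)} = F(\bar\rho,\sigma)^2 \ge \Tr{\Lambda\sigma}$, which \emph{does} hold because $F(\sqrt{\Lambda}\rho\sqrt{\Lambda},\sigma)\ge |\Tr{\sqrt{\sqrt{\Lambda}\rho\sqrt{\Lambda}}\,V\,\sqrt{\sigma}}|$ for any unitary/contraction $V$, and choosing an appropriate $V$ built from the polar decompositions yields $\ge\Tr{\Lambda^{1/2}(\Lambda^{1/2}\sigma\Lambda^{1/2})^{1/2}\dots}$; the upshot after simplification is $F(\bar\rho,\sigma)^2\ge \Tr{\Lambda\sigma}=2^{-D^\delta_h(\rho\|\sigma)}$, giving $D_{\min}(\bar\rho\|\sigma)\le D^{k\varepsilon^2}_h(\rho\|\sigma)$.

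With $\bar\rho\in\cB^{\sqrt{k}\varepsilon}(\rho)$ and $D_{\min}(\bar\rho\|\sigma)\le D^{k\varepsilon^2}_h(\rho\|\sigma)$ in hand, I would then apply the tight triangle inequality machinery exactly as in the proof of Lemma \ref{lem:dmax_dmin_bound}: take the $\varepsilon$-optimal state $\hat\rho\approx_\varepsilon\rho$ for $D^\varepsilon_{\min}(\rho\|\sigma)$, so $D^\varepsilon_{\min}(\rho\|\sigma)=-\log F(\hat\rho,\sigma)^2$, and bound $F(\hat\rho,\sigma)\ge F(\hat\rho,2^{-\mu}\bar\rho\cdot\text{direction})$... more precisely mirror the chain: $D^\varepsilon_{\min}(\rho\|\sigma) = -\log F(\hat\rho,\sigma)^2$ and I use that $\sigma \succeq 2^{-D_{\min}(\bar\rho\|\sigma)}$-related positioning is not an operator inequality, so instead the right move is to compare $\hat\rho$ and $\bar\rho$ — both are within the respective balls of $\rho$, with $P(\rho,\hat\rho)\le\varepsilon$ and $P(\rho,\bar\rho)\le\sqrt{k}\varepsilon$, so by Lemma \ref{lem:purified_dist_triangle_ineq} (applicable since $\varepsilon^2 + k\varepsilon^2 = (k+1)\varepsilon^2$ may exceed $1$ — need to check: with $\varepsilon\le k^{-1/2}$ we get $(k+1)\varepsilon^2\le (k+1)/k$ which can exceed $1$, so I must instead invoke the triangle inequality with the partners $\rho,\bar\rho$ and $\rho,\sigma$-proxy rather than $\hat\rho,\bar\rho$ directly). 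The cleanest execution: set $\mu = D_{\min}(\bar\rho\|\sigma)$; then $F(\hat\rho,\sigma)^2 \ge F(\hat\rho,2^{-\mu}\bar\rho)^2 \cdot$ — no. I would ultimately replicate verbatim the structure of Lemma \ref{lem:dmax_dmin_bound}'s proof with $(\varepsilon,\varepsilon')\leftarrow(\varepsilon,\sqrt{k}\varepsilon\text{ effectively})$ but routed through $\bar\rho$, yielding the stated $-\log\big(1-(\varepsilon^2\sqrt{k}+\sqrt{1-k\varepsilon^2}\sqrt{1-\varepsilon^2})^2\big)$ correction term; the mixed form of the correction (with $\varepsilon^2\sqrt{k}$ rather than $\varepsilon\cdot\sqrt{k}\varepsilon = \sqrt{k}\varepsilon^2$ — these agree) confirms the bookkeeping is $\varepsilon$ paired with $\sqrt{k}\varepsilon$ in the triangle-inequality formula $\varepsilon\sqrt{1-\varepsilon'^2}+\varepsilon'\sqrt{1-\varepsilon^2}$ with $\varepsilon'=\sqrt{k}\,\varepsilon$. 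The main obstacle is the middle step: establishing $F(\sqrt{\Lambda}\rho\sqrt{\Lambda},\sigma)^2 \ge \Tr{\Lambda\sigma}$ (or a sufficiently close variant) cleanly, and simultaneously controlling $P(\rho,\sqrt{\Lambda}\rho\sqrt{\Lambda})$; both rely on careful use of the generalized fidelity for the subnormalized post-measurement state and on the Hölder/variational characterization of the trace norm, and getting the exact constants to match the claimed expression is the delicate part.
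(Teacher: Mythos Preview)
Your approach has a genuine gap at its core. You try to build a state $\bar\rho=\sqrt{\Lambda}\rho\sqrt{\Lambda}$ directly from the optimal hypothesis test and then run the triangle-inequality argument of Lemma~\ref{lem:dmax_dmin_bound} with $\bar\rho$ in place of the $D_{\max}$-optimizer $\tilde\rho$. But that argument works only because $D_{\max}$ supplies an \emph{operator} inequality $\tilde\rho\preceq 2^{\lambda}\sigma$, which is what lets one write $F(\hat\rho,\sigma)^2\ge 2^{-\lambda}F(\hat\rho,\tilde\rho)^2$. Knowing $D_{\min}(\bar\rho\|\sigma)$, i.e.\ the scalar $F(\bar\rho,\sigma)$, gives no such domination of $\sigma$ by $\bar\rho$, so the chain $-\log F(\hat\rho,\sigma)^2\le -\log F(\hat\rho,\bar\rho)^2 + (\text{something})$ simply does not follow. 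You flag this yourself (``$\sigma\succeq 2^{-D_{\min}}$-related positioning is not an operator inequality'') but do not resolve it. Separately, the intermediate claim $F(\sqrt{\Lambda}\rho\sqrt{\Lambda},\sigma)^2\ge\Tr{\Lambda\sigma}$ is never established and is in fact false in general (take $\rho=|0\rangle\langle0|$, $\sigma=I/2$, $\Lambda=I$). Finally, your bookkeeping is off: pairing $\varepsilon$ with $\varepsilon'=\sqrt{k}\,\varepsilon$ in the triangle formula yields $\varepsilon\sqrt{1-k\varepsilon^2}+\sqrt{k}\,\varepsilon\sqrt{1-\varepsilon^2}$, which is \emph{not} the expression $\sqrt{k}\,\varepsilon^2+\sqrt{1-k\varepsilon^2}\sqrt{1-\varepsilon^2}$ appearing in the statement.

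The paper's proof avoids all of this by routing through $D_{\max}$ rather than building a post-measurement state. One applies Lemma~\ref{lem:dmax_dmin_bound} with the choice $\varepsilon'=\sqrt{1-k\varepsilon^2}$ (note $\varepsilon^2+\varepsilon'^2=1-(k-1)\varepsilon^2\le1$ since $k>1$), which produces exactly the claimed correction term since $\varepsilon\sqrt{1-\varepsilon'^2}=\sqrt{k}\,\varepsilon^2$ and $\varepsilon'\sqrt{1-\varepsilon^2}=\sqrt{1-k\varepsilon^2}\sqrt{1-\varepsilon^2}$. Then the smooth max-relative entropy is bounded by the hypothesis testing relative entropy via the known relation $D_{\max}^{\sqrt{\delta}}(\rho\|\sigma)\le D_h^{1-\delta}(\rho\|\sigma)-\log(1-\delta)$ (Lemma~\ref{lem:thm4_anshu2019minimax}) with $\delta=1-k\varepsilon^2$, giving $D_h^{k\varepsilon^2}$ on the right. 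The operator inequality you are missing is thus supplied \emph{externally}, by the $D_{\max}$--$D_h$ relation, rather than manufactured from the test $\Lambda$.
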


\begin{proof}
We choose $\varepsilon' = \sqrt{1 - k\varepsilon^2}$ in the preceeding Lemma \ref{lem:dmax_dmin_bound} to obtain
\begin{align}
    D^\varepsilon_{\min}(\rho\|\sigma) \leq D^{\sqrt{1 - k\varepsilon^2}}_{\max}(\rho\|\sigma) - \log f(\varepsilon) \quad\text{for}\quad f(\varepsilon) = \log\left(1 - \left(\varepsilon^2\sqrt{k}+\sqrt{1- k\varepsilon^2}\sqrt{1-\varepsilon^2}\right)^2\right)\,.
\end{align}
By a connection of the smooth max-relative entropy to the hypothesis testing relative entropy (Lemma~\ref{lem:thm4_anshu2019minimax}), we have for $\delta\in(0,1)$ 
\begin{align}
    D^{\sqrt{\delta}}_{\max}(\rho\|\sigma) \leq D_h^{1-\delta}(\rho\|\sigma)  - \log(1-\delta)\,.
\end{align}
Combining the two inequalities, we get
\begin{align}
    D^\varepsilon_{\min}(\rho\|\sigma) \leq D^{\sqrt{1 - k\varepsilon^2}}_{\max}(\rho\|\sigma) - \log f(\varepsilon)&\leq D_h^{1-(\sqrt{1-k\varepsilon^2})^2}(\rho\|\sigma) - \log f(\varepsilon) - \log k\varepsilon^2  \\
    &\leq D_h^{k\varepsilon^2}(\rho\|\sigma) - \log f(\varepsilon) - \log k\varepsilon^2\,.
\end{align}
\end{proof}
\vspace{5pt}

Finally, we recall the moderate deviation expansion for the hypothesis testing relative entropy of i.i.d.\ states.

\begin{prop}[Hypothesis testing AEP~\cite{chubb2017moderate, cheng2017moderate, watanabe2017finite}]\label{prop:chubbs_dh_expansion}
For any moderate sequence $\{a_n\}$, let $\varepsilon_n = e^{-a_n^2n}$. For $\rho,\sigma\in\cS(A)$ with $\rho\ll \sigma$ and $\eta_1,\eta_2 > 0$, there exists $n^\star\in\mathbb{N}$ such that for $n\geq n^\star$ we have
\begin{align}
    \frac{1}{n} D_{\mathrm{h}}^{\varepsilon_{n}}\left(\rho^{\otimes n} \| \sigma^{\otimes n}\right)&\leq D(\rho \| \sigma)-\sqrt{2 V(\rho \| \sigma)} a_{n} + \eta_1 a_{n}. \\
    \frac{1}{n} D_{\mathrm{h}}^{1-\varepsilon_{n}}\left(\rho^{\otimes n} \| \sigma^{\otimes n}\right)&\leq D(\rho \| \sigma)+\sqrt{2 V(\rho \| \sigma)} a_{n}+\eta_2 a_{n}\,.
\end{align}
\end{prop}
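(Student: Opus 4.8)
The plan is to treat both inequalities as converse-type bounds---upper bounds on the hypothesis-testing divergence of i.i.d.\ states---with the two error regimes handled by different tools. For the high-error inequality I would use the one-shot converse $D_h^{\varepsilon}(\rho\|\sigma) \le \widetilde D_\alpha(\rho\|\sigma) + \tfrac{\alpha}{\alpha-1}\log\tfrac{1}{1-\varepsilon}$, valid for every $\alpha>1$ (a standard Hölder/variational bound; for $\alpha\to\infty$ it degenerates to $D_h^{\varepsilon}(\rho\|\sigma) \le D_{\max}(\rho\|\sigma) + \log\tfrac1{1-\varepsilon}$, which one checks directly from $\rho\le 2^{D_{\max}(\rho\|\sigma)}\sigma$). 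By additivity of the sandwiched Rényi divergence and the choice $\varepsilon = 1-\varepsilon_n$, which makes the correction $\tfrac{\alpha}{\alpha-1}\log\tfrac1{\varepsilon_n} = \tfrac{\alpha}{\alpha-1}na_n^2$, this gives $\tfrac1n D_h^{1-\varepsilon_n}(\rho^{\otimes n}\|\sigma^{\otimes n}) \le \widetilde D_\alpha(\rho\|\sigma) + \tfrac{\alpha}{\alpha-1}a_n^2$ for all $\alpha>1$. Taylor expanding $\widetilde D_\alpha(\rho\|\sigma) = D(\rho\|\sigma) + \tfrac{\alpha-1}{2}V(\rho\|\sigma) + O((\alpha-1)^2)$ near $\alpha=1$ and choosing $\alpha_n - 1 = a_n\sqrt{2/V(\rho\|\sigma)}$ (for $V>0$; the case $V=0$ is easier, e.g.\ $\alpha_n-1 = a_n^{2/3}$) balances the two leading corrections to $\sqrt{2V(\rho\|\sigma)}\,a_n$, and since $a_n\to 0$ makes $a_n^2$ and the $O((\alpha_n-1)^2)$ remainder both $o(a_n)$, one obtains $\tfrac1n D_h^{1-\varepsilon_n}(\rho^{\otimes n}\|\sigma^{\otimes n}) \le D(\rho\|\sigma) + \sqrt{2V(\rho\|\sigma)}\,a_n + o(a_n)$, which is the second bound for $n$ large.

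The sandwiched-Rényi bound cannot reach below $D(\rho\|\sigma)$, so for the low-error inequality I would argue directly via tail probabilities. I pass to the Nussbaum--Szko\l{}a distributions $P,Q$ of $(\rho,\sigma)$, for which $D(P\|Q) = D(\rho\|\sigma)$, $V(P\|Q) = V(\rho\|\sigma)$, and the log-likelihood ratios $Z_i := \log\tfrac{P(X_i)}{Q(X_i)}$ with $X_i$ i.i.d.\ of law $P$ are bounded (using $\rho\ll\sigma$) with mean $\mu := D(\rho\|\sigma)$ and variance $v := V(\rho\|\sigma)$. The standard one-shot relations between $D_h^{\varepsilon}$, the information-spectrum relative entropy, and the smooth max-relative entropy, combined with the second-order reduction of quantum to classical hypothesis testing~\cite{li2014second, tomamichel2013hierarchy}, give $D_h^{\varepsilon_n}(\rho^{\otimes n}\|\sigma^{\otimes n}) \le D_h^{k\varepsilon_n}(P^{\otimes n}\|Q^{\otimes n}) + r_n$ for an absolute constant $k$ and a correction $r_n = o(na_n)$ (it consists of an $O(\log n)$ term---which is $o(na_n)$ because every moderate sequence satisfies $\log n = o(na_n)$, since $na_n = \sqrt n\cdot\sqrt n\,a_n$ with $\sqrt n\,a_n\to\infty$---and an $O(na_n^2)$ term; the factor $k$ on $\varepsilon_n$ is handled by Lemma~\ref{lem:moderate_seq_constant_or_polyn}). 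For the classical problem I use the Neyman--Pearson change-of-measure estimate that any test accepting $P^{\otimes n}$ with probability at least $1-\delta$ has $Q^{\otimes n}$-value at least $2^{-nt}\big(\Pr_{P^{\otimes n}}[\tfrac1n\sum_i Z_i\le t] - \delta\big)$, with the threshold $t = t_n := \mu - (1-\theta)\sqrt{2v}\,a_n$ for a small slack $\theta\in(0,1)$. The key analytic input is the classical moderate deviation principle for bounded i.i.d.\ sums: for $v>0$ and any moderate $\{a_n\}$, $\tfrac1{na_n^2}\log\Pr\big[\tfrac1n\sum_i(Z_i-\mu)\le -c\,a_n\big]\to -\tfrac{c^2}{2v}$. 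With $c = (1-\theta)\sqrt{2v}$ this yields $\Pr_{P^{\otimes n}}[\tfrac1n\sum Z_i\le t_n] \ge e^{-(1+o(1))(1-\theta)^2na_n^2}$, whose exponent exceeds $-na_n^2$ for $n$ large (as $(1-\theta)^2<1$), so this probability exceeds $(k+1)\varepsilon_n$, we may take $\delta = k\varepsilon_n$ so that the bracket is at least $\varepsilon_n$, and hence $D_h^{k\varepsilon_n}(P^{\otimes n}\|Q^{\otimes n}) \le nt_n - \log\varepsilon_n = nt_n + na_n^2$. Therefore $\tfrac1n D_h^{\varepsilon_n}(\rho^{\otimes n}\|\sigma^{\otimes n}) \le t_n + a_n^2 + o(a_n) = D(\rho\|\sigma) - (1-\theta)\sqrt{2V(\rho\|\sigma)}\,a_n + o(a_n)$, which gives the first bound on letting $\theta$ be small relative to $\eta_1$.

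I expect the main obstacle to be the quantum-to-classical reduction in the low-error case, controlled finely enough that only a constant factor in the error parameter (plus an $o(na_n)$ additive term) is lost: a naive pinching of $\rho^{\otimes n}$ onto the eigenbasis of $\sigma^{\otimes n}$ via $\rho^{\otimes n}\le\mathrm{poly}(n)\,\mathcal P_{\sigma^{\otimes n}}(\rho^{\otimes n})$ loosens the type-I error constraint in the wrong direction, so one has to route through the information-spectrum relative entropy (or work with the Nussbaum--Szko\l{}a distributions directly) and verify that no polynomial blow-up of the error parameter occurs---this matters here, unlike in the second-order regime, because even a $\sqrt{\varepsilon_n}$ loss would corrupt the leading moderate-deviation coefficient. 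Everything else---the one-shot sandwiched-Rényi converse, additivity of $\widetilde D_\alpha$, the Neyman--Pearson reduction, and the classical moderate deviation principle for bounded i.i.d.\ sums---is standard.
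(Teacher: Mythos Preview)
The paper does not prove this proposition; it is imported verbatim from~\cite{chubb2017moderate, cheng2017moderate} and used as a black box, so there is no in-paper argument to compare against. Your outline is essentially the strategy of those references.

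For the high-error bound your argument via the one-shot converse $D_h^\varepsilon\le\widetilde D_\alpha+\tfrac{\alpha}{\alpha-1}\log\tfrac{1}{1-\varepsilon}$, additivity, the Taylor expansion $\widetilde D_\alpha=D+\tfrac{\alpha-1}{2}V+O((\alpha-1)^2)$, and the balanced choice $\alpha_n-1\asymp a_n$ is exactly the approach in~\cite{chubb2017moderate} and is correct. (One bookkeeping point: the paper uses base-$2$ logarithms while $\varepsilon_n=e^{-na_n^2}$, so $\log\tfrac{1}{\varepsilon_n}=na_n^2/\ln 2$; this merely rescales the optimiser and leaves the coefficient $\sqrt{2V}$ intact.)

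For the low-error bound your diagnosis is right: the only nontrivial step is the quantum-to-classical reduction, and it must not cost more than a constant factor in the error parameter. The clean way to see that it does not is to avoid pinching altogether and instead use the converse relation $D_h^{\varepsilon}(\rho\|\sigma)\le \underline D_s^{\,\varepsilon+\delta}(\rho\|\sigma)-\log\delta$ (this is the contrapositive of the paper's Lemma~\ref{lem:prop4.7_datta2014second}), and then observe that for i.i.d.\ states the defining trace $\mathrm{Tr}\big((\rho^{\otimes n}-2^\gamma\sigma^{\otimes n})\{\rho^{\otimes n}>2^\gamma\sigma^{\otimes n}\}_+\big)$ is controlled by the Nussbaum--Szko\l a log-likelihood tail with no loss in the error parameter; the only additive cost is $-\log\delta$, which with $\delta=\varepsilon_n$ contributes $na_n^2=o(na_n)$. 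Once reduced to the classical tail, your Neyman--Pearson change-of-measure bound together with the moderate-deviation lower tail estimate is the standard argument and gives the claimed coefficient. So the proposal is sound; just make the one-shot inequality you invoke for the reduction explicit rather than gesturing at~\cite{li2014second,tomamichel2013hierarchy} collectively.
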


%%%%%%%%%%%%%%%%%%%%%%%%%%%%%%%%%%%%%%%%%%%%%%%%%%%%%%%%%%%%%%%%%%%%%%%%%%%%%%%%%%%%%%%%%%%%%%%%%%%%%%%%%%%%%%%%

\subsection{Proof of Theorem \ref{thm:state_splitting_mod_devation}}

We now have the main technical results in hand to derive our moderate deviation analysis: An expansion for the partially smoothed max-information of i.i.d.\ states in the low-error moderate deviation regime.

\begin{prop}[Partially smoothed max-information AEP]\label{prop:smooth_imax_expansion}
Let $\varepsilon_n = e^{-na^2_n}$ for a moderate sequence $\{a_n\}$ and let $\rho_{BR}\in\cS(BR)$. Then, for $\eta>0$, there exists $n^\star\in\mathbb{N}$ such that for $n\geq n^\star$ we have
\begin{align}
I(B:R)_\rho + a_n\sqrt{4V(B:R)_\rho} - \eta a_n\leq \frac{1}{n}I^{\varepsilon_n}_{\max}(\dot{R}^n;B^n)_{\rho^{\otimes n}} &\leq I(B:R)_\rho + a_n\sqrt{4V(B:R)_\rho} + \eta a_n\,.
\end{align}
\end{prop}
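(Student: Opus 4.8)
The plan is to determine $\tfrac{1}{n}I^{\varepsilon_n}_{\max}(\dot{R}^n;B^n)_{\rho^{\otimes n}}$ by sandwiching it, up to $o(na_n)$ error terms, between two copies of the hypothesis testing relative entropy $D_h^{1-\varepsilon_n^2}(\rho_{RB}^{\otimes n}\|\rho_R^{\otimes n}\otimes\rho_B^{\otimes n})$, and then applying the moderate deviation expansion of the latter. The key observation is that $\{\sqrt{2}\,a_n\}$ is again a moderate sequence with $1-\varepsilon_n^2 = 1 - e^{-n(\sqrt{2}\,a_n)^2}$, so the high-error branch of Proposition~\ref{prop:chubbs_dh_expansion} (together with its matching lower bound, likewise part of the results of \cite{chubb2017moderate, cheng2017moderate}) gives
\begin{align}
\tfrac{1}{n} D_h^{1-\varepsilon_n^2}(\rho_{RB}^{\otimes n}\|\rho_R^{\otimes n}\otimes\rho_B^{\otimes n}) = D(\rho_{RB}\|\rho_R\otimes\rho_B) + \sqrt{2V(\rho_{RB}\|\rho_R\otimes\rho_B)}\,(\sqrt{2}\,a_n) + o(a_n) = I(B:R)_\rho + a_n\sqrt{4V(B:R)_\rho} + o(a_n)\,,
\end{align}
using $D(\rho_{RB}\|\rho_R\otimes\rho_B)=I(B:R)_\rho$ and $\sqrt{2}\sqrt{2V(B:R)_\rho}=\sqrt{4V(B:R)_\rho}$; it is the factor $2$ in the smoothing exponent that promotes the coefficient from $\sqrt{2V}$ to $\sqrt{4V}$. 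Any constant multiplicative rescaling of $\varepsilon_n^2$ inside the smoothing parameter is absorbed by Lemma~\ref{lem:moderate_seq_constant_or_polyn}, and any additive $O(\log n)$ or $O(na_n^2)$ change of the entropic quantity becomes $o(a_n)$ after dividing by $n$, since $a_n\sqrt{n}\to\infty$ and $a_n\to 0$.

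The real work is thus the one-shot sandwich, i.e.\ proving that for suitable constants $c,c'>0$,
\begin{align}
D_h^{1-c\varepsilon_n^2}(\rho_{RB}^{\otimes n}\|\rho_R^{\otimes n}\otimes\rho_B^{\otimes n}) - o(na_n) \;\le\; I^{\varepsilon_n}_{\max}(\dot{R}^n;B^n)_{\rho^{\otimes n}} \;\le\; D_h^{1-c'\varepsilon_n^2}(\rho_{RB}^{\otimes n}\|\rho_R^{\otimes n}\otimes\rho_B^{\otimes n}) + o(na_n)\,,
\end{align}
which is the ``tight relation between the partially smoothed max-information and the hypothesis testing relative entropy'' specialised to i.i.d.\ states. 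For the upper bound I would fix the feasible reference state $\sigma_{B^n}=\rho_B^{\otimes n}$ in the minimisation defining $I_{\max}$ and then, following the proof of the smooth-max-relative-entropy/$D_h$ relation of Lemma~\ref{lem:thm4_anshu2019minimax}, build from the optimal $D_h^{1-\varepsilon_n^2}$-test a feasible smoothing state $\bar\rho$ with the correct marginal $\bar\rho_{R^n}=\rho_R^{\otimes n}$ and with $D_{\max}(\bar\rho\|\rho_R^{\otimes n}\otimes\rho_B^{\otimes n})$ controlled. For the lower bound I would pass, via a purification $\rho_{RBC}$ and Uhlmann's theorem (following~\cite{anshu2020partially}), to a smoothed min- or max-relative entropy, and then convert smoothing radius $\varepsilon_n$ into a hypothesis testing level of order $\varepsilon_n^2$ using Lemmas~\ref{lem:dmax_dmin_bound} and~\ref{lem:dh_dmin_bound}; the tight triangle inequality of Lemma~\ref{lem:purified_dist_triangle_ineq} is exactly what makes this conversion land at level $\sim\varepsilon_n^2$ rather than the lossier $\sim\varepsilon_n$, and it is precisely this squaring that yields the coefficient $\sqrt{4V}$ instead of $\sqrt{2V}$. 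In both directions the minimisation over the $B^n$-register reference state, which need not be of i.i.d.\ form, is handled by restricting to permutation-invariant states and comparing against a universal/de~Finetti-type state dominating $\text{poly}(n)^{-1}\rho_B^{\otimes n}$, which changes the hypothesis testing relative entropy by only $O(\log n)$.

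Assembling the pieces, dividing by $n$, and choosing the free parameters (the $\eta_i$ in Proposition~\ref{prop:chubbs_dh_expansion}, the constants $k$, $c$, $c'$, and the degree of the $\text{poly}(n)$) so that their total contribution is at most $\eta a_n$ then gives both inequalities of the proposition for all sufficiently large $n$. I expect the main obstacle to be the one-shot sandwich, and within it the handling of the partial-smoothing constraint $\bar\rho_{R^n}=\rho_R^{\otimes n}$: since this constraint rules out the standard ``truncate onto the typical subspace'' smoothing state, the passage from the max-type quantity to hypothesis testing must go entirely through the smooth relative-entropy inequalities of this section, one must execute the conversion of the smoothing radius with the tight constant $\varepsilon_n\mapsto\varepsilon_n^2$ (a weaker conversion corrupts the second-order coefficient), and one must control the reference-state minimisation uniformly in $n$.
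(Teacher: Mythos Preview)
Your high-level strategy\,---\,sandwich $I^{\varepsilon_n}_{\max}(\dot R^n;B^n)$ between two copies of $D_h^{1-\Theta(\varepsilon_n^2)}$ and invoke the hypothesis-testing AEP\,---\,is exactly the paper's, and you correctly identify that the tight triangle inequality (Lemma~\ref{lem:purified_dist_triangle_ineq}) is what produces the $\varepsilon_n\mapsto\varepsilon_n^2$ conversion and hence the coefficient $\sqrt{4V}$ rather than $\sqrt{2V}$. The lower-bound route via a purification, Uhlmann, and Lemma~\ref{lem:dh_dmin_bound} also matches the paper. Two places, however, are handled more simply in the paper than you propose.

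\emph{Upper bound.} You plan to build, from the optimal $D_h$-test, a smoothing state $\bar\rho$ obeying the partial constraint $\bar\rho_{R^n}=\rho_R^{\otimes n}$ directly; you correctly flag this as the main obstacle, and indeed the construction behind Lemma~\ref{lem:thm4_anshu2019minimax} does not preserve marginals. The paper sidesteps the issue entirely: it first applies Lemma~\ref{lem:theorem4_anshu2020partially} to pass from the partially smoothed to the fully smoothed max-information at the price of an additive $O(\log(1/\varepsilon_n))=O(na_n^2)=n\cdot o(a_n)$ term. Once the marginal constraint is gone, one fixes $\sigma_{B^n}=\rho_B^{\otimes n}$ and Lemma~\ref{lem:thm4_anshu2019minimax} applies verbatim. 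So the ``hard step'' you anticipate is not actually needed.

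\emph{The $\sigma_{B^n}$ minimisation.} Your de Finetti/universal-state reduction is unnecessary in either direction. For the upper bound, choosing $\sigma_{B^n}=\rho_B^{\otimes n}$ already gives an upper bound. For the lower bound the paper relaxes the partial constraint and then invokes the R\'enyi duality of Lemma~\ref{lem:duality_renyi_relative_entropies},
\[
\widetilde I_\infty\big(\tilde\rho_{B^nR^n}\big\|\rho_R^{\otimes n}\big)=-\widetilde I_{1/2}\big(\tilde\rho_{R^nR'^n}\big\|(\rho_R^{\otimes n})^{-1}\big),
\]
which flips the minimisation over $\sigma_{B^n}$ into a maximisation over $\sigma_{R'^n}$; picking $\sigma_{R'^n}=\rho_{R'}^{\otimes n}$ then already yields a lower bound and lands at $-D^{\varepsilon_n}_{\min}(\rho_{RR'}^{\otimes n}\|(\rho_R^{-1})^{\otimes n}\otimes\rho_{R'}^{\otimes n})$, to which Lemma~\ref{lem:dh_dmin_bound} and Proposition~\ref{prop:chubbs_dh_expansion} apply. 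A second use of duality (now for $D$ and $V$) rewrites the result back in terms of $I(B:R)_\rho$ and $V(B:R)_\rho$. Your universal-state route could likely be made to work with $O(\log n)=n\cdot o(a_n)$ corrections, but the duality argument makes it redundant.
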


\begin{proof}
We start by showing the upper bound. For $\eta>0$, there exists $n_1^\star$ such that for $n\geq n_1^\star$ we have
\begin{align}
    I^{\varepsilon_n}_{\max}(\dot{R}^n;B^n)_{\rho^{\otimes n}}
    &\leq I^{\frac{\varepsilon_n}{4}}_{\max}(R^n;B^n)_{\rho^{\otimes n}} +  \log\frac{8+(\frac{\varepsilon_n}{2})^2}{(\frac{\varepsilon_n}{2})^2} \\
    &\leq D^{\frac{\varepsilon_n}{4}}_{\max}(\rho_{BR}^{\otimes n}\|\rho_{B}^{\otimes n}\otimes\rho_R^{\otimes n})+  \log\frac{8+(\frac{\varepsilon_n}{2})^2}{(\frac{\varepsilon_n}{2})^2} \\
    & \leq D_h^{1 - (\frac{\varepsilon_n}{4})^2}(\rho_{BR}^{\otimes n}\|\rho_{B}^{\otimes n}\otimes\rho_R^{\otimes n})+  \log\frac{8+(\frac{\varepsilon_n}{2})^2}{(\frac{\varepsilon_n}{2})^2} \\
    &\leq nD(\rho_{BR}\|\rho_{B}\otimes\rho_R) + n\cdot a_n\sqrt{4V(\rho_{BR}\|\rho_{B}\otimes\rho_R)} + n\cdot\eta a_n\,,
\end{align}
where the first inequality follows from the bound on the partially smoothed max-information with the smoothed max-information (Lemma~\ref{lem:theorem4_anshu2020partially}), the second inequality is by choosing $\rho_R^{\otimes n}$ instead of a minimization, the third inequality follows from the bounds between the max relative entropy and the hypothesis testing relative entropy (Lemma~\ref{lem:thm4_anshu2019minimax}) and the fourth inequality follows by the AEP for the hypothesis testing relative entropy (Proposition \ref{prop:chubbs_dh_expansion}), a property of moderate deviation sequences (Lemma \ref{lem:moderate_seq_constant_or_polyn}), the boundedness of the relative entropy variance (Lemma~\ref{lem:bounded_mutual_info_variance}), and noting that the term in the logarithm is $n\cdot o(a_n)$. 

To prove the lower bound, let $\rho_{BRR'}^{\otimes n}$ be a purification of $\rho_{BR}^{\otimes n}$ with $R'\cong BR$. For any $\tilde{\rho}_{B^nR^n}\approx_{\varepsilon_n}\rho_{BR}^{\otimes n}$, there exists a purification $\tilde{\rho}_{B^nR^nR'^n}\approx_{\varepsilon_n}\rho_{BRR'}^{\otimes n}$ by Uhlmann's theorem. We have
\begin{align}
    I^{\varepsilon_n}_{\max}(\dot{R}^n;B^n)_{\rho^{\otimes n}}
    &= \inf\limits_{\tilde{\rho}_{B^nR^n}\in\cB^{\varepsilon_n}\left(\rho_{BR}^{\otimes n}\right)\atop \ \tilde{\rho}_{R^n }= \rho_R^{\otimes n}}\inf\limits_{\sigma_{B^n}} D_{\max}(\tilde{\rho}_{B^nR^n}\|\rho_R^{\otimes n}\otimes \sigma_{B^n})\\
    &\geq \inf\limits_{\tilde{\rho}\in\cB^{\varepsilon_n}\left(\rho^{\otimes n}\right)}\inf\limits_{\sigma_{B^n}} D_{\max}(\tilde{\rho}_{B^nR^n}\|\rho_R^{\otimes n}\otimes \sigma_{B^n})\\
    & = \inf_{\tilde{\rho}\in\cB^{\varepsilon_n}\left(\rho^{\otimes n}\right)} \widetilde{I}_{\infty}(\tilde{\rho}_{B^nR^n}\|\rho_R^{\otimes n}) \\
    &= \inf\limits_{\tilde{\rho}\in\cB^{\varepsilon_n}\left(\rho^{\otimes n}\right)} -\widetilde{I}_{\frac{1}{2}}(\tilde{\rho}_{R^nR'^n}\|(\rho_R^{\otimes n})^{-1}) \\
    &\geq \inf\limits_{\tilde{\rho}\in\cB^{\varepsilon_n}\left(\rho^{\otimes n}\right)} -\widetilde{D}_{\frac{1}{2}}(\tilde{\rho}_{R^nR'^n}\|(\rho_R^{\otimes n})^{-1}\otimes\rho_{R'}^{\otimes n}) \\
    &\geq - D^{\varepsilon_n}_{\min}(\rho_{RR'}^{\otimes n}\| (\rho_R^{\otimes n})^{-1}\otimes\rho_{R'}^{\otimes n})\,,
\end{align}
where the first inequality follows from relaxing the constraint on $\tilde{\rho}_{BR}$, the second equality is from the definition of the sandwiched R\'enyi mutual information, the third equality follows from the duality of the sandwiched R\'enyi mutual information (Lemma~\ref{lem:duality_renyi_relative_entropies}) and choosing a specific purification satisfying $\tilde{\rho}_{B^nR^nR'^n}\approx_{\varepsilon_n}\rho_{BRR'}^{\otimes n}$. The second inequality follows by choosing a $\rho_{R'}^{\otimes n}$ instead of a minimization over $\sigma_{R'}^n\in\cS\left(R'^{\otimes n}\right)$, and the last inequality is by the definition of the smoothed min-relative entropy. We now choose $\eta>0$ and $k>1$ to obtain some $n_2^\star$ such that for all $n\geq n_2^\star$, we have
\begin{align}
    - D^{\varepsilon_n}_{\min}(\rho_{RR'}^{\otimes n}\| (\rho_R^{\otimes n})^{-1}\otimes\rho_{R'}^{\otimes n})
    &\geq -D^{k\varepsilon^2_n}_{h}(\rho_{RR'}^{\otimes n}\| (\rho_R^{\otimes n})^{-1}\otimes\rho_{R'}^{\otimes n}) - n\cdot\frac{\eta a_n}{2}\\
    &\geq -nD(\rho_{RR'}\|\rho_R^{-1}\otimes\rho_{R'}) + n\cdot a_n\sqrt{4V(\rho_{RR'}\|\rho_R^{-1}\otimes\rho_{R'})} -n\cdot\eta a_n \\
    &= nD(\rho_{BR}\|\rho_R\otimes\rho_{B}) + n\cdot a_n\sqrt{4V(\rho_{BR}\|\rho_R\otimes\rho_{B})}- n\cdot\eta a_n\,,
\end{align}
where the first inequality follows due to Lemma \ref{lem:dh_dmin_bound} since for sufficiently large $n$, we will have $\varepsilon_n\in \left(0, k^{-1/2}\right]$ for any choice of $k>1$ as well as the fact that $\log f(\varepsilon_n) = n\cdot o(a_n)$, the second inequality is due to the AEP for the hypothesis testing relative entropy (Proposition \ref{prop:chubbs_dh_expansion}), a property of moderate deviation sequences (Lemma \ref{lem:moderate_seq_constant_or_polyn}), and the boundedness of the relative entropy variance (Lemma~\ref{lem:bounded_mutual_info_variance}). The final equality uses the duality of the relative entropy and relative entropy variance (Lemma~\ref{lem:duality_renyi_relative_entropies}). By choosing $n^\star = \max(n_1^\star, n_2^\star)$, the proposition follows.
\end{proof}
\vspace{5pt}

The AEP for the partially smoothed max-information (Proposition \ref{prop:smooth_imax_expansion}) is strengthened straightforwardly in the following corollary by noting that constant multiplicative factors on the error $\varepsilon_n$ do not affect the moderate deviation analysis and multiplicative factors that are polynomial in $n$ do not affect the strict moderate deviation analysis. Both results hold due to a property of moderate deviation sequences (Lemma \ref{lem:moderate_seq_constant_or_polyn}) and the boundedness of the relative entropy variance (Lemma~\ref{lem:bounded_mutual_info_variance}).

\begin{corollary}\label{cor:keps_in_Imax_expansion}
Proposition \ref{prop:smooth_imax_expansion} holds for $0 < \varepsilon'_n \leq 1$ with $\varepsilon'_n = \Theta(e^{-a_n^2n})$ and a moderate sequence $\{a_n\}$. That is, for $\eta>0$ there exists $n^\star\in\mathbb{N}$ such that for $n\geq n^\star$ we have
\begin{align}
I(B:R)_\rho + a_n\sqrt{4V(B:R)_\rho} - \eta a_n\leq \frac{1}{n}I^{\varepsilon_n'}_{\max}(\dot{R}^n;B^n)_{\rho^{\otimes n}} &\leq I(B:R)_\rho + a_n\sqrt{4V(B:R)_\rho} + \eta a_n\,.
\end{align}
Moreover, for a strictly moderate sequence $\{a_n\}$ and $\varepsilon'_n = \mathrm{poly}(n)e^{-a_n^2n}$, for $\eta>0$, there exists $n^\star\in\mathbb{N}$ such that for $n\geq n^\star$ we have
\begin{align}
I(B:R)_\rho + a_n\sqrt{4V(B:R)_\rho} - \eta a_n\leq \frac{1}{n}I^{\varepsilon_n'}_{\max}(\dot{R}^n;B^n)_{\rho^{\otimes n}} &\leq I(B:R)_\rho + a_n\sqrt{4V(B:R)_\rho} + \eta a_n\,.
\end{align}
\end{corollary}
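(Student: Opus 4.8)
The plan is to leverage Lemma~\ref{lem:moderate_seq_constant_or_polyn} directly: the only places where the specific value of $\varepsilon_n = e^{-na_n^2}$ (rather than just $\Theta$ of it) entered the proof of Proposition~\ref{prop:smooth_imax_expansion} were through the two invocations of the hypothesis testing AEP (Proposition~\ref{prop:chubbs_dh_expansion}), once on $D_h^{1-(\varepsilon_n/4)^2}$ in the upper bound and once on $D_h^{k\varepsilon_n^2}$ in the lower bound. So I would re-run the same argument with $\varepsilon_n'$ in place of $\varepsilon_n$ and check that each such invocation still goes through with a negligible change in the higher-order term.

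Concretely, first I would observe that if $\varepsilon_n' = \Theta(e^{-na_n^2})$ then $\varepsilon_n' = \Theta(\varepsilon_n)$, and in the upper bound we need the hypothesis testing AEP applied to error parameter $1 - (\varepsilon_n'/4)^2$; since $(\varepsilon_n'/4)^2 = \Theta(e^{-2na_n^2}) = \Theta(e^{-nb_n^2})$ where $b_n = \sqrt{2}a_n$ is itself a moderate sequence, and multiplying by the $\Theta(1)$ constant keeps it of the form $e^{-nc_n^2}$ for a moderate sequence $\{c_n\}$ with $c_n \le \sqrt{2}a_n + \eta' a_n$ by Lemma~\ref{lem:moderate_seq_constant_or_polyn}, the high-error branch of Proposition~\ref{prop:chubbs_dh_expansion} applies and yields $D(\rho_{BR}\|\rho_B\otimes\rho_R) + c_n\sqrt{2V} + \eta'' a_n \le D + a_n\sqrt{4V} + \eta a_n$ after absorbing the $o(a_n)$ slack from $c_n \le \sqrt 2 a_n(1+\eta')$ and from the residual $\log\frac{8+(\varepsilon_n'/2)^2}{(\varepsilon_n'/2)^2} = O(na_n^2) = n\cdot o(a_n)$ term (using boundedness of the variance, Lemma~\ref{lem:bounded_mutual_info_variance}). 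Second, for the lower bound, the relevant error is $k(\varepsilon_n')^2 = \Theta(e^{-2na_n^2})$ which is again $e^{-nc_n^2}$ for a moderate sequence, the condition $\varepsilon_n' \in (0, k^{-1/2}]$ holds for large $n$, and $\log f(\varepsilon_n')$ is still $n\cdot o(a_n)$; so the low-error branch of Proposition~\ref{prop:chubbs_dh_expansion} applies and the same bookkeeping gives the matching lower bound. For the polynomial case, the identical argument runs with ``moderate sequence'' replaced by ``strictly moderate sequence'' throughout, invoking the second half of Lemma~\ref{lem:moderate_seq_constant_or_polyn}: $\mathrm{poly}(n)\,e^{-na_n^2}$ raised to a fixed power and multiplied by a constant is still $\mathrm{poly}(n)\,e^{-na_n^2} = e^{-nc_n^2}$ for a moderate sequence $\{c_n\}$ with $c_n \le \sqrt 2 a_n(1+\eta')$ when $\{a_n\}$ is strictly moderate, and the residual logarithmic corrections are $n\cdot o(a_n)$ precisely because of the extra $\frac{n}{\log n}a_n^2 \to \infty$ condition.

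I would write the proof as: ``This follows by repeating the proof of Proposition~\ref{prop:smooth_imax_expansion} verbatim with $\varepsilon_n'$ in place of $\varepsilon_n$; the only modifications are that each application of the hypothesis testing AEP (Proposition~\ref{prop:chubbs_dh_expansion}) is now to an error parameter of the form $\Theta(e^{-2na_n^2})$ (resp.\ $\mathrm{poly}(n)e^{-2na_n^2}$) or its complement, which by Lemma~\ref{lem:moderate_seq_constant_or_polyn} equals $e^{-nc_n^2}$ for a moderate sequence $\{c_n\}$ with $c_n \le \sqrt{2}a_n + \eta' a_n$ for any $\eta' > 0$ and large $n$ (resp.\ for a strictly moderate sequence $\{a_n\}$), and the residual additive corrections — the $\log\frac{8+(\varepsilon_n'/2)^2}{(\varepsilon_n'/2)^2}$ term and $\log f(\varepsilon_n')$ — are $n\cdot o(a_n)$, using the boundedness of the relative entropy variance (Lemma~\ref{lem:bounded_mutual_info_variance}).'' The main obstacle — really the only thing to be careful about — is confirming that the square $(\varepsilon_n')^2$ and the polynomial prefactor together do not spoil the moderate (resp.\ strictly moderate) property and that the slack $c_n - \sqrt{2}a_n$ can genuinely be absorbed into the $\eta a_n$ budget; this is exactly what Lemma~\ref{lem:moderate_seq_constant_or_polyn} is designed to certify, so there is no real difficulty beyond careful bookkeeping.
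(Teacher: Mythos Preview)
Your proposal is correct and uses the same ingredients as the paper --- Lemma~\ref{lem:moderate_seq_constant_or_polyn} and the boundedness of the relative entropy variance (Lemma~\ref{lem:bounded_mutual_info_variance}). The only organizational difference is that the paper applies Proposition~\ref{prop:smooth_imax_expansion} as a black box (writing $\varepsilon_n' = e^{-nb_n^2}$ for a moderate sequence $\{b_n\}$ with $b_n = a_n + o(a_n)$, applying the proposition to $\{b_n\}$, and then converting back), whereas you re-open the proof of Proposition~\ref{prop:smooth_imax_expansion} and invoke Lemma~\ref{lem:moderate_seq_constant_or_polyn} at the level of the hypothesis testing AEP calls; both arrive at the same place with the same bookkeeping.
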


\begin{proof}[Proof of Theorem \ref{lem:moderate_seq_constant_or_polyn}]
The moderate deviation analysis of quantum state splitting (Theorem \ref{thm:state_splitting_mod_devation}) is now straightforward. For any purification $\rho_{ABR}$ of $\rho_{AB}$, we obtain the cost of one-shot quantum state splitting (Theorem \ref{thm:state_splitting_smooth_imax}). By combining the AEP for the partially smoothed max-information (Proposition \ref{prop:smooth_imax_expansion}) and noting that constant multiplicative factors do not change the communication cost (Corollary~\ref{cor:keps_in_Imax_expansion}), the claim follows.
\end{proof}

%%%%%%%%%%%%%%%%%%%%%%%%%%%%%%%%%%%%%%%%%%%%%%%%%%%%%%%%%%%%%%%%%%%%%%%%%%%%%%%%%%%%%%%%%%%%%%%%%%%%%%%%%%%%%%%%

\subsection{Equivalence: Quantum state merging}

Quantum state merging can be understood as a time reversed version of a quantum state splitting \cite{Abeyesinghe09}. Alice and Bob start with a state $\rho_{ABR}$, where Alice holds the $A$ quantum register, Bob holds the $B$ quantum register and $R$ is inaccessible. The goal is to use pre-shared entanglement and one-way communication to send the $B$ quantum register to Alice. An equivalent statement to the cost of one-shot quantum state splitting (Theorem \ref{thm:state_splitting_smooth_imax}) was shown for one-shot quantum state merging in~\cite[Theorem 6]{anshu2020partially}. Hence, our moderate deviation analysis for quantum state splitting (Theorem \ref{thm:state_splitting_mod_devation}) also holds for quantum state merging. 

%%%%%%%%%%%%%%%%%%%%%%%%%%%%%%%%%%%%%%%%%%%%%%%%%%%%%%%%%%%%%%%%%%%%%%%%%%%%%%%%%%%%%%%%%%%%%%%%%%%%%%%%%%%%%%%%

\section{Quantum source coding}\label{sec:extensions}

Here, we investigate the task of quantum source coding in the moderate deviation regime. There are two main settings for source coding, namely the ensemble and purification settings based on whether one uses the fidelity or entanglement fidelity to measure the success of source coding. The ensemble setting is further divided into the blind and visible settings, where the state to be compressed is either known or unknown. More precise definitions of these settings can be found in the literature~\cite{schumacher1995quantum, hayashi2009information, hayashi2002exponents, hayashi2002simple}. Here, we consider quantum source coding in the blind purification scheme, which is equivalent to quantum state splitting with a trivial $A$ quantum register and no entanglement-assistance.

\begin{definition}[Quantum source coding]\label{def:source_coding}
Let $\rho_{B}\in \cS(B)$ and $\varepsilon\in[0,1]$. A one-shot quantum source coding protocol consists of:
\begin{enumerate}
    \item Quantum registers Q and $A_1\cong B$
    \item An encoding quantum channel $\cE_{A_1\rightarrow Q}$
    \item A decoding quantum channel $\cD_{Q\rightarrow B}$.
\end{enumerate}
A $\{q,\varepsilon\}$-one-shot quantum source coding protocol of $\rho_{B}$ is such that $q = \log|Q|$ with
\begin{align}
\text{$(\cD\circ\cE)(\rho_{A_1R})\approx_{\varepsilon}\rho_{BR}$ for any extension $\rho_{A_1R}$ of $\rho_{A_1}$.}
\end{align}
The minimal quantum communication cost of source coding of $\rho_B$ is defined as
\begin{align}
q^\star_{\varepsilon}(\rho_{B}) = \min\Big\{q\in\mathbb{N} : \exists\text{ a} \ \{q, \varepsilon\}\text{-one-shot quantum source coding protocol of } \rho_{B}\Big\}\,.
\end{align}
\end{definition}

We start with the following lemma about the mutual information variance for pure states.

\begin{lemma}[Pure state mutual information variance]\label{lem:mutual_info_variance}
For $\rho_{AB}\in\cS(AB)$ pure, the mutual information variance and the conditional entropy variance $V(\rho_{AB}\|I_A\otimes\rho_B)$ take the form
\begin{align}
V(A:B)_\rho = 4V(A)_\rho,\qquad V(\rho_{AB}\|I_A\otimes\rho_B) &= V(A)_\rho\,,
\end{align}
respectively.
\end{lemma}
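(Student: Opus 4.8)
The plan is to use the Schmidt decomposition of the pure state and reduce both variances to a single spectral quantity. Write the Schmidt decomposition of (a purification realizing) $\rho_{AB}$ as $\ket{\psi}_{AB} = \sum_i \sqrt{p_i}\,\ket{i}_A\ket{i}_B$, so that $\rho_A = \sum_i p_i \ketbra{i}{i}_A$ and $\rho_B = \sum_i p_i \ketbra{i}{i}_B$ share the spectrum $\{p_i\}$. Put $Q := \sum_i p_i (\log p_i)^2 = \Tr{\rho_A(\log\rho_A)^2} = \Tr{\rho_B(\log\rho_B)^2}$, so that $V(A)_\rho = Q - S(A)_\rho^2$. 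Observe that $\rho_{AB}\ll\rho_A\otimes\rho_B$ and $\rho_{AB}\ll I_A\otimes\rho_B$ both hold by inspection of the Schmidt form, so all quantities below are finite.

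The first step is to establish the general simplification that, for any pure $\rho = \ketbra{\psi}{\psi}$ and any $\sigma\in\cP(A)$ with $\rho\ll\sigma$,
\begin{align}
D(\rho\|\sigma) = -\Tr{\rho\log\sigma}\qquad\text{and}\qquad V(\rho\|\sigma) = \Tr{\rho(\log\sigma)^2} - \Tr{\rho\log\sigma}^2\,.
\end{align}
This holds because $\rho\ket{\psi} = \ket{\psi}$ forces $(\log\rho)\ket{\psi} = 0$ (the single nonzero eigenvalue of $\rho$ is $1$). Hence upon expanding $\Tr{\rho(\log\rho - \log\sigma)^2}$, every term carrying a factor of $\log\rho$, namely $\Tr{\rho(\log\rho)^2} = \|(\log\rho)\ket{\psi}\|^2$ and the two cross terms $\bra{\psi}(\log\rho)(\log\sigma)\ket{\psi}$ and $\bra{\psi}(\log\sigma)(\log\rho)\ket{\psi}$, vanishes, leaving $\bra{\psi}(\log\sigma)^2\ket{\psi}$; similarly $\Tr{\rho\log\rho}=0$. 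The only point requiring a little care here is the convention for $\log\rho$ on $\ker\rho$ (equivalently the generalized-inverse convention), which is harmless since it is always hit by $\rho$ or by $\ket{\psi}$.

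With this reduction in hand, the remainder is a direct computation in the Schmidt basis using $\log(\rho_A\otimes\rho_B) = \log\rho_A\otimes I + I\otimes\log\rho_B$. The linear terms give $\Tr{\rho_{AB}\log(\rho_A\otimes\rho_B)} = -S(A)_\rho - S(B)_\rho = -2S(A)_\rho$ and $\Tr{\rho_{AB}(I_A\otimes\log\rho_B)} = -S(A)_\rho$. For the quadratic terms, the two diagonal pieces contribute $\Tr{\rho_A(\log\rho_A)^2} = \Tr{\rho_B(\log\rho_B)^2} = Q$ each, while the cross term evaluates in the Schmidt basis to $\bra{\psi}(\log\rho_A\otimes\log\rho_B)\ket{\psi} = \sum_i p_i(\log p_i)^2 = Q$, so $\Tr{\rho_{AB}(\log(\rho_A\otimes\rho_B))^2} = 4Q$ whereas $\Tr{\rho_{AB}(I_A\otimes\log\rho_B)^2} = Q$. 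Putting the pieces together yields $V(A:B)_\rho = 4Q - (2S(A)_\rho)^2 = 4(Q - S(A)_\rho^2) = 4V(A)_\rho$ and $V(\rho_{AB}\|I_A\otimes\rho_B) = Q - S(A)_\rho^2 = V(A)_\rho$.

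I do not anticipate a genuine obstacle: once the reduction in the first step is set up, everything is bookkeeping. The only subtleties worth flagging explicitly in the write-up are the handling of $\log\rho$ on $\ker\rho$ used to kill the $\log\rho$-terms, and the absolute-continuity checks ensuring $D$ and $V$ are finite, both of which follow immediately from the Schmidt decomposition.
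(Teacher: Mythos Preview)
Your proof is correct and follows essentially the same approach as the paper: both arguments exploit that $\log\rho_{AB}$ annihilates the pure state to drop those terms, then evaluate the remaining pieces (including the cross term $\Tr{\rho_{AB}\log\rho_A\otimes\log\rho_B}=Q$) in the Schmidt basis. Your presentation is slightly more systematic in first isolating the general reduction $V(\rho\|\sigma)=\Tr{\rho(\log\sigma)^2}-\Tr{\rho\log\sigma}^2$ for pure $\rho$, but the substance is the same.
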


\begin{proof}
From the definition of the relative entropy variance, we have
\begin{align}
&V(\rho_{AB}\|\rho_A\otimes\rho_B)\nonumber\\
&= \Tr{\rho_{AB}(\log\rho_{AB})^2} + \Tr{\rho_{A}(\log\rho_{A})^2} + \Tr{\rho_{B}(\log\rho_{B})^2} + 2\Tr{\rho_{AB}\log\rho_A\otimes\log\rho_B} \nonumber \\ 
&\quad- 2\Tr{\rho_{AB}\log\rho_{AB}\log\rho_A} -2\Tr{\rho_{AB}\log\rho_{AB}\log\rho_B} - D(\rho_{AB}\|\rho_A\otimes\rho_B)^2 \\
&= \Tr{\rho_{A}(\log\rho_{A})^2} + \Tr{\rho_{B}(\log\rho_{B})^2} + 2\Tr{\rho_{AB}\log\rho_A\otimes\log\rho_B} - (S(A)_\rho + S(B)_\rho)^2 \\
&= 4V(A)_\rho\,,
\end{align}
where in the second equality we have removed various terms which are zero due to $\rho_{AB}$ being pure, and in the third equality we have used that the marginals $\rho_A$ and $\rho_B$ share the same eigenvalues and that $\Tr{\rho_{AB}\log\rho_A\otimes\log\rho_B} = \Tr{\rho_{A}(\log\rho_{A})^2}$ for pure $\rho_{AB}$. The last claim can be verified by expanding the left hand side in the Schmidt basis of $\rho_{AB}$. A similar argument shows the second equality in the lemma. 
\end{proof}
\vspace{5pt}

We now consider quantum source coding in the moderate deviation regime. Note that quantum source coding in the large deviation setting was explored in~\cite{hayashi2002exponents, hayashi2009information} using the information spectrum method. Remark $16$ of~\cite{nagaoka2007information} also touches on this connection between source coding and information spectrum. Classical source coding in the moderate deviation regime was investigated in~\cite{hayashi2020finite}. Some of our results below may also be proved using a different route, namely the moderate deviation expansion of the information spectrum entropy~\cite[Section 3.7]{dembo2009large}. The minimal communication cost of quantum source coding in the low-error moderate deviation regime is characterized as follows.

\begin{theorem}[Source coding moderate deviation]\label{thm:source_coding_ea_low_error}
The minimal quantum communication cost of one-shot $\varepsilon_n$-error source coding of $\rho_{B}^{\otimes n}$ for $\rho_{B}\in\cS(B)$ with $\varepsilon_n = e^{-na^2_n}$ for a moderate sequence $\{a_n\}$ is given by
\begin{align}
    \frac{1}{n}q^\star_{\varepsilon_n}(\rho_{B}^{\otimes n}) =& S(B)_\rho +  2a_n\sqrt{V(B)_\rho} + o(a_n)\,.
\end{align}
\end{theorem}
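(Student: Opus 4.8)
The plan is to reduce quantum source coding to the quantum state splitting analysis already in hand (Theorem~\ref{thm:state_splitting_mod_devation}), by treating source coding as the special case with trivial $A$ register and no entanglement assistance. Concretely, I would purify $\rho_B$ by a reference $\rho_{BR}$ with $R \cong B$; then the one-shot source coding cost should be characterized, via a one-shot result analogous to Theorem~\ref{thm:state_splitting_smooth_imax}, in terms of the (fully smoothed or partially smoothed) max-information $I^{\varepsilon}_{\max}(R;B)_\rho$ with the $A$-register set to trivial. So the first step is to invoke (or quickly re-derive from the state-splitting protocol by setting $A$ trivial and $\omega_{KL}$ trivial) the sandwich bound
\begin{align}
\tfrac{1}{2}I^{\varepsilon}_{\max}(\dot R;B)_{\rho} \le q^\star_\varepsilon(\rho_B) \le \tfrac12 I^{\varepsilon-\delta}_{\max}(\dot R;B)_\rho + \log\tfrac{2}{\delta}\,,
\end{align}
and note that $\log(2/\delta)$ with $\delta = \varepsilon_n$ (or a constant multiple) contributes only $n\cdot o(a_n)$ after dividing by $n$, using Lemma~\ref{lem:moderate_seq_constant_or_polyn}. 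Actually one should double-check whether the correct one-shot quantity here is the partially smoothed or the ordinary smoothed max-information; since source coding has no $A$-side and no pre-shared entanglement, the relevant bound is the one in which the $B$-marginal of the smoothing is unconstrained but the $R$-marginal is fixed, i.e. still the partially smoothed $I^\varepsilon_{\max}(\dot R;B)$, exactly as in state splitting.

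The second step is to apply the AEP for the partially smoothed max-information of i.i.d.\ states, Proposition~\ref{prop:smooth_imax_expansion} together with Corollary~\ref{cor:keps_in_Imax_expansion}, to the state $\rho_{BR}^{\otimes n}$ with $\varepsilon_n = e^{-na_n^2}$. This gives
\begin{align}
\frac{1}{n} I^{\varepsilon_n}_{\max}(\dot R^n;B^n)_{\rho^{\otimes n}} = I(B:R)_\rho + a_n\sqrt{4 V(B:R)_\rho} + o(a_n)\,,
\end{align}
and the additive $\log(2/\varepsilon_n)$ from the upper bound is $o(na_n)$ by the moderate-sequence property, so the upper and lower bounds match to order $o(a_n)$. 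Thus $\tfrac1n q^\star_{\varepsilon_n}(\rho_B^{\otimes n}) = \tfrac12 I(B:R)_\rho + a_n\sqrt{V(B:R)_\rho} + o(a_n)$.

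The third step is purely a translation of entropic quantities for a pure bipartite state. Since $\rho_{BR}$ is pure, $I(B:R)_\rho = 2 S(B)_\rho$, so the leading term becomes $S(B)_\rho$. For the second-order term, Lemma~\ref{lem:mutual_info_variance} gives $V(B:R)_\rho = 4 V(B)_\rho$, hence $\sqrt{V(B:R)_\rho} = 2\sqrt{V(B)_\rho}$, yielding the stated coefficient $2a_n\sqrt{V(B)_\rho}$. Putting these together gives exactly
\begin{align}
\frac{1}{n} q^\star_{\varepsilon_n}(\rho_B^{\otimes n}) = S(B)_\rho + 2 a_n \sqrt{V(B)_\rho} + o(a_n)\,.
\end{align}
I expect the only genuine obstacle to be bookkeeping the one-shot characterization correctly in the $A$-trivial, no-entanglement case: one must make sure the converse (lower bound on $q^\star$) still goes through when the $A$-register is absent, i.e.\ that the partially smoothed max-information $I^\varepsilon_{\max}(\dot R;B)$ with the purifying reference $R$ really is the right quantity and not, say, the fully smoothed version, which would shift the second-order constant. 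Everything else — the AEP, the moderate-sequence manipulations, the purity identities — is already established in the preceding sections and is routine to assemble.
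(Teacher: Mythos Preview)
Your converse direction is exactly what the paper does: any source-coding protocol is a state-splitting protocol with trivial $A$ and trivial resource $\omega_{KL}$, so the lower bound $\tfrac12 I^{\varepsilon}_{\max}(\dot R;B)_\rho$ from Theorem~\ref{thm:state_splitting_smooth_imax} carries over, and then Proposition~\ref{prop:smooth_imax_expansion} plus Lemma~\ref{lem:mutual_info_variance} finish the job.

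The gap is in the achievability direction, and it is precisely the opposite of the worry you flagged. The upper bound in Theorem~\ref{thm:state_splitting_smooth_imax} comes from the convex-split protocol (Proposition~\ref{prop:state_splitting_achievability}), which \emph{requires} a nontrivial shared resource $\omega_{KL}$: Alice and Bob need the $n$ copies of $\sigma_{KL}$ for the convex-split step and the maximally entangled pairs for super-dense coding. If you set $\omega_{KL}$ trivial, that protocol collapses and no longer yields the bound $\tfrac12 I^{\varepsilon-\delta}_{\max}(\dot R;B)_\rho + \log\tfrac{2}{\delta}$. Put differently, the state-splitting upper bound only shows $q^\star_{\varepsilon}(\rho_B)_{\text{ea}} \le \tfrac12 I^{\varepsilon-\delta}_{\max}(\dot R;B)_\rho + \log\tfrac2\delta$ for the \emph{entanglement-assisted} cost, and since source coding in Definition~\ref{def:source_coding} forbids entanglement assistance one has $q^\star_\varepsilon(\rho_B) \ge q^\star_\varepsilon(\rho_B)_{\text{ea}}$, so the inequality points the wrong way for your purposes.

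The paper handles this by abandoning state splitting for the achievability and instead invoking the information-spectrum achievability bound of Datta--Leditzky (and Abdelhadi--Datta), $q^\star_{\varepsilon}(\rho_B^{\otimes n}) \le \bar H_s^{\varepsilon^2/2}(\rho_B^{\otimes n})$, which is proved by a Schumacher-type typical-projector argument that uses no shared entanglement. One then relates $\bar H_s$ to $D_h$ via Lemma~\ref{lem:prop4.7_datta2014second}, applies Proposition~\ref{prop:chubbs_dh_expansion}, and uses the duality in Lemma~\ref{lem:duality_renyi_relative_entropies} together with Lemma~\ref{lem:mutual_info_variance} to land on $S(B)_\rho + 2a_n\sqrt{V(B)_\rho}$. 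A bonus of doing it this way is that, since the converse tolerates entanglement and the achievability does not use it, one concludes for free that entanglement assistance does not help in this regime.
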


\begin{proof}
The converse statement follows as a corollary of our moderate deviation analysis of state splitting (Theorem \ref{thm:state_splitting_mod_devation}) and properties of the mutual information variance for pure states (Lemma \ref{lem:mutual_info_variance}). While a related achievability statement also follows from those results, this additionally assumes the presence of entanglement-assistance between the encoder and decoder which is not present in Definition~\ref{def:source_coding}. Thus, we use a different proof technique that does not require entanglement assistance. Namely, we employ the bounds shown in~\cite{datta2014second, abdelhadi2020second}, where the one-shot quantum source coding error is quantified in terms of the fidelity. We modify their result for our setting in purified distance. For $\varepsilon_n = e^{-a_n^2n}$ for a moderate sequence $\{a_n\}$, we have
\begin{align}
\frac{1}{n}q^\star_{\varepsilon_n}(\rho_{B}^{\otimes n}) \leq \frac{1}{n}\bar{H}_{s}^{\frac{\varepsilon^2_n}{2}}(\rho^{\otimes n}_{B}) &= -\frac{1}{n}\underline{D}_{s}^{\frac{\varepsilon^2_n}{2}}(\rho^{\otimes n}_{B}\| I_{B^n}) \\
&\leq -\frac{1}{n}D_h^{\varepsilon^2_n/4}(\rho^{\otimes n}_{B}\| I_{B^n}) - \frac{1}{n}\log\frac{\varepsilon^2_n}{4} \\
&= -D(\rho_{B}\|I_{B}) + a_n\sqrt{4V(\rho_{B}\| I_{B})} + o(a_n) \\
&= D(\rho_{BR}\|I_{B}\otimes \rho_R) + a_n\sqrt{4V(\rho_{BR}\| I_{B}\otimes\rho_R)} + o(a_n) \\
&= S(B)_\rho + 2a_n\sqrt{V(B)_\rho} + o(a_n)\,,
\end{align}
where the first inequality is due to the achievability bound in~\cite[Theorem 5.5 (ii)]{datta2014second}, the first equality is from the definition of the information spectrum entropy, the second inequality is from the bound between the hypothesis testing relative entropy and the information spectrum relative entropy (Lemma~\ref{lem:prop4.7_datta2014second}), the second equality is due to the AEP for the hypothesis testing relative entropy (Proposition \ref{prop:chubbs_dh_expansion}) and because multiplicative factors on the error do not change the communication cost in the moderate deviation regime (Corollary \ref{cor:keps_in_Imax_expansion}), and the third equality is through the duality of the relative entropy (Lemma~\ref{lem:duality_renyi_relative_entropies}) and the final equality is due to properties of the mutual information variance for pure states (Lemma \ref{lem:mutual_info_variance}).
\end{proof}
\vspace{5pt}

We note that the converse proof allows for entanglement-assistance. Hence, it follows that entanglement-assistance does not reduce the quantum communication cost of low-error moderate deviation quantum source coding.

In the large error regime, we set $\varepsilon_n = 1- e^{-na_n^2}$ for a moderate sequence $\{a_n\}$. We find the following achievability result.

\begin{prop}[Source coding high-error moderate deviation]\label{prop:source_coding_ea_high_error}
The minimal quantum communication cost of one-shot $\varepsilon_n$-error source coding of $\rho_{B}^{\otimes n}$ for $\rho_{B}\in\cS(B)$ with $\varepsilon_n = 1- e^{-na^2_n}$ for a moderate sequence $\{a_n\}$ is bounded as
\begin{align}
    \frac{1}{n}q^\star_{\varepsilon_n}(\rho_{B}^{\otimes n}) &\leq S(B)_\rho -  a_n\sqrt{V(B)_\rho} + o(a_n)\,.
\end{align}
\end{prop}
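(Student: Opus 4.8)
The plan is to follow the achievability argument for the low-error regime (proof of Theorem~\ref{thm:source_coding_ea_low_error}), but to exploit the now very large error budget $\varepsilon_n = 1 - e^{-na_n^2}$ by compressing $\rho_B^{\otimes n}$ onto a much smaller subspace. The key step is the one-shot reduction $q^\star_{\varepsilon_n}(\rho_B^{\otimes n}) \leq -D_h^{\,1-\sqrt{1-\varepsilon_n^2}}(\rho_B^{\otimes n}\|I_{B^n}) + O(1)$, which follows by a direct compression argument (equivalently, from the purified-distance version of the one-shot source-coding bounds of~\cite{datta2014second, abdelhadi2020second}): let $\psi_{B^nR^n}$ be a purification of $\rho_B^{\otimes n}$ and let $\Pi$ be the projector onto the span of the $2^q$ largest eigenvectors of $\rho_B^{\otimes n}$; projecting $\psi$ with $\Pi\otimes I_{R^n}$ produces a protocol whose output $\tilde\rho_{B^nR^n}$ satisfies $F(\tilde\rho_{B^nR^n},\psi_{B^nR^n}) \geq \Tr{\Pi\rho_B^{\otimes n}}$, hence has purified-distance error at most $\sqrt{1-\Tr{\Pi\rho_B^{\otimes n}}^2}$, which is $\leq\varepsilon_n$ once $\Tr{\Pi\rho_B^{\otimes n}} \geq \sqrt{1-\varepsilon_n^2}$, and optimizing $\mathrm{rank}(\Pi)$ gives the claim. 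Since $1-\varepsilon_n^2 = 2e^{-na_n^2}-e^{-2na_n^2} = \Theta(e^{-na_n^2})$, the relevant hypothesis testing error level is $1-\sqrt{1-\varepsilon_n^2} = 1-\Theta(e^{-na_n^2/2})$, i.e.\ it is of the \emph{high-error} form $1-e^{-nc_n^2}$ with $c_n = \tfrac{1}{\sqrt2}a_n(1+o(1))$ after absorbing the bounded prefactor via Lemma~\ref{lem:moderate_seq_constant_or_polyn}.

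I would then apply the high-error hypothesis testing AEP (second inequality of Proposition~\ref{prop:chubbs_dh_expansion}), together with Lemma~\ref{lem:moderate_seq_constant_or_polyn} and the boundedness of the relative entropy variance (Lemma~\ref{lem:bounded_mutual_info_variance}), to obtain
\begin{align}
\frac{1}{n}D_h^{1-e^{-nc_n^2}}(\rho_B^{\otimes n}\|I_{B^n}) &= D(\rho_B\|I_B) + \sqrt{2V(\rho_B\|I_B)}\,c_n + o(a_n) \\
&= D(\rho_B\|I_B) + \sqrt{V(\rho_B\|I_B)}\,a_n + o(a_n)\,.
\end{align}
Substituting $D(\rho_B\|I_B) = \Tr{\rho_B\log\rho_B} = -S(B)_\rho$ and $V(\rho_B\|I_B) = V(B)_\rho$ (the latter elementary, and also following from duality (Lemma~\ref{lem:duality_renyi_relative_entropies}) combined with Lemma~\ref{lem:mutual_info_variance}, exactly as in the proof of Theorem~\ref{thm:source_coding_ea_low_error}) into the one-shot reduction then yields $\tfrac{1}{n}q^\star_{\varepsilon_n}(\rho_B^{\otimes n}) \leq -\tfrac{1}{n}D_h^{\,1-\sqrt{1-\varepsilon_n^2}}(\rho_B^{\otimes n}\|I_{B^n}) + o(a_n) = S(B)_\rho - a_n\sqrt{V(B)_\rho} + o(a_n)$.

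The step I expect to be the main obstacle is the one-shot reduction: one must route $q^\star_{\varepsilon_n}$ to a hypothesis testing quantity whose error parameter is \emph{genuinely close to $1$}. Simply reusing the $\bar{H}_s^{\varepsilon_n^2/2}$ estimate that suffices in the low-error proof does not work here, because for $\varepsilon_n$ close to $1$ the parameter $\varepsilon_n^2/2$ only approaches $1/2$, which merely gives $\tfrac1n q^\star_{\varepsilon_n}(\rho_B^{\otimes n}) \leq S(B)_\rho + o(a_n)$ and misses the explicit correction. It is precisely the square root in $1-\sqrt{1-\varepsilon_n^2}$ that halves the error exponent relative to the naive $\varepsilon_n^2$ scaling, and this halving is what produces the coefficient $\sqrt{V(B)_\rho}$ rather than the $2\sqrt{V(B)_\rho}$ of the low-error expansion. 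Once the reduction is in place, the remaining steps are routine and parallel the low-error proof.
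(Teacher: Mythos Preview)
Your proposal is correct and follows essentially the same route as the paper: the paper invokes the one-shot achievability bound $q^\star_{\varepsilon}(\rho_B^{\otimes n}) \leq \bar H_s^{1-\sqrt{1-\varepsilon^2}}(\rho_B^{\otimes n})$ from~\cite{abdelhadi2020second}, passes to $-D_h^{1-\sqrt{1-\varepsilon_n^2}-\delta}(\rho_B^{\otimes n}\|I_{B^n})$ via Lemma~\ref{lem:prop4.7_datta2014second} with $\delta=\sqrt{e^{-na_n^2}}$, and then applies the high-error hypothesis testing AEP exactly as you do. Your direct projector-compression argument simply short-circuits the information spectrum detour and lands on the same $D_h$ quantity with the same error parameter $1-\sqrt{1-\varepsilon_n^2}=1-\Theta(e^{-na_n^2/2})$, so the remaining analysis (halving of the exponent, $c_n=a_n/\sqrt{2}+o(a_n)$, identification $D(\rho_B\|I_B)=-S(B)_\rho$, $V(\rho_B\|I_B)=V(B)_\rho$) is identical.
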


\begin{proof}
We start with the achievability result. We have
\begin{align}
\frac{1}{n}q^\star_{\varepsilon_n}(\rho_{B}^{\otimes n}) \leq \frac{1}{n}\bar{H}_{s}^{1-\sqrt{1 - \varepsilon_n^2}}(\rho^{\otimes n}_{B}) &= \frac{1}{n}\bar{H}_{s}^{1-\sqrt{2e^{-na_n^2} - e^{-2na_n^2}}}(\rho^{\otimes n}_{B})\\ 
&= -\frac{1}{n}\underline{D}_{s}^{1-\sqrt{2e^{-na_n^2} - e^{-2na_n^2}}}(\rho^{\otimes n}_{B}\| I_{B^n}) \\
&\leq -\frac{1}{n}D_h^{1-\sqrt{2e^{-na_n^2} - e^{-2na_n^2}}-\delta}(\rho^{\otimes n}_{B}\| I_{B^n}) - \frac{1}{n}\log\delta \\
&\leq -\frac{1}{n}D_h^{1-\sqrt{2e^{-na_n^2}}-\delta}(\rho^{\otimes n}_{B}\| I_{B^n}) - \frac{1}{n}\log\delta\\
&= -D(\rho_{B}\|I_{B}) - a_n\sqrt{V(\rho_{B}\| I_{B})} + o(a_n) \\
&= D(\rho_{BR}\|I_{B}\otimes \rho_R) - a_n\sqrt{V(\rho_{BR}\| I_{B}\otimes\rho_R)} + o(a_n) \\
&= S(B)_\rho - a_n\sqrt{V(B)_\rho} + o(a_n)\,,
\end{align}
where the first inequality is due to the achievability result \cite[Eq.(31)]{abdelhadi2020second}, the first equality substitutes $\varepsilon_n = 1-e^{-na_n^2}$, the second equality is from the definition of the information spectrum entropy, the second inequality is from the bound between the hypothesis testing relative entropy and the information spectrum relative entropy (Lemma~\ref{lem:prop4.7_datta2014second}), and the third inequality is due to the monotonicity of the hypothesis testing relative entropy in the $\varepsilon$-parameter. The third equality is due to the AEP for the hypothesis testing relative entropy (Proposition \ref{prop:chubbs_dh_expansion}), because multiplicative factors on the error do not change the communication cost in the moderate deviation regime (Corollary \ref{cor:keps_in_Imax_expansion}), and the choice of $\delta = \sqrt{e^{-na_n^2}}$. The fourth equality is through the duality of the relative entropy (Lemma~\ref{lem:duality_renyi_relative_entropies}), and the final equality is due to the properties of the mutual information variance for pure states (Lemma \ref{lem:mutual_info_variance}). 
\end{proof}
\vspace{5mm}

Based on the following discussion, we conjecture that the achievability statement in Proposition~\ref{prop:source_coding_ea_high_error} is also tight. For $\varepsilon\in(0,1)$ and $\rho_{B}\in\cS(B)$, it was shown in the second-order analysis of quantum source coding that \cite{abdelhadi2020second}
\begin{align}
    \frac{1}{n}q^\star_{\varepsilon}(\rho^{\otimes n}) = S(B)_\rho +\sqrt{\frac{V(B)_\rho}{n}}\phi^{-1}(\sqrt{1-\varepsilon^2}) + O\left(\frac{\log n}{n}\right)\, ,
\end{align}
where $\phi(\cdot)$ is the cumulative distribution function of the standard normal distribution. We may substitute $\varepsilon = e^{-na_n^2}$ or $\varepsilon = 1 - e^{-na_n^2}$ and use the approximation $\phi^{-1}(\delta) \approx -\sqrt{-2\log\delta}$ for small $\delta$ and that $\phi^{-1}(\delta) = -\phi^{-1}(1-\delta)$ for $\delta\in[0,1]$. With this approximation and ignoring higher order error terms, we recover the low-error regime result in Theorem~\ref{thm:source_coding_ea_low_error}, as well as the high-error regime achievability result in Proposition~\ref{prop:source_coding_ea_high_error}. 

%%%%%%%%%%%%%%%%%%%%%%%%%%%%%%%%%%%%%%%%%%%%%%%%%%%%%%%%%%%%%%%%%%%%%%%%%%%%%%%%%%%%%%%%%%%%%%%%%%%%%%%%%%%%%%%%

\section{Quantum channel simulation}\label{sec:channel_sim_mod_dev}

Quantum state splitting is the underlying primitive to understand the task of quantum channel simulation. In this setting, Alice and Bob have access to pre-shared resource states and noiseless one-way communication which they will use to simulate i.i.d.\ copies of a noisy quantum channel. This problem is also known as the Quantum Reverse Shannon Theorem in the literature~\cite{bennett2009quantum, berta2011quantum}. We start with the measure we need to distinguish between two quantum channels. It is based on the channel fidelity $F(\cE, \cF) = \max_{\rho_{AR}} F((\cE\otimes \cI_R)(\rho_{AR}), (\cF\otimes \cI_R)(\rho_{AR}))$, but we choose to work with the purified distance instead of the fidelity.

\begin{definition}[Channel purified distance]\label{def:purified_diamond_distance_original}
For any pair of quantum channels $\cE_{A\rightarrow B}$ and $\cF_{A\rightarrow B}$, the channel purified distance is defined as
\begin{align}
P(\cE,\cF) = \sup_{\rho\in\cS_{\leq}(A\otimes R)} P\big((\cE_{A\to B}\otimes \mathcal{I}_R)(\rho_{AR}) , (\cF_{A\to B}\otimes \mathcal{I}_{R})(\rho_{AR})\big)
\end{align}
where the supremum is over quantum registers $R$.
\end{definition}

We now comment on a few properties that allow us to simplify Definition \ref{def:purified_diamond_distance_original}. We can choose $|R| = |A|$ without loss of generality, as shown in Lemma~\ref{lem:bound_dim_R_purified_diamond_dist}. The supremum over $\rho$ can then be replaced by a maximum since the set of subnormalized quantum states is closed and bounded for fixed $R$ and the purified distance is a continuous function (since continuity of the one-norm implies continuity of the fidelity). By Lemma~\ref{lem:purified_dist_scaling}, we have that normalized states will achieve the maximum. Next, since the purified distance is non-increasing under partial trace, we may choose $\rho_{AR}$ to be pure. Finally, we remark that the channel purified distance is a metric since it inherits this property from the purified distance.

One-shot quantum channel simulation is as follows.

\begin{definition}[One-shot quantum channel simulation]\label{def:one_shot_channel_sim}
Let $\cN_{A\rightarrow B}$ be a quantum channel, $\varepsilon\in[0,1]$, and take a quantum register $Q$ with $\log|Q| = q$. A $\{q,\varepsilon\}$-one-shot quantum channel simulation of $\cN_{A\rightarrow B}$ consists of
\begin{enumerate}
    \item A resource $\sigma_{KL}\in\cS(KL)$ 
    \item An encoding quantum channel $\cE_{AK\rightarrow Q}$
    \item A decoding quantum channel $\cD_{QL\rightarrow B}$
\end{enumerate}
such that we have for the composite quantum channel
\begin{align}
\cN'_{A\rightarrow B} = \cD\circ\cE\circ\cP^{\sigma} = (\cD\circ\cE)(\cdot\otimes\sigma_{KL})\quad\text{that}\quad P(\cN,\cN')\leq \varepsilon\,.
\end{align}
The minimal quantum communication cost of simulating $\cN_{A\rightarrow B}$ is defined as\footnote{Since free entanglement can be included in the resource state $\sigma_{KL}$, we equivalently have that the classical communication required is $2q$.}
\begin{align}
q^\star_{\varepsilon}(\cN) = \min\Big\{q: \exists \ \text{a } \{q, \varepsilon\}\text{-one-shot quantum channel simulation of } \cN\Big\}\,.
\end{align}
\end{definition}

When the channel being simulated is of the form $\cN^{\otimes n}$, we can make use of de Finetti reductions and symmetrization to achieve a one-shot quantum channel simulation protocol in terms of a one-shot quantum state splitting protocol. We can then employ our quantum state splitting results to characterize the minimal communication cost of quantum channel simulation in the low-error strictly moderate deviation setting. Our main result in this section is as follows.

\begin{theorem}[Channel simulation moderate deviation]\label{thm:channel_sim_mod_dev_final_result}
For any quantum channel $\cN_{A\rightarrow B}$, the minimal quantum communication cost for $\cN^{\otimes n}$ with error $\varepsilon_n = e^{-na_n^2}$ for a strictly moderate sequence $\{a_n\}$ is
\begin{align}
\frac{1}{n}q^\star_{\varepsilon_n}(\cN^{\otimes n}) = C(\cN) + a_n\sqrt{V_{\max}(B:R)_{\cN}} + o(a_n)\,,
\end{align}
where $V_{\max}(B:R)_{\cN}$ was defined in Eq.~\eqref{eq:channel-capacities}.
\end{theorem}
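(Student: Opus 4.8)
The plan is to prove the two bounds separately, deriving both from our quantum state splitting results (Theorem~\ref{thm:state_splitting_smooth_imax} and Proposition~\ref{prop:smooth_imax_expansion}) by relating channel simulation of $\cN^{\otimes n}$ to state splitting of an appropriate state; the converse is a direct reduction to an i.i.d.\ input, whereas the achievability requires a de Finetti reduction in order to control the simulation error on arbitrary (rather than i.i.d.) inputs, which is precisely where the strictly moderate hypothesis enters.

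\emph{Converse.} Fix any pure input $\phi_{AR}$ and set $\rho_{BR} = (\cN_{A\to B}\otimes\cI_R)(\phi_{AR})$. Given a $\{q,\varepsilon_n\}$-one-shot channel simulation of $\cN^{\otimes n}$, feeding it the i.i.d.\ input $\phi_{AR}^{\otimes n}$\,---\,after a local pre-processing step in which Alice rotates her share of a purification of $\rho_{BR}^{\otimes n}$ into the channel input register $A^n$ via the adjoint of a Stinespring isometry of $\cN^{\otimes n}$\,---\,yields a $\{q,\varepsilon_n\}$-one-shot state splitting protocol of $\rho_{BR}^{\otimes n}$ in the sense of Definition~\ref{def:state_splitting} (with trivial $A$-register), since the channel purified distance upper bounds the purified distance on any fixed input. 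Hence, by Theorem~\ref{thm:state_splitting_smooth_imax} together with monotonicity of the partially smoothed max-information under discarding part of the reference register, $q \geq \tfrac12 I^{\varepsilon_n}_{\max}(\dot R^n;B^n)_{\rho_{BR}^{\otimes n}}$, and Proposition~\ref{prop:smooth_imax_expansion} gives $q \geq n\big(\tfrac12 I(B:R)_{\rho} + a_n\sqrt{V(B:R)_{\rho}} - \tfrac{\eta}{2}a_n\big)$ for $n$ large. Choosing $\phi_{AR}$ to be a capacity-achieving input that attains $V(B:R) = V_{\max}(B:R)_\cN$ (cf.\ Eqs.~\eqref{eq:channel-capacities}--\eqref{eq:capacity_achieving_input_set}) yields the lower bound $q \geq n\big(C(\cN) + a_n\sqrt{V_{\max}(B:R)_\cN} - \tfrac{\eta}{2}a_n\big)$.

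\emph{Achievability.} First I would symmetrize the protocol: twirling the simulating channel over a uniformly random permutation (whose control can be piggybacked on the free entanglement in the resource state, leaving $q$ unchanged) makes it permutation-covariant. Then I would invoke the post-selection technique adapted to the channel purified distance: using the reduction of Definition~\ref{def:purified_diamond_distance_original} to pure normalized inputs on $|R|=|A|$ (Lemmas~\ref{lem:bound_dim_R_purified_diamond_dist} and~\ref{lem:purified_dist_scaling}), it suffices to build a protocol achieving purified distance at most $\varepsilon_n/\mathrm{poly}(n)$ between $\cN^{\otimes n}$ and the simulation on the single de Finetti input $\zeta_n^{AR}=\int\phi_{AR}^{\otimes n}\,d\phi$. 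This is a one-shot state splitting task for the de Finetti channel-output state $\omega_n := (\cN^{\otimes n}\otimes\cI)(\zeta_n)$ with error $\varepsilon_n/\mathrm{poly}(n)$, so by Theorem~\ref{thm:state_splitting_smooth_imax} its cost is at most $\tfrac12 I^{\varepsilon_n/\mathrm{poly}(n)}_{\max}(\dot R^n;B^n)_{\omega_n}+O(\log n)$. The key remaining step is to expand this smoothed max-information: writing $\zeta_n$ as a convex combination of $\mathrm{poly}(n)$ many i.i.d.\ states $\phi_i^{\otimes n}$ (via a polynomially fine net on the pure inputs and the de Finetti bound $\zeta_n\succeq \mathrm{poly}(n)^{-1}\phi^{\otimes n}$), one bounds the smoothed max-information of the mixture by the maximum over the net of the i.i.d.\ smoothed max-informations, up to $\mathrm{poly}(n)$ multiplicative and additive overheads on the error and on the quantity. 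These are absorbed by Corollary~\ref{cor:keps_in_Imax_expansion}\,---\,this is where the \emph{strictly} moderate hypothesis is essential, since only then does a $\mathrm{poly}(n)$ factor on $\varepsilon_n$ leave the expansion intact\,---\,together with $O(\log n)=n\cdot o(a_n)$, yielding $\tfrac1n q^\star_{\varepsilon_n}(\cN^{\otimes n}) \leq \max_{\phi_{AR}}\big[\tfrac12 I(B:R)_{(\cN\otimes\cI)(\phi)} + a_n\sqrt{V(B:R)_{(\cN\otimes\cI)(\phi)}}\big]+\eta' a_n$. Finally, one checks that this maximum equals $C(\cN)+a_n\sqrt{V_{\max}(B:R)_\cN}+o(a_n)$: inputs $\phi$ with $\tfrac12 I(B:R)$ bounded away from $C(\cN)$ contribute strictly below $C(\cN)$ for large $n$, since $a_n\sqrt{V(B:R)_{(\cN\otimes\cI)(\phi)}}\to 0$ uniformly by Lemma~\ref{lem:bounded_mutual_info_variance}, while on a shrinking neighbourhood of the capacity-achieving set $\Pi(\cN)$ continuity of the mutual information variance forces $\sqrt{V(B:R)_{(\cN\otimes\cI)(\phi)}}\leq\sqrt{V_{\max}(B:R)_\cN}+o(1)$; a diagonal choice of the neighbourhood radius finishes the argument.

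I expect the main obstacle to be the expansion of the smoothed max-information of the de Finetti channel-output state $\omega_n$: because $\omega_n$ is not i.i.d., it must be approximated by polynomially many near-i.i.d.\ states, and the resulting $\mathrm{poly}(n)$ slack in the error budget must not destroy the $a_n$-order term\,---\,this is exactly why the theorem is restricted to strictly moderate sequences and leans on Lemma~\ref{lem:moderate_seq_constant_or_polyn} and Corollary~\ref{cor:keps_in_Imax_expansion}. A secondary technical point is the regularity (uniform boundedness and continuity) of the mutual information variance as a function of the channel input, needed to identify the optimized second-order constant with $V_{\max}(B:R)_\cN$. The converse, by contrast, is essentially a black-box application of the state splitting results once the right capacity-achieving input is chosen.
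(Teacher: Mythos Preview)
Your proposal is correct and follows essentially the same route as the paper: the converse reduces channel simulation on an i.i.d.\ capacity-achieving input to state splitting and applies Theorem~\ref{thm:state_splitting_smooth_imax} with Proposition~\ref{prop:smooth_imax_expansion}; the achievability goes via symmetrization, the de Finetti reduction for the channel purified distance, state splitting on the de Finetti channel output, decomposition of the de Finetti state into $\mathrm{poly}(n)$ many i.i.d.\ components, and absorption of the resulting $\mathrm{poly}(n)$ overheads via Corollary~\ref{cor:keps_in_Imax_expansion} (which is indeed the step requiring strict moderateness). The one substantive difference is the last step of the achievability: where you argue directly via continuity of the mutual information variance and a shrinking-neighbourhood diagonal argument that $\max_{\phi}\big[\tfrac12 I(B{:}R)+a_n\sqrt{V(B{:}R)}\big]=C(\cN)+a_n\sqrt{V_{\max}(B{:}R)_\cN}+o(a_n)$, the paper simply invokes Lemma~\ref{lem:polyanskiy_maximization} (Polyanskiy's compactness lemma), which packages exactly this conclusion. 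Your inline argument is a correct special case of that lemma, so nothing is lost. A small cosmetic point: in your converse the state-splitting instance is more naturally described with Alice retaining the Stinespring environment $E$ (so the split state is $\rho_{EB}$ with purification $\rho_{EBR}$) rather than with a ``trivial $A$-register''; this does not affect the bound since $I^{\varepsilon}_{\max}(\dot R;B)$ depends only on the $BR$-marginal.
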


For the rest of this section, we go through the technical tools for the proof of Theorem \ref{thm:channel_sim_mod_dev_final_result}. The following lemma about symmetrizing quantum protocols is standard in the literature (see, e.g., \cite{berta2011quantum, bennett2009quantum}). However, we have not found a formal version, which is why we state it here and give a proof in Appendix \ref{app:technical}.

\begin{lemma}[Symmetrized protocol for state splitting]\label{lem:perm_inv_state_splitting}
Let $\pi$ be a permutation of a set of $n$ elements. Let $\rho_{A^nB^n}\in\cS\left(A^nB^n\right)$ such that for any permutation quantum channel $\pi_{A^n}\otimes\pi_{B^n}: A^nB^n\rightarrow A^nB^n$ where $\pi_{A^n}(\pi_{B^n})$ applies the permutation $\pi$ to the $A^n$($B^n$) registers, we have
\begin{align}
\pi_{A^n}\otimes \pi_{B^n}(\rho_{A^nB^n}) = \rho_{A^nB^n}\,.
\end{align}
Furthermore, let there be a $\{q,\varepsilon\}$-one-shot quantum state splitting protocol of $\rho_{A^nB^n}$ given in terms of the quantum channel $\cT^{\sigma}_{A^nA_1^n\rightarrow A^nB^n}$ and some resource state $\sigma$. Then, there exists a one-shot quantum state splitting protocol given in terms of a quantum channel $\bar{\cT}^{\omega}_{A^nA_1^n\rightarrow A^nB^n}$ such that 
\begin{align}
    \bar{\cT}^{\omega}\circ\bar{\pi}_{A^n}\otimes\bar{\pi}_{A_1^n} = \bar{\pi}_{A^n}\otimes\bar{\pi}_{B^n}\circ \bar{\cT}^{\omega} 
\end{align}
for any permutation $\bar{\pi}$ on $n$ quantum registers, $\omega = \sigma\otimes\sigma'$, and $\sigma'$ is a state containing shared randomness of dimension $n!$. The quantum channel $\bar{\cT}^{\omega}_{A^nA_1^n\rightarrow A^nB^n}$ satisfies
\begin{align}
P(\rho_{A^nB^nR}, (\bar{\cT}^{\omega}\otimes \cI_R)(\rho_{A^nA_1^nR})) \leq P(\rho_{A^nB^nR}, (\cT^\sigma\otimes \cI_R)(\rho_{A^nA_1^nR}))
\end{align}
for any extension $\rho_{A^nB^nR}$ of $\rho_{A^nB^n}$.
\end{lemma}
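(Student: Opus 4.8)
The plan is to obtain $\bar{\cT}^\omega$ by twirling the given protocol $\cT^\sigma$ over all permutations and running the resulting randomised protocol with the aid of $n!$-dimensional shared randomness stored in $\sigma'$. Throughout I fix a purification $\rho_{A^nB^nR}$ of $\rho_{A^nB^n}$ (the statement for arbitrary extensions then reducing to this by monotonicity of the purified distance under partial trace, exactly as in Remark~\ref{rmk:equivalence_of_purifications_state_splitting}), and write $\rho_{A^nA_1^nR}$ for the corresponding purification of $\rho_{A^nA_1^n}$ obtained by relabelling $B^n\to A_1^n$.

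\textbf{Step 1 (twirled protocols).} For each permutation $\pi$ of $n$ elements, with associated permutation channels $\pi_{A^n},\pi_{A_1^n},\pi_{B^n}$, define
\begin{align}
\cT_\pi := (\pi^{-1}_{A^n}\otimes\pi^{-1}_{B^n})\circ\cT^\sigma\circ(\pi_{A^n}\otimes\pi_{A_1^n})\,.
\end{align}
Writing $\cT^\sigma=\cD\circ\cE\circ\cP^\sigma$ as in Eq.~\eqref{eqn:alternate_state_splitting_description}, and using that $\cP^\sigma$ merely prepares $\sigma_{KL}$ (hence commutes with the input permutations) and that the retained register $A^n$ passes inertly through $\cD$, one verifies that $\cT_\pi=\cD_\pi\circ\cE_\pi\circ\cP^\sigma$ with $\cE_\pi:=(\pi^{-1}_{A^n}\otimes\cI_Q)\circ\cE\circ(\pi_{A^n}\otimes\pi_{A_1^n}\otimes\cI_K)$ and $\cD_\pi:=\pi^{-1}_{B^n}\circ\cD$. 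Hence each $\cT_\pi$ is a legitimate one-shot state splitting protocol with the \emph{same} message register $Q$ and the \emph{same} resource $\sigma_{KL}$, i.e.\ of communication cost $q$.

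\textbf{Step 2 (error of $\cT_\pi$).} Since $\rho_{A^nB^n}$ is permutation invariant, so is $\rho_{A^nA_1^n}$, and therefore $\mu_{A^nA_1^nR}:=(\pi_{A^n}\otimes\pi_{A_1^n}\otimes\cI_R)(\rho_{A^nA_1^nR})$ is again a purification of $\rho_{A^nA_1^n}$, whose corresponding target is $\mu_{A^nB^nR}=(\pi_{A^n}\otimes\pi_{B^n}\otimes\cI_R)(\rho_{A^nB^nR})$. By the equivalence of purifications (Remark~\ref{rmk:equivalence_of_purifications_state_splitting})\,---\,any two purifications differ by a unitary on $R$, which commutes with $\cT^\sigma\otimes\cI_R$\,---\,the protocol $\cT^\sigma$ attains the same purified distance on $\mu_{A^nA_1^nR}$ as on $\rho_{A^nA_1^nR}$. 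Peeling off the outer permutation unitary, which preserves the purified distance, then gives
\begin{align}
P\big(\rho_{A^nB^nR},(\cT_\pi\otimes\cI_R)(\rho_{A^nA_1^nR})\big)=P\big(\mu_{A^nB^nR},(\cT^\sigma\otimes\cI_R)(\mu_{A^nA_1^nR})\big)=P\big(\rho_{A^nB^nR},(\cT^\sigma\otimes\cI_R)(\rho_{A^nA_1^nR})\big)=:d_0\,.
\end{align}

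\textbf{Step 3 (averaging via shared randomness).} Let $\sigma'_{K'L'}=\frac{1}{n!}\sum_\pi\ketbra{\pi}{\pi}_{K'}\otimes\ketbra{\pi}{\pi}_{L'}$ be maximally correlated shared randomness of dimension $n!$ and set $\omega=\sigma\otimes\sigma'$. Let $\bar{\cE}$ apply $\cE_\pi$ controlled on the value $\pi$ held in $K'$ and let $\bar{\cD}$ apply $\cD_\pi$ controlled on $L'$; as $|Q|$ does not depend on $\pi$, the protocol $\bar{\cT}^\omega:=\bar{\cD}\circ\bar{\cE}\circ\cP^\omega$ again has communication cost $q$, and on discarding the flags $K',L'$ it equals $\bar{\cT}^\omega=\frac{1}{n!}\sum_\pi\cT_\pi$. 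The substitution $\pi\mapsto\pi\bar{\pi}$ in this sum yields the asserted covariance $\bar{\cT}^\omega\circ(\bar{\pi}_{A^n}\otimes\bar{\pi}_{A_1^n})=(\bar{\pi}_{A^n}\otimes\bar{\pi}_{B^n})\circ\bar{\cT}^\omega$. For the error, I avoid any appeal to joint convexity of the purified distance (which is false, since $x\mapsto\sqrt{1-x^2}$ is concave) and instead use joint concavity of the generalised fidelity in its first argument with the second one fixed: writing $\zeta_\pi:=(\cT_\pi\otimes\cI_R)(\rho_{A^nA_1^nR})$, Step~2 gives $\bar{F}(\zeta_\pi,\rho_{A^nB^nR})=\sqrt{1-d_0^2}$ for every $\pi$, hence
\begin{align}
\bar{F}\Big(\tfrac{1}{n!}\sum_\pi\zeta_\pi,\ \rho_{A^nB^nR}\Big)\ \geq\ \tfrac{1}{n!}\sum_\pi\bar{F}(\zeta_\pi,\rho_{A^nB^nR})\ =\ \sqrt{1-d_0^2}\,,
\end{align}
which is precisely $P\big(\rho_{A^nB^nR},(\bar{\cT}^\omega\otimes\cI_R)(\rho_{A^nA_1^nR})\big)\leq d_0$, as claimed.

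\textbf{Main obstacle.} The algebraic decomposition of $\cT_\pi$ into a valid encoder/decoder pair, and the shared-randomness implementation, are routine. The step needing care is Step~2: one must correctly track how permuting the $A_1^n$ input alters the purification of the reference and the corresponding target state, and then invoke the equivalence-of-purifications argument so that the original guarantee transfers verbatim; in Step~3 one must likewise route the averaging bound through $\bar{F}$ rather than through a (nonexistent) convexity of $P$.
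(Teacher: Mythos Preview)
Your proof is correct and follows the same overall strategy as the paper: define $\bar{\cT}^\omega$ as the twirl $\frac{1}{n!}\sum_\pi(\pi^{-1}_{A^n}\otimes\pi^{-1}_{B^n})\circ\cT^\sigma\circ(\pi_{A^n}\otimes\pi_{A_1^n})$, implement it with $n!$-dimensional shared randomness, verify covariance by the substitution $\pi\mapsto\pi\bar\pi$, and then bound the error of the averaged protocol by that of $\cT^\sigma$.

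The only noteworthy technical divergence is in how the averaging bound is obtained. The paper first passes to a \emph{permutation-invariant} purification $\rho_{A^nB^nR^n}$ (citing Renner's thesis) and then applies quasi-convexity of the purified distance (Lemma~\ref{lem:quasi_convexity_purified_dist}) together with invariance of $P$ under the global permutation $(\pi\otimes\pi\otimes\pi)$. You instead keep an arbitrary purification, show via the equivalence-of-purifications argument (Remark~\ref{rmk:equivalence_of_purifications_state_splitting}) that each conjugated protocol $\cT_\pi$ achieves \emph{exactly} the same purified distance $d_0$, and then invoke concavity of the fidelity in its first argument. Both routes are valid; yours avoids the need to exhibit a permutation-invariant purification, while the paper's route packages the averaging step into a single cited lemma. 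The end results and the communication-cost conclusion are identical.
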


We now state the achievability and converse statements for one-shot channel simulation of $n$-fold i.i.d.\ channels. Both are based on well-known arguments, but adapted to our exact setting.

\begin{prop}[Channel simulation achievability]\label{prop:achievability_channel_sim}
Let $\cN_{A\rightarrow B}$ be a quantum channel and $\varepsilon \in (0, 1]$, and $n\in\mathbb{N}$. Then, we have
\begin{align}
&q^\star_{\varepsilon}(\cN^{\otimes n}) \leq  \max\limits_{\phi_{AR}} \frac{1}{2}I^{\frac{\varepsilon'}{72}}_{\max}(\dot{R^n};B^n)_{((\cN\otimes I)(\phi))^{\otimes n}}+g(\varepsilon,|A|,n)\,,
\end{align}
where $\phi_{AR}\in\cS(AR)$ pure with $R\cong A$, and we have the fudge term
\begin{align}
g(\varepsilon,|A|,n)=\;&\frac{1}{2}\log\left(\frac{2}{(\varepsilon'/72)^2}+2\right)+ 2(|A|^2 - 1)\log(n+1) + \frac{1}{2}\log\left(\frac{2}{(\varepsilon'/24)^2}+2\right)\\
&+\frac{1}{2}\log\left(\frac{8+(\frac{\varepsilon'}{4})^2}{(\frac{\varepsilon'}{4})^2}\right) + \log\frac{4}{\varepsilon'}\quad\text{for}\quad\varepsilon' = \frac{\varepsilon}{\sqrt{2}}(n+1)^{\frac{1-|A|^2}{2}}\,.
\end{align}
\end{prop}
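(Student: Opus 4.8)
The plan is to reduce one-shot simulation of $\cN^{\otimes n}$ to a one-shot state splitting problem on a single cleverly chosen bipartite state, paying only polynomial-in-$n$ overhead from a de Finetti reduction. First I would recall the standard reduction: to $\varepsilon$-simulate $\cN^{\otimes n}$ with respect to the channel purified distance it suffices to $\varepsilon$-simulate its action on a single worst-case pure input $\psi_{A^nR^n}$ with $R^n\cong A^n$ (by Definition~\ref{def:purified_diamond_distance_original} and the remarks after it, we may take the input pure and $|R|=|A|$). The subtlety is that the worst-case input need not be permutation-invariant, so I would instead feed in a de Finetti state: there is a universal permutation-invariant state $\zeta_{A^nR^n}$ such that $\zeta_{A^nR^n}\le (n+1)^{|A|^2-1}\,\psi_{A^nR^n}$ for every pure $\psi_{AR}^{\otimes n}$-type input, hence (up to the $(n+1)^{|A|^2-1}$ blow-up absorbed via purified-distance scaling, Lemma~\ref{lem:purified_dist_scaling}) it is enough to split $(\cN^{\otimes n}\otimes\cI)(\zeta_{A^nR^n})$ with error $\varepsilon' = \tfrac{\varepsilon}{\sqrt2}(n+1)^{(1-|A|^2)/2}$.

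Next I would apply Theorem~\ref{thm:state_splitting_smooth_imax} (one-shot state splitting) to the permutation-invariant state $\rho_{A^nB^nR^n} := (\cN^{\otimes n}\otimes\cI)(\zeta_{A^nR^n})$, giving a splitting protocol with cost $\tfrac12 I^{\varepsilon'-\delta}_{\max}(\dot R^n; B^n)_\rho + \log(2/\delta)$, and then invoke Lemma~\ref{lem:perm_inv_state_splitting} to symmetrize the protocol at the cost of $n!$ bits of shared randomness — which is free, being part of the resource state — so that the resulting channel intertwines permutations on the inputs and outputs. This symmetry is exactly what lets the symmetrized splitting protocol for $\zeta$ double as a simulation of $\cN^{\otimes n}$ on arbitrary (not-necessarily-symmetric) inputs, via the standard "derandomize the de Finetti input" argument: averaging over the shared randomness restores permutation invariance of the effective input, so the worst-case input over all $R$ is handled. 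The remaining task is to upper bound $\tfrac12 I^{\varepsilon'-\delta}_{\max}(\dot R^n;B^n)_{(\cN^{\otimes n}\otimes\cI)(\zeta)}$ by $\max_{\phi_{AR}} \tfrac12 I^{\varepsilon'/72}_{\max}(\dot R^n;B^n)_{((\cN\otimes\cI)(\phi))^{\otimes n}}$ plus $\mathrm{poly}(n)$; this again uses the de Finetti bound $\zeta \le (n+1)^{|A|^2-1}\,\phi^{\otimes n}$ for the optimal pure $\phi_{AR}$ together with monotonicity/continuity properties of the (partially) smoothed max-information under the $(n+1)^{|A|^2-1}$ rescaling and under the change of smoothing parameter (each factor of the form $(\varepsilon'/c)$ and each additive $(|A|^2-1)\log(n+1)$ or $\tfrac12\log(2/(\varepsilon'/c)^2+2)$ term in $g$ tracks one such step). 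Collecting the splitting overhead $\log(2/\delta)$ with $\delta$ chosen $\Theta(\varepsilon')$, the two de Finetti rescalings, and the $\varepsilon'\to\varepsilon$ conversion yields exactly the fudge term $g(\varepsilon,\varepsilon',|A|,n)$ as stated.

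The main obstacle is bookkeeping the smoothing parameters correctly through the de Finetti reduction: the rescaling $\zeta\le(n+1)^{|A|^2-1}\phi^{\otimes n}$ interacts nontrivially with the $\varepsilon$-ball defining the smoothed max-information (one needs the partially smoothed version, Lemma~\ref{lem:theorem4_anshu2020partially}, to convert between $I^{\varepsilon}_{\max}(\dot R^n;B^n)$ and $I^{\varepsilon}_{\max}(R^n;B^n)$, and the factor-of-$c$ degradations in $\varepsilon'$ come precisely from these conversions plus the $P(\rho,c\rho)$ scaling estimate). Getting the constant $72 = 3\cdot 24$ and the various $\varepsilon'/4,\varepsilon'/24,\varepsilon'/72$ exactly right — rather than merely up to unspecified constants — is where the care lies; the structural argument (de Finetti $\to$ symmetrize $\to$ split) is otherwise routine.
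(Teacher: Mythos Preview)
Your high-level plan (de Finetti reduction $\to$ symmetrize the splitting protocol $\to$ one-shot state splitting on the de Finetti state) matches the paper's architecture, but the technical execution you sketch for the entropy bound contains a real gap.

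First, the operator inequality you invoke, ``$\zeta_{A^nR^n}\le (n+1)^{|A|^2-1}\,\phi_{AR}^{\otimes n}$ for every pure $\phi$'', is false: $\phi^{\otimes n}$ is rank one on $(AR)^n$ while $\zeta$ is full rank on the symmetric subspace, so no such bound can hold. The correct de Finetti operator inequality goes the other way (any permutation-invariant state is dominated by $g_{n,|A|}\,\zeta$), and that is what underlies Proposition~\ref{prop:postselection}. This correct direction gives you the reduction from the channel purified distance on arbitrary inputs to the de Finetti input (your $\varepsilon\mapsto\varepsilon'$ step), but it cannot be reused to pass from $I_{\max}^{\,\cdot}(\dot R^n;B^n)$ on $(\cN^{\otimes n}\otimes\cI)(\zeta)$ to the same quantity on an i.i.d.\ state $((\cN\otimes\cI)(\phi))^{\otimes n}$.

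The paper does this second reduction by an entirely different chain that you do not mention: (i) it works with the Stinespring dilation, so that state splitting is applied to $\Phi_{B^nE^n}=U^{\otimes n}(\zeta_{A^n})$ with Alice retaining $E^n$ --- you never specify what Alice keeps; (ii) it takes a particular extension $\zeta_{A^nR^n}=\sum_{i\in I}p_i(\omega^i_{AR})^{\otimes n}$ as a \emph{finite} mixture of i.i.d.\ pure states with $|I|\le(n+1)^{2|A|^2-2}$, and uses \emph{quasi-convexity} of the smooth max-information to replace the mixture by a maximum at the price of $\tfrac12\log|I|$; (iii) it strips the extra purifying register $R'$ via a dimension bound, contributing another $(|A|^2-1)\log(n+1)$; and (iv) it applies the \emph{approximate symmetry} of the smooth max-information twice (swapping the roles of $R^n$ and $B^n$), which is what produces the $\varepsilon'/24$ and $\varepsilon'/72$ and the two $\tfrac12\log\!\big(2/(\varepsilon'/c)^2+2\big)$ terms. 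None of these steps follows from ``monotonicity/continuity under rescaling'' as you suggest, and in particular the factors $24$ and $72$ come from the symmetry swaps, not from Lemma~\ref{lem:theorem4_anshu2020partially} (which is used only once, for the $\varepsilon'/4$ term). Without the finite-mixture representation and quasi-convexity you have no route from the de Finetti state to a single i.i.d.\ state.
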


The mostly standard proof is included in Appendix~\ref{app:technical}. The converse is as follows.

\begin{prop}[Channel simulation converse]\label{prop:converse_channel_sim}
Let $\cN_{A\rightarrow B}$ be a quantum channel and $\varepsilon \in (0, 1]$, and $n\in\mathbb{N}$. Then, we have
\begin{align}
q^\star_{\varepsilon}(\cN^{\otimes n}) \geq \max\limits_{\phi_{AR}} \frac{1}{2}I^{\varepsilon}_{\max}(\dot{R}^n;B^n)_{((\cN\otimes I)(\phi))^{\otimes n}}
\end{align}
with $\phi_{AR}\in\cS(AR)$ is pure and $R\cong A$.
\end{prop}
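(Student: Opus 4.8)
The plan is to feed the channel-simulation protocol the purified test state that promotes $R^n$ to the relevant reference register, and then rerun the converse argument behind Theorem~\ref{thm:state_splitting_smooth_imax}. Fix a pure $\phi_{AR}\in\cS(AR)$ with $R\cong A$ and put $\rho_{BR}:=(\cN_{A\to B}\otimes\cI_R)(\phi_{AR})$, so that $((\cN\otimes\cI)(\phi))^{\otimes n}=\rho_{BR}^{\otimes n}$ on $B^nR^n$. Let $(\sigma_{KL},\cE_{A^nK\to Q},\cD_{QL\to B^n})$ be an arbitrary $\{q,\varepsilon\}$-one-shot quantum channel simulation of $\cN^{\otimes n}$, with composite channel $\cN'=(\cD\circ\cE)(\,\cdot\otimes\sigma_{KL})$ and $q=\log|Q|$. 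Running it on $\phi_{AR}^{\otimes n}$ with the purifying register $R^n$ held by the untouched reference, let
\begin{align}
\xi_{QLR^n}:=(\cE_{A^nK\to Q}\otimes\cI_{LR^n})\big(\phi_{AR}^{\otimes n}\otimes\sigma_{KL}\big),\qquad \tilde\rho_{B^nR^n}:=(\cD_{QL\to B^n}\otimes\cI_{R^n})(\xi_{QLR^n})
\end{align}
be the state after encoding and transmission, and the output, respectively. I would first record three facts: (i) $P(\tilde\rho_{B^nR^n},\rho_{BR}^{\otimes n})\le P(\cN^{\otimes n},\cN')\le\varepsilon$, since $\phi_{AR}^{\otimes n}$ with reference $R^n$ is an admissible input in Definition~\ref{def:purified_diamond_distance_original}; (ii) $\tilde\rho_{R^n}=\rho_R^{\otimes n}$ exactly, because the reference is never acted upon and $\cN'$ is trace preserving; (iii) $\xi_{LR^n}=\sigma_L\otimes\rho_R^{\otimes n}$ is a product state, because the encoder touches neither $L$ nor $R^n$ and $\phi_{AR}^{\otimes n}$ is initially uncorrelated with $\sigma_{KL}$.

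The second step is the entropic estimate, which proceeds just as the lower-bound half of the proof of Theorem~\ref{thm:state_splitting_smooth_imax}. By (i) and (ii) the state $\tilde\rho_{B^nR^n}$ is feasible in the optimization defining the partially smoothed max-information, so
\begin{align}
I^\varepsilon_{\max}(\dot{R}^n;B^n)_{\rho_{BR}^{\otimes n}}\ \le\ I_{\max}(R^n;B^n)_{\tilde\rho}\,.
\end{align}
Since $\tilde\rho_{B^nR^n}$ is obtained from $\xi_{QLR^n}$ by applying the decoder $\cD_{QL\to B^n}$ on the second party only, the data-processing inequality for the max-information gives $I_{\max}(R^n;B^n)_{\tilde\rho}\le I_{\max}(R^n;QL)_\xi$. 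Finally, using fact (iii) together with a chain rule for the max-information and the dimension bound $I_{\max}(X;Y)\le 2\log\min\{|X|,|Y|\}$, one obtains $I_{\max}(R^n;QL)_\xi\le 2\log|Q|=2q$. Chaining the three inequalities yields $q\ge\frac{1}{2}\,I^\varepsilon_{\max}(\dot{R}^n;B^n)_{\rho_{BR}^{\otimes n}}$.

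Because this holds for every $\{q,\varepsilon\}$-channel simulation of $\cN^{\otimes n}$ and every pure $\phi_{AR}$ with $R\cong A$, minimizing the left-hand side over protocols and maximizing the right-hand side over $\phi_{AR}$ proves the claim. The step I expect to require the most care is the last inequality of the second paragraph, namely that $q$ qubits of communication assisted by entanglement whose Bob-side begins uncorrelated with the reference cannot raise $I_{\max}(R^n;\cdot)$ of Bob's register by more than $2q$; this is precisely the superdense-coding-type bound underlying the converse of Theorem~\ref{thm:state_splitting_smooth_imax}, and it rests on standard properties of the max-information (invariance under local isometries, monotonicity under local channels, a chain rule, and the $2\log\min$ dimension bound). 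It is worth flagging that a channel simulation does not literally instantiate a state splitting protocol in the sense of Definition~\ref{def:state_splitting}\,---\,the dilating environment of $\cN$ is not retained by Alice\,---\,so the argument reuses this inequality chain rather than invoking Theorem~\ref{thm:state_splitting_smooth_imax} as a black box.
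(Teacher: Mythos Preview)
Your proof is correct and follows essentially the same route as the paper. The paper's argument is terser: it observes that a $\{q,\varepsilon\}$-channel simulation of $\cN^{\otimes n}$, when fed $\phi_{AR}^{\otimes n}$, behaves as a $\{q,\varepsilon\}$-state-splitting protocol for $\cN^{\otimes n}(\phi_A^{\otimes n})$ and then invokes the converse half of Theorem~\ref{thm:state_splitting_smooth_imax} as a black box. Your version inlines that converse\,---\,your chain $I^\varepsilon_{\max}(\dot{R}^n;B^n)\le I_{\max}(R^n;B^n)_{\tilde\rho}\le I_{\max}(R^n;QL)_\xi\le 2q$ is exactly Proposition~\ref{prop:state_splitting_converse}, with the final step being the non-lockability Lemma~\ref{lem:lemmaA12_berta2013quantum} together with $I_{\max}(R^n;L)_\xi=0$ from your fact (iii). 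Your closing remark is apt: the literal identification with Definition~\ref{def:state_splitting} is slightly informal in the paper's proof (Alice does not retain the environment of $\cN$), and unrolling the inequality chain as you do sidesteps that issue cleanly.
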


\begin{proof}
Let $\cT$ be a $\{q, \varepsilon\}$-one-shot quantum channel simulation of $\cN^{\otimes n}$. Hence, $P\left(\cT,\cN^{\otimes n}\right)\leq\varepsilon$ and for any pure i.i.d.\ state $\phi_{AR}^{\otimes n}$ we have $P((\cT\otimes \mathcal{I}_{R^n})(\phi^{\otimes n}_{AR}), (\cN^{\otimes n}\otimes \mathcal{I}_{R^n})(\phi^{\otimes n}_{AR})) \leq\varepsilon$. That is, $\cT$ is a $\{q, \varepsilon\}$-one-shot quantum state splitting protocol for $\cN_{A\rightarrow B}^{\otimes n}(\phi_{A}^{\otimes n})$. Given the cost of one-shot quantum state splitting (Theorem \ref{thm:state_splitting_smooth_imax}) and that the smooth-max information is non-increasing under partial trace, we have that the minimal quantum communication cost for quantum channel simulation is lower bounded by $\frac{1}{2}I^{\varepsilon}_{\max}(\dot{R}^n;B^n)_{((\cN\otimes I)(\phi))^{\otimes n}}$. As this holds for any $\phi_{AR}\in\cS(AR)$ pure, the claim follows.
\end{proof}
\vspace{5pt}

We now have all the ingredients to prove Theorem~\ref{thm:channel_sim_mod_dev_final_result} about quantum channel simulation in the moderate deviation regime.
\vspace{5pt}

\textbf{Proof of Theorem \ref{thm:channel_sim_mod_dev_final_result}:} Recall that the minimal quantum communication cost of a simulation of $\cN^{\otimes n}$ with error $\varepsilon_n = e^{-na_n^2}$ for a strictly moderate sequence $\{a_n\}$ is denoted as $q^\star_{\varepsilon_n}(\cN^{\otimes n})$. We now combine the achievability result for one-shot channel simulation (Proposition \ref{prop:achievability_channel_sim}) with the AEP for the partially smoothed max-information (Proposition \ref{prop:smooth_imax_expansion}), and note that multiplicative factors on the error do not change the communication cost in the moderate deviation regime (Corollary \ref{cor:keps_in_Imax_expansion}). Consequently, we find for $\eta > 0$ that there exists an $n_1^\star\in\mathbb{N}$ such that for $n\geq n_1^\star$ we have
\begin{align}\label{eqn:comm_cost_maximization}
\frac{1}{n}q^\star_{\varepsilon_n}(\cN^{\otimes n}) \leq \max\limits_{\sigma} \left\{ \frac{1}{2}I(B:R)_{(\cN\otimes I)(\sigma)} + a_n\sqrt{V(B:R)_{(\cN\otimes I)(\sigma)}} + \eta a_n\right\}\,.
\end{align}
Since the set of quantum states is a compact metric space, we can apply Lemma~\ref{lem:polyanskiy_maximization} taken from \cite{polyanskiythesis} to obtain
\begin{align}
\frac{1}{n}q^\star_{\varepsilon_n}(\cN^{\otimes n}) \leq C(\cN) + \max\limits_{\sigma'\in \Pi(\cN)}a_n\sqrt{V(B:R)_{(\cN\otimes I)(\sigma')}} + o(a_n)\,.
\end{align}
Next, we combine the converse result for one-shot quantum channel simulation for i.i.d. channels (Proposition \ref{prop:converse_channel_sim}) with the AEP for the partially smoothed max-information (Proposition \ref{prop:smooth_imax_expansion}). For $\eta>0$, there exists $n_2^\star\in\mathbb{N}$ such that for $n\geq n_2^\star$, we have
\begin{align}
\frac{1}{n}q^\star_{\varepsilon_n}(\cN^{\otimes n}) &\geq  \max\limits_{\sigma_{AR}} \left\{ \frac{1}{2}I(B:R)_{(\cN\otimes I)(\sigma)} + a_n\sqrt{V(B:R)_{(\cN\otimes I)(\sigma)}} - \eta a_n\right\}\\
&\geq C(\cN) + \max\limits_{\sigma'\in \Pi(\cN)}a_n\sqrt{V(B:R)_{(\cN\otimes I)(\sigma')}} + o(a_n)\,.
\end{align}
$\qedsymbol$

%%%%%%%%%%%%%%%%%%%%%%%%%%%%%%%%%%%%%%%%%%%%%%%%%%%%%%%%%%%%%%%%%%%%%%%%%%%%%%%%%%%%%%%%%%%%%%%%%%%%%%%%%%%%%%%%

\section{Entanglement-assisted quantum channel coding}\label{sec:channel_coding}

Here, we characterize the cost of one-shot entanglement-assisted channel coding in the high-error moderate deviation regime. We consider entanglement-assisted quantum communication in this section. However, the results of this section apply to entanglement-assisted classical communication as well. This is because superdense coding and teleportation establish and equivalence between these two tasks when Alice and Bob have access to shared entanglement.

Our analysis of this task relies on the close connection between high-error channel coding and low-error channel simulation. In the classical setting, this connection was explored in~\cite{bennett2002entanglement}. A similar observation has been done for the relation between source coding and the random number generation in~\cite{hayashi2008second}.

\begin{definition}[One-shot entanglement-assisted channel coding]\label{def:one_shot_channel_coding}
Consider a quantum channel $\cN_{A'\rightarrow B'}$, $\varepsilon\in[0,1]$, and quantum registers $A\cong B$ with $r = \log|A|$. A $\{r,\varepsilon\}$-one-shot quantum channel coding protocol for $\cN$ consists of
\begin{enumerate}
    \item Registers $K$, $L$ and a resource $\sigma_{KL}\in\cS(KL)$ 
    \item An encoding quantum channel $\cE_{AK\rightarrow A'}$
    \item A decoding quantum channel $\cD_{B'L\rightarrow B}$
\end{enumerate}
giving for the composite channel
\begin{align}
\widetilde{\cN}_{A\rightarrow B} = \cD\circ\cN\circ\cE\circ\cP^{\sigma} = \cD\circ\cN\circ\cE(\cdot\otimes\sigma)\quad\text{that}\quad P\left(\widetilde{\cN}_{A\rightarrow B},\cI_{A\rightarrow B}\right) \leq \varepsilon\,.
\end{align}
The maximum coding rate of $\cN_{A'\rightarrow B'}$ is defined as
\begin{align}
r^\star_{\varepsilon}(\cN) = \max \{r : \exists\text{ a}\ \{r,\varepsilon\}\text{-one-shot quantum channel coding protocol for } \cN \}\,.
\end{align}
\end{definition}

Our result in this section is the following characterization.\footnote{A complementary result for classical communication over quantum channels is given in~\cite{chubb2017moderate}.}

\begin{theorem}[Channel coding moderate deviation rate]\label{thm:channel_coding_mod_dev}
Let $\cN_{A\rightarrow B}$ be a quantum channel, $n\in\mathbb{N}$, and $\varepsilon_n = e^{-na_n^2}$ for a strictly moderate sequence $\{a_n\}$. Then, we have
\begin{align}
\frac{1}{n}r^\star_{1-\varepsilon_n}(\cN^{\otimes n}) = C(\cN) + \frac{a_n}{\sqrt{2}}\sqrt{V_{\max}(B:R)_{\cN}} + o(a_n)\,.
\end{align}
\end{theorem}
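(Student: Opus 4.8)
The plan is to derive matching one-shot achievability and converse bounds for entanglement-assisted quantum channel coding of $\cN^{\otimes n}$, expressed through smoothed entropic quantities of the i.i.d.\ state $\omega_{BR}^{\otimes n}$ with $\omega_{BR}=(\cN\otimes\cI_{R})(\phi_{AR})$ for pure $\phi_{AR}$ and $R\cong A$, and then to plug in the moderate-deviation expansions of those quantities. The ``close connection to low-error channel simulation'' is that the coding constraint is phrased via the channel purified distance, $P(\widetilde{\cN},\cI_{A\rightarrow B})\le 1-\varepsilon_n$, equivalently $\bar F(\widetilde{\cN},\cI_{A\rightarrow B})\ge\sqrt{2\varepsilon_n-\varepsilon_n^2}$; after Uhlmann's theorem and the comparison lemmas of Section~\ref{sec:state split} (Lemmas~\ref{lem:dh_dmin_bound} and \ref{lem:dmax_dmin_bound}), this high-error coding problem is governed by exactly the hypothesis-testing-relative-entropy characterisation that controls low-error channel simulation near capacity, but with the error parameter landing at $1-\Theta(\varepsilon_n)$. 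It is this high-error $D_h$ that produces the sub-leading coefficient $1/\sqrt2$ of $\sqrt{V_{\max}}$, since $\tfrac12\sqrt{2V}=\sqrt V/\sqrt2$.

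First I would establish the required one-shot bounds. For the converse: every $\{r,1-\varepsilon_n\}$-code for $\cN^{\otimes n}$ obeys $r\le\max_{\phi_{AR}}\tfrac12 D_h^{1-\varepsilon_n'}(\omega_{BR}^{\otimes n}\,\|\,\omega_R^{\otimes n}\otimes\sigma_B^{\otimes n})+O(\log n)$ for a suitable single-letter state $\sigma_B$ and $\varepsilon_n'=\Theta(\varepsilon_n)$, by a Matthews--Wehner-type meta-converse adapted to coherent communication and to the channel purified distance, together with single-letterisation of the meta-converse quantity for entanglement-assisted capacity. For the achievability: the entanglement-assisted coding protocol dual to one-shot state splitting (cf.\ Theorem~\ref{thm:state_splitting_smooth_imax} and \cite{anshu2020partially}), run at the optimal pure i.i.d.\ input $\phi^{\otimes n}$ --- so that, unlike for channel simulation, no de Finetti reduction is required --- combined again with the comparison lemmas, gives $r\ge\tfrac12 D_h^{1-\varepsilon_n''}(\omega_{BR}^{\otimes n}\,\|\,\omega_R^{\otimes n}\otimes\sigma_B^{\otimes n})-O(\log\tfrac1{\varepsilon_n''})$ with $\varepsilon_n''=\Theta(\varepsilon_n)$. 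Verifying that these two parameters agree up to the admissible slack is the ``meta converse comparison'' carried out in Appendix~\ref{app:meta_converse_comparison}.

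Next I would feed in the expansions. Applying the high-error hypothesis testing AEP (Proposition~\ref{prop:chubbs_dh_expansion}, second inequality) with the choice $\sigma_B=\omega_B$ (so that $D(\omega_{BR}\|\omega_R\otimes\omega_B)=I(B:R)_\omega$ and $V(\omega_{BR}\|\omega_R\otimes\omega_B)=V(B:R)_\omega$), and using that constant prefactors on the error are harmless for moderate sequences (Lemma~\ref{lem:moderate_seq_constant_or_polyn}), one gets at the single-letter level $\tfrac1n\cdot\tfrac12 D_h^{1-\Theta(\varepsilon_n)}(\omega_{BR}^{\otimes n}\|\omega_R^{\otimes n}\otimes\omega_B^{\otimes n})=\tfrac12 I(B:R)_\omega+\tfrac{a_n}{\sqrt2}\sqrt{V(B:R)_\omega}+o(a_n)$ via $\tfrac12\sqrt{2V(B:R)_\omega}=\sqrt{V(B:R)_\omega}/\sqrt2$; for the converse $\sigma_B=\omega_B$ is an admissible (sub-optimal) choice, and for the achievability the minimisation over $\sigma_B$ only helps. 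Maximising over $\phi_{AR}$ and invoking Lemma~\ref{lem:polyanskiy_maximization} of \cite{polyanskiythesis} exactly as in the proof of Theorem~\ref{thm:channel_sim_mod_dev_final_result}, the leading term collapses to $C(\cN)$ and the $O(a_n)$ term to $\tfrac{a_n}{\sqrt2}\max_{\sigma'\in\Pi(\cN)}\sqrt{V(B:R)_{(\cN\otimes\cI)(\sigma')}}=\tfrac{a_n}{\sqrt2}\sqrt{V_{\max}(B:R)_\cN}$, giving both inequalities and hence the stated equality. The residual additive terms --- the $O(\log n)$ from single-letterisation and the $O(na_n^2)$ from the $\log(1/\varepsilon_n'')$-type corrections --- are $o(na_n)$ precisely because $\{a_n\}$ is strictly moderate, which is why that hypothesis is imposed.

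The hard part will be the one-shot step: obtaining achievability and converse bounds for entanglement-assisted coherent channel coding with the correct error dependence, and, above all, certifying that the achievability (decoupling/state-merging based) and converse (meta-converse based) characterisations match to within $o(na_n)$ when evaluated on i.i.d.\ inputs. This is exactly where the refined triangle inequality for the purified distance (Lemma~\ref{lem:purified_dist_triangle_ineq}) and the min-/max-/hypothesis-testing comparison lemmas of Section~\ref{sec:state split} do the work, and it is the reason the statement is restricted to strictly moderate sequences.
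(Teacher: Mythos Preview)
Your converse step has a genuine gap. You claim that a Matthews--Wehner-type meta-converse, ``together with single-letterisation of the meta-converse quantity,'' gives
\[
r \le \max_{\phi_{AR}} \tfrac12 D_h^{1-\varepsilon_n'}\bigl(\omega_{BR}^{\otimes n}\,\big\|\,\omega_R^{\otimes n}\otimes\sigma_B^{\otimes n}\bigr) + O(\log n)
\]
with a \emph{single-letter} input $\phi_{AR}$. But the meta-converse (Proposition~\ref{prop:meta_converse_bound}) applied to $\cN^{\otimes n}$ only gives an optimisation over \emph{all} inputs $\rho_{A'^nR^n}$, and there is no general mechanism to restrict to i.i.d.\ inputs with $O(\log n)$ cost. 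The paper states this explicitly after Proposition~\ref{prop:converse_result_channel_coding_mod_dev}: the meta-converse route works only for covariant channels, where the optimal input is known to be the tensor power of the maximally entangled state. Appendix~\ref{app:meta_converse_comparison} is not, as you suggest, a verification that ``achievability and converse characterisations match''; it is the restricted covariant-channel calculation that illustrates why the meta-converse alone is insufficient in general.

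The paper's converse for general channels is structurally different: Lemma~\ref{lem:simulation_rate_bounds_coding_rate} bounds $r^\star_{1-\varepsilon_n}(\cN^{\otimes n})$ directly by the channel simulation cost $q^\star_{\sqrt{\varepsilon_n}}(\cN^{\otimes n})$ via the strong converse for the identity channel, and then Theorem~\ref{thm:channel_sim_mod_dev_final_result} is invoked. The single-letterisation is done on the \emph{simulation} side, where the de Finetti reduction (Proposition~\ref{prop:postselection}) is available in the achievability direction; this is what produces the $\mathrm{poly}(n)$ factors that force the strictly-moderate hypothesis, not the comparison lemmas of Section~\ref{sec:state split} as you suggest. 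Your achievability sketch also differs from the paper's (which uses the Datta--Tomamichel--Wilde classical EA achievability of Lemma~\ref{lem:one-shot_classical_cap_datta} followed by teleportation, Lemma~\ref{lem:simulate_quantum_id_using_ea_noisy_ch}), but that side is less problematic since an i.i.d.\ input can simply be chosen.
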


Our proof of Theorem~\ref{thm:channel_coding_mod_dev} proceeds by first upper bounding the rate of high-error channel coding with the cost of low-error channel simulation. We start by giving some notation for this section, following the definitions from the work \cite{datta2014second} on entanglement-assisted quantum channel coding for sending classical information. For any quantum channel $\cN$, let us define $P_{\rm succ}(\cN, r)$ to be the maximum probability of correctly transmitting a uniformly random message made of $r$ classical bits, where the maximization is over all encoders and decoders. We also define $P_{\rm fail}(\cN, r) = 1 - P_{\rm succ}(\cN, r)$. Finally, let $\log M^\star_{\rm ea}(\cN, \varepsilon)$ be the maximum number of bits of information that can be transmitted through the channel $\cN$ using an entanglement-assisted protocol with $P_{\rm fail}(\cN, \log M^\star_{\rm ea}(\cN, \varepsilon))$ bounded by $\varepsilon\in (0,1)$. 

\begin{lemma}[Coding converse via simulation]\label{lem:simulation_rate_bounds_coding_rate}
Let $\cN_{A'\rightarrow B'}$ be a quantum channel, $n\in\mathbb{N}$, and $\varepsilon\in (0,1]$. Then, we have
\begin{align}
r^\star_{1-\varepsilon}(\cN^{\otimes n}) \leq q^\star_{\sqrt{\varepsilon}}(\cN^{\otimes n}) + O\left(\log\frac{1}{\varepsilon}\right)\,.
\end{align}
where $q^\star_{\sqrt{\varepsilon}}(\cN^{\otimes n})$ is the minimal quantum communication cost of simulating $\cN^{\otimes n}$ as given in Definition~\ref{def:one_shot_channel_sim}.
\end{lemma}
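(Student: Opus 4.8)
The plan is to relate entanglement-assisted channel coding at error $1-\varepsilon$ to entanglement-assisted channel \emph{simulation} at error $\sqrt{\varepsilon}$ via a standard ``simulation-then-converse'' argument, using the fact that an $\varepsilon$-good identity channel (in purified distance) is essentially as good as the identity for transmitting classical information. First I would observe that a $\{r,1-\varepsilon\}$-one-shot channel coding protocol for $\cN^{\otimes n}$ produces a composite channel $\widetilde{\cN}_{A\to B}$ with $P(\widetilde{\cN},\cI_{A\to B})\leq 1-\varepsilon$, hence $\widetilde{\cN}$ can be used to transmit $r$ classical bits with failure probability related to $1-\varepsilon$. More precisely, I would use the relation between purified distance and trace distance, $\|\cdot\|_1\leq P$, together with the operational meaning of the fidelity/purified distance for the identity channel, to lower bound the classical information transmissible through $\widetilde{\cN}$, i.e. $\log M^\star_{\rm ea}(\widetilde{\cN},\varepsilon')\geq r$ for a suitable $\varepsilon'$ tied to $1-\varepsilon$.

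Next I would invoke the meta-converse / one-shot converse bound for entanglement-assisted classical communication from \cite{datta2014second}: $\log M^\star_{\rm ea}(\cN,\varepsilon')$ is upper bounded (up to an $O(\log\frac{1}{\delta})$ slack from smoothing) by a hypothesis-testing-type quantity $D_h$ of $\cN$ optimized over inputs. The key point is that this same converse quantity also lower-bounds the simulation cost $q^\star$. Concretely, I would use Proposition~\ref{prop:converse_channel_sim}, which states $q^\star_{\sqrt{\varepsilon}}(\cN^{\otimes n})\geq \max_{\phi_{AR}}\frac{1}{2}I^{\sqrt{\varepsilon}}_{\max}(\dot{R}^n;B^n)_{((\cN\otimes I)(\phi))^{\otimes n}}$, and show that the coding converse quantity is bounded by this smooth max-information (again up to $O(\log\frac{1}{\varepsilon})$), exploiting the chain running from $D_h$ through $D_{\max}$ (Lemma~\ref{lem:thm4_anshu2019minimax}) to $I_{\max}$. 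Threading the smoothing parameters: coding at error $1-\varepsilon$ forces the ``high-error'' regime $D_h^{1-\varepsilon'}$, its dual under the meta-converse lands near $D_h^{\varepsilon''}$, which then feeds into $I^{\sqrt\varepsilon}_{\max}$ with the square-root appearing because $I_{\max}$-type smoothing is in purified distance while the hypothesis-testing bound naturally lives with the square of that parameter.

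I expect the main obstacle to be bookkeeping the smoothing parameters so that the error in the simulation bound comes out as exactly $\sqrt{\varepsilon}$ (and not, say, $c\sqrt{\varepsilon}$ for some constant $c$, though by Lemma~\ref{lem:moderate_seq_constant_or_polyn} that would be harmless downstream), and to carefully justify the step from ``$\widetilde{\cN}$ is $(1-\varepsilon)$-close to $\cI$ in purified distance'' to ``$\widetilde{\cN}$ transmits $r$ classical bits with the claimed failure probability.'' The latter requires relating $P(\widetilde{\cN},\cI)$ to the worst-case (over codewords) decoding success probability; using a maximally entangled test state and the relation $F\leq 1$, $\bar F(\widetilde{\cN}(\cdot),\cI(\cdot))^2 = 1 - P^2 \geq 1-(1-\varepsilon)^2 = 2\varepsilon-\varepsilon^2 \geq \varepsilon$, one gets a fidelity at least $\sqrt{\varepsilon}$ on the channel level, which transfers to a lower bound on $P_{\rm succ}$ for a random message of the right size via a standard averaging argument. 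The remaining pieces — the $O(\log\frac{1}{\varepsilon})$ slacks from the smoothing lemmas and from converting $I_{\max}$ to $q^\star$ via Theorem~\ref{thm:state_splitting_smooth_imax} — are routine and can be collected into the stated $O(\log\frac{1}{\varepsilon})$ term.
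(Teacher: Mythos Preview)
Your proposal takes a fundamentally different route from the paper and, as written, has a genuine gap.

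The paper's argument is purely operational and does not pass through any entropic meta-converse. It goes as follows: take an optimal $\sqrt{\varepsilon}$-error simulation $\cT$ of $\cN^{\otimes n}$, which uses $q^\star_{\sqrt{\varepsilon}}(\cN^{\otimes n})$ qubits of the identity channel; plug $\cT$ in place of $\cN^{\otimes n}$ inside the optimal $(1-\varepsilon)$-error coding protocol to obtain a composite channel $\tilde{\cT}_{\bar A\to\bar B}$; the tight triangle inequality (Lemma~\ref{lem:purified_dist_triangle_ineq}) gives $P(\tilde{\cT},\cI)\leq 1-\tfrac{3}{2}\varepsilon+O(\varepsilon^{3/2})$. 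Since $\tilde{\cT}$ is implemented with only $q^\star_{\sqrt{\varepsilon}}(\cN^{\otimes n})$ noiseless qubits plus entanglement, the strong converse for the identity channel (Lemma~\ref{lem:strong_converse_identity_channel}) forces $P_{\rm fail}\geq 1-2^{-(r^\star_{1-\varepsilon}-q^\star_{\sqrt{\varepsilon}})}$, and comparing the two bounds yields the claim. The simulation cost enters \emph{operationally}, not via an entropic lower bound.

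Your plan instead tries to sandwich $r^\star$ and $q^\star$ between entropic quantities: upper-bound $r^\star_{1-\varepsilon}(\cN^{\otimes n})$ via the meta-converse, lower-bound $q^\star_{\sqrt{\varepsilon}}(\cN^{\otimes n})$ via Proposition~\ref{prop:converse_channel_sim}, and match the two through the $D_h\to D_{\max}\to I_{\max}$ chain. The gap is that the meta-converse for $\cN^{\otimes n}$ maximises over \emph{all} input states on $A'^n$, whereas Proposition~\ref{prop:converse_channel_sim} as stated only lower-bounds $q^\star$ by $\tfrac12 I^{\sqrt{\varepsilon}}_{\max}(\dot R^n;B^n)$ evaluated on \emph{i.i.d.}\ inputs $\phi^{\otimes n}$. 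These do not match for general channels (indeed, Appendix~\ref{app:meta_converse_comparison} explains that the meta-converse route only closes for covariant channels). One could try to strengthen the simulation converse to arbitrary inputs by rerunning the argument of Proposition~\ref{prop:state_splitting_converse}, but you do not do this, and even then the threading of the partial-smoothing constraint $\bar\rho_R=\rho_R$ through Lemma~\ref{lem:thm4_anshu2019minimax} is not addressed. Separately, your first and third paragraphs conflate the composite channel $\widetilde{\cN}_{A\to B}$ produced by the coding protocol with the underlying channel $\cN^{\otimes n}$; the observation that $\widetilde{\cN}$ can transmit $r$ bits does not by itself connect to either $q^\star$ or the meta-converse for $\cN^{\otimes n}$, and the invocation of Theorem~\ref{thm:state_splitting_smooth_imax} at the end (a state-splitting, not simulation, bound) is not the right tool here.
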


\begin{proof}
Consider the identity channel $\cI_{A\rightarrow B}$ with $A\cong B$ and $\log|A|= q^\star_{\sqrt{\varepsilon}}(\cN^{\otimes n})$. This channel can be used for the communication required for a quantum channel simulation of $\cN^{\otimes n}$ by Definition~\ref{def:one_shot_channel_sim}. Let the simulated channel be called $\cT_{A'^n\rightarrow B'^n}$, for which we have $P(\cT, \cN^{\otimes n})\leq \sqrt{\varepsilon}$.

Now, consider an entanglement-assisted quantum channel coding protocol for $\cN^{\otimes n}$ with maximum coding rate $r^\star_{1-\varepsilon}(\cN^{\otimes n})$. There exist quantum channels $\cD, \cE$ and a preparation quantum channel $\cP^{\sigma}$ such that for $\tilde{\cN}^{n} = \cD\circ\cN^{\otimes n}\circ\cE\circ\cP^{\sigma}$, we have $P(\tilde{\cN}^{n}, I_{\bar{A}\rightarrow \bar{B}}) \leq 1-\varepsilon$ for $\bar{A}\cong\bar{B}$ and $\log|\bar{A}| = r^\star_{1-\varepsilon}(\cN^{\otimes n})$.

Next, let us now define $\tilde{\cT} = \cD\circ\cT\circ\cE\circ\cP^{\sigma}$.Using the tighter triangle inequality for the purified distance (Lemma \ref{lem:purified_dist_triangle_ineq}) and that the purified distance is non-increasing under quantum channels, we have
\begin{align}
P(\tilde{\cT}, \cI_{\bar{A}\rightarrow\bar{B}}) \leq (1-\varepsilon)\sqrt{1-\varepsilon} + \sqrt{\varepsilon}\sqrt{1 - (1-\varepsilon)^2}\leq 1 - \frac{3}{2}\varepsilon + O(\varepsilon^{3/2})\,.
\end{align}
Since we have $P(\tilde{\cT}, \cI_{\bar{A}\rightarrow\bar{B}}) \leq 1 - \frac{3}{2}\varepsilon + O(\varepsilon^{3/2})$, we can use $\tilde{\cT}$ to send $r^\star_{1-\varepsilon}(\cN^{\otimes n})$ bits with $P_{\rm fail}(\tilde{\cT}, r^\star_{1-\varepsilon}(\cN^{\otimes n}))  \leq 1 - \frac{3}{2}\varepsilon + O(\varepsilon^{3/2})$. The strong converse of the capacity of the quantum identity channel (Lemma~\ref{lem:strong_converse_ea_identity_channel}) states that if the rate exceeds the capacity, the failure probability of correctly decoding approaches $1$ exponentially, that is
\begin{align}
&P_{\rm fail}(\tilde{\cT}, r^\star_{1-\varepsilon}(\cN^{\otimes n})) \geq 1 - 2^{-(r^\star_{1-\varepsilon}(\cN^{\otimes n}) - q^\star_{\sqrt{\varepsilon}}(\cN^{\otimes n}))}\\
\implies &1 - \frac{3}{2}\varepsilon + O(\varepsilon^{3/2})\geq 1 - 2^{-(r^\star_{1-\varepsilon}(\cN^{\otimes n}) - q^\star_{\sqrt{\varepsilon}}(\cN^{\otimes n}))} \\
\implies &r^\star_{1-\varepsilon}(\cN^{\otimes n}) \leq q^\star_{\sqrt{\varepsilon}}(\cN^{\otimes n}) + O\left(\log\frac{1}{\varepsilon}\right)\,. \label{eqn:converse_channel_coding}
\end{align}
\end{proof}
\vspace{5pt}

We now have the converse statement for quantum channel coding.

\begin{prop}[Channel coding moderate deviation converse]\label{prop:converse_result_channel_coding_mod_dev}
Let $\cN_{A\rightarrow B}$ be a quantum channel, $n\in\mathbb{N}$ and $\varepsilon_n = e^{-na_n^2}$ for a strictly moderate sequence $\{a_n\}$. Then, we have that
\begin{align}
\frac{1}{n}r^\star_{1-\varepsilon_n}(\cN^{\otimes n}) \leq C(\cN) + \frac{a_n}{\sqrt{2}}\sqrt{V_{\max}(B:R)_{\cN}} + o(a_n)\,.
\end{align}
\end{prop}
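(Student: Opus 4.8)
The plan is to combine the converse reduction from channel coding to channel simulation (Lemma~\ref{lem:simulation_rate_bounds_coding_rate}) with the moderate deviation expansion for channel simulation (Theorem~\ref{thm:channel_sim_mod_dev_final_result}), together with a careful bookkeeping of the error parameters. First I would apply Lemma~\ref{lem:simulation_rate_bounds_coding_rate} with $\varepsilon = \varepsilon_n = e^{-na_n^2}$ to obtain
\begin{align}
r^\star_{1-\varepsilon_n}(\cN^{\otimes n}) \leq q^\star_{\sqrt{\varepsilon_n}}(\cN^{\otimes n}) + O\left(\log\frac{1}{\varepsilon_n}\right)\,.
\end{align}
The error appearing in the simulation cost is $\sqrt{\varepsilon_n} = e^{-n a_n^2/2} = e^{-n b_n^2}$ where $b_n = a_n/\sqrt{2}$, and since $\{a_n\}$ is a strictly moderate sequence so is $\{b_n\}$ (both $\lim_n b_n = 0$ and $\lim_n n b_n^2/\log n = \infty$ are immediate from the corresponding properties of $\{a_n\}$). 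The additive term satisfies $O(\log(1/\varepsilon_n)) = O(n a_n^2)$, and dividing by $n$ this is $O(a_n^2) = o(a_n)$ because $a_n \to 0$.

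Next I would invoke Theorem~\ref{thm:channel_sim_mod_dev_final_result} applied to the strictly moderate sequence $\{b_n\}$ with $b_n = a_n/\sqrt{2}$, which gives
\begin{align}
\frac{1}{n}q^\star_{\sqrt{\varepsilon_n}}(\cN^{\otimes n}) = C(\cN) + b_n\sqrt{V_{\max}(B:R)_{\cN}} + o(b_n) = C(\cN) + \frac{a_n}{\sqrt{2}}\sqrt{V_{\max}(B:R)_{\cN}} + o(a_n)\,,
\end{align}
where in the last step I use that $o(b_n) = o(a_n)$ since $b_n$ and $a_n$ differ only by the constant factor $1/\sqrt{2}$. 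Combining the two displays, dividing the first by $n$, and absorbing the $O(a_n^2)$ term into $o(a_n)$ yields
\begin{align}
\frac{1}{n}r^\star_{1-\varepsilon_n}(\cN^{\otimes n}) \leq C(\cN) + \frac{a_n}{\sqrt{2}}\sqrt{V_{\max}(B:R)_{\cN}} + o(a_n)\,,
\end{align}
which is the claimed bound.

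The only genuinely delicate point\,---\,and the step I would be most careful about\,---\,is checking that $\{b_n\}$ with $b_n = a_n/\sqrt{2}$ is \emph{strictly} moderate (not merely moderate), because Theorem~\ref{thm:channel_sim_mod_dev_final_result} requires strict moderateness and the chain of reductions is what forces us to work with $\sqrt{\varepsilon_n}$ rather than $\varepsilon_n$; this is precisely why the hypothesis of Proposition~\ref{prop:converse_result_channel_coding_mod_dev} imposes strict moderateness on $\{a_n\}$ in the first place. A minor secondary point is verifying that the residual $O(\log(1/\varepsilon_n))/n = O(a_n^2)$ is indeed $o(a_n)$, which follows from $a_n \to 0$; one should also note that no extra $\log n$ terms arise here, so even the plain moderate (rather than strict) property would suffice for handling that particular term\,---\,strictness is needed only through the application of Theorem~\ref{thm:channel_sim_mod_dev_final_result}. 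Everything else is routine substitution and the observation that multiplying a moderate sequence by a positive constant preserves moderateness and rescales $o(\cdot)$ consistently.
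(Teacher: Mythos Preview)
Your proposal is correct and follows exactly the same approach as the paper: substitute $\varepsilon = \varepsilon_n$ into Lemma~\ref{lem:simulation_rate_bounds_coding_rate} and then invoke Theorem~\ref{thm:channel_sim_mod_dev_final_result}. The paper's proof is a single sentence stating precisely this combination, and you have simply (and correctly) expanded the bookkeeping details, including the observation that $b_n = a_n/\sqrt{2}$ remains strictly moderate and that $O(\log(1/\varepsilon_n))/n = O(a_n^2) = o(a_n)$.
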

\begin{proof}
We substitute $\varepsilon = e^{-na_n^2}$ for a strictly moderate sequence $\{a_n\}$ in the coding converse via simulation (Lemma \ref{lem:simulation_rate_bounds_coding_rate}) and the result then follows by our results for channel simulation in the moderate deviation regime (Theorem~\ref{thm:channel_sim_mod_dev_final_result}).
\end{proof}
\vspace{5pt}

Note that existing techniques such as the meta-converse of~\cite{matthews2014finite} yield the same bound for covariant quantum channels (as shown in Appendix~\ref{app:meta_converse_comparison}), but our converse statement as above extends that result to all channels.

Next, we show the achievability in Theorem~\ref{thm:channel_coding_mod_dev}. The main idea in the following is that if one can send classical messages using a quantum channel and entanglement-assistance, then one can also achieve entanglement-assisted quantum channel coding in the sense of Definition~\ref{def:one_shot_channel_coding} through a teleportation protocol.

\begin{lemma}[Teleportation protocol for channel coding]\label{lem:simulate_quantum_id_using_ea_noisy_ch}
Consider a quantum channel $\cN_{A'\rightarrow B'}$, $\varepsilon\in[0,1]$, and quantum registers $A\cong B$ with $\log |A| = \frac{1}{2}\log M^\star_{\rm ea}(\cN, \varepsilon)$. There exist
\begin{enumerate}
    \item registers $K$, $L$ and shared entanglement through a resource $\sigma_{KL}\in\cS(KL)$ 
    \item an encoding quantum channel $\cE_{AK\rightarrow A'}$
    \item a decoding quantum channel $\cD_{B'L\rightarrow B}$
\end{enumerate}
such that for the quantum channel
\begin{align}
\widetilde{\cN}_{A\rightarrow B} = \cD\circ\cN\circ\cE\circ\cP^{\sigma} = \cD\circ\cN\circ\cE(\cdot\otimes\sigma)\quad\text{we have}\quad P\left(\widetilde{\cN}_{A\rightarrow B},\cI_{A\rightarrow B}\right) \leq \sqrt{\varepsilon}\,.
\end{align}
\end{lemma}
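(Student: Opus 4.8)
The plan is to realise $\widetilde{\cN}_{A\to B}$ by plugging the optimal entanglement-assisted \emph{classical} code for $\cN$ into a teleportation protocol. First I would unpack the definition of $\log M^\star_{\rm ea}(\cN,\varepsilon)$ recalled above: since by hypothesis $2\log|A| = \log M^\star_{\rm ea}(\cN,\varepsilon)$, there are $|A|^2$ messages, and there exist a maximally entangled resource $\Phi^{(1)}_{K_1L_1}$, a classical-input encoder $\cE^{\rm cl}_{K_1\to A'}$ depending on a message $m\in\{0,\dots,|A|^2-1\}$, and a decoder $\cD^{\rm cl}_{B'L_1\to \hat m}$ producing an estimate $\hat m$, such that for a uniformly random $m$ we have $\Pr[\hat m=m]\ge 1-\varepsilon$, i.e.\ $P_{\rm fail}(\cN,\log M^\star_{\rm ea}(\cN,\varepsilon))\le\varepsilon$.

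Next I would assemble the teleportation wrapper. Let $\Phi^{(2)}_{A_2B_2}$ be the maximally entangled state of Schmidt rank $|A|$, with $A_2\cong B_2\cong A\cong B$, and take as resource $\sigma_{KL}=\Phi^{(1)}_{K_1L_1}\otimes\Phi^{(2)}_{A_2B_2}$ with $K=K_1A_2$ and $L=L_1B_2$. The encoder $\cE_{AK\to A'}$ performs a Bell measurement on $AA_2$, obtaining an outcome $m$, and then applies $\cE^{\rm cl}$ to $(m,K_1)$ to produce the channel input $A'$. The decoder $\cD_{B'L\to B}$ applies $\cD^{\rm cl}$ to $(B',L_1)$ to obtain $\hat m$, then applies the teleportation correction unitary indexed by $\hat m$ to $B_2$, and relabels $B_2\to B$. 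By construction $\widetilde{\cN}_{A\to B}=\cD\circ\cN\circ\cE\circ\cP^{\sigma}$.

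For the error analysis, the structural point is that the Bell-measurement outcome $m$ is uniformly distributed and independent of the input state, and the event $\{\hat m=m\}$ depends only on the noise in the classical code, not on the input. When $\hat m=m$ the teleportation is exact (including correlations with a reference $R$), and when $\hat m\neq m$ a fixed unitary acting on $B$ alone is applied. Hence for any pure input $\phi_{AR}$ one can write
\begin{align}
(\widetilde{\cN}\otimes\cI_R)(\phi_{AR}) = (1-\varepsilon')\,(\cI_{A\to B}\otimes\cI_R)(\phi_{AR}) + \varepsilon'\,\xi_{BR}
\end{align}
for some normalized state $\xi_{BR}$ and some $\varepsilon'=1-\tfrac{1}{|A|^2}\sum_m\Pr[\hat m=m\mid m]\le\varepsilon$. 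Since $(\cI_{A\to B}\otimes\cI_R)(\phi_{AR})$ is pure, the generalized fidelity against it equals the square root of the overlap, which is at least $1-\varepsilon'\ge 1-\varepsilon$, so $P\big((\widetilde{\cN}\otimes\cI_R)(\phi_{AR}),(\cI_{A\to B}\otimes\cI_R)(\phi_{AR})\big)\le\sqrt{\varepsilon}$. Taking the supremum over inputs, and using (as noted after Definition~\ref{def:purified_diamond_distance_original}) that it suffices to consider pure inputs with $|R|=|A|$, yields $P(\widetilde{\cN}_{A\to B},\cI_{A\to B})\le\sqrt{\varepsilon}$.

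The main obstacle is the bookkeeping in the last step: verifying cleanly that the decoding-error event is independent of the input so that $\widetilde{\cN}$ genuinely decomposes as the stated convex combination of the identity channel and a correction-dependent channel, and being careful with the (sub)normalization conventions in passing from the overlap bound to the purified distance. The remaining ingredients are just the standard teleportation identity composed with the given entanglement-assisted classical code.
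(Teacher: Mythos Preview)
Your proposal is correct and follows essentially the same approach as the paper: wrap the optimal entanglement-assisted classical code inside a teleportation protocol, use that the Bell-measurement outcome is uniform and independent of the input so that the composite channel acts as the identity with probability at least $1-\varepsilon$, and then bound the purified distance via the pure-state fidelity formula. One cosmetic point: you assume the classical code's shared resource is maximally entangled, whereas the definition of $M^\star_{\rm ea}$ allows an arbitrary resource state (the paper writes $\omega_{K'L'}$); this does not affect your argument, which goes through verbatim with a general resource.
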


\begin{proof}
Let $X$ be a classical register with $|X| = \log M^\star_{\rm ea}(\cN, \varepsilon)$. Let $\omega_{K'L'}$ be a resource state used in the entanglement-assisted classical communication protocol for $\cN_{A'\rightarrow B'}$. By the definition of $\log M^\star_{\rm ea}(\cN, \varepsilon)$, there exists an encoder $\cE'_{XK'\rightarrow A'}$ and a decoder $\cD'_{L'B'\rightarrow Y}$ where $Y\cong X$ such that we can construct the composite channel $\cD'\circ\cN\circ\cE'(\cdot\otimes\omega)$. If the  state of register $X$ is $x\in\cX$ where $x$ chosen uniformly at random from alphabet $\cX$, it holds that $y = \cD'\circ\cN\circ\cE'(x\otimes\omega)$ satisfies $\text{Pr}(y = x) = P_{\rm succ}(\cN, \log M^\star_{ea}(\cN, \varepsilon))\geq 1-\varepsilon$.

Let $\ketbra{\phi}{\phi}_{\bar{K}\bar{L}}\in\cS(\bar{K}\bar{L})$ be a $\frac{1}{2}\log M^\star_{\rm ea}(\cN, \varepsilon)$-dimensional maximally entangled state. There exists an encoder $\bar{\cE}_{A\bar{K}\rightarrow X}$ and a decoder $\bar{\cD}_{\bar{L}Y\rightarrow B}$ that is used to teleport the state in register $A$ to register $B$~\cite{bennett1993teleporting}. 

The channel $\tilde{\cN}_{A\rightarrow B}$ now combines the above elements. It uses the resource state $\sigma_{KL} = \omega_{K'L'}\otimes\ketbra{\phi}{\phi}_{\bar{K}\bar{L}}$, the encoder $\cE_{AK\rightarrow A'} = \cE'_{XK'\rightarrow A'}\circ\bar{\cE}_{A\bar{K}\rightarrow X}$ and the decoder $\cD_{B'L\rightarrow B} = \bar{\cD}_{\bar{L}Y\rightarrow B}\circ\cD'_{L'B'\rightarrow Y}$. For a fixed quantum state, the classical message that needs to be transmitted in the teleportation protocol is uniformly random and hence the teleportation of any state also occurs with success probability $p = P_{\rm succ}(\cN, \log M^\star_{ea}(\cN, \varepsilon)) \geq 1-\varepsilon$. We now apply the above protocol to the $A$ register of any pure state $\ketbra{\psi}{\psi}_{AR}$, where $R\cong A$. The output state is $p\ketbra{\psi}{\psi}_{AR} + (1-p)\ketbra{\psi^\perp}{\psi^\perp}_{AR}$. We have
\begin{align}
F(p\ketbra{\psi}{\psi}_{AR} + (1-p)\ketbra{\psi^\perp}{\psi^\perp}_{AR}, \ketbra{\psi}{\psi}_{AR})&\geq F((1-\varepsilon)\ketbra{\psi}{\psi}_{AR} + \varepsilon\ketbra{\psi^\perp}{\psi^\perp}_{AR}, \ketbra{\psi}{\psi}_{AR})\\
&= \sqrt{1-\varepsilon}\,.
\end{align}
In terms of purified distance, we have
\begin{align}
P((p\ketbra{\psi}{\psi}_{AR} + (1-p)\ketbra{\psi^\perp}{\psi^\perp}_{AR}, \ketbra{\psi}{\psi}_{AR}) \leq \sqrt{\varepsilon}\implies P\left(\tilde{\cN}_{A\rightarrow B}, \cI_{A\rightarrow B}\right) \leq \sqrt{\varepsilon}\,,
\end{align}
where the last implication follows since we can apply the protocol to any pure quantum state $\ketbra{\psi}{\psi}_{AR}$.
\end{proof}
\vspace{5pt}

Entanglement-assisted classical communication over a quantum channel is a well-studied topic. An achievability result for $\log M^\star_{\rm ea}(\cN^{\otimes n}, \varepsilon)$ is as follows.

\begin{lemma}[Entanglement-assisted classical coding achievability{~\cite[Eq.~4.64]{datta2016second}}]\label{lem:one-shot_classical_cap_datta}
Let $\cN_{A\rightarrow B}$ be a quantum channel, $\varepsilon\in(0,1)$, $g(n,\mu) = 2^{-\frac{n}{2}\left(\mu-|A| \frac{\log (n+1)}{n}\right)}$, and $0<2\delta< \varepsilon - g(n,\mu)$. Then, we have that
\begin{align}
\log M_{\rm ea}^\star(\cN^{\otimes n}, \varepsilon) \geq D_{h}^{\varepsilon -2\delta-g(n, \mu)}\left(\left(\left(\mathcal{N}\otimes\cI_R\right)\left(\psi_{AR}\right)\right)^{\otimes n} \|\left(\mathcal{N}\left(\rho_{A}\right)\right)^{\otimes n} \otimes \rho_{R}^{\otimes n}\right)-f\left(\varepsilon, \delta\right)-\log \gamma_{n, \mu}\,,
\end{align}
where $f(\varepsilon, \delta)=\log \frac{1-\varepsilon}{\delta^{2}}$, $\gamma_{n,\mu} =(n+1)^{|A|} 2^{n \mu}$, and $\psi_{AR}\in\cS(AR)$ pure.
\end{lemma}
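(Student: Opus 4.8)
This lemma transcribes~\cite[Eq.~4.64]{datta2016second} into the notation of this section, so the plan is essentially to follow that reference; I only outline the underlying ideas here. The starting point is a one-shot achievability bound for entanglement-assisted transmission of classical messages over an arbitrary quantum channel $\cM_{A\to B}$. Fixing a pure input state $\psi_{AR}$ with $R\cong A$ and writing $\tau_{BR}=(\cM\otimes\cI_R)(\psi_{AR})$, one builds a random codebook (codewords obtained by applying random unitaries to one half of a shared entangled state) together with a decoder whose average error probability is controlled by a binary hypothesis test distinguishing $\tau_{BR}$ from the product $\tau_B\otimes\tau_R$; optimizing the test yields, for $\varepsilon\in(0,1)$ and $\delta\in(0,\varepsilon)$,
\begin{align}
\log M_{\rm ea}^\star(\cM,\varepsilon)\;\geq\; D_h^{\varepsilon-2\delta}\big(\tau_{BR}\,\big\|\,\tau_B\otimes\tau_R\big)-f(\varepsilon,\delta)\,,\qquad f(\varepsilon,\delta)=\log\tfrac{1-\varepsilon}{\delta^{2}}\,.
\end{align}

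Next I would specialize this to $\cM=\cN^{\otimes n}$ with the i.i.d.\ input $\psi_{AR}^{\otimes n}$, for which $\tau_{BR}=\big((\cN\otimes\cI_R)(\psi_{AR})\big)^{\otimes n}$, $\tau_B=(\cN(\rho_A))^{\otimes n}$, and $\tau_R=\rho_R^{\otimes n}$, so the second argument of the hypothesis test already has the desired i.i.d.\ product form $\cN(\rho_A)^{\otimes n}\otimes\rho_R^{\otimes n}$. The remaining correction terms $g(n,\mu)$ and $\gamma_{n,\mu}=(n+1)^{|A|}2^{n\mu}$ enter in~\cite{datta2016second} because the decoder must be controlled uniformly over an a priori unknown permutation-symmetric state: a de Finetti / spectral-flattening reduction replaces the permutation-symmetric operators appearing in the decoder construction by i.i.d.\ ones, at the price of the multiplicative factor $\gamma_{n,\mu}$ in the number of messages (hence the additive $-\log\gamma_{n,\mu}$) and an additive deformation $g(n,\mu)$ of the smoothing parameter $\varepsilon-2\delta$. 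Combining these bookkeeping terms gives exactly the stated inequality.

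I expect the only genuinely technical ingredient to be this de Finetti / flattening bookkeeping that produces precisely the functions $g(n,\mu)$ and $\gamma_{n,\mu}$, which is carried out in~\cite{datta2016second} and which I would invoke rather than redo. Its exact form is in any case immaterial for our application: this lemma is used only in the strictly moderate deviation regime with $\varepsilon=e^{-na_n^2}$, and taking for instance $\mu=4a_n^2$ gives $\log\gamma_{n,\mu}=|A|\log(n+1)+4na_n^2=o(na_n)$, while $\lim_{n\to\infty}na_n^2/\log n=\infty$ forces $g(n,\mu)=o(\varepsilon)$; hence, after choosing $\delta=\Theta(\varepsilon)$, the effective error level $\varepsilon-2\delta-g(n,\mu)$ remains $\Theta(e^{-na_n^2})$ and all of these corrections are absorbed into the $o(a_n)$ term in the subsequent channel-coding analysis.
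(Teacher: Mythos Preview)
The paper does not supply a proof of this lemma; it is quoted directly from~\cite[Eq.~4.64]{datta2016second}, and your proposal rightly treats it the same way, offering only an outline of the cited argument. Your sketch of a one-shot random-coding bound followed by specialisation to i.i.d.\ inputs is broadly correct; the correction terms $g(n,\mu)$ and $\gamma_{n,\mu}$ in~\cite{datta2016second} actually arise from a pinching / type-counting step on the channel output (the factor $(n+1)^{|A|}$ is the number of types) rather than from a de Finetti reduction, but this is a labelling issue and does not affect the statement.

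There is, however, a concrete error in your closing paragraph about how the lemma is subsequently used. In Proposition~\ref{prop:achievability_result_channel_coding_mod_dev} this lemma is invoked in the \emph{high}-error regime: the error parameter fed into the lemma is $\varepsilon^2 = 1 - e^{-na_n^2}$ (close to $1$), not $\varepsilon = e^{-na_n^2}$ as you write. This is not cosmetic, because with smoothing close to $1$ the hypothesis-testing AEP (Proposition~\ref{prop:chubbs_dh_expansion}) gives $\tfrac{1}{n}D_h^{1-\varepsilon_n}\approx I(B:R) + \sqrt{2V(B:R)}\,a_n$, which is the sign needed for the achievability $r^\star_{1-\varepsilon_n}\geq C(\cN)+\tfrac{a_n}{\sqrt{2}}\sqrt{V_{\max}}$. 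Your choice $\varepsilon = e^{-na_n^2}$ would instead produce $I(B:R) - \sqrt{2V(B:R)}\,a_n$ and yield the wrong inequality.
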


We now have the achievability in Theorem~\ref{thm:channel_coding_mod_dev}.

\begin{prop}[Channel coding moderate deviation achievability]\label{prop:achievability_result_channel_coding_mod_dev}
Let $\cN_{A\rightarrow B}$ be a quantum channel, $n\in\mathbb{N}$, and $\varepsilon_n = e^{-na_n^2}$ for a strictly moderate sequence $\{a_n\}$. Then, we have that
\begin{align}
\frac{1}{n}r^\star_{1-\varepsilon_n}(\cN^{\otimes n}) \geq C(\cN) + \frac{a_n}{\sqrt{2}}\sqrt{V_{\max}(B:R)_{\cN}} + o(a_n)\,.
\end{align}
\end{prop}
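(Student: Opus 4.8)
The plan is to reduce quantum channel coding at error $1-\varepsilon_n$ to entanglement-assisted \emph{classical} communication over $\cN^{\otimes n}$, and then invoke the hypothesis-testing moderate-deviation expansion. First I would apply the teleportation protocol of Lemma~\ref{lem:simulate_quantum_id_using_ea_noisy_ch} with its classical-error parameter $\varepsilon$ chosen so that $\sqrt{\varepsilon}=1-\varepsilon_n$, i.e.\ $\varepsilon=(1-\varepsilon_n)^2$; since that lemma turns $\tfrac12\log M^\star_{\rm ea}(\cN^{\otimes n},\varepsilon)$ qubits of rate into a channel-coding protocol of error at most $\sqrt{\varepsilon}=1-\varepsilon_n$, this yields $r^\star_{1-\varepsilon_n}(\cN^{\otimes n})\geq \tfrac12\log M^\star_{\rm ea}(\cN^{\otimes n},(1-\varepsilon_n)^2)$. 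Note $1-(1-\varepsilon_n)^2=2\varepsilon_n-\varepsilon_n^2=\Theta(e^{-na_n^2})$, so the classical task lives in the high-error moderate regime.

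Next I would feed this into the one-shot entanglement-assisted classical achievability bound (Lemma~\ref{lem:one-shot_classical_cap_datta}) applied to $\cN^{\otimes n}$, choosing the free pure input $\psi_{AR}$ to be capacity-achieving for $\cN$ and, among capacity-achieving inputs, to attain $V_{\max}(B:R)_{\cN}$; set $\rho_A=\psi_A$ and $\rho_{BR}=(\cN\otimes\cI_R)(\psi_{AR})$. This gives $\log M^\star_{\rm ea}(\cN^{\otimes n},(1-\varepsilon_n)^2)\geq D_h^{\lambda_n}(\rho_{BR}^{\otimes n}\|\rho_B^{\otimes n}\otimes\rho_R^{\otimes n})-f-\log\gamma_{n,\mu}$ with $\lambda_n=(1-\varepsilon_n)^2-2\delta-g(n,\mu)$. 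I would pick the auxiliary parameters at the scale $\delta=\delta_n:=\varepsilon_n$ and $\mu=\mu_n$ of order $a_n^2$ (and strictly larger than $|A|\tfrac{\log(n+1)}{n}$, which is itself $o(a_n^2)$ by strict moderateness), so that $g(n,\mu_n)\leq\mathrm{poly}(n)\,\varepsilon_n\to0$, $\tfrac1n\log\gamma_{n,\mu_n}=\mu_n+|A|\tfrac{\log(n+1)}{n}=O(a_n^2)=o(a_n)$, and $f=\log\tfrac{2\varepsilon_n-\varepsilon_n^2}{\delta_n^2}=\Theta(na_n^2)=o(na_n)$; at each of these steps the $\Theta(\log n)$ contributions are negligible against $na_n^2$ precisely because $\{a_n\}$ is \emph{strictly} moderate. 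With these choices $\lambda_n=1-h_n$ where $\varepsilon_n\leq h_n\leq\mathrm{poly}(n)\,\varepsilon_n$, hence $h_n=e^{-nd_n^2}$ for a moderate sequence $\{d_n\}$ with $a_n(1-o(1))\leq d_n\leq a_n$ — again using $\tfrac{\log n}{n}=o(a_n^2)$ to squeeze the polynomial-in-$n$ slack away, so that $d_n=(1+o(1))a_n$.

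To conclude, I would apply the high-error moderate-deviation expansion of the hypothesis testing relative entropy (Proposition~\ref{prop:chubbs_dh_expansion}, in its lower-bound direction) with the sequence $\{d_n\}$, using $D(\rho_{BR}\|\rho_B\otimes\rho_R)=I(B:R)_\rho=2C(\cN)$ and $V(\rho_{BR}\|\rho_B\otimes\rho_R)=V(B:R)_\rho=V_{\max}(B:R)_{\cN}$ for the chosen $\psi$, to get $\tfrac1n D_h^{1-h_n}(\rho_{BR}^{\otimes n}\|\rho_B^{\otimes n}\otimes\rho_R^{\otimes n})\geq 2C(\cN)+\sqrt{2V_{\max}(B:R)_{\cN}}\,d_n+o(d_n)=2C(\cN)+a_n\sqrt{2V_{\max}(B:R)_{\cN}}+o(a_n)$. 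Chaining the three steps, $\tfrac1n r^\star_{1-\varepsilon_n}(\cN^{\otimes n})\geq \tfrac12\cdot\tfrac1n\log M^\star_{\rm ea}(\cN^{\otimes n},(1-\varepsilon_n)^2)\geq C(\cN)+\tfrac{a_n}{2}\sqrt{2V_{\max}(B:R)_{\cN}}+o(a_n)=C(\cN)+\tfrac{a_n}{\sqrt2}\sqrt{V_{\max}(B:R)_{\cN}}+o(a_n)$, as claimed.

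The main obstacle is the bookkeeping in the middle step: tuning $\delta_n$ and $\mu_n$ so that \emph{every} fudge term in Lemma~\ref{lem:one-shot_classical_cap_datta} ($f$, $\log\gamma_{n,\mu_n}$, $g(n,\mu_n)$) is $o(na_n)$ while the effective smoothing parameter $\lambda_n$ of the hypothesis-testing quantity stays of the form $1-e^{-nd_n^2}$ for a moderate $d_n\sim a_n$ rather than degrading to a slower sequence (which would shrink the second-order coefficient, e.g.\ a too-large $\delta_n$ would replace $a_n$ by $a_n/\sqrt2$). This is exactly where the \emph{strict} moderate-deviation hypothesis is indispensable: polynomial-in-$n$ slack must be absorbable, cf.\ Lemma~\ref{lem:moderate_seq_constant_or_polyn} and Corollary~\ref{cor:keps_in_Imax_expansion}. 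A smaller but genuine subtlety is that, unlike the converse-type arguments earlier in the paper, achievability here needs the \emph{lower} bound on $D_h^{1-\varepsilon_n}$, i.e.\ the two-sided form of the moderate-deviation expansion from~\cite{chubb2017moderate,cheng2017moderate}, not merely the upper bound stated in Proposition~\ref{prop:chubbs_dh_expansion}.
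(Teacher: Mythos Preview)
Your proposal is correct and takes essentially the same approach as the paper: reduce to entanglement-assisted classical communication via teleportation (Lemma~\ref{lem:simulate_quantum_id_using_ea_noisy_ch}), apply the one-shot achievability bound (Lemma~\ref{lem:one-shot_classical_cap_datta}) with a capacity-achieving input maximizing the variance, and then use the lower-bound direction of the moderate-deviation expansion for $D_h^{1-\varepsilon_n}$ from~\cite{chubb2017moderate}. The paper's parameter choices are slightly more explicit (it sets $\mu=\tfrac{2}{\ln 2}a_n^2+|A|\tfrac{\log(n+1)}{n}$ so that $g(n,\mu)=\varepsilon_n$ exactly, and uses classical error $1-\varepsilon_n$ rather than $(1-\varepsilon_n)^2$), but these differ from yours only by $\Theta(1)$ factors on $\varepsilon_n$, which are absorbed by Lemma~\ref{lem:moderate_seq_constant_or_polyn}; your observation that the \emph{lower} bound on $D_h^{1-\varepsilon_n}$ is needed (not the upper bound recorded in Proposition~\ref{prop:chubbs_dh_expansion}) matches the paper's direct citation of~\cite[Theorem~1]{chubb2017moderate}.
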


\begin{proof}
For some strictly moderate sequence $\{a_n\}$, let us choose $\mu = \frac{2}{\ln 2}a_n^2 + |A|\log\frac{(n+1)}{n}$, $\delta = e^{-na_n^2}$ and $\varepsilon = \sqrt{1-e^{-na_n^2}}$. From this, we obtain $g(n,\mu) = e^{-na_n^2}$, $\log\gamma_{n,\mu} = 2|A|\log(n+1) + \frac{2}{\ln 2}na_n^2 = n\cdot o(a_n)$ and $f(\varepsilon, \delta) = n\cdot o(a_n)$. For some sufficiently large $n$, we then satisfy $0<2\delta< \varepsilon^2 - g(n,\mu)$. Consequently, by Lemma~\ref{lem:one-shot_classical_cap_datta} on the achievability of entanglement-assisted quantum channel coding for classical information transmission, we have for sufficiently large $n$ that 
\begin{align}
\log M_{\rm ea}^\star(\cN^{\otimes n}, \varepsilon^2) &\geq D_{h}^{\varepsilon^2 -2\delta - g(n, \mu)}\left(\left(\left(\mathcal{N}\otimes\cI_R\right)\left(\psi_{AR}\right)\right)^{\otimes n} \|\left(\mathcal{N}\left(\rho_{A}\right)\right)^{\otimes n} \otimes \rho_{R}^{\otimes n}\right)-f\left(\varepsilon, \delta\right)-\log \gamma_{n, \mu}\\
&= D_{h}^{1 - 3e^{-na_n^2}}\left(\left(\left(\mathcal{N}\otimes\cI_R\right)\left(\psi_{AR}\right)\right)^{\otimes n} \|\left(\mathcal{N}\left(\rho_{A}\right)\right)^{\otimes n} \otimes \rho_{R}^{\otimes n}\right)+ n\cdot o(a_n) \\
&\geq I(B:R)_{(\cN\otimes I)(\psi)} + \sqrt{2 V(B:R)_{(\cN\otimes I)(\psi)}} + n\cdot o(a_n)\,,
\end{align}
where the last inequality holds due to~\cite[Theorem 1]{chubb2017moderate}. The inequality above holds for any pure quantum state $\psi_{AR}$. We may choose $\psi_{AR}= \argmax_{\psi\in\Pi(\cN)} V(B:R)_{(\cN\otimes\cI_R)(\psi)}$, where $\Pi(\cN)$ is the set of capacity achieving channel inputs defined in Eq.~\eqref{eq:capacity_achieving_input_set}. By using the teleporation protocol for channel coding (Lemma \ref{lem:simulate_quantum_id_using_ea_noisy_ch}), the claim follows. 
\end{proof}
\vspace{5pt}

Combining the converse (Proposition~\ref{prop:converse_result_channel_coding_mod_dev}) and the achievability (Proposition~\ref{prop:achievability_result_channel_coding_mod_dev}), we obtain the moderate deviation analysis for channel coding (Theorem~\ref{thm:channel_coding_mod_dev}).

%%%%%%%%%%%%%%%%%%%%%%%%%%%%%%%%%%%%%%%%%%%%%%%%%%%%%%%%%%%%%%%%%%%%%%%%%%%%%%%%%%%%%%%%%%%

\section{Conclusion}\label{sec:conclusion}

In summary, we have resolved the quantum communication cost of several quantum information processing tasks in the moderate deviation regime. The task of quantum state splitting emerged as the fundamental primitive and is closely connected with quantum source coding. Our quantum state splitting results in the low-error moderate deviation regime are then used to characterize quantum channel simulation in the low-error strictly moderate deviation regime. In turn, quantum channel simulation is used to obtain a tight bound for entanglement-assisted quantum channel coding rates in the high-error strict moderate deviation regime. This extends a result that was previously only available for a restricted class of channels (as reviewed in Appendix \ref{app:meta_converse_comparison}). It would be interesting to explore if the findings in \cite{gupta2015multiplicativity} based on sandwiched R\'enyi entropies could lead to similar moderate deviation results.

Several interesting problems remain open. Our proof technique for obtaining the AEP for the partially smoothed max-information (Proposition~\ref{prop:smooth_imax_expansion}) does not work for the large error moderate deviation regime, when the error goes as $1-e^{-na_n^2}$ for a moderate sequence $\{a_n\}$. The missing technical ingredient appears to be a tighter bound on the partially smoothed max-information with the smoothed max-information (Lemma~\ref{lem:theorem4_anshu2020partially}). Hence, the problem of characterizing quantum state splitting in the high-error regime is open. In the case of quantum source coding, one can obtain a converse in the high-error regime using our quantum state splitting converse but this does not match the achievability statement in Proposition~\ref{prop:source_coding_ea_high_error} that we conjecture to be tight. As such, that problem also remains open. Finally, the polynomial multiplicative factor on the error in the de Finetti reduction (Proposition~\ref{prop:postselection}) is another obstacle that prevents us from characterizing channel simulation in the high-error strictly moderate deviation regime. It would be interesting to find an achievability proof for channel simulation without using state splitting and de Finetti reductions. This would then also yield a tighter converse bound for channel coding in the low-error strictly moderate deviation regime. 

Lastly, we remark that our quantum state splitting result does not resolve the second-order regime, where the error is a constant. This is due to a gap in the achievability and converse results in the AEP for the partially smoothed max-information (Proposition~\ref{prop:smooth_imax_expansion}). The technical ingredient required again appears to be a tighter relationship between the partially smoothed max-information and the smoothed max-information. In the moderate deviation analysis, this gap closes due to the fact that constant ($\text{poly}(n)$) multiplicative factors on the error do not affect the cost of the (strict) moderate deviation analysis as explained in Lemma~\ref{lem:moderate_seq_constant_or_polyn}. However, that no longer holds for the second-order regime.

%%%%%%%%%%%%%%%%%%%%%%%%%%%%%%%%%%%%%%%%%%%%%%%%%%%%%%%%%%%%%%%%%%%%%%%%%%%%%%%%%%%%%%%%%%%

\section{Acknowledgements}
This research is supported by the National Research Foundation, Prime Minister’s Office, Singapore and the Ministry of Education, Singapore under the Research Centres of Excellence programme. MT is also supported in part by NUS startup grants (R-263-000-E32-133 and R-263-000-E32-731).

%%%%%%%%%%%%%%%%%%%%%%%%%%%%%%%%%%%%%%%%%%%%%%%%%%%%%%%%%%%%%%%%%%%%%%%%%%%%%%%%%%%%%%%%%%%

\bibliographystyle{ultimate2.bst}
\bibliography{biblio}

\begin{thebibliography}{10}

\bibitem{abdelhadi2020second}
D.~Abdelhadi and J.~M. Renes.
\newblock {\em ``On the {Second}-{Order} {Asymptotics} of the {Partially}
  {Smoothed} {Conditional} {Min}-{Entropy} \& {Application} to {Quantum}
  {Compression}''}.
\newblock \href{http://dx.doi.org/10.1109/JSAIT.2020.3016899}{IEEE Journal on
  Selected Areas in Information Theory {\bf 1}(2):\,416--423} 2020.

\bibitem{Abeyesinghe09}
A.~Abeyesinghe, I.~Devetak, P.~Hayden, and A.~Winter.
\newblock {\em ``The mother of all protocols: restructuring quantum
  information’s family tree''}.
\newblock \href{http://dx.doi.org/10.1098/rspa.2009.0202}{Proceedings of the
  Royal Society A: Mathematical, Physical and Engineering Sciences {\bf
  465}(2108):\,2537--2563} 2009.

\bibitem{altuug2014moderate}
Y.~Altug and A.~B. Wagner.
\newblock {\em ``Moderate {Deviations} in {Channel} {Coding}''}.
\newblock \href{http://dx.doi.org/10.1109/TIT.2014.2323418}{IEEE Transactions
  on Information Theory {\bf 60}(8):\,4417--4426} 2014.

\bibitem{anshu2019minimax}
A.~Anshu, M.~Berta, R.~Jain, and M.~Tomamichel.
\newblock {\em ``A minimax approach to one-shot entropy inequalities''}.
\newblock \href{http://dx.doi.org/10.1063/1.5126723}{Journal of Mathematical
  Physics {\bf 60}(12):\,122201} 2019.

\bibitem{anshu2020partially}
A.~Anshu, M.~Berta, R.~Jain, and M.~Tomamichel.
\newblock {\em ``Partially Smoothed Information Measures''}.
\newblock \href{http://dx.doi.org/10.1109/TIT.2020.2981573}{IEEE Transactions
  on Information Theory {\bf 66}(8):\,5022--5036} 2020.

\bibitem{anshu2017quantum}
A.~Anshu, V.~K. Devabathini, and R.~Jain.
\newblock {\em ``Quantum {Communication} {Using} {Coherent} {Rejection}
  {Sampling}''}.
\newblock \href{http://dx.doi.org/10.1103/PhysRevLett.119.120506}{Physical
  Review Letters {\bf 119}(12):\,120506} 2017.

\bibitem{arimoto1973converse}
S.~Arimoto.
\newblock {\em ``On the converse to the coding theorem for discrete memoryless
  channels ({Corresp}.)''}.
\newblock \href{http://dx.doi.org/10.1109/TIT.1973.1055007}{IEEE Transactions
  on Information Theory {\bf 19}(3):\,357--359} 1973.

\bibitem{bennett2002entanglement}
C.~Bennett, P.~Shor, J.~Smolin, and A.~Thapliyal.
\newblock {\em ``Entanglement-assisted capacity of a quantum channel and the
  reverse {Shannon} theorem''}.
\newblock \href{http://dx.doi.org/10.1109/TIT.2002.802612}{IEEE Transactions on
  Information Theory {\bf 48}(10):\,2637--2655} 2002.

\bibitem{bennett1993teleporting}
C.~H. Bennett, G.~Brassard, C.~Cr{\'e}peau, R.~Jozsa, A.~Peres, and W.~K.
  Wootters.
\newblock {\em ``Teleporting an unknown quantum state via dual classical and
  Einstein-Podolsky-Rosen channels''}.
\newblock
  \href{http://dx.doi.org/https://doi.org/10.1103/PhysRevLett.70.1895}{Physical
  Review Letters {\bf 70}(13):\,1895} 1993.

\bibitem{bennett2009quantum}
C.~H. Bennett, I.~Devetak, A.~W. Harrow, P.~W. Shor, and A.~Winter.
\newblock {\em ``The Quantum Reverse Shannon Theorem and Resource Tradeoffs for
  Simulating Quantum Channels''}.
\newblock \href{http://dx.doi.org/10.1109/TIT.2014.2309968}{IEEE Transactions
  on Information Theory {\bf 60}(5):\,2926--2959} 2014.

\bibitem{bennett1992communication}
C.~H. Bennett and S.~J. Wiesner.
\newblock {\em ``Communication via one-and two-particle operators on
  Einstein-Podolsky-Rosen states''}.
\newblock
  \href{http://dx.doi.org/https://doi.org/10.1103/PhysRevLett.69.2881}{Physical
  Review Letters {\bf 69}(20):\,2881} 1992.

\bibitem{berta2013quantum}
M.~Berta.
\newblock {\em Quantum side information: Uncertainty relations, extractors,
  channel simulations}.
\newblock \href{http://dx.doi.org/https://doi.org/10.3929/ethz-a-009959263}{ETH
  Zurich} 2013.

\bibitem{berta2011quantum}
M.~Berta, M.~Christandl, and R.~Renner.
\newblock {\em ``The Quantum Reverse Shannon Theorem Based on One-Shot
  Information Theory''}.
\newblock \href{http://dx.doi.org/10.1007/s00220-011-1309-7}{Communications in
  Mathematical Physics {\bf 306}(3):\,579--615} 2011.

\bibitem{cheng2022error}
H.-C. Cheng and L.~Gao.
\newblock {\em ``Error Exponent and Strong Converse for Quantum Soft
  Covering''}.
\newblock arXiv preprint arXiv:2202.10995 , 2022.

\bibitem{cheng2020non}
H.-C. Cheng, E.~P. Hanson, N.~Datta, and M.-H. Hsieh.
\newblock {\em ``Non-{Asymptotic} {Classical} {Data} {Compression} {With}
  {Quantum} {Side} {Information}''}.
\newblock \href{http://dx.doi.org/10.1109/TIT.2020.3038517}{IEEE Transactions
  on Information Theory {\bf 67}(2):\,902--930} 2021.

\bibitem{cheng2017moderate}
H.-C. Cheng and M.-H. Hsieh.
\newblock {\em ``Moderate {Deviation} {Analysis} for {Classical}-{Quantum}
  {Channels} and {Quantum} {Hypothesis} {Testing}''}.
\newblock \href{http://dx.doi.org/10.1109/TIT.2017.2781254}{IEEE Transactions
  on Information Theory {\bf 64}(2):\,1385--1403} 2018.

\bibitem{christandl2009postselection}
M.~Christandl, R.~König, and R.~Renner.
\newblock {\em ``Postselection {Technique} for {Quantum} {Channels} with
  {Applications} to {Quantum} {Cryptography}''}.
\newblock \href{http://dx.doi.org/10.1103/PhysRevLett.102.020504}{Physical
  Review Letters {\bf 102}(2):\,020504} 2009.

\bibitem{chubb2017moderate}
C.~T. Chubb, V.~Y.~F. Tan, and M.~Tomamichel.
\newblock {\em ``Moderate Deviation Analysis for Classical Communication over
  Quantum Channels''}.
\newblock \href{http://dx.doi.org/10.1007/s00220-017-2971-1}{Communications in
  Mathematical Physics {\bf 355}(3):\,1283--1315} 2017.

\bibitem{chubb2019moderate}
C.~T. Chubb, M.~Tomamichel, and K.~Korzekwa.
\newblock {\em ``Moderate deviation analysis of majorization-based resource
  interconversion''}.
\newblock \href{http://dx.doi.org/10.1103/PhysRevA.99.032332}{Physical Review A
  {\bf 99}(3):\,032332} 2019.

\bibitem{csiszar1971error}
I.~Csiszar.
\newblock {\em ``On the error exponent for source coding and for testing simple
  statistical hypotheses''}.
\newblock Studia Sci. Math. Hungar. {\bf 6}:\,181--191, 1971.

\bibitem{datta2009min}
N.~Datta.
\newblock {\em ``Min- and {Max}-{Relative} {Entropies} and a {New}
  {Entanglement} {Monotone}''}.
\newblock \href{http://dx.doi.org/10.1109/TIT.2009.2018325}{IEEE Transactions
  on Information Theory {\bf 55}(6):\,2816--2826} 2009.

\bibitem{datta2014second}
N.~Datta and F.~Leditzky.
\newblock {\em ``Second-{Order} {Asymptotics} for {Source} {Coding}, {Dense}
  {Coding}, and {Pure}-{State} {Entanglement} {Conversions}''}.
\newblock \href{http://dx.doi.org/10.1109/TIT.2014.2366994}{IEEE Transactions
  on Information Theory {\bf 61}(1):\,582--608} 2015.

\bibitem{datta2016second}
N.~Datta, M.~Tomamichel, and M.~M. Wilde.
\newblock {\em ``On the second-order asymptotics for entanglement-assisted
  communication''}.
\newblock \href{http://dx.doi.org/10.1007/s11128-016-1272-5}{Quantum
  Information Processing {\bf 15}(6):\,2569--2591} 2016.

\bibitem{dembo2009large}
A.~Dembo.
\newblock {\em Large deviations techniques and applications}.
\newblock Springer 2009.

\bibitem{dueck1979reliability}
G.~Dueck and J.~Korner.
\newblock {\em ``Reliability function of a discrete memoryless channel at rates
  above capacity ({Corresp}.)''}.
\newblock \href{http://dx.doi.org/10.1109/TIT.1979.1056003}{IEEE Transactions
  on Information Theory {\bf 25}(1):\,82--85} 1979.

\bibitem{dupuis2019entropy}
F.~Dupuis and O.~Fawzi.
\newblock {\em ``Entropy {Accumulation} {With} {Improved} {Second}-{Order}
  {Term}''}.
\newblock \href{http://dx.doi.org/10.1109/TIT.2019.2929564}{IEEE Transactions
  on Information Theory {\bf 65}(11):\,7596--7612} 2019.

\bibitem{dupuis2014generalized}
F.~Dupuis, L.~Krämer, P.~Faist, J.~M. Renes, and R.~Renner.
\newblock {\em ``Generalized entropies''}.
\newblock In {\em {XVIIth} {International} {Congress} on {Mathematical}
  {Physics}}, pages 134--153.
  \href{http://dx.doi.org/10.1142/9789814449243_0008}{World Scientific} 2012.

\bibitem{fuchs1999cryptographic}
C.~Fuchs and J.~van~de Graaf.
\newblock {\em ``Cryptographic distinguishability measures for
  quantum-mechanical states''}.
\newblock \href{http://dx.doi.org/10.1109/18.761271}{IEEE Transactions on
  Information Theory {\bf 45}(4):\,1216--1227} 1999.

\bibitem{gupta2015multiplicativity}
M.~K. Gupta and M.~M. Wilde.
\newblock {\em ``Multiplicativity of {Completely} {Bounded} p-{Norms} {Implies}
  a {Strong} {Converse} for {Entanglement}-{Assisted} {Capacity}''}.
\newblock \href{http://dx.doi.org/10.1007/s00220-014-2212-9}{Communications in
  Mathematical Physics {\bf 334}(2):\,867--887} 2015.

\bibitem{han1989strong}
T.~Han and K.~Kobayashi.
\newblock {\em ``The strong converse theorem for hypothesis testing''}.
\newblock \href{http://dx.doi.org/10.1109/18.42188}{IEEE Transactions on
  Information Theory {\bf 35}(1):\,178--180} 1989.

\bibitem{hayashi2002exponents}
M.~Hayashi.
\newblock {\em ``Exponents of quantum fixed-length pure-state source coding''}.
\newblock Physical Review A {\bf 66}(3):\,032321, 2002.

\bibitem{hayashi2008second}
M.~Hayashi.
\newblock {\em ``Second-order asymptotics in fixed-length source coding and
  intrinsic randomness''}.
\newblock IEEE Transactions on Information Theory {\bf 54}(10):\,4619--4637,
  2008.

\bibitem{hayashi2009information}
M.~Hayashi.
\newblock {\em ``Information {Spectrum} {Approach} to {Second}-{Order} {Coding}
  {Rate} in {Channel} {Coding}''}.
\newblock \href{http://dx.doi.org/10.1109/TIT.2009.2030478}{IEEE Transactions
  on Information Theory {\bf 55}(11):\,4947--4966} 2009.

\bibitem{hayashi2002simple}
M.~Hayashi and K.~Matsumoto.
\newblock {\em ``Simple construction of quantum universal variable-length
  source coding''}.
\newblock Quantum Information \& Computation {\bf 2}(7):\,519--529, 2002.

\bibitem{hayashi2016correlation}
M.~Hayashi and M.~Tomamichel.
\newblock {\em ``Correlation detection and an operational interpretation of the
  {Rényi} mutual information''}.
\newblock \href{http://dx.doi.org/10.1063/1.4964755}{Journal of Mathematical
  Physics {\bf 57}(10):\,102201} 2016.

\bibitem{hayashi2016uniform}
M.~Hayashi and S.~Watanabe.
\newblock {\em ``Uniform random number generation from Markov chains:
  Non-asymptotic and asymptotic analyses''}.
\newblock IEEE Transactions on Information Theory {\bf 62}(4):\,1795--1822,
  2016.

\bibitem{hayashi2020finite}
M.~Hayashi and S.~Watanabe.
\newblock {\em ``Finite-length analyses for source and channel coding on Markov
  chains''}.
\newblock Entropy {\bf 22}(4):\,460, 2020.

\bibitem{hiai1991proper}
F.~Hiai and D.~Petz.
\newblock {\em ``The proper formula for relative entropy and its asymptotics in
  quantum probability''}.
\newblock \href{http://dx.doi.org/10.1007/BF02100287}{Communications in
  Mathematical Physics {\bf 143}(1):\,99--114} 1991.

\bibitem{holevo2002remarks}
A.~S. Holevo.
\newblock {\em ``Remarks on the classical capacity of quantum channel''}.
\newblock arXiv:quant-ph/0212025 , 2002.

\bibitem{jain2002privacy}
R.~Jain, J.~Radhakrishnan, and P.~Sen.
\newblock {\em ``Privacy and interaction in quantum communication complexity
  and a theorem about the relative entropy of quantum states''}.
\newblock
  \href{http://dx.doi.org/https://doi.org/10.1109/SFCS.2002.1181967}{The 43rd
  Annual IEEE Symposium on Foundations of Computer Science, 2002. Proceedings.
  pages 429--438} 2002.

\bibitem{kontoyiannis2013optimal}
I.~Kontoyiannis and S.~Verdú.
\newblock {\em ``Optimal lossless compression: {Source} varentropy and
  dispersion''}.
\newblock \href{http://dx.doi.org/10.1109/ISIT.2013.6620525}{2013 {IEEE}
  {International} {Symposium} on {Information} {Theory} pages 1739--1743} 2013.

\bibitem{li2014second}
K.~Li.
\newblock {\em ``Second-order asymptotics for quantum hypothesis testing''}.
\newblock \href{http://dx.doi.org/10.1214/13-AOS1185}{The Annals of Statistics
  {\bf 42}(1):\,171--189} 2014.

\bibitem{li2021reliable}
K.~Li and Y.~Yao.
\newblock {\em ``Reliable Simulation of Quantum Channels''}.
\newblock arXiv:quant-ph/2112.04475 , 2021.

\bibitem{matthews2014finite}
W.~Matthews and S.~Wehner.
\newblock {\em ``Finite {Blocklength} {Converse} {Bounds} for {Quantum}
  {Channels}''}.
\newblock \href{http://dx.doi.org/10.1109/TIT.2014.2353614}{IEEE Transactions
  on Information Theory {\bf 60}(11):\,7317--7329} 2014.

\bibitem{mosonyi2017strong}
M.~Mosonyi and T.~Ogawa.
\newblock {\em ``Strong {Converse} {Exponent} for {Classical}-{Quantum}
  {Channel} {Coding}''}.
\newblock \href{http://dx.doi.org/10.1007/s00220-017-2928-4}{Communications in
  Mathematical Physics {\bf 355}(1):\,373--426} 2017.

\bibitem{muller2013quantum}
M.~M{\"u}ller-Lennert, F.~Dupuis, O.~Szehr, S.~Fehr, and M.~Tomamichel.
\newblock {\em ``On quantum {R{\'e}nyi} entropies: {A} new generalization and
  some properties''}.
\newblock \href{http://dx.doi.org/10.1063/1.4838856}{Journal of Mathematical
  Physics {\bf 54}(12):\,122203} 2013.

\bibitem{nagaoka2007information}
H.~Nagaoka and M.~Hayashi.
\newblock {\em ``An information-spectrum approach to classical and quantum
  hypothesis testing for simple hypotheses''}.
\newblock IEEE Transactions on Information Theory {\bf 53}(2):\,534--549, 2007.

\bibitem{nagaoka2000strong}
H.~Nagaoka and T.~Ogawa.
\newblock {\em ``Strong converse and {Stein}'s lemma in quantum hypothesis
  testing''}.
\newblock \href{http://dx.doi.org/10.1109/18.887855}{IEEE Transactions on
  Information Theory {\bf 46}(7):\,2428--2433} 2000.

\bibitem{penrose1955generalized}
R.~Penrose.
\newblock {\em ``A generalized inverse for matrices''}.
\newblock
  \href{http://dx.doi.org/https://doi.org/10.1017/S0305004100030401}{Mathematical
  proceedings of the Cambridge philosophical society {\bf 51}(3):\,406--413}
  1955.

\bibitem{polyanskiythesis}
Y.~Polyanskiy.
\newblock {\em Channel coding: Non-asymptotic fundamental limits}.
\newblock Princeton University 2010.

\bibitem{polyanskiy2010channel}
Y.~Polyanskiy and S.~Verdú.
\newblock {\em ``Channel dispersion and moderate deviations limits for
  memoryless channels''}.
\newblock \href{http://dx.doi.org/10.1109/ALLERTON.2010.5707068}{2010 48th
  {Annual} {Allerton} {Conference} on {Communication}, {Control}, and
  {Computing} ({Allerton}) pages 1334--1339} 2010.

\bibitem{9457638}
N.~Ramakrishnan, M.~Tomamichel, and M.~Berta.
\newblock {\em ``Moderate Deviation Analysis for Quantum State Transfer''}.
\newblock
  \href{http://dx.doi.org/https://doi.org/10.1109/ITW48936.2021.9611459}{2021
  IEEE Information Theory Workshop (ITW) } 2021.

\bibitem{renner2005security}
R.~Renner.
\newblock {\em Security of Quantum Key Distribution}.
\newblock ETH Zurich 2005.

\bibitem{rouze2017finite}
C.~Rouze and N.~Datta.
\newblock {\em ``Finite {Blocklength} and {Moderate} {Deviation} {Analysis} of
  {Hypothesis} {Testing} of {Correlated} {Quantum} {States} and {Application}
  to {Classical}-{Quantum} {Channels} {With} {Memory}''}.
\newblock \href{http://dx.doi.org/10.1109/TIT.2017.2763975}{IEEE Transactions
  on Information Theory {\bf 64}(1):\,593--612} 2018.

\bibitem{sason2012moderate}
I.~Sason.
\newblock {\em ``Moderate deviations analysis of binary hypothesis testing''}.
\newblock \href{http://dx.doi.org/10.1109/ISIT.2012.6284675}{2012 {IEEE}
  {International} {Symposium} on {Information} {Theory} {Proceedings} pages
  821--825} 2012.

\bibitem{schumacher1995quantum}
B.~Schumacher.
\newblock {\em ``Quantum coding''}.
\newblock \href{http://dx.doi.org/10.1103/PhysRevA.51.2738}{Physical Review A
  {\bf 51}(4):\,2738--2747} 1995.

\bibitem{shannon1948mathematical}
C.~E. Shannon.
\newblock {\em ``A {Mathematical} {Theory} of {Communication}''}.
\newblock \href{http://dx.doi.org/10.1002/j.1538-7305.1948.tb00917.x}{Bell
  System Technical Journal {\bf 27}(4):\,623--656} 1948.

\bibitem{shen2022optimal}
Y.-C. Shen, L.~Gao, and H.-C. Cheng.
\newblock {\em ``Optimal Second-Order Rates for Quantum Soft Covering and
  Privacy Amplification''}.
\newblock arXiv preprint arXiv:2202.11590 , 2022.

\bibitem{shen2022strong}
Y.-C. Shen, L.~Gao, and H.-C. Cheng.
\newblock {\em ``Strong converse for privacy amplification against quantum side
  information''}.
\newblock arXiv preprint arXiv:2202.10263 , 2022.

\bibitem{strassen1962asymptotische}
V.~Strassen.
\newblock {\em ``Asymptotische {Abschatzugen} in {Shannon}'s
  {Informationstheorie}''}.
\newblock Transactions of the Third Prague Conference on Information Theory
  etc., 1962. Czechoslovak Academy of Sciences, Prague pages 689--723, 1962.

\bibitem{tomamichel2012framework}
M.~Tomamichel.
\newblock {\em A framework for non-asymptotic quantum information theory}.
\newblock \href{http://dx.doi.org/https://doi.org/10.3929/ethz-a-7356080}{ETH
  Zurich} 2012.

\bibitem{tomamichel2015quantum}
M.~Tomamichel.
\newblock {\em Quantum information processing with finite resources:
  Mathematical Foundations}.
\newblock \href{http://dx.doi.org/10.1007/978-3-319-21891-5}{Springer} 2015.

\bibitem{tomamichel2013hierarchy}
M.~Tomamichel and M.~Hayashi.
\newblock {\em ``A {Hierarchy} of {Information} {Quantities} for {Finite}
  {Block} {Length} {Analysis} of {Quantum} {Tasks}''}.
\newblock \href{http://dx.doi.org/10.1109/TIT.2013.2276628}{IEEE Transactions
  on Information Theory {\bf 59}(11):\,7693--7710} 2013.

\bibitem{wang2012one}
L.~Wang and R.~Renner.
\newblock {\em ``One-{Shot} {Classical}-{Quantum} {Capacity} and {Hypothesis}
  {Testing}''}.
\newblock \href{http://dx.doi.org/10.1103/PhysRevLett.108.200501}{Physical
  Review Letters {\bf 108}(20):\,200501} 2012.

\bibitem{watanabe2017finite}
S.~Watanabe and M.~Hayashi.
\newblock {\em ``Finite-length analysis on tail probability for Markov chain
  and application to simple hypothesis testing''}.
\newblock The Annals of Applied Probability {\bf 27}(2):\,811--845, 2017.

\bibitem{wilde2011classical}
M.~M. Wilde.
\newblock {\em ``From classical to quantum Shannon theory''}.
\newblock \href{http://dx.doi.org/10.1017/9781316809976.001}{arXiv preprint
  arXiv:1106.1445 } 2011.

\bibitem{Wilde14}
M.~M. Wilde, A.~Winter, and D.~Yang.
\newblock {\em ``Strong {Converse} for the {Classical} {Capacity} of
  {Entanglement}-{Breaking} and {Hadamard} {Channels} via a {Sandwiched}
  {Rényi} {Relative} {Entropy}''}.
\newblock \href{http://dx.doi.org/10.1007/s00220-014-2122-x}{Communications in
  Mathematical Physics {\bf 331}(2):\,593--622} 2014.

\end{thebibliography}

%%%%%%%%%%%%%%%%%%%%%%%%%%%%%%%%%%%%%%%%%%%%%%%%%%%%%%%%%%%%%%%%%%%%%%%%%%%%%%%%%%%%%%%%%%%

\appendix

\section{Proofs one-shot quantum state splitting}\label{app:state_splitting_proofs}

Here, we prove Theorem~\ref{thm:state_splitting_smooth_imax}. We start with the convex-split lemma and define an extended state as follows: For any finite non-empty set $\Sigma$, define the quantum register $A_{\Sigma} = \bigotimes\limits_{s_i\in \Sigma} A_{s_i}$, where all $A_{s_i}$ are isomorphic. An extended state is of the form $\rho_{A_\Sigma} = \rho_{A_{s_1}}\otimes\rho_{A_{s_2}}\otimes...\otimes\rho_{A_{s_{|\Sigma|}}} \in \cS_{\leq}(A_{\Sigma})$.

\begin{lemma}[Convex-split lemma~\cite{anshu2017quantum}]\label{lem:convex_split}
For $I = \{1, 2, ...\ n\}$ consider the quantum register $B_I$ and a quantum register $R$. Then, for $i\in I$, let $\rho_{B_{i}R}\in \cS(B_{i} R)$ and $\sigma_{B_i}\in\cS(B_i)$ and define the extended state $\sigma_{B_{I\setminus i}}$ and $\tau_{B_I R} = \frac{1}{n}\sum\limits_{i=1}^{n}\rho_{B_iR}\otimes \sigma_{B_{I\setminus i}}$. For $\delta>0$ and $n$ with $\log n \geq D_{\max}(\rho_{B_iR}||\sigma_{B_i}\otimes\rho_R)  + \log\frac{1}{\delta}$, it holds that
\begin{align}
    F(\tau_{B_I R}, \sigma_{B_{I}}\otimes \rho_R) \geq \sqrt{(1 - \delta)}\,.
\end{align}
\end{lemma}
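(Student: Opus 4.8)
The plan is to bound the quantity $\Tr{\tau_{B_IR}^2\,\mu_{B_IR}^{-1}}$, where $\mu_{B_IR} := \sigma_{B_I}\otimes\rho_R = \bigl(\bigotimes_{i}\sigma_{B_i}\bigr)\otimes\rho_R$, and then convert this into the claimed fidelity bound. Write $k_i := D_{\max}(\rho_{B_iR}\|\sigma_{B_i}\otimes\rho_R)$, so that the hypothesis $\log n\geq k_i+\log\frac1\delta$ is equivalent to $2^{k_i}\leq n\delta$ and to $\rho_{B_iR}\preceq 2^{k_i}(\sigma_{B_i}\otimes\rho_R)$. I would first record the elementary estimate that, for any $\rho\in\cS(A)$ and $\sigma\in\cP(A)$ with $\rho\ll\sigma$, one has $F(\rho,\sigma)^2=2^{-D_{\min}(\rho\|\sigma)}=2^{-\widetilde D_{1/2}(\rho\|\sigma)}\geq 2^{-\log\Tr{\rho^2\sigma^{-1}}}=\bigl(\Tr{\rho^2\sigma^{-1}}\bigr)^{-1}$, which follows from the standard monotonicity of the Rényi divergences (namely $\widetilde D_{1/2}\leq\widetilde D_{2}\leq D_2^{\mathrm{Petz}}$, with $D_2^{\mathrm{Petz}}(\rho\|\sigma)=\log\Tr{\rho^2\sigma^{-1}}$). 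Since $\tau_{B_IR}\ll\mu_{B_IR}$, it therefore suffices to prove $\Tr{\tau_{B_IR}^2\,\mu_{B_IR}^{-1}}\leq 1+\delta$, because then $F(\tau_{B_IR},\mu_{B_IR})^2\geq (1+\delta)^{-1}\geq 1-\delta$.

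Next I would expand $\tau_{B_IR}^2=\frac{1}{n^2}\sum_{i,j}(\rho_{B_iR}\otimes\sigma_{B_{I\setminus i}})(\rho_{B_jR}\otimes\sigma_{B_{I\setminus j}})$ and split the $n^2$ terms appearing in $\Tr{\tau_{B_IR}^2\,\mu_{B_IR}^{-1}}$ into the $n$ diagonal terms ($i=j$) and the $n(n-1)$ off-diagonal terms ($i\neq j$). For a diagonal term the tensor structure factorizes: on the registers $B_{I\setminus i}$ one is left with $\Tr{\sigma_{B_{I\setminus i}}}=1$, and on $B_iR$ one gets $\Tr{\rho_{B_iR}^2\,(\sigma_{B_i}\otimes\rho_R)^{-1}}$. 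Setting $M:=\sigma_{B_i}\otimes\rho_R$ and $Y:=M^{-1/2}\rho_{B_iR}M^{-1/2}$, cyclicity of the trace gives $\Tr{\rho_{B_iR}^2 M^{-1}}=\Tr{MY^2}$, and since $0\preceq Y\preceq 2^{k_i}I$ we have $Y^2\preceq 2^{k_i}Y$, hence $\Tr{MY^2}\leq 2^{k_i}\Tr{MY}=2^{k_i}\Tr{\rho_{B_iR}}=2^{k_i}\leq n\delta$. Summing, the diagonal terms contribute at most $\frac{1}{n^2}\cdot n\cdot n\delta=\delta$.

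The crux, which I expect to be the main obstacle, is the off-diagonal terms, because $B_i$ and $B_j$ are \emph{both} correlated with the single reference register $R$ through $\rho_{B_iR}$ and $\rho_{B_jR}$, so the factorization is not immediate. My plan is to expand $\rho_{B_iR}=\sum_\alpha P_\alpha^{B_i}\otimes Q_\alpha^{R}$ and $\rho_{B_jR}=\sum_\beta S_\beta^{B_j}\otimes T_\beta^{R}$; to use that $\rho_{B_iR}$ is supported on $\mathrm{supp}(\sigma_{B_i})\otimes R$ (a consequence of $k_i<\infty$) to absorb the projectors $\sigma_{B_i}\sigma_{B_i}^{-1}$ and $\sigma_{B_j}\sigma_{B_j}^{-1}$; and to note that on each $B_l$ with $l\notin\{i,j\}$ one again obtains $\Tr{\sigma_{B_l}}=1$. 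What remains is $\sum_{\alpha,\beta}\Tr{P_\alpha}\,\Tr{S_\beta}\,\Tr{Q_\alpha T_\beta\,\rho_R^{-1}}$, and using $\sum_\alpha\Tr{P_\alpha}Q_\alpha=\rho_R$ together with $\sum_\beta\Tr{S_\beta}T_\beta=\rho_R$ (these being the partial traces of $\rho_{B_iR}$ and $\rho_{B_jR}$ over $B_i$ and $B_j$) this collapses to $\Tr{\rho_R\,\rho_R\,\rho_R^{-1}}=\Tr{\rho_R}=1$. Thus every off-diagonal term equals exactly $1$, and they contribute $\frac{1}{n^2}\cdot n(n-1)=1-\frac1n$.

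Combining the two contributions gives $\Tr{\tau_{B_IR}^2\,\mu_{B_IR}^{-1}}\leq\delta+1-\frac1n\leq 1+\delta$, and together with the fidelity estimate of the first paragraph this yields $F(\tau_{B_IR},\sigma_{B_I}\otimes\rho_R)\geq\sqrt{1-\delta}$. Beyond the register bookkeeping in the off-diagonal terms, the one design choice worth flagging is that the fidelity must be lower bounded \emph{directly} through the collision relative entropy: routing the argument through the trace distance (bounding $\Tr{\tau^2\mu^{-1}}$, then $D(\tau\|\mu)$, then Pinsker, then $F$) loses a square root and only yields $F\gtrsim 1-\sqrt\delta$, which is too weak to reach $\sqrt{1-\delta}$.
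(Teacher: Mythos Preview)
The paper does not supply a proof of this lemma; it is stated with a citation to \cite{anshu2017quantum} and then used as a black box in the achievability proof (Proposition~\ref{prop:state_splitting_achievability}). So there is no in-paper proof to compare against.

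Your argument is correct. It is the collision-entropy route to the convex-split lemma: bound the Petz--R\'enyi-$2$ quantity $\Tr{\tau^2\mu^{-1}}$, split the double sum into diagonal terms (each bounded by $2^{k_i}\leq n\delta$ via $Y^2\preceq 2^{k_i}Y$) and off-diagonal terms (each equal to exactly $1$), and convert via $F(\tau,\mu)^2\geq\bigl(\Tr{\tau^2\mu^{-1}}\bigr)^{-1}\geq(1+\delta)^{-1}\geq 1-\delta$. Your remark that one must go directly from the collision term to the fidelity, rather than through Pinsker, is exactly right. One minor simplification for the off-diagonal computation: instead of fixing an operator-Schmidt expansion of $\rho_{B_iR}$ and $\rho_{B_jR}$, you can trace out $B_i$ and $B_j$ separately using $\mathrm{Tr}_{B_i}\!\big[\rho_{B_iR}(\sigma_{B_i}\sigma_{B_i}^{-1}\otimes I_R)\big]=\rho_R$ (valid by the support condition implied by $k_i<\infty$), after which the cross term is just $\Tr{\rho_R\,\rho_R\,\rho_R^{-1}}=1$; this sidesteps the bookkeeping you flagged as the main obstacle.
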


Next, we show the achievability bound for the cost of one-shot quantum state splitting. The following proposition closely follows the ideas in~\cite{anshu2017quantum}, with the sole difference that we employ the partially smoothed max-information instead of the smoothed max-information.

\begin{prop}[One-shot quantum state splitting achievability]\label{prop:state_splitting_achievability}
Consider a one-shot quantum state splitting protocol for $\rho_{AB}\in\cS(AB)$ with purification $\rho_{ABR}\in\cS(ABR)$. For $\varepsilon\in(0,1]$ with $\delta\in (0,\varepsilon]$ it holds that
\begin{align}
q^\star_{\varepsilon}(\rho_{AB}) \leq \frac{1}{2}I^{\varepsilon - \delta}_{\max}(\dot{R};B)_{\rho} + \log\frac{2}{\delta}\,.
\end{align}
\end{prop}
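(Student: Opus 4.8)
The plan is to construct an explicit encoder/decoder pair from the convex-split lemma (Lemma \ref{lem:convex_split}) applied to a near-optimal state in the partially smoothed ball. First I would invoke the definition of $I^{\varepsilon-\delta}_{\max}(\dot R;B)_\rho$ to obtain a state $\bar\rho_{BR}\in\cB^{\varepsilon-\delta}(\rho_{BR})$ with $\bar\rho_R=\rho_R$ and a state $\sigma_B\in\cS(B)$ such that $D_{\max}(\bar\rho_{BR}\|\sigma_B\otimes\rho_R)=I_{\max}(R;B)_{\bar\rho}$. Let $\bar\rho_{ABR}$ be a purification of $\bar\rho_{BR}$ extending $\rho_{AR}$ on the relevant registers (this is possible because $\bar\rho_R=\rho_R$, so by Uhlmann's theorem one may align the purifications on $AR$); working with $\bar\rho$ only costs $\varepsilon-\delta$ in purified distance at the end.

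Next I would set up the convex-split construction. Take $n$ copies of register $B$, call them $B_1,\dots,B_n$, with the protocol's $Q$ register of size $\log|Q| = q := \lceil \tfrac12 D_{\max}(\bar\rho_{BR}\|\sigma_B\otimes\rho_R) + \log\tfrac1{\delta}\rceil$; the factor $\tfrac12$ reflects that the index register of the convex split of dimension $n$ is split between Alice and Bob (this is the standard ``coherent convex split'' / one-shot state-splitting trick of \cite{anshu2017quantum}). Concretely: Alice holds $A A_1$ with $\bar\rho_{AA_1 R}$ (where $A_1\cong B$), appends the extended state $\sigma_{B_{I\setminus 1}}$ locally, coherently applies the averaging over the index $i\in I$, splits the index register into two halves $Q$ (sent to Bob) and a part kept by herself, using pre-shared entanglement $\omega_{KL}$ to make the split coherent. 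Bob receives $Q$, uses $L$, and decodes by the isometry that completes the convex-split state; Lemma \ref{lem:convex_split} with $\log n \geq D_{\max}(\bar\rho_{B_iR}\|\sigma_{B_i}\otimes\rho_R) + \log\tfrac1\delta$ guarantees $F(\tau_{B_IR},\sigma_{B_I}\otimes\rho_R)\geq\sqrt{1-\delta}$, i.e. the convex-split state is $\sqrt\delta$-close in purified distance to the product state that factorizes the work register from $R$, which is exactly what lets Bob's local isometry recover $\bar\rho_{BR}$ coherently. Tracking the purifying reference through Uhlmann, the output of $\cD\circ\cE$ on $\bar\rho_{AA_1R}\otimes\omega_{KL}$ is $\sqrt\delta$-close to $\bar\rho_{ABR}$.

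Finally I would assemble the error budget by the triangle inequality for the purified distance: the protocol output is $\sqrt\delta \leq \sqrt\varepsilon$-close to $\bar\rho_{ABR}$, which is $(\varepsilon-\delta)$-close to $\rho_{ABR}$, giving total error at most $\varepsilon$ once one checks the arithmetic (here one wants $\sqrt\delta + (\varepsilon-\delta)\le\varepsilon$, which holds for $\delta\in(0,1]$ since $\sqrt\delta\le\delta$ fails—so more carefully one should use $\delta$ rather than $\sqrt\delta$ for the convex-split error by choosing the parameter in Lemma \ref{lem:convex_split} to be $\delta^2$, or invoke the tighter triangle inequality; I'd match the bookkeeping to land exactly on $\varepsilon$). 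The dimension count gives $q = \log|Q| \leq \tfrac12 D_{\max}(\bar\rho_{BR}\|\sigma_B\otimes\rho_R) + \log\tfrac2\delta = \tfrac12 I^{\varepsilon-\delta}_{\max}(\dot R;B)_\rho + \log\tfrac2\delta$, where the extra $+1$ from the ceiling and the $\sqrt{\phantom{x}}$'s are absorbed into $\log\tfrac2\delta$ versus $\log\tfrac1\delta$. The main obstacle I expect is the careful coherent bookkeeping of the index register split between Alice and Bob using the entanglement resource $\omega_{KL}$ — making the convex-split argument (which is naturally stated for fidelity of classical mixtures) into a genuine coherent isometric protocol, and ensuring the purifying reference $R$ is carried correctly through Uhlmann's theorem at each step so that the final output is close to $\rho_{ABR}$ and not merely to $\rho_{BR}$; this is precisely the step where using the partially smoothed $I_{\max}$ (with the constraint $\bar\rho_R=\rho_R$) rather than the fully smoothed version is essential.
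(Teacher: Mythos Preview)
Your high-level strategy---pick the near-optimal $\bar\rho_{BR}$ with $\bar\rho_R=\rho_R$, apply the convex-split lemma with parameter $\delta^2$ (as you correctly note is needed to land on purified-distance error $\delta$ rather than $\sqrt\delta$), lift via Uhlmann, and finish with the triangle inequality against $\bar\rho\approx_{\varepsilon-\delta}\rho$---is exactly the paper's approach, and your remark that the constraint $\bar\rho_R=\rho_R$ is what makes the Uhlmann isometry live on \emph{Alice's} registers is the key insight. However, two concrete steps in your protocol description are wrong and would make the argument fail as written.

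First, the copies of $\sigma$ cannot be ``appended locally'' by Alice. The resource $\omega_{KL}$ must consist of $n$ \emph{shared} purifications $\ket{\sigma}_{K_iL_i}$ with $L_i\cong B$ held by Bob (together with maximally entangled pairs). Alice's Uhlmann isometry then takes the joint state to (approximately) $\ket{\tau}=\tfrac{1}{\sqrt n}\sum_i\ket{i}_X\ket{\rho'_{ARL_i}}\ket{\sigma_{(KL)_{I\setminus i}}}$, so the register carrying $\bar\rho_B$ is \emph{already on Bob's side}. If Alice held the $\sigma$'s locally she would have to ship an entire $B$ register, costing $\log|B|$ qubits regardless of $I_{\max}$. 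Second, the factor $\tfrac12$ does not come from ``splitting the index register into two halves'': if Bob received only half of $X$ he could not perform the controlled swap $L_i\leftrightarrow L_1$. Rather, $X$ is a \emph{classical} register of $\log n$ bits which Alice transmits by super-dense coding using the pre-shared maximally entangled pairs, at quantum cost $\tfrac12\log n$. With $\log n = D_{\max}(\bar\rho_{BR}\|\sigma_B\otimes\rho_R)+\log\tfrac{1}{\delta^2}$ and the ceiling absorbed into the additive constant, this yields $q\le\tfrac12 I^{\varepsilon-\delta}_{\max}(\dot R;B)_\rho+\log\tfrac{2}{\delta}$. Once these two mechanical points are fixed, your proof coincides with the paper's.
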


\begin{proof}
Let $I^{\varepsilon - \delta}_{\max}(\dot{R};B)_{\rho_{BR}} = D_{\max}(\rho'_{BR}\|\sigma_{B}\otimes\rho'_R)$, where $\rho'_{BR}\in\cB^{\varepsilon - \delta}(\rho_{BR})$, $\rho'_R = \rho_R$, and $\sigma_{B}\in\cS(B)$. For $i\in I = [n]$, let $L_i\cong B \cong A_1$ and let $\ket{\sigma_{K_iL_i}}$ be any purification of $\sigma_{L_i}$. Let the resource state be $\ket{\sigma_{(KL)_I}}\otimes\ket{\phi_{K'L'}}^{\otimes n}$, where $\ket{\phi} = \frac{1}{\sqrt{2}}(\ket{00} + \ket{11})$ is the maximally entangled qubit state. Let this resource state be shared between two parties, Alice who holds $K_{i\in I}$ and $K'^n$, and Bob who holds  $L_{i\in I}$ and $L'^n$. Excluding the maximally entangled states, the joint state at the start of the protocol is
\begin{align}
    \ket{\psi} := \ket{\rho_{AA_1R}}\otimes \ket{\sigma_{(KL)_I}}\approx_{\varepsilon - \delta} \ket{\rho'_{AA_1R}}\otimes \ket{\sigma_{(KL)_I}}:=\ket{\omega}\,,
\end{align}
where $\ket{\rho'_{AA_1R}}$ is the purification of $\rho'_{A_1R}$ on Alice's quantum register $A$ that achieves the minimal purified distance from the purification $\ket{\rho_{AA_1R}}$ of $\rho_{AR}$. Uhlmann's theorem guarantees that such a purification exists. The reduced state of $\ket{\omega}$ after tracing over Alice's quantum registers $A, A_1$ and $K_I$ is $\rho'_{R}\otimes\sigma_{L_I}$. By the convex-split lemma (Lemma~\ref{lem:convex_split}), we have that for $\delta\in(0,1]$ and $n\in \mathbb{N}$ satisfying
 \begin{align}\label{eqn:logn_lower_bound}
    \log n &\geq D_{\max}(\rho'_{BR}||\sigma_{B}\otimes\rho'_R) + \log\frac{1}{\delta^2}\,,
\end{align}
it holds that
\begin{align}
    F\left(\rho'_{R}\otimes\sigma_{L_I}, \frac{1}{n}\sum\limits_{i=1}^{n}\rho'_{RL_i} \otimes \sigma_{L_{I\setminus i}}\right) \geq \sqrt{1-\delta^2}\,.
\end{align}
A purification of $\frac{1}{n}\sum\limits_{i=1}^{n}\rho'_{RL_i} \otimes \sigma_{L_{I\setminus i}}$ is
\begin{align}
   \ket{\tau} = \frac{1}{\sqrt{n}}\sum\limits_{i=1}^{n} \ket{i}_X\otimes\ket{\rho'_{ARL_i }}\otimes\ket{\sigma_{(KL)_{I\setminus i}}}\,,
\end{align}
where Alice holds the classical register $X$. We now construct a $\{q,\varepsilon\}$-one-shot state splitting protocol of $\rho_{AB}$ assuming \eqref{eqn:logn_lower_bound} holds. 
\begin{enumerate}
    \item $\cE$ operation: There exists an isometry $U$ on Alice's quantum registers such that $F(U(\ket{\omega}),\ket{\tau})\geq \frac{1}{\sqrt{1 + \delta^2 }}$ and hence $P(U(\ket{\omega}),\ket{\tau})\leq \delta$. Moreover, since $\rho'_R = \rho_R$, there exists an isometry $V$ on $AA_1$ such that $V\ket{\psi} = \ket{\omega}$. Both these isometries exist due to Uhlmann's theorem. Alice therefore performs the operation $UV$ on $\ket{\psi}$ and discards all the $K'_{I\setminus i}$ quantum registers. 
    \item Alice and Bob use super-dense coding~\cite{bennett1992communication} to send the classical register $X$ to Bob. The state $\ket{\phi_{K'L'}}^{\otimes n}$ is used for super-dense coding and this has a quantum communication cost of $\frac{1}{2}\log n$. 
    \item $\cD$ operation: Depending on the state $\ket{i}$ of $X$, Bob applies controlled swap operations $L_i \leftrightarrow L_1$ to obtain $\ket{\rho'_{ARL_1}}$ which we relabel as $\ket{\rho'_{ABR}}$ and discards all $L_{i}$ quantum registers where $i\neq 1$. 
\end{enumerate}

Let us denote the protocol above by $\cT_\sigma$. Then, since $UV(\ket{\psi}) \approx_{\delta} \ket{\tau}$ and $\cD(\ket{\tau}) \approx_{\varepsilon - \delta}\rho_{ABR}$, we can use the triangle inequality to conclude that $\cT_\sigma(\rho_{AA_1})\approx_{\varepsilon}\rho_{AB}$. This $\varepsilon$-error one-shot quantum state splitting protocol then has a communication cost of $\frac{1}{2}\log n$ where $n\in\mathbb{N}$. Since we have 
\begin{align}
    \exp\left(D_{\max}(\rho'_{BR}||\sigma_{B}\otimes\rho'_R) + \log\frac{4}{\delta^2}\right) &\geq \left\lceil\exp\left(D_{\max}(\rho'_{BR}||\sigma_{B}\otimes\rho'_R) + \log\frac{1}{\delta^2}\right)\right\rceil\,,
\end{align}
we have that $q^\star_{\varepsilon}(\rho_{AB}) \leq D_{\max}(\rho'_{BR}||\sigma_{B}\otimes\rho'_R) + \log\frac{4}{\delta^2}$ which proves the claim.
\end{proof}
\vspace{5pt}

Next, we show the converse for one-shot quantum state splitting.

\begin{prop}[One-shot quantum state splitting converse~\cite{berta2011quantum, anshu2020partially}]\label{prop:state_splitting_converse}
Consider $\rho_{AB}\in \cS(AB)$ with purification $\rho_{ABR}$ and let $\varepsilon\in(0,1]$. The minimal quantum communication cost of one-shot state splitting of $\rho_{AB}$ satisfies
\begin{align}
q^\star_{\varepsilon}(\rho_{AB}) \geq \frac{1}{2}I^\varepsilon_{\max}(\dot{R};B)_{\rho}\,.
\end{align}
\end{prop}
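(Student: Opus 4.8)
The plan is to take any valid $\{q,\varepsilon\}$-one-shot quantum state splitting protocol of $\rho_{AB}$ and show that the quantum communication register $Q$ must be large enough to accommodate the smooth max-information $I^\varepsilon_{\max}(\dot R;B)_\rho$ — i.e., that $2q \geq I^\varepsilon_{\max}(\dot R;B)_\rho$. The natural starting point is the data-processing structure of the protocol: writing $\cT^\sigma = \cD_{QL\to B}\circ\cE_{AA_1K\to AQ}\circ\cP^\sigma$, apply the protocol to a purification $\rho_{AA_1R}$ of $\rho_{AA_1}$ (where $A_1\cong B$) together with the purified resource state $\ket{\sigma_{KL}}$. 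Let $\omega$ denote the global pure state just after the encoding channel $\cE$ is applied (extending $\cE$ to an isometry into $AQ$ and some environment $E$): at this point the register $Q$ is in Alice's hands, the reference $R$ is untouched, and the state of $RQ$ together with $L$ and $E$ purifies everything. The key observation is that in the final state $\tilde\rho_{ABR}\approx_\varepsilon \rho_{ABR}$, the correlations between $R$ and $B$ must be nearly those of $\rho_{ABR}$, yet the only "conduit" carrying $R$-correlations from Alice's side to Bob's $B$ register is the register $Q$ of $\log|Q| = q$ qubits (the resource $\sigma_{KL}$ is uncorrelated with $R$).

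Concretely, I would argue as follows. First, because the resource state $\sigma_{KL}$ is a fixed state independent of $R$, after encoding the register $R$ is correlated only with $QE$ (Alice's side), and in fact $I_{\max}(R;QE)_\omega$ is unchanged from $I_{\max}(R;A_1)_{\rho} = I_{\max}(R;B)_\rho$ — or more precisely, one bounds $I_{\max}(R;Q)$ of the post-encoding marginal on $RQ$ by $2\log|Q| = 2q$, simply because for any state $\xi_{RQ}$ one has $I_{\max}(R;Q)_\xi \le 2\log|Q|$ (choosing $\sigma_Q = I_Q/|Q|$ gives $\xi_{RQ} \le |Q|\,\xi_R\otimes I_Q/|Q|\cdot|Q| $, i.e. the dimension bound). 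Next, the decoding channel $\cD_{QL\to B}$ acts only on $Q$ and $L$; since $L$ is uncorrelated with $R$, monotonicity of $I_{\max}$ under the channel $\cI_R\otimes\cD_{QL\to B}$ (appending $L$ does not increase $I_{\max}(R;\cdot)$ because $L\perp R$, and then $\cD$ is a channel) gives $I_{\max}(R;B)_{\tilde\rho} \le I_{\max}(R;Q)_{\text{post-enc}} \le 2q$. Finally, since $\tilde\rho_{BR}\approx_\varepsilon\rho_{BR}$ and — crucially — the protocol preserves the marginal on $R$ exactly ($\tilde\rho_R = \rho_R$, because the reference is never touched), this $\tilde\rho_{BR}$ is a valid feasible point in the \emph{partially} smoothed optimization defining $I^\varepsilon_{\max}(\dot R;B)_\rho$. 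Hence $I^\varepsilon_{\max}(\dot R;B)_\rho \le I_{\max}(R;B)_{\tilde\rho} \le 2q$, which rearranges to the claimed bound $q^\star_\varepsilon(\rho_{AB}) \ge \tfrac12 I^\varepsilon_{\max}(\dot R;B)_\rho$.

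The main obstacle I anticipate is handling the \emph{partial} smoothing correctly: one must verify that the output marginal on $R$ is exactly $\rho_R$ rather than merely $\varepsilon$-close, so that $\tilde\rho_{BR}$ lies in the constrained ball $\{\bar\rho_{BR}\in\cB^\varepsilon(\rho_{BR}) : \bar\rho_R = \rho_R\}$ and not just the unconstrained $\varepsilon$-ball. This is what makes the converse match the achievability bound of Proposition~\ref{prop:state_splitting_achievability} (which also produces an output with the correct $R$-marginal via the Uhlmann isometry $V$ fixing $\rho_R$), and it is exactly the feature that distinguishes the partially smoothed quantity from the fully smoothed one. The point is that the reference register $R$ is held by neither Alice nor Bob and is acted on by nothing in the protocol, so $\text{Tr}_{B}\tilde\rho_{BR} = \text{Tr}_B\big((\cI_R\otimes\cT^\sigma)(\rho_{AA_1R}\otimes\sigma_{KL})\big) = \rho_R$ identically; once this is pinned down, the rest is routine monotonicity and dimension bounds for $I_{\max}$. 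A secondary technical point is the correct bookkeeping of which registers are correlated with $R$ at the encoding step — this is where one uses that the resource $\sigma_{KL}$ (and the maximally entangled states folded into it) is a product with $\rho_{AA_1R}$, so $K,L$ carry no $R$-correlation, and the data-processing chain goes cleanly through $Q$ alone.
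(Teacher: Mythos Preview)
Your overall strategy matches the paper's: run an arbitrary $\{q,\varepsilon\}$-protocol on a purification, observe that the output $\tilde\rho_{BR}$ satisfies both $\tilde\rho_{BR}\approx_\varepsilon\rho_{BR}$ and $\tilde\rho_R=\rho_R$ (so it is feasible for the \emph{partially} smoothed optimization), and then bound $I_{\max}(R;B)_{\tilde\rho}$ by $2q$ via data processing through the communication register. The recognition that $\tilde\rho_R=\rho_R$ is what justifies the partial smoothing is exactly right and is the main conceptual point.

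There is, however, a genuine gap in your monotonicity chain. You argue $I_{\max}(R;B)_{\tilde\rho}\le I_{\max}(R;Q)_{\omega}\le 2q$ by first bounding $I_{\max}(R;Q)\le 2\log|Q|$ and then claiming that ``appending $L$ does not increase $I_{\max}(R;\cdot)$ because $L\perp R$''. This last step is false: the fact that $\omega_{RL}=\rho_R\otimes\sigma_L$ is product does \emph{not} imply $I_{\max}(R;QL)_\omega\le I_{\max}(R;Q)_\omega$, because $L$ can be correlated with $Q$ through the pre-shared entanglement $\sigma_{KL}$. Concretely, take $A$ trivial, $\rho_{A_1R}$ and $\sigma_{KL}$ both maximally entangled qubit pairs, and let the encoder apply CNOT$(A_1\to K)$, output $K$ as $Q$, and trace out $A_1$. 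Then $\omega_{RQL}=\tfrac12\ketbra{0}{0}_R\otimes\ketbra{\Phi^+}{\Phi^+}_{QL}+\tfrac12\ketbra{1}{1}_R\otimes\ketbra{\Psi^+}{\Psi^+}_{QL}$, so $\omega_{RQ}$ is product (hence $I_{\max}(R;Q)_\omega=0$) while $I_{\max}(R;QL)_\omega>0$.

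The paper repairs exactly this step by using the non-lockability of the max-information (Lemma~\ref{lem:lemmaA12_berta2013quantum}) in the other order: first
\[
I_{\max}(R;B)_{\tilde\rho}\;\le\;I_{\max}(R;QL)_\omega\;\le\;I_{\max}(R;L)_\omega+2\log|Q|\,,
\]
and \emph{then} invoke $\omega_{RL}=\rho_R\otimes\sigma_L$ to conclude $I_{\max}(R;L)_\omega=0$. So the independence of $R$ and $L$ is indeed the key ingredient, but it must be applied after removing $Q$ via non-lockability, not by trying to append $L$ for free to the $RQ$ marginal.
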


\begin{proof}
We take $A_1\cong B$, initial resource state $\sigma_{KL}$ and assume the communication cost is $q = \log|Q|$. Let $\omega_{AQLR} = (\cE\otimes\mathcal{I}_{RL})(\rho_{AA_1R}\otimes\sigma_{KL})$ and $\tau_{ABR} = (\cD\otimes\mathcal{I}_{AR})(\omega_{AQLR})\approx_{\varepsilon}\rho_{ABR}$. Note that $\tau_R = \rho_R$ and thus
\begin{align}
I^{\varepsilon}_{\max}(\dot{R};B)_{\rho} &\leq I_{\max}(R;B)_{\tau}\leq  I_{\max}(R; QL)_\omega\leq I_{\max}(R;L)_\omega + 2\log|Q|\leq I_{\max}(R;L)_{\rho\otimes\sigma} + 2\log|Q|\,,
\end{align}
where we have used the data processing inequality on the max-information and the non-lockability of the max-information (Lemma~\ref{lem:lemmaA12_berta2013quantum}). Since $I_{\max}(R;L)_{\rho\otimes\sigma} = 0$ as Bob is not correlated with the reference at the start of the protocol, the result follows. 
\end{proof}

%%%%%%%%%%%%%%%%%%%%%%%%%%%%%%%%%%%%%%%%%%%%%%%%%%%%%%%%%%%%%%%%%%%%%%%%%%%%%%%%%%%%%%%%%%%

\section{Technical lemmas and missing proofs}\label{app:technical}

\subsection{Distance measures}

\begin{lemma}\label{lem:purified_dist_bound_monotone}
For $\varepsilon, \varepsilon'\in[0,1]$ with $\varepsilon^2 + \varepsilon'^2\leq 1$, the function $\varepsilon\sqrt{1 - \varepsilon'^2} + \varepsilon'\sqrt{1-\varepsilon^2}$ is monotonically increasing in both $\varepsilon$ and $\varepsilon'$. 
\end{lemma}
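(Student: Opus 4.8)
The plan is to reduce the claim to the monotonicity of $\sin$ on $[0,\pi/2]$ by a trigonometric substitution. I would set $\varepsilon = \sin\alpha$ and $\varepsilon' = \sin\beta$ with $\alpha,\beta\in[0,\pi/2]$, so that $\sqrt{1-\varepsilon^2}=\cos\alpha$ and $\sqrt{1-\varepsilon'^2}=\cos\beta$, and then the addition formula for sine gives
\[
\varepsilon\sqrt{1-\varepsilon'^2}+\varepsilon'\sqrt{1-\varepsilon^2}=\sin\alpha\cos\beta+\cos\alpha\sin\beta=\sin(\alpha+\beta).
\]
Next I would invoke the chain of equivalences already established in the proof of Lemma~\ref{lem:purified_dist_triangle_ineq}, which shows that the hypothesis $\varepsilon^2+\varepsilon'^2\leq 1$ is equivalent to $\alpha+\beta\leq\pi/2$; hence $\alpha+\beta$ takes values in $[0,\pi/2]$.

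The final step is to fix $\varepsilon'$ (equivalently $\beta$) and observe that, as $\varepsilon$ increases over $[0,\sqrt{1-\varepsilon'^2}]$, the angle $\alpha=\sin^{-1}\varepsilon$ increases over $[0,\pi/2-\beta]$, so $\alpha+\beta$ increases and stays within $[0,\pi/2]$, where $\sin$ is monotonically increasing; therefore $\varepsilon\mapsto\sin(\alpha+\beta)$ is monotonically increasing. By the symmetry of the expression in $\varepsilon$ and $\varepsilon'$, the same argument gives monotonicity in $\varepsilon'$.

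I do not anticipate a genuine obstacle here; the only point deserving a sentence is that $\varepsilon\mapsto\sin^{-1}\varepsilon$ is a well-defined increasing bijection $[0,1]\to[0,\pi/2]$, so no endpoint cases are lost. As an alternative route one could differentiate directly: on the interior of the domain $\partial_\varepsilon\big(\varepsilon\sqrt{1-\varepsilon'^2}+\varepsilon'\sqrt{1-\varepsilon^2}\big)=\sqrt{1-\varepsilon'^2}-\varepsilon\varepsilon'/\sqrt{1-\varepsilon^2}$, which is nonnegative iff $(1-\varepsilon^2)(1-\varepsilon'^2)\geq\varepsilon^2\varepsilon'^2$, i.e.\ iff $\varepsilon^2+\varepsilon'^2\leq 1$; monotonicity then extends to the closed domain by continuity.
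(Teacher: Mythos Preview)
Your proposal is correct and follows essentially the same approach as the paper: writing $\varepsilon\sqrt{1-\varepsilon'^2}+\varepsilon'\sqrt{1-\varepsilon^2}=\sin(\sin^{-1}\varepsilon+\sin^{-1}\varepsilon')$, invoking the equivalence from Lemma~\ref{lem:purified_dist_triangle_ineq} to get $\sin^{-1}\varepsilon+\sin^{-1}\varepsilon'\leq\pi/2$, and concluding via the monotonicity of $\sin$ on $[0,\pi/2]$. Your additional direct-differentiation argument is a valid alternative not present in the paper.
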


\begin{proof}
From the proof of tight triangle inequality for the purified distance (Lemma \ref{lem:purified_dist_triangle_ineq}), we have that $\sin^{-1}(\varepsilon) + \sin^{-1}(\varepsilon')\leq \pi/2$. Since the sine function is monotone increasing in $[0,\pi/2]$ and $\varepsilon\sqrt{1 - \varepsilon'^2} + \varepsilon'\sqrt{1-\varepsilon^2} = \sin(\sin^{-1}(\varepsilon) + \sin^{-1}(\varepsilon'))$, the result follows.
\end{proof}

\begin{lemma}\label{lem:bound_dim_R_purified_diamond_dist}
For any pair of quantum channels $\cE_{A\rightarrow B}$ and $\cF_{A\rightarrow B}$ and $\phi_{AR}\in\cS(A\otimes R)$ pure, there exists $\psi_{AR'}\in \cS(A\otimes R')$ pure such that 
\begin{align}
    P((\cE\otimes \mathcal{I}_R)(\phi) , (\cF\otimes \mathcal{I}_R)(\phi)) = P((\cE\otimes \mathcal{I}_{R'})(\psi) , (\cF\otimes \mathcal{I}_{R'})(\psi))\quad\text{with $|R'|= |A|$.}
\end{align}
\end{lemma}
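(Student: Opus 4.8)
The plan is to exploit the fact that any two purifications of a fixed quantum state are related by an isometry on the purifying register, combined with the invariance of the fidelity\,---\,and hence of the purified distance\,---\,under isometries applied simultaneously to both of its arguments.

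First I would pass to the reduced state $\phi_A = \text{Tr}_R(\phi_{AR})$, which has rank at most $|A|$. Hence there exists a purification $\psi_{AR'}\in\cS(A\otimes R')$ with $|R'| = |A|$ exactly (take $R'$ of dimension $|A|$ and, if $\text{rank}(\phi_A) < |A|$, pad the Schmidt decomposition with zero coefficients). Since $\phi_{AR}$ is also a purification of $\phi_A$, the uniqueness-of-purifications statement (Uhlmann) yields a partial isometry $V$ between $R'$ and $R$\,---\,an isometry from the smaller of these two registers into the larger\,---\,such that $(I_A\otimes V)\ket{\psi}_{AR'} = \ket{\phi}_{AR}$, or $(I_A\otimes V)\ket{\phi}_{AR} = \ket{\psi}_{AR'}$ in the other case. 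In either case, for any channel $\cG_{A\to B}$ the map $I_A\otimes V$ is an isometry commuting with $\cG$, so $(\cG\otimes\mathcal{I}_R)(\phi_{AR}) = (I_B\otimes V)\,(\cG\otimes\mathcal{I}_{R'})(\psi_{AR'})\,(I_B\otimes V)^\dagger$, again up to swapping which side carries the conjugation.

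Applying this identity with $\cG = \cE$ and with $\cG = \cF$ shows that the pair $\big((\cE\otimes\mathcal{I}_R)(\phi),\,(\cF\otimes\mathcal{I}_R)(\phi)\big)$ is obtained from the pair $\big((\cE\otimes\mathcal{I}_{R'})(\psi),\,(\cF\otimes\mathcal{I}_{R'})(\psi)\big)$ by conjugating both states with the same isometry $I_B\otimes V$. Because $\cE,\cF$ are trace-preserving and $\phi_{AR}$ is normalized, all four states are normalized, so the generalized fidelity coincides with the ordinary fidelity $F$; and $F(W\rho W^\dagger, W\sigma W^\dagger) = F(\rho,\sigma)$ for every isometry $W$. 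Since $P(\cdot,\cdot) = \sqrt{1-\bar F(\cdot,\cdot)^2}$ depends only on the fidelity, it is invariant as well, and the claimed equality of purified distances follows.

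I do not expect a substantial obstacle: the only delicate points are the direction of the partial isometry (depending on whether $|R|\geq|A|$ or $|R|<|A|$) and the remark that one may take $|R'|$ to be exactly $|A|$ rather than merely $\text{rank}(\phi_A)$. If preferred, the argument can be made fully explicit by writing $\ket{\phi}_{AR}$ in its Schmidt decomposition and picking $V$ to map the Schmidt basis of $R'$ onto that of $R$ (padded with zeros), which also removes any appeal to Uhlmann's theorem.
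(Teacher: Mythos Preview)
Your proposal is correct and follows essentially the same approach as the paper's own proof: both construct $\psi_{AR'}$ via the Schmidt decomposition of $\phi_A$ with $|R'|=|A|$, relate the two purifications by an isometry on the reference register (splitting into the cases $|R|\leq|A|$ and $|R|>|A|$), observe that this isometry commutes with the channels acting on $A$, and conclude via invariance of the purified distance under isometries applied to both arguments. Your treatment is slightly more detailed in justifying that one can take $|R'|$ exactly equal to $|A|$ and that all states are normalized, but the logical skeleton is identical.
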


\begin{proof}
If $|R|\leq |A|$, we may choose some isometry $U_{R\rightarrow R'}$ such that $\psi_{AR'} = (I_A\otimes U_{R\rightarrow R'})(\phi_{AR})(I_A\otimes U_{R\rightarrow R'})^\dagger$. Similarly, if we have that $|R|>|A|$, then we may choose $\psi_{AR'}$ to be the Schmidt decomposition of the state $\phi_{AR}$ with $|R'| = |A|$. Now there exists an isometry $V_{R'\rightarrow R}$ such that $(I_A\otimes V_{R'\rightarrow R})\psi_{AR'} = \phi_{AR}$. The purified distance is invariant under isometries acting on both arguments and the isometries here commute with the channels. This gives the desired result. 
\end{proof}

\begin{lemma}\label{lem:purified_dist_scaling}
For $\rho, \sigma \in \cS_{\leq}(A)$, let $\Tr{\lambda\rho} = \Tr{\lambda\sigma} = 1$ for some $\lambda \geq 1$. Then, we have
\begin{align}
P(\rho,\sigma) \leq P(\lambda\rho,\lambda\sigma) &\leq \sqrt{2\lambda}P(\rho,\sigma)\,.
\end{align}
\end{lemma}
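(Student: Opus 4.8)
The plan is to collapse the whole statement onto a single scalar parameter. First I would set $v := \bar F(\lambda\rho,\lambda\sigma)$. Since $\Tr{\lambda\rho} = \Tr{\lambda\sigma} = 1$, the operators $\lambda\rho$ and $\lambda\sigma$ are normalized states, so the generalized fidelity reduces to the standard one, $v = \|\sqrt{\lambda\rho}\sqrt{\lambda\sigma}\|_1 = \lambda\|\sqrt{\rho}\sqrt{\sigma}\|_1$, and in particular $v\in[0,1]$. At the same time $\Tr{\rho} = \Tr{\sigma} = 1/\lambda$, so evaluating the direct-sum (padding) in the definition of $\bar F$ block by block gives $\bar F(\rho,\sigma) = \|\sqrt{\rho}\sqrt{\sigma}\|_1 + (1 - 1/\lambda) = v/\lambda + 1 - 1/\lambda =: u$, equivalently $1 - u = (1-v)/\lambda$, so that $u \in [1-1/\lambda,\,1]$.

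For the lower bound I would observe that $u - v = (1 - 1/\lambda)(1-v) \ge 0$, using $\lambda\ge 1$ and $v\le 1$; hence $0 \le v \le u$ and therefore $P(\rho,\sigma)^2 = 1 - u^2 \le 1 - v^2 = P(\lambda\rho,\lambda\sigma)^2$. Taking square roots gives $P(\rho,\sigma) \le P(\lambda\rho,\lambda\sigma)$.

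For the upper bound I would use $1 - u = (1-v)/\lambda$ to write $P(\rho,\sigma)^2 = (1-u)(1+u) = \frac{1-v}{\lambda}\bigl(2 - \tfrac{1-v}{\lambda}\bigr)$, so that $2\lambda\, P(\rho,\sigma)^2 = 2(1-v)\bigl(2 - \tfrac{1-v}{\lambda}\bigr)$, while $P(\lambda\rho,\lambda\sigma)^2 = (1-v)(1+v)$. If $v = 1$ both sides vanish; otherwise, dividing by $1-v > 0$, the desired inequality $P(\lambda\rho,\lambda\sigma)^2 \le 2\lambda\, P(\rho,\sigma)^2$ becomes $1 + v \le 4 - \tfrac{2(1-v)}{\lambda}$, i.e. $\tfrac{2(1-v)}{\lambda} \le 3 - v$, which holds because $\tfrac{2(1-v)}{\lambda} \le 2(1-v) = 2 - 2v \le 3 - v$ (using $\lambda \ge 1$). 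Taking square roots completes the proof.

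There is no genuine obstacle in this argument: the only step requiring care is correctly evaluating the generalized fidelity of the sub-normalized states, i.e. accounting for the $\sqrt{(1-\Tr{\rho})(1-\Tr{\sigma})}$ contribution; once $\bar F(\rho,\sigma)$ and $\bar F(\lambda\rho,\lambda\sigma)$ are both expressed through $v$, the two inequalities are elementary algebra in $v\in[0,1]$. I would also note in passing that the same computation in fact yields the sharper bound $P(\lambda\rho,\lambda\sigma) \le \sqrt{\lambda}\,P(\rho,\sigma)$, since $\tfrac{1+v}{2-(1-v)/\lambda} \le 1$, but the weaker factor $\sqrt{2\lambda}$ is all that is needed for the later applications.
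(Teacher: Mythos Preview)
Your proof is correct and follows essentially the same approach as the paper: both arguments reduce to the single scalar relation $\bar F(\rho,\sigma) = \tfrac{1}{\lambda}F(\lambda\rho,\lambda\sigma) + 1 - \tfrac{1}{\lambda}$ (your $u = v/\lambda + 1 - 1/\lambda$) and then finish with elementary algebra. Your additional remark that the upper bound actually sharpens to $\sqrt{\lambda}\,P(\rho,\sigma)$ is correct and goes slightly beyond what the paper proves.
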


\begin{proof}
We use the generalized fidelity for sub-normalized states. We have
\begin{align}
F(\lambda\rho,\lambda\sigma) &= \lambda\|\sqrt{\rho}\sqrt{\sigma}\|_1= \lambda\left(\bar{F}(\rho,\sigma) - \left(1-\frac{1}{\lambda}\right)\right)= 1 - \lambda(1 - \bar{F}(\rho,\sigma))\,.
\end{align}
Rewriting, we have
\begin{align}
\bar{F}(\rho,\sigma) = \frac{1}{\lambda}F(\lambda\rho,\lambda\sigma) + 1 - \frac{1}{\lambda}\,.
\end{align}
The right hand side is a convex combination of $F(\lambda\rho,\lambda\sigma)$ and $1$ and we can hence lower bound it by $F(\lambda\rho,\lambda\sigma)$. Switching to purified distance, we have the first inequality. We also have
\begin{align}
P^2(\lambda\rho,\lambda\sigma) = 1 - (1 - \lambda(1 - \bar{F}(\rho,\sigma)))^2&= 2\lambda(1 - \bar{F}(\rho,\sigma)) - \lambda^2(1 - \bar{F}(\rho,\sigma))^2 \\
&\leq 2\lambda(1 - \bar{F}(\rho,\sigma))\\
&\leq 2\lambda(1 - \bar{F}^2(\rho,\sigma))\\
&= 2\lambda P^2(\rho,\sigma)\,.
\end{align}
\end{proof}

\begin{lemma}\label{lem:fidelity_cq_states}
For classical-quantum states $\rho = \frac{1}{d}\sum\limits_{i=1}^d \ket{i}\bra{i}\otimes \rho_i$ and $\sigma = \frac{1}{d}\sum\limits_{i=1}^d\ket{i}\bra{i}\otimes\sigma_i$, where $\{\ket{i}\}$ form an orthonormal basis, it holds that
\begin{align}
F(\rho,\sigma) = \frac{1}{d}\sum\limits_{i=1}^d F(\rho_i,\sigma_i)\,.
\end{align}
\end{lemma}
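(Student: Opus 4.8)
The plan is to exploit the block-diagonal structure of classical-quantum states with respect to the orthonormal basis $\{\ket{i}\}_{i=1}^d$ of the classical register, together with the multiplicativity of the relevant operations on orthogonal direct summands. Write the underlying space as $\bigoplus_{i=1}^d \big(\ket{i}\bra{i}\otimes \mathbb{1}\big)\mathcal H$; then $\rho = \bigoplus_{i=1}^d \frac1d \rho_i$ and $\sigma = \bigoplus_{i=1}^d \frac1d \sigma_i$ are both block-diagonal with respect to this decomposition.

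The first step is that taking the positive square root commutes with the block decomposition: since $\bigoplus_i \frac{1}{\sqrt d}\sqrt{\rho_i}$ is positive semi-definite and squares to $\rho$, uniqueness of the PSD square root gives $\sqrt{\rho} = \bigoplus_i \frac{1}{\sqrt d}\sqrt{\rho_i}$, and likewise for $\sigma$. The second step is that the product of two operators that are block-diagonal with respect to the same decomposition is again block-diagonal, with blocks equal to the products of the corresponding blocks, so that $\sqrt{\rho}\sqrt{\sigma} = \bigoplus_i \frac1d \sqrt{\rho_i}\sqrt{\sigma_i}$; in particular all cross terms between distinct classical labels vanish. The third step is the additivity of the trace norm over orthogonal blocks: for $X = \bigoplus_i X_i$ the singular values of $X$ are the union, with multiplicities, of the singular values of the $X_i$, hence $\|X\|_1 = \sum_i \|X_i\|_1$. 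Combining these with $F(\rho,\sigma)=\|\sqrt{\rho}\sqrt{\sigma}\|_1$ yields
\begin{align}
F(\rho,\sigma) = \Big\| \bigoplus_{i=1}^d \tfrac1d \sqrt{\rho_i}\sqrt{\sigma_i}\Big\|_1 = \sum_{i=1}^d \tfrac1d \big\|\sqrt{\rho_i}\sqrt{\sigma_i}\big\|_1 = \frac1d\sum_{i=1}^d F(\rho_i,\sigma_i)\,,
\end{align}
which is the assertion.

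There is no genuine obstacle here: every step is elementary linear algebra, and the content of the lemma is precisely that the common classical basis makes $\rho$ and $\sigma$ simultaneously block-diagonal so that the fidelity decouples across blocks. The only point worth stating with care is the additivity $\|\bigoplus_i X_i\|_1 = \sum_i \|X_i\|_1$ over orthogonal direct summands, since it is this fact — rather than any property special to the fidelity — that causes the off-diagonal contributions to drop out and produces the claimed averaging formula.
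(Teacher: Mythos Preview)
Your proof is correct and follows essentially the same idea as the paper: the paper's proof simply invokes the general formula $F(\tau,\omega)=\sum_i\sqrt{p(i)q(i)}\,F(\tau_i,\omega_i)$ for classical-quantum states and specializes to $p(i)=q(i)=1/d$, whereas you prove this special case directly via the block-diagonal structure, which is precisely how one establishes that general formula in the first place.
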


\begin{proof}
This follows from the fact that for classical-quantum states $\tau = \sum_i p(i)\ket{i}\bra{i}\otimes \tau_i$ and $\omega = \sum_i q(i)\ket{i}\bra{i}\otimes \omega_i$ where $p(i), q(i)$ are probability vectors and $\tau_i$ and $\omega_i$ are quantum states, one has $F(\tau,\omega) = \sum_i\sqrt{p(i)q(i)}F(\tau_i,\omega_i)$. 
\end{proof}

\begin{lemma}[Quasi-convexity of purified distance~{\cite[Eq. (3.54)]{tomamichel2015quantum}}]\label{lem:quasi_convexity_purified_dist}
For $\lambda\in[0,1]$ $i\in\{1,2\}$ and $\rho_i, \tau_i\in \cS_{\leq}(A)$, it holds that
\begin{align}
P\left(\lambda \rho_{1}+(1-\lambda) \rho_{2}, \lambda \tau_{1}+(1-\lambda) \tau_{2}\right) \leq \max_i P\left(\rho_{i}, \tau_{i}\right)\,.
\end{align}
\end{lemma}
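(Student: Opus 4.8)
The plan is to reduce the claim about the purified distance to the joint concavity of the root fidelity, and then to establish that concavity by an Uhlmann purification argument. Since $P(\rho,\sigma)=\sqrt{1-\bar F(\rho,\sigma)^2}$ and $x\mapsto\sqrt{1-x^2}$ is monotonically decreasing on $[0,1]$, the asserted inequality is equivalent to $\bar F(\lambda\rho_1+(1-\lambda)\rho_2,\lambda\tau_1+(1-\lambda)\tau_2)\ge\min_i\bar F(\rho_i,\tau_i)$, using also that squaring preserves order on $[0,1]$. Now the extension $\rho\mapsto\tilde\rho:=\rho\oplus(1-\Tr{\rho})$ entering the definition of $\bar F$ is affine under mixing, because the appended scalar $1-\Tr{\lambda\rho_1+(1-\lambda)\rho_2}$ equals $\lambda(1-\Tr{\rho_1})+(1-\lambda)(1-\Tr{\rho_2})$, so $\widetilde{\lambda\rho_1+(1-\lambda)\rho_2}=\lambda\tilde\rho_1+(1-\lambda)\tilde\rho_2$ and each $\tilde\rho_i,\tilde\tau_i$ is an honest normalized state on the enlarged space. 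Hence $\bar F$ of the mixture equals the root fidelity $F$ of the normalized mixtures $\lambda\tilde\rho_1+(1-\lambda)\tilde\rho_2$ and $\lambda\tilde\tau_1+(1-\lambda)\tilde\tau_2$, and it suffices to show that $F$ is jointly concave, i.e. $F(\lambda\alpha_1+(1-\lambda)\alpha_2,\lambda\beta_1+(1-\lambda)\beta_2)\ge\lambda F(\alpha_1,\beta_1)+(1-\lambda)F(\alpha_2,\beta_2)$ for normalized states $\alpha_i,\beta_i$; the $\min_i$ bound then follows from $\lambda x+(1-\lambda)y\ge\min\{x,y\}$.

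For the concavity step I would invoke Uhlmann's theorem. Working on one fixed, sufficiently large purifying register, choose purifications $\ket{\psi_i}$ of $\alpha_i$ and $\ket{\phi_i}$ of $\beta_i$ with $\braket{\psi_i}{\phi_i}=F(\alpha_i,\beta_i)\ge 0$ (the phase is free). Adjoin a two-dimensional flag register $C$ and set
\begin{align}
\ket{\Psi}=\sqrt{\lambda}\,\ket{\psi_1}\ket{1}_C+\sqrt{1-\lambda}\,\ket{\psi_2}\ket{2}_C,\qquad
\ket{\Phi}=\sqrt{\lambda}\,\ket{\phi_1}\ket{1}_C+\sqrt{1-\lambda}\,\ket{\phi_2}\ket{2}_C\,.
\end{align}
Tracing out the purifying register together with $C$ returns $\lambda\alpha_1+(1-\lambda)\alpha_2$ and $\lambda\beta_1+(1-\lambda)\beta_2$, so $\ket{\Psi}$ and $\ket{\Phi}$ are purifications of these mixtures, and Uhlmann's theorem yields $F(\lambda\alpha_1+(1-\lambda)\alpha_2,\lambda\beta_1+(1-\lambda)\beta_2)\ge|\braket{\Psi}{\Phi}|=\lambda F(\alpha_1,\beta_1)+(1-\lambda)F(\alpha_2,\beta_2)$, where the orthogonal flags $\ket{1}_C,\ket{2}_C$ remove the cross terms. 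Unwinding the reductions of the previous paragraph then proves the lemma.

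I expect the only mildly delicate point to be the bookkeeping: ensuring that all four purifications live on one common Hilbert space so that the flagged superpositions $\ket{\Psi},\ket{\Phi}$ are well defined and have the correct marginals; orthogonality of $\ket{1}_C$ and $\ket{2}_C$ is exactly what is needed to collapse the cross terms under the partial trace. Everything else is immediate from Uhlmann's theorem and the elementary estimate $\lambda x+(1-\lambda)y\ge\min\{x,y\}$. As a shortcut, one could instead simply cite joint concavity of the fidelity as a standard fact and dispense with the explicit purification construction.
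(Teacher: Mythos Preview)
Your proof is correct. The paper does not actually supply its own argument for this lemma; it merely records the statement and cites \cite[Eq.~(3.54)]{tomamichel2015quantum}, so there is nothing to compare against beyond noting that your Uhlmann-based derivation of joint concavity of the root fidelity, combined with the affinity of the embedding $\rho\mapsto\rho\oplus(1-\Tr{\rho})$, is exactly the standard route and would be an acceptable self-contained replacement for the bare citation.
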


%%%%%%%%%%%%%%%%%%%%%%%%%%%%%%%%%%%%%%%%%%%%%%%%%%%%%%%%%%%%%%%%%%%%%%%%%%%%%%%%%%%%%%%%%%%

\subsection{Entropic quantities}

\begin{lemma}[Mutual information variance {\cite[Corollary III.5]{dupuis2019entropy}}]\label{lem:bounded_mutual_info_variance}
For $\rho_{AB}\in\cS(AB)$, we have that
\begin{align}
V(\rho_{AB}\|\rho_A\otimes\rho_B) \leq 4 \log ^{2}\left(2 d_{A}+1\right)\,.
\end{align}
\end{lemma}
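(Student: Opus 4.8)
The plan is to bound $V(\rho_{AB}\|\rho_A\otimes\rho_B)$ by splitting the underlying information operator and applying a triangle inequality for the ``variance'' functional. Recall that for $\rho\in\cS(A)$ and $\sigma\in\cP(A)$ with $\rho\ll\sigma$ one has $V(\rho\|\sigma)=\Tr{\rho\,\Lambda^2}-\big(\Tr{\rho\,\Lambda}\big)^2$ for $\Lambda=\log\rho-\log\sigma$; in particular, for $\sigma=\rho_A\otimes\rho_B$ this operator is $\Lambda=\log\rho_{AB}-\log\rho_A\otimes I_B-I_A\otimes\log\rho_B$. Writing $\mathrm{Var}_{\rho}(M):=\Tr{\rho M^2}-(\Tr{\rho M})^2$ for Hermitian $M$, the map $M\mapsto\sqrt{\mathrm{Var}_{\rho}(M)}$ is a seminorm on Hermitian operators modulo scalars\,---\,its triangle inequality reduces to Cauchy--Schwarz applied to $M\rho^{1/2}$ and $M'\rho^{1/2}$ in the Hilbert--Schmidt inner product. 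Hence, splitting $\Lambda=C-D$ with $C:=\log\rho_{AB}-I_A\otimes\log\rho_B$ and $D:=\log\rho_A\otimes I_B$ gives
\begin{align}
\sqrt{V(\rho_{AB}\|\rho_A\otimes\rho_B)}\ \le\ \sqrt{\mathrm{Var}_{\rho_{AB}}(C)}+\sqrt{\mathrm{Var}_{\rho_{AB}}(D)}\,.
\end{align}

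Next I would identify the two variances. A direct computation, using $\Tr_B\rho_{AB}=\rho_A$ and $\Tr_A\rho_{AB}=\rho_B$, shows $\mathrm{Var}_{\rho_{AB}}(D)=\Tr{\rho_A(\log\rho_A)^2}-(\Tr{\rho_A\log\rho_A})^2=V(A)_{\rho}$, the quantum varentropy of the marginal $\rho_A$, and $\mathrm{Var}_{\rho_{AB}}(C)=V(\rho_{AB}\|I_A\otimes\rho_B)$, the conditional entropy variance. So it suffices to show that each of $V(A)_\rho$ and $V(\rho_{AB}\|I_A\otimes\rho_B)$ is at most $\log^2(2d_A+1)$; the displayed bound then yields $\sqrt{V(\rho_{AB}\|\rho_A\otimes\rho_B)}\le 2\log(2d_A+1)$, which is the claim.

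The bound $V(A)_\rho\le\log^2(2d_A+1)$ is a purely classical statement\,---\,it concerns the variance of the surprisal $-\log p_x$ of a probability vector on $d_A$ symbols\,---\,and I would either cite it or prove it by a short extremization over probability vectors. For the conditional entropy variance I would pass to a purification $\rho_{ABC}$ of $\rho_{AB}$: by the pure-state identity of Lemma~\ref{lem:mutual_info_variance} (applied with $BC$ in the role of $B$) one has $V(\rho_{ABC}\|I_A\otimes\rho_{BC})=V(A)_{\rho}\le\log^2(2d_A+1)$, so it would be enough to know the monotonicity $V(\rho_{AB}\|I_A\otimes\rho_B)\le V(\rho_{ABC}\|I_A\otimes\rho_{BC})$, i.e.\ that discarding $C$ from the conditioning register cannot increase the conditional entropy variance.

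The main obstacle is precisely this last monotonicity: the relative entropy variance does not obey data processing in general, so one cannot simply invoke it here; a correct argument has to exploit that the channel acts only on the conditioning system and that the second argument $I_A\otimes\rho_B$ factorizes. An alternative route that sidesteps purification is to bound $V(\rho_{AB}\|I_A\otimes\rho_B)=\mathrm{Var}_{\rho_{AB}}(C)$ directly. Here $C$ is bounded above, $C\preceq(\log d_A)\,I$, because $\rho_{AB}\preceq d_A\,(I_A\otimes\rho_B)$ (a short block-matrix estimate: writing $\rho_{AB}$ as a $d_A\times d_A$ block matrix over the $B$-system, use positivity of the blocks together with $\Tr_A\rho_{AB}=\rho_B$ and a Cauchy--Schwarz over the $d_A$ blocks). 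However, $C$ is in general \emph{not} bounded below on $\mathrm{supp}\,\rho_{AB}$, so a crude Popoviciu-type estimate on the operator range fails; one must instead show that the eigenvectors of $C$ with large negative eigenvalue carry very small weight under $\rho_{AB}$, and this tail estimate is where the real work lies.
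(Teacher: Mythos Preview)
The paper does not prove this lemma; it is quoted from \cite[Corollary~III.5]{dupuis2019entropy} with no argument given. Your triangle-inequality decomposition for the seminorm $M\mapsto\sqrt{\mathrm{Var}_\rho(M)}$ is correct and cleanly reduces the statement to the two bounds $V(A)_\rho\le\log^2(2d_A+1)$ and $V(\rho_{AB}\|I_A\otimes\rho_B)\le\log^2(2d_A+1)$, the first of which is a classical varentropy estimate.

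The proposal is, however, explicitly incomplete on the second bound, and the gap you flag is genuine. The purification route is a dead end: the relative entropy variance obeys no data-processing inequality, and discarding part of the conditioning system can strictly increase the conditional entropy variance, so the monotonicity $V(\rho_{AB}\|I_A\otimes\rho_B)\le V(\rho_{ABC}\|I_A\otimes\rho_{BC})$ you would need is false in general. The direct route is the right one\,---\,the operator inequality $\rho_{AB}\preceq d_A\,(I_A\otimes\rho_B)$, equivalently $D_{\max}(\rho_{AB}\|I_A\otimes\rho_B)\le\log d_A$, is indeed the correct input\,---\,but the tail estimate controlling the $\rho_{AB}$-mass of the spectral region $\{C<-t\}$ is precisely the non-trivial content of the cited result, and your proposal stops where that argument begins. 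As written, it is a sound reduction but not a proof.
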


\begin{lemma}[{\cite[Proposition 4.7]{datta2014second}}]\label{lem:prop4.7_datta2014second}
Let $0<\varepsilon<1, \delta, \eta>0$, and $\rho, \sigma\in\cP(A)$ with $\Tr{\rho}\leq 1$. Then, we have
\begin{align}
D_{h}^{\varepsilon-\delta}(\rho \| \sigma)+\log \delta \leq \underline{D}_{s}^{\varepsilon}(\rho \| \sigma)\,.
\end{align}
\end{lemma}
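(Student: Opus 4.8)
The plan is to compare the two quantities directly through their variational forms, exploiting the standard fact that, for any $\gamma\in\mathbb{R}$, the projector $\{\rho>2^\gamma\sigma\}_+$ onto the positive part of $\rho-2^\gamma\sigma$ is the optimal test: for every operator $0\le\Lambda\le I$,
\[
\Tr{\Lambda\rho}-2^\gamma\Tr{\Lambda\sigma}\;=\;\Tr{\Lambda(\rho-2^\gamma\sigma)}\;\le\;\Tr{(\rho-2^\gamma\sigma)\{\rho>2^\gamma\sigma\}_+}\,.
\]
This follows by writing $\rho-2^\gamma\sigma=\Delta_+-\Delta_-$ with $\Delta_\pm\succeq 0$ and $\Delta_+\Delta_-=0$, noting $\Delta_+=(\rho-2^\gamma\sigma)\{\rho>2^\gamma\sigma\}_+$, and using $\Tr{\Lambda\Delta_-}\ge 0$ together with $\Lambda\le I$ (so $\Tr{(I-\Lambda)\Delta_+}\ge 0$).

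First I would fix an arbitrary $\gamma>\underline{D}_s^\varepsilon(\rho\|\sigma)$. Since $\underline{D}_s^\varepsilon(\rho\|\sigma)$ is by definition the supremum of all $\gamma$ with $\Tr{(\rho-2^\gamma\sigma)\{\rho>2^\gamma\sigma\}_+}\ge 1-\varepsilon$ (and this function of $\gamma$ is non-increasing), such a $\gamma$ fails that condition, i.e. $\Tr{(\rho-2^\gamma\sigma)\{\rho>2^\gamma\sigma\}_+}<1-\varepsilon$. Now take any test $\Lambda$ feasible for $D_h^{\varepsilon-\delta}(\rho\|\sigma)$, meaning $0\le\Lambda\le I$ and $\Tr{\Lambda\rho}\ge 1-(\varepsilon-\delta)$. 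Inserting this into the displayed inequality and using feasibility,
\[
(1-\varepsilon+\delta)-2^\gamma\Tr{\Lambda\sigma}\;\le\;\Tr{\Lambda\rho}-2^\gamma\Tr{\Lambda\sigma}\;<\;1-\varepsilon\,,
\]
hence $\Tr{\Lambda\sigma}>\delta\,2^{-\gamma}$.

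Taking the infimum over all feasible $\Lambda$ gives $\inf_\Lambda\Tr{\Lambda\sigma}\ge\delta\,2^{-\gamma}$ for every $\gamma>\underline{D}_s^\varepsilon(\rho\|\sigma)$, and letting $\gamma\downarrow\underline{D}_s^\varepsilon(\rho\|\sigma)$ yields $\inf_\Lambda\Tr{\Lambda\sigma}\ge\delta\,2^{-\underline{D}_s^\varepsilon(\rho\|\sigma)}$. Applying $-\log(\cdot)$ to both sides is exactly $D_h^{\varepsilon-\delta}(\rho\|\sigma)\le\underline{D}_s^\varepsilon(\rho\|\sigma)-\log\delta$, which is the claim. (The hypothesis $\eta>0$ plays no role; it is a harmless leftover.)

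I do not expect a genuine obstacle; the only care required is edge-case bookkeeping and the limiting step. If $\underline{D}_s^\varepsilon(\rho\|\sigma)=+\infty$ the inequality is vacuous; if $\Tr{\rho}<1-(\varepsilon-\delta)$ there is no feasible test for $D_h^{\varepsilon-\delta}$, so that quantity equals $-\infty$ and the bound holds trivially; in the remaining case one checks that the supremum defining $\underline{D}_s^\varepsilon$ is over a nonempty set that is bounded above, so that $\gamma\downarrow\underline{D}_s^\varepsilon(\rho\|\sigma)$ is legitimate\,---\,nonemptiness follows since $2^\gamma\sigma\to 0$ as $\gamma\to-\infty$ while $\Tr{\rho}\ge 1-\varepsilon$. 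One should also recall that $\{\rho>2^\gamma\sigma\}_+$ denotes the projector onto the strictly positive eigenspace of $\rho-2^\gamma\sigma$, so the monotone function above need only be one-sided continuous; this affects only whether the supremum is attained and is immaterial here.
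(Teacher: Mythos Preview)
The paper does not supply its own proof of this lemma; it is stated in Appendix~B as a citation to \cite[Proposition 4.7]{datta2014second} without argument. Your proof is correct and is essentially the standard Neyman--Pearson argument that underlies the original result: use optimality of the projector $\{\rho>2^\gamma\sigma\}_+$ to bound $\Tr{\Lambda\sigma}$ from below for any test feasible at level $\varepsilon-\delta$, then pass to the limit in $\gamma$. Your handling of the edge cases (empty feasible set, $\underline{D}_s^\varepsilon=+\infty$, nonemptiness of the defining set via $\Tr{\rho}\ge 1-\varepsilon+\delta>1-\varepsilon$) is careful and correct, and your remark that the parameter $\eta$ is vestigial is accurate.
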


\begin{lemma}[{\cite[Theorem 4]{anshu2019minimax}}]\label{lem:thm4_anshu2019minimax}
Let $\rho \in \mathcal{S}(A), \sigma \in \mathcal{P}(A)$, $\varepsilon \in(0,1)$, and $\delta \in\left(0,1-\varepsilon^{2}\right)$. Then, we have that
\begin{align}
    D_{h}^{1-\varepsilon}(\rho \| \sigma) \geq D_{\max }^{\sqrt{\varepsilon}, P}(\rho \| \sigma)-\log \frac{1}{1-\varepsilon} \geq D_{h}^{1-\varepsilon-\delta}(\rho \| \sigma)-\log \frac{4}{\delta^{2}}\, .
\end{align}
\end{lemma}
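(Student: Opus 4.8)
The statement is a pair of one-shot entropy inequalities sandwiching the smoothed max-relative entropy $D_{\max}^{\sqrt{\varepsilon},P}(\rho\|\sigma)$ between two hypothesis testing relative entropies, and the overall plan is to prove each by turning a near-optimizer of the quantity on one side into a feasible candidate for the quantity on the other, via Neyman--Pearson thresholding. Two non-commutative facts do the work: (i) for any $0\le Q\le I$ one has $\sigma^{1/2}Q\sigma^{1/2}\le\sigma$, so truncating in the eigenbasis of the relative density operator $A:=\sigma^{-1/2}\rho\sigma^{-1/2}$ certifies a $D_{\max}(\cdot\|\sigma)$ bound; and (ii) the test $\Lambda=\{\rho>t\sigma\}$ is optimal for $D_h$, and on its support $\rho>t\sigma$ forces $\operatorname{Tr}(\Lambda\sigma)<t^{-1}\operatorname{Tr}(\Lambda\rho)\le t^{-1}$.

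For the first inequality $D_{\max}^{\sqrt{\varepsilon},P}(\rho\|\sigma)\le D_h^{1-\varepsilon}(\rho\|\sigma)+\log\frac{1}{1-\varepsilon}$, let $Q_t:=\{A\le t\}$ and set $\bar{\rho}_t:=\sigma^{1/2}(AQ_t)\sigma^{1/2}$, i.e.\ $\rho$ truncated to relative density at most $t$. Since $AQ_t\le tQ_t\le tI$, we get $\bar{\rho}_t\le t\sigma$, hence $D_{\max}(\bar{\rho}_t\|\sigma)\le\log t$; moreover $\bar{\rho}_t\le\rho$, which gives $\bar{F}(\rho,\bar{\rho}_t)\ge\operatorname{Tr}(\bar{\rho}_t)$, so choosing $t$ minimal with $\operatorname{Tr}(\bar{\rho}_t)=\sqrt{1-\varepsilon}$ yields $P(\rho,\bar{\rho}_t)\le\sqrt{\varepsilon}$ and thus $D_{\max}^{\sqrt{\varepsilon},P}(\rho\|\sigma)\le\log t$. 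For the matching lower bound on $D_h^{1-\varepsilon}$, note that at this threshold a $\rho$-weight of at least $\operatorname{Tr}(\rho)-\operatorname{Tr}(\bar{\rho}_t)=1-\sqrt{1-\varepsilon}\ge\varepsilon/2$ has been removed; a short comparison with Neyman--Pearson tests at thresholds slightly below $t$ shows that a feasible test for $D_h^{1-\varepsilon}$ can be taken with $\operatorname{Tr}(\cdot\,\sigma)\le\frac{1}{(1-\varepsilon)\,t}$, so $D_h^{1-\varepsilon}(\rho\|\sigma)\ge\log t-\log\frac{1}{1-\varepsilon}$, and combining gives the claim. (One needs $\operatorname{Tr}(\Pi_{\sigma}\rho)$ large enough for such a $t$ to exist, which is precisely the condition under which both sides are finite.)

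For the second inequality $D_h^{1-\varepsilon-\delta}(\rho\|\sigma)\le D_{\max}^{\sqrt{\varepsilon},P}(\rho\|\sigma)+\log\frac{4}{\delta^{2}}$, take $\bar{\rho}$ achieving $\nu:=D_{\max}^{\sqrt{\varepsilon},P}(\rho\|\sigma)=D_{\max}(\bar{\rho}\|\sigma)$, so $\bar{\rho}\le 2^{\nu}\sigma$, $P(\rho,\bar{\rho})\le\sqrt{\varepsilon}$ and (from $\bar{F}(\rho,\bar{\rho})^2\le\operatorname{Tr}\bar{\rho}$) also $\operatorname{Tr}\bar{\rho}\ge 1-\varepsilon$. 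Use $\Lambda=\{\rho>t\sigma\}$ with $t$ the largest threshold for which $\operatorname{Tr}(\Lambda\rho)\ge\varepsilon+\delta$; then $\Lambda$ is feasible and $\operatorname{Tr}(\Lambda\sigma)\le t^{-1}$, so the whole proof reduces to the lower bound $t\ge 2^{\nu}\delta^{2}/4$. This is the crux: on the complementary region $\{\rho\le t\sigma\}$ the operator $\{\rho\le t\sigma\}\rho\{\rho\le t\sigma\}\le t\sigma$ is again a smoothing of $\rho$, at purified distance from $\rho$ controlled by the removed mass $\operatorname{Tr}(\Lambda\rho)\approx\varepsilon+\delta$; if $t$ were below $2^{\nu}\delta^{2}/4$ one could chain this smoothing with the $\sqrt{\varepsilon}$-closeness of $\bar{\rho}$ via the tight triangle inequality for the purified distance (Lemma~\ref{lem:purified_dist_triangle_ineq}) and obtain a smoothing of $\rho$ of max-relative entropy strictly below $\nu$ within radius $\sqrt{\varepsilon}$, contradicting optimality. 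The $\log\frac{4}{\delta^{2}}$ term and the slack $\delta$ are exactly the cushions absorbing the purified-distance-versus-mass conversions and the rounding of the captured mass to $\varepsilon+\delta$, and the hypothesis $\delta\in(0,1-\varepsilon^{2})$ is what guarantees the needed room.

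The step I expect to be the real obstacle is the quantitative core of the second inequality --- precisely controlling how a purified-distance ball of radius $\sqrt{\varepsilon}$ around $\rho$ forces $\rho$-mass onto the over-threshold region $\{\rho>2^{\nu}\sigma\}$ --- since the naive trace-distance bound loses a square root and is too weak here; the cleanest way to organize both directions uniformly is to recast the whole statement as a minimax problem and invoke Sion's theorem to swap the infimum over smoothed states with the infimum over tests, which is the route of~\cite{anshu2019minimax}. The first inequality, by contrast, is essentially the short thresholding computation above.
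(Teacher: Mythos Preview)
The paper does not give its own proof of this lemma; it is quoted as \cite[Theorem~4]{anshu2019minimax} and used as a black box. So there is no in-paper argument to compare against, and I evaluate your sketch on its own merits.

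Your second-inequality argument has the direction backwards. You pick the Neyman--Pearson projector $\Lambda=\{\rho>t\sigma\}$ at the largest $t$ with $\operatorname{Tr}(\Lambda\rho)\ge\varepsilon+\delta$ and note $\operatorname{Tr}(\Lambda\sigma)\le t^{-1}$. But $D_h^{1-\varepsilon-\delta}=-\log\min_{\Lambda'}\operatorname{Tr}(\Lambda'\sigma)$, so exhibiting a single feasible $\Lambda$ with small $\operatorname{Tr}(\Lambda\sigma)$ only gives a \emph{lower} bound $D_h^{1-\varepsilon-\delta}\ge\log t$; even if the claim $t\ge 2^{\nu}\delta^{2}/4$ held, you would conclude $D_h^{1-\varepsilon-\delta}\ge\nu-\log\frac{4}{\delta^{2}}$, the opposite of what is needed. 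The contradiction sketch (produce a better smoothing if $t$ were small) cannot repair this, since it is also aimed at lower-bounding $t$. The working argument runs differently: for an \emph{arbitrary} feasible $\Lambda$ use $\bar{\rho}\le 2^{\nu}\sigma$ to get $\operatorname{Tr}(\Lambda\sigma)\ge 2^{-\nu}\operatorname{Tr}(\Lambda\bar{\rho})$, then apply data processing of the fidelity under $\{\Lambda,I-\Lambda\}$ together with $F(\rho,\bar{\rho})\ge\sqrt{1-\varepsilon}$ and $\operatorname{Tr}(\Lambda\rho)\ge\varepsilon+\delta$ to obtain $\operatorname{Tr}(\Lambda\bar{\rho})\ge(\sqrt{1-\varepsilon}-\sqrt{1-\varepsilon-\delta})^{2}\ge\delta^{2}/4$.

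For the first inequality your construction $\bar{\rho}_t=\sigma^{1/2}Q_tAQ_t\sigma^{1/2}\le t\sigma$ with $P(\rho,\bar{\rho}_t)\le\sqrt{\varepsilon}$ is correct, giving $D_{\max}^{\sqrt{\varepsilon}}\le\log t$; but the line ``a short comparison with Neyman--Pearson tests shows that a feasible test for $D_h^{1-\varepsilon}$ can be taken with $\operatorname{Tr}(\cdot\,\sigma)\le\frac{1}{(1-\varepsilon)t}$'' is precisely the missing content. Your $t$ is an eigenvalue threshold of $A=\sigma^{-1/2}\rho\sigma^{-1/2}$, which in the non-commuting case is not the same projector as $\{\rho>t\sigma\}$, and no bridge between them is supplied. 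Your closing remark that the clean route is the minimax/Sion argument of \cite{anshu2019minimax} is right; what precedes it is not yet a proof.
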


\begin{lemma}[Duality of Rényi relative entropies~{\cite[Lemma 6]{hayashi2016correlation}}]\label{lem:duality_renyi_relative_entropies}
Let $\rho_{A B} \in \mathcal{S}(A B)$ and $\tau_{A} \geq 0$ such that $\tau_{A} \gg \rho_{A}$. Then, for any purification $\rho_{A B C}$ of $\rho_{A B}$, we have
\begin{align}
\widetilde{I}_{\alpha}\left(\rho_{A B} \| \tau_{A}\right) &=-\widetilde{I}_{\beta}\left(\rho_{A C} \| \tau_{A}^{-1}\right) & &\quad \text{for} \quad \alpha, \beta \in[1 / 2, \infty), \quad \alpha^{-1}+\beta^{-1}=2\\
D_{\alpha}\left(\rho_{A B} \| \tau_{A} \otimes \rho_{B}\right)&=-D_{\beta}\left(\rho_{A C} \| \tau_{A}^{-1} \otimes \rho_{C}\right) & &\quad\text {for} \quad \alpha, \beta \in[0,2], \quad \alpha+\beta=2 \\
V\left(\rho_{AC} \| \tau_{A}^{-1}\otimes\rho_{C}\right) &= V\left(\rho_{AB} \| \tau_{A}\otimes\rho_{B}\right)\,,
\end{align}
where the inverse is taken on the support of $\tau_A$.
\end{lemma}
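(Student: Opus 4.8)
The plan is to handle the three displayed identities separately, importing the first two as standard R\'enyi-divergence dualities on a pure tripartite state and deriving the variance identity from them by differentiation at $\alpha=1$. For the sandwiched R\'enyi mutual information relation $\widetilde{I}_\alpha(\rho_{AB}\|\tau_A)=-\widetilde{I}_\beta(\rho_{AC}\|\tau_A^{-1})$ with $\alpha^{-1}+\beta^{-1}=2$, I would invoke the duality of the sandwiched R\'enyi divergence, which is precisely \cite[Lemma 6]{hayashi2016correlation} (building on the conditional-entropy duality in \cite{muller2013quantum}). The route: write $\widetilde{I}_\alpha(\rho_{AB}\|\tau_A)=\min_{\sigma_B\in\cS(B)}\widetilde{D}_\alpha(\rho_{AB}\|\tau_A\otimes\sigma_B)$, use the pure-state identity $\widetilde{D}_\alpha(\rho_{AB}\|\tau_A\otimes\sigma_B)=-\widetilde{D}_\beta(\rho_{AC}\|\tau_A^{-1}\otimes\sigma_C)$ for appropriately paired $\sigma_B\leftrightarrow\sigma_C$, and check that the minimizer over $\sigma_B$ on the left is carried to the minimizer over $\sigma_C$ on the right, so that the two optimized quantities are exact negatives. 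The hypothesis $\tau_A\gg\rho_A$ guarantees $\rho_{AB}\ll\tau_A\otimes\rho_B$ and $\rho_{AC}\ll\tau_A^{-1}\otimes\rho_C$, so all quantities are finite and $\tau_A^{-1}$ is the generalized inverse on the support of $\tau_A$.

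The relation $D_\alpha(\rho_{AB}\|\tau_A\otimes\rho_B)=-D_\beta(\rho_{AC}\|\tau_A^{-1}\otimes\rho_C)$ for $\alpha+\beta=2$ is the additive-parameter R\'enyi duality for the quantity that carries the true marginal in the second slot; I would import its general-$\alpha$ form from the literature and verify the base case $\alpha=\beta=1$ directly. At $\alpha=1$ it reads $D(\rho_{AB}\|\tau_A\otimes\rho_B)=-D(\rho_{AC}\|\tau_A^{-1}\otimes\rho_C)$, and expanding both sides this reduces to $\Tr{\rho_{AB}\log\rho_{AB}}=-\Tr{\rho_{AC}\log\rho_{AC}}$ (equal Schmidt spectra across the $AB\,|\,C$ cut of the purification $\rho_{ABC}$), together with $\Tr{\rho_{AB}\log\tau_A}=\Tr{\rho_A\log\tau_A}=-\Tr{\rho_{AC}\log\tau_A^{-1}}$ and $\Tr{\rho_{AB}\log\rho_B}=-S(B)_\rho=-S(C)_\rho=\Tr{\rho_{AC}\log\rho_C}$.

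The variance identity $V(\rho_{AC}\|\tau_A^{-1}\otimes\rho_C)=V(\rho_{AB}\|\tau_A\otimes\rho_B)$ I would then obtain by differentiating one of the above identities at $\alpha=1$: for $\rho\ll\sigma$ the map $\alpha\mapsto D_\alpha(\rho\|\sigma)$ is analytic near $\alpha=1$ with value $D(\rho\|\sigma)$ and first derivative a fixed nonzero multiple of $V(\rho\|\sigma)$, so, both support conditions holding, differentiating the second identity at $\alpha=1$ and using $\tfrac{d\beta}{d\alpha}\big|_{\alpha=1}=-1$ (true for $\beta=2-\alpha$ and for $\alpha^{-1}+\beta^{-1}=2$ alike), the overall minus sign becomes a plus under the chain rule and equating the $(\alpha-1)$-coefficients yields the claim. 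If instead one differentiates the $\widetilde{I}_\alpha$ identity, one additionally needs that the inner minimization does not spoil differentiability; this is fine because at $\alpha=1$ the unique optimizer is $\sigma_B=\rho_B$ (resp.\ $\sigma_C=\rho_C$), interior on the relevant support, so an envelope argument identifies the derivative of $\widetilde{I}_\alpha$ at $\alpha=1$ with that of $\widetilde{D}_\alpha(\rho_{AB}\|\tau_A\otimes\rho_B)$. The substantive content, namely the pure-state dualities behind the first two identities, is already in the literature, so the only genuinely new step, and the one I expect to require care, is this differentiation argument (analyticity in $\alpha$, and on the $\widetilde{I}_\alpha$ route the envelope step), which is routine given $\tau_A\gg\rho_A$.
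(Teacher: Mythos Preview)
The paper does not prove this lemma; it is stated in Appendix~B with a citation to \cite[Lemma~6]{hayashi2016correlation} and no argument is given. So there is no ``paper's own proof'' to compare against---the first two identities are imported wholesale, and the variance identity is also treated as part of the cited result.

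Your overall strategy (import the R\'enyi dualities from the literature, then differentiate at $\alpha=1$ to get the variance identity) is sound and is in fact how the variance duality is usually obtained. One correction to your $\alpha=1$ verification: the term-by-term matching you wrote down is not right. The claims $\Tr{\rho_{AB}\log\rho_{AB}}=-\Tr{\rho_{AC}\log\rho_{AC}}$ and $S(B)_\rho=S(C)_\rho$ are individually false for a general pure $\rho_{ABC}$ (take $\ket{\psi}_{ABC}=\frac{1}{\sqrt{2}}(\ket{00}+\ket{11})_{AB}\otimes\ket{0}_C$). What purity actually gives is $S(AB)_\rho=S(C)_\rho$ and $S(AC)_\rho=S(B)_\rho$, and with these the full identity
\[
-S(AB)_\rho - \Tr{\rho_A\log\tau_A} + S(B)_\rho \;=\; S(AC)_\rho - \Tr{\rho_A\log\tau_A} - S(C)_\rho
\]
follows immediately. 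This is a bookkeeping slip; the differentiation step and the chain-rule sign analysis (with $\tfrac{d\beta}{d\alpha}\big|_{\alpha=1}=-1$) are correct.
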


\begin{lemma}[Non-lockability of max-information~{\cite[Lemma A.12]{berta2013quantum}}]\label{lem:lemmaA12_berta2013quantum}
Let $\rho_{ABC}\in \mathcal{S}(ABC)$ and $\varepsilon\geq 0$. Then, we have that
\begin{align}
I_{\max }^{\varepsilon}(A;BC)_{\rho} \leq I_{\max }^{\varepsilon}(A;B)_{\rho}+2\log |C|\,.
\end{align}
\end{lemma}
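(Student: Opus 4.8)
\noindent\textbf{Proof plan for Lemma~\ref{lem:lemmaA12_berta2013quantum}.}
The idea is to take the smoothing state and the second marginal realising $I_{\max}^\varepsilon(A;B)_\rho$, lift them to a feasible point of the optimisation defining $I_{\max}^\varepsilon(A;BC)_\rho$, and pay $2\log|C|$ for enlarging the conditioning register from $B$ to $BC$. First I would set $\lambda := I_{\max}^\varepsilon(A;B)_\rho$ and, using that the relevant infimum and minimum are attained (as remarked after the definitions), fix $\bar\rho_{AB}\in\mathcal{B}^\varepsilon(\rho_{AB})$ and $\sigma_B\in\mathcal{S}(B)$ with $\bar\rho_{AB}\le 2^\lambda\,\bar\rho_A\otimes\sigma_B$.

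The next step is to extend $\bar\rho_{AB}$ to a valid smoothing state on $ABC$. Purify $\rho_{ABC}$ to $|\rho\rangle_{ABCC'}$; since $P(\bar\rho_{AB},\rho_{AB})\le\varepsilon$, Uhlmann's theorem (in its sub-normalised formulation) yields a purification $|\bar\rho\rangle_{ABCC'}$ of $\bar\rho_{AB}$ with $P(|\bar\rho\rangle\!\langle\bar\rho|,|\rho\rangle\!\langle\rho|)\le\varepsilon$. Put $\bar\rho_{ABC} := \mathrm{Tr}_{C'}|\bar\rho\rangle\!\langle\bar\rho|$; then $\mathrm{Tr}_C\bar\rho_{ABC}=\bar\rho_{AB}$ (so in particular $(\bar\rho_{ABC})_A=\bar\rho_A$), and monotonicity of the purified distance under $\mathrm{Tr}_{C'}$ gives $\bar\rho_{ABC}\in\mathcal{B}^\varepsilon(\rho_{ABC})$.

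The crux is the operator inequality $\bar\rho_{ABC}\le|C|\,\bar\rho_{AB}\otimes I_C$, which I would obtain by averaging over the Heisenberg--Weyl unitaries $\{U_k\}_{k=1}^{|C|^2}$ on $C$: this family contains $I_C$ and satisfies $\frac1{|C|^2}\sum_k U_k X U_k^\dagger = (\mathrm{Tr}\,X)\,\frac{I_C}{|C|}$ for all $X\in\mathcal{L}(C)$, so $\bar\rho_{ABC}\le\sum_k (I_{AB}\otimes U_k)\,\bar\rho_{ABC}\,(I_{AB}\otimes U_k)^\dagger = |C|\,\bar\rho_{AB}\otimes I_C$ (alternatively this can simply be cited). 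Combining with $\bar\rho_{AB}\otimes I_C\le 2^\lambda\bar\rho_A\otimes\sigma_B\otimes I_C$ gives $\bar\rho_{ABC}\le 2^{\lambda}|C|^2\,\bar\rho_A\otimes\big(\sigma_B\otimes\tfrac{I_C}{|C|}\big)$, hence $D_{\max}\big(\bar\rho_{ABC}\big\|\bar\rho_A\otimes(\sigma_B\otimes\tfrac{I_C}{|C|})\big)\le\lambda+2\log|C|$. Since $\sigma_B\otimes\frac{I_C}{|C|}$ is feasible in $I_{\max}(A;BC)_{\bar\rho}$ and $\bar\rho_{ABC}$ is a feasible smoothing state, one concludes $I_{\max}^\varepsilon(A;BC)_\rho\le I_{\max}(A;BC)_{\bar\rho}\le\lambda+2\log|C| = I_{\max}^\varepsilon(A;B)_\rho+2\log|C|$.

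I expect the only delicate point to be the second step: carrying out the Uhlmann lift with sub-normalised purifications so that $\bar\rho_{ABC}$ is \emph{simultaneously} an element of $\mathcal{B}^\varepsilon(\rho_{ABC})$ and has $AB$-marginal exactly $\bar\rho_{AB}$. Everything else is a one-line computation once the inequality $\bar\rho_{ABC}\le|C|\,\bar\rho_{AB}\otimes I_C$ is in hand.
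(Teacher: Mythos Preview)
The paper does not provide its own proof of this lemma; it is stated with a citation to \cite[Lemma A.12]{berta2013quantum} and used as a black box. Your proof plan is correct and is essentially the standard argument used in the cited reference: extend the optimal smoothing state from $AB$ to $ABC$ via Uhlmann's theorem, then use the operator inequality $\bar\rho_{ABC}\le |C|\,\bar\rho_{AB}\otimes I_C$ (which follows from the Heisenberg--Weyl averaging you describe, or equivalently from $\rho_C\le I_C$ applied to the purified picture) to inflate the second marginal at cost $2\log|C|$. The delicate point you flag---ensuring the Uhlmann extension is simultaneously in $\mathcal{B}^\varepsilon(\rho_{ABC})$ and has $AB$-marginal exactly $\bar\rho_{AB}$---is handled by the sub-normalised Uhlmann theorem (see, e.g., \cite[Corollary 3.1]{tomamichel2012framework} or the discussion around \cite[Lemma 3.17]{tomamichel2015quantum}), so there is no gap.
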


\begin{lemma}[Partially smoothed max-information bound~{\cite[Theorem 2]{anshu2020partially}}]\label{lem:theorem4_anshu2020partially}
Let $\rho_{A B} \in \mathcal{S}(AB)$ and $0\leq 2\varepsilon+\delta \leq 1$ with $\delta>0$. Then, we have 
\begin{align}
    I_{\max }^{2 \varepsilon+\delta, P}(\dot{A} ; B)_{\rho} \leq I_{\max }^{\varepsilon, P}(A ; B)_{\rho}+\log \frac{8+\delta^{2}}{\delta^{2}}\, .
\end{align}
\end{lemma}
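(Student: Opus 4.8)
This statement is quoted from \cite[Theorem 2]{anshu2020partially}; the approach I would take is a substitution argument. First I would fix $\bar{\rho}_{AB}\in\cB^{\varepsilon}(\rho_{AB})$ and $\sigma_B\in\cS(B)$ attaining the fully smoothed quantity, $I^{\varepsilon,P}_{\max}(A;B)_{\rho}=D_{\max}(\bar{\rho}_{AB}\|\bar{\rho}_A\otimes\sigma_B)=:\lambda$, so that $\bar{\rho}_{AB}\leq 2^{\lambda}\,\bar{\rho}_A\otimes\sigma_B$ (and $\lambda\geq 0$, since $\Tr{\bar{\rho}_{AB}}=\Tr{\bar{\rho}_A\otimes\sigma_B}$). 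It then suffices to construct $\hat{\rho}_{AB}\in\cS_{\leq}(AB)$ with (i) $\hat{\rho}_A=\rho_A$, (ii) $P(\hat{\rho}_{AB},\rho_{AB})\leq 2\varepsilon+\delta$, and (iii) $\hat{\rho}_{AB}\leq 2^{\lambda}\tfrac{8+\delta^2}{\delta^2}\,\rho_A\otimes\sigma_B$: feeding such a $\hat{\rho}_{AB}$ into the definition of $I^{2\varepsilon+\delta,P}_{\max}(\dot{A};B)_{\rho}$ yields $I^{2\varepsilon+\delta,P}_{\max}(\dot{A};B)_{\rho}\leq D_{\max}(\hat{\rho}_{AB}\|\rho_A\otimes\sigma_B)\leq\lambda+\log\tfrac{8+\delta^2}{\delta^2}$, which is the claim.

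To build $\hat{\rho}_{AB}$ I would first \emph{regularise the $A$-marginal}. Since the purified distance is non-increasing under the partial trace, $P(\bar{\rho}_A,\rho_A)\leq\varepsilon$; I set $p:=\tfrac{\delta^2}{8+\delta^2}$ and $\bar{\rho}'_{AB}:=(1-p)\bar{\rho}_{AB}+p\,\rho_A\otimes\sigma_B$. Joint concavity of the generalised fidelity gives $\bar{F}(\bar{\rho}'_{AB},\bar{\rho}_{AB})\geq 1-p$, hence $P(\bar{\rho}'_{AB},\bar{\rho}_{AB})\leq\sqrt{2p}\leq\delta$ and, by the triangle inequality, $P(\bar{\rho}'_{AB},\rho_{AB})\leq\varepsilon+\delta$; the operator bound survives the mixing, $\bar{\rho}'_{AB}\leq 2^{\lambda}\bar{\rho}'_A\otimes\sigma_B$ with $\bar{\rho}'_A=(1-p)\bar{\rho}_A+p\rho_A\geq p\,\rho_A$, so that $\rho_A\ll\bar{\rho}'_A$ and $\rho_A\leq p^{-1}\bar{\rho}'_A$. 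I would then \emph{correct the marginal} by conjugating the $A$-system with $M:=\rho_A^{1/2}(\bar{\rho}'_A)^{-1/2}$ (generalised inverse), i.e.\ $\hat{\rho}_{AB}:=(M\otimes I_B)\,\bar{\rho}'_{AB}\,(M^{\dagger}\otimes I_B)$. Because $\rho_A\ll\bar{\rho}'_A$ this forces $\hat{\rho}_A=M\bar{\rho}'_AM^{\dagger}=\rho_A$, so (i) and $\Tr{\hat{\rho}_{AB}}=1$ hold, and pushing the Step-1 operator inequality through the conjugation controls the max-information, giving (iii) (in fact $\hat{\rho}_{AB}\leq 2^{\lambda}\rho_A\otimes\sigma_B$ if one can afford the full conjugation; otherwise one conjugates only on the spectral part of $\bar{\rho}'_A$ above a threshold $\sim p$ and patches the small remaining piece by a term dominated by $2^{\lambda}p^{-1}\rho_A\otimes\sigma_B$, which is precisely where the factor $\tfrac{8+\delta^2}{\delta^2}=p^{-1}$ enters (iii)).

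The distance estimate (ii) is the step I expect to be the main obstacle. The map $X\mapsto(M\otimes I_B)X(M^{\dagger}\otimes I_B)$ is not trace-preserving and $M$ is ill-conditioned, $\|M\|\leq p^{-1/2}\sim\delta^{-1}$, so a naive triangle-inequality bound is far too lossy — it only yields $P(\hat{\rho}_{AB},\rho_{AB})=O(\varepsilon^{1/4})$, not $O(\varepsilon+\delta)$. The way around this is to pass to Uhlmann purifications: pick a purification $\ket{\rho}_{ABE}$ of $\rho_{AB}$ and the purification $\ket{\bar{\rho}'}_{ABE}$ of $\bar{\rho}'_{AB}$ that is Uhlmann-optimal for it, so that $|\braket{\bar{\rho}'}{\rho}|\geq\bar{F}(\bar{\rho}'_{AB},\rho_{AB})\geq\sqrt{1-(\varepsilon+\delta)^2}$; then $(M\otimes I_{BE})\ket{\bar{\rho}'}_{ABE}$ is a \emph{normalised} purification of $\hat{\rho}_{AB}$, and one lower-bounds $F(\hat{\rho}_{AB},\rho_{AB})\geq|\bra{\rho}(M\otimes I_{BE})\ket{\bar{\rho}'}|$ by expanding in the Schmidt basis of $\rho_A$ and exploiting that $\bar{\rho}'_A$ is close to $\rho_A$ in \emph{purified} (fidelity) distance rather than merely in trace distance. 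Making this tension between $\|M\|$ and the marginal mismatch quantitative is the delicate calculation; it both tames the $p^{-1/2}$ blow-up of $M$ and, upon re-optimising the mixing weight $p=\delta^2/(8+\delta^2)$, fixes the exact form $\log\tfrac{8+\delta^2}{\delta^2}$ (the constant $8$ and the $2\log(1/\delta)$ growth). As a sanity check, $\delta\to 0$ sends $2\varepsilon+\delta\to 2\varepsilon$ but makes the additive constant diverge, consistent with the paper's remark that removing this constant — and thereby resolving the second-order regime — remains open.
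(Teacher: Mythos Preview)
The paper does not prove this lemma; it appears in Appendix~B purely as a citation of \cite[Theorem~2]{anshu2020partially}, with no accompanying argument. There is therefore no proof in the present paper to compare your proposal against.

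On your sketch itself: the two-step outline (mix in $p\,\rho_A\otimes\sigma_B$ to regularise the marginal, then correct it via the conjugation $M=\rho_A^{1/2}(\bar\rho'_A)^{-1/2}$) is a natural strategy, and your observations that it delivers (i) and (iii) are correct --- in particular your remark that the full conjugation already gives $\hat\rho_{AB}\le 2^{\lambda}\rho_A\otimes\sigma_B$ is right, so the additive $\log\frac{8+\delta^2}{\delta^2}$ must enter through the distance analysis, not through the operator inequality. But you have yourself identified the genuine gap: the purified-distance bound (ii) after conjugation is the entire technical content of the lemma, and your proposal stops short of establishing it. The Uhlmann-purification route you sketch is the right direction, yet the crucial inequality --- that $|\langle\rho|(M\otimes I_{BE})|\bar\rho'\rangle|$ can be lower-bounded well enough to beat the $\|M\|\le p^{-1/2}$ blow-up and produce the loss $2\varepsilon+\delta$ with $p=\delta^2/(8+\delta^2)$ --- is asserted rather than proved. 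So what you have is a plausible skeleton with the decisive estimate left open; completing it requires the careful fidelity computation carried out in the cited reference.
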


%%%%%%%%%%%%%%%%%%%%%%%%%%%%%%%%%%%%%%%%%%%%%%%%%%%%%%%%%%%%%%%%%%%%%%%%%%%%%%%%%%%%%%%%%%%

\subsection{De Finetti and more}

\begin{definition}[De Finetti state]\label{def:definetti_state}
For $\sigma \in \mathcal{S}(A)$ and $\mu(.)$ a probability measure on $\mathcal{S}(A)$, $\zeta_{H^n}=\int \sigma^{\otimes n} \mu(\sigma) \in \mathcal{S}\left(A^{n}\right)$ is called a de Finetti state. 
\end{definition}

\begin{lemma}[Post-selection technique~\cite{christandl2009postselection}]\label{lem:postselection_technique_symmetric_state}
For a $\rho_{A^nR^n}\in\cS(A^nR^n)$ with support on the symmetric subspace $\text{Sym}^n(A\otimes R)$ with $R\cong A$, there exists a completely positive trace-non-increasing map $\cG_{R'\rightarrow \mathbb{C}}$ such that
\begin{align}
    \rho_{A^nR^n} = g_{n,|A|}(\mathcal{I}_{A^nR^n}\otimes\cG)(\zeta_{A^nR^nR'})\,,
\end{align}
where $g_{n,d} = \left(\begin{array}{c} n+d^{2}-1 \\ n \end{array}\right) \leq(n+1)^{d^{2}-1}$, $\zeta_{A^n R^n} = \int \sigma_{A R}^{\otimes n}\ d\left(\sigma_{A R}\right)$ with $\sigma_{A R}=\vert\sigma\rangle\langle\sigma\vert_{A R} \in \mathcal{S}\left(AR\right)$, $A \cong R$ and $d(.)$ is the measure on the normalized pure states on $AR$ induced by the Haar measure on the unitary group acting on $AR$ normalized to $\int d(.)=1$. Hence, $\zeta_{A^nR^nR'} \in \cS(A^nR^nR')$ is the purification of $\zeta_{A^nR^n}$ and we can assume without loss of generality that $\left|R^{\prime}\right|\leq(n+1)^{|A|^{2}-1}$.
\end{lemma}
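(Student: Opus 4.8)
The plan is to recall the proof of the post-selection technique from \cite{christandl2009postselection}, reorganised for the present notation. The argument rests on only two facts: first, that the de Finetti state $\zeta_{A^nR^n}$ is proportional to the projector onto the symmetric subspace; second, that any state dominated by another state can be \emph{steered} to from a purification of the latter by a local filtering map. Neither requires more than elementary linear algebra once stated carefully.

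First I would pin down the de Finetti state. Grouping $A_iR_i$ into a single system of dimension $D = |A|\,|R| = |A|^2$ (using $R\cong A$), the standard identity for the Haar-induced probability measure $d(\sigma_{AR})$ on pure states of $\mathbb{C}^D$ gives
\begin{align}
\zeta_{A^nR^n} = \int \sigma_{AR}^{\otimes n}\, d(\sigma_{AR}) = \frac{1}{g_{n,|A|}}\,P_{\mathrm{Sym}^n(AR)}\,,
\end{align}
where $P_{\mathrm{Sym}^n(AR)}$ is the projector onto $\mathrm{Sym}^n(A \otimes R)$ and $g_{n,|A|} = \binom{n+|A|^2-1}{n} = \dim \mathrm{Sym}^n(A\otimes R) \le (n+1)^{|A|^2-1}$. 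Because $\rho_{A^nR^n}$ is a normalised state supported on $\mathrm{Sym}^n(A\otimes R)$, it obeys, as positive operators, $\rho_{A^nR^n} \le P_{\mathrm{Sym}^n(AR)} = g_{n,|A|}\,\zeta_{A^nR^n}$; in particular $\rho_{A^nR^n} \ll \zeta_{A^nR^n}$.

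Next I would construct the filter. Take $\zeta_{A^nR^nR'}$ to be a purification of $\zeta_{A^nR^n}$ with $|R'|$ equal to the Schmidt rank of $\zeta_{A^nR^n}$ across the cut $A^nR^n\!:\!R'$, which is $\operatorname{rank}(\zeta_{A^nR^n}) = \dim \mathrm{Sym}^n(A\otimes R) \le (n+1)^{|A|^2-1}$; this already gives the claimed bound on $|R'|$. Writing $\zeta_{A^nR^nR'}$ in Schmidt form with $\{|e_i\rangle\}$ an eigenbasis of $\zeta_{A^nR^n}$ on its support, put $N := \zeta_{A^nR^n}^{-1/2}\big(g_{n,|A|}^{-1}\,\rho_{A^nR^n}\big)\,\zeta_{A^nR^n}^{-1/2}$, with the generalised inverse taken on $\mathrm{supp}(\zeta_{A^nR^n})$. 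The operator inequality from the previous step then gives $0 \le N \le \Pi$, where $\Pi$ projects onto $\mathrm{supp}(\zeta_{A^nR^n})$. Transporting $N$ onto $R'$ along the Schmidt correspondence of the purification (which introduces at most a transpose, hence preserves positivity and the bound $\le \Pi$) yields an operator $M$ with $0 \le M \le I_{R'}$ such that $\operatorname{Tr}_{R'}\!\big((I_{A^nR^n} \otimes M)\,\zeta_{A^nR^nR'}\big) = g_{n,|A|}^{-1}\,\rho_{A^nR^n}$. Taking $\cG_{R'\to\mathbb{C}}$ to be the completely positive trace-non-increasing map $\cG(X) = \operatorname{Tr}(MX)$, this rearranges to $\rho_{A^nR^n} = g_{n,|A|}\,(\mathcal{I}_{A^nR^n}\otimes\cG)(\zeta_{A^nR^nR'})$, as claimed.

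I expect the main (though routine) obstacle to be the book-keeping in this last step: one must verify that the $M$ obtained from $N$ via the Schmidt correspondence is genuinely positive semidefinite and bounded by $I_{R'}$ once the transpose implicit in $\operatorname{Tr}_{R'}$ is accounted for, and that every generalised inverse is taken on $\mathrm{supp}(\zeta_{A^nR^n})$ so that the final equality holds as an operator identity on all of $A^nR^n$ and not merely on a subspace. The symmetric-subspace identity and the estimate $g_{n,|A|} \le (n+1)^{|A|^2-1}$ are standard.
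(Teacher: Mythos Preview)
The paper does not supply its own proof of this lemma; it is quoted as a known result from \cite{christandl2009postselection} and merely stated in the appendix. Your reconstruction is correct and is precisely the standard argument behind the post-selection technique: the identification $\zeta_{A^nR^n} = g_{n,|A|}^{-1} P_{\mathrm{Sym}^n(AR)}$ gives the operator inequality $\rho_{A^nR^n} \le g_{n,|A|}\,\zeta_{A^nR^n}$, and the steering step then produces the required trace-non-increasing filter $\cG$ on the purifying register. The bookkeeping you flag (transpose under $\operatorname{Tr}_{R'}$, generalised inverses on the support) is handled exactly as you describe, and the bound $|R'| \le (n+1)^{|A|^2-1}$ follows from $\operatorname{rank}(\zeta_{A^nR^n}) = \dim \mathrm{Sym}^n(A\otimes R) = g_{n,|A|}$.
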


The following proposition is as \cite[Theorem 1]{christandl2009postselection}, but we changed the distance measure from the diamond distance to the channel purified distance.

\begin{prop}[De Finetti reduction]\label{prop:postselection}
Let $\mathcal{E}^n_{A^n\rightarrow B^n},\mathcal{F}^n_{A^n\rightarrow B^n}$ be quantum channels and $\varepsilon>0$. If $\cE^n$ and $\cF^n$ are permutation covariant, i.e., for any permutation $\pi$ on $n$ quantum registers, one has $\mathcal{E}^n \circ \pi_{A^n}=\pi_{B^n} \circ\mathcal{E}^n$ and $\mathcal{F}^n \circ \pi_{A^n}=\pi_{B^n} \circ\mathcal{F}^n$, where $\pi_{A^n}$ ($\pi_{B^n}$) denotes the channel that applies the permutation $\pi$ to the $A^n$ ($B^n$) registers. Then 
\begin{align}
P\left(\mathcal{E}^n,\mathcal{F}^n\right)\leq\sqrt{2}(n+1)^{\frac{\left(|A|^{2}-1\right)}{2}} P\left(\left(\mathcal{E}^n \otimes \mathcal{I}_{R^n R^{\prime}}\right) (\zeta_{A^n R^n R^{\prime}}) , \left(\mathcal{F}^n \otimes \mathcal{I}_{R^n R^{\prime}}\right) (\zeta_{A^n R^n R^{\prime}})\right)\,,
\end{align}  
where $\zeta_{A^n R^nR'}$ is the purification of the de Finetti state defined in Lemma \ref{lem:postselection_technique_symmetric_state}.
\end{prop}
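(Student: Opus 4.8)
The plan is to follow the proof of the diamond-distance version \cite[Theorem 1]{christandl2009postselection}, replacing every trace-norm estimate by a purified-distance estimate. The only real work is to show that the supremum defining the channel purified distance is attained by an input supported on the symmetric subspace, after which Lemma~\ref{lem:postselection_technique_symmetric_state} does the heavy lifting. Concretely, I would first establish that, by permutation covariance of $\cE^n$ and $\cF^n$, there exists a pure normalized state $\phi_{A^nR^n}$ with $R\cong A$ supported on $\mathrm{Sym}^n(A\otimes R)$ such that $P(\cE^n,\cF^n)=P\big((\cE^n\otimes\mathcal{I}_{R^n})(\phi),(\cF^n\otimes\mathcal{I}_{R^n})(\phi)\big)$.

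The crux is this reduction, since a naive symmetrization of an optimal input is convex in the wrong direction for a supremum. Instead I would take an optimal pure input $\psi_{A^nS}$ (with $|S|=|A^n|$, by the dimension reduction underlying the channel purified distance, Lemma~\ref{lem:bound_dim_R_purified_diamond_dist}) and twirl it while recording the permutation in an auxiliary register: set $\psi'_{A^nS\tilde P}=\frac{1}{n!}\sum_{\pi\in S_n}\big[(\pi_{A^n}\otimes I_S)\psi_{A^nS}(\pi_{A^n}\otimes I_S)^\dagger\big]\otimes\ketbra{\pi}{\pi}_{\tilde P}$ with $\{\ket{\pi}\}$ orthonormal. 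Applying $\cE^n$ (resp.\ $\cF^n$) to $A^n$ and using $\cE^n(\pi_{A^n}X\pi_{A^n}^\dagger)=\pi_{B^n}\cE^n(X)\pi_{B^n}^\dagger$, the two outputs are classical-quantum across $\tilde P$ with identical uniform classical weights and conditional states $(\pi_{B^n}\otimes I)(\cE^n\otimes\mathcal{I})(\psi)(\pi_{B^n}\otimes I)^\dagger$ resp.\ its $\cF^n$-analogue; by the classical-quantum fidelity formula (Lemma~\ref{lem:fidelity_cq_states}) and unitary invariance of the fidelity, $P$ of these outputs equals $P\big((\cE^n\otimes\mathcal{I})(\psi),(\cF^n\otimes\mathcal{I})(\psi)\big)=P(\cE^n,\cF^n)$. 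Since $\psi'_{A^n}=\frac1{n!}\sum_\pi\pi_{A^n}\psi_{A^n}\pi_{A^n}^\dagger$ is permutation invariant and the channel purified distance depends on the input only through its $A^n$-marginal (different purifications differ by a reference isometry commuting with the channels), I may replace $\psi'$ by the canonical purification of $\psi'_{A^n}$ onto $R^n$ with $R\cong A$, namely $(\sqrt{\psi'_{A^n}}\otimes I_{R^n})\sum_{i}\ket{i}_{A^n}\otimes\ket{i}_{R^n}$; as $\sqrt{\psi'_{A^n}}$ commutes with every $\pi_{A^n}$, this $\phi$ lies in $\mathrm{Sym}^n(A\otimes R)$, which proves the claim.

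Next I would invoke Lemma~\ref{lem:postselection_technique_symmetric_state}: $\phi_{A^nR^n}=g\,(\mathcal{I}_{A^nR^n}\otimes\cG)(\zeta_{A^nR^nR'})$ for a completely positive trace-non-increasing map $\cG_{R'\to\mathbb{C}}$, with $g:=g_{n,|A|}\le(n+1)^{|A|^2-1}$ and $\zeta$ the de Finetti purification. Pulling the CP map $\Lambda:=\mathcal{I}_{B^nR^n}\otimes\cG$ out of $\cE^n$ resp.\ $\cF^n$ gives $(\cE^n\otimes\mathcal{I}_{R^n})(\phi)=g\,\Lambda(\Phi_\zeta)$ and $(\cF^n\otimes\mathcal{I}_{R^n})(\phi)=g\,\Lambda(\Psi_\zeta)$, where $\Phi_\zeta:=(\cE^n\otimes\mathcal{I}_{R^nR'})(\zeta)$ and $\Psi_\zeta:=(\cF^n\otimes\mathcal{I}_{R^nR'})(\zeta)$ are normalized, so $\mathrm{Tr}\,\Lambda(\Phi_\zeta)=\mathrm{Tr}\,\Lambda(\Psi_\zeta)=1/g$. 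Writing $F_\zeta:=F(\Phi_\zeta,\Psi_\zeta)$ and $P_\zeta:=P(\Phi_\zeta,\Psi_\zeta)$, monotonicity of the generalized fidelity under $\Lambda$, together with $\bar{F}(\Lambda(\Phi_\zeta),\Lambda(\Psi_\zeta))=\|\sqrt{\Lambda(\Phi_\zeta)}\sqrt{\Lambda(\Psi_\zeta)}\|_1+(1-1/g)$ and $F(g\Lambda(\Phi_\zeta),g\Lambda(\Psi_\zeta))=g\|\sqrt{\Lambda(\Phi_\zeta)}\sqrt{\Lambda(\Psi_\zeta)}\|_1$, yields $F(g\Lambda(\Phi_\zeta),g\Lambda(\Psi_\zeta))\ge\max\{0,\,gF_\zeta-(g-1)\}$, hence $P(\cE^n,\cF^n)^2=1-F(g\Lambda(\Phi_\zeta),g\Lambda(\Psi_\zeta))^2\le 1-\max\{0,\,gF_\zeta-(g-1)\}^2$. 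A short case analysis finishes: if $gF_\zeta-(g-1)\le0$ then $1-F_\zeta\ge1/g$, so $P_\zeta^2\ge1/g$ and the bound is $\le1\le gP_\zeta^2$; otherwise $1-(gF_\zeta-(g-1))^2=g^2P_\zeta^2-2g(g-1)(1-F_\zeta)\le g^2P_\zeta^2-g(g-1)P_\zeta^2=gP_\zeta^2$, using $1-F_\zeta=1-\sqrt{1-P_\zeta^2}\ge P_\zeta^2/2$. Either way $P(\cE^n,\cF^n)\le\sqrt{g}\,P_\zeta\le(n+1)^{(|A|^2-1)/2}P_\zeta\le\sqrt2\,(n+1)^{(|A|^2-1)/2}P_\zeta$, which is the claimed bound.

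The step I expect to be the main obstacle is the first one: the direct twirl of the optimal input decreases the purified distance, so the permutation has to be recorded in an auxiliary system and then permutation covariance and the classical-quantum structure of the fidelity must be used to keep the distance unchanged before passing to the symmetric purification. Everything after Lemma~\ref{lem:postselection_technique_symmetric_state} is a routine fidelity computation; note in passing that the argument above actually gives the slightly stronger constant $\sqrt{g_{n,|A|}}$, from which the stated $\sqrt2\,(n+1)^{(|A|^2-1)/2}$ follows a fortiori.
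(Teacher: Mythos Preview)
Your proof is correct and follows essentially the same route as the paper: reduce to an input supported on the symmetric subspace via a twirl that records the permutation in an ancilla (using permutation covariance and the classical-quantum fidelity formula, exactly as the paper does), then apply Lemma~\ref{lem:postselection_technique_symmetric_state}. The only difference is in the final scaling step: the paper simply invokes Lemma~\ref{lem:purified_dist_scaling} to get the factor $\sqrt{2g_{n,|A|}}$ and then monotonicity of the purified distance under the trace-non-increasing map $\cG$, whereas you combine the monotonicity of the generalized fidelity with an explicit case analysis to obtain the sharper constant $\sqrt{g_{n,|A|}}$\,---\,a genuine (if minor) tightening that the paper does not claim.
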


\begin{proof}
We wish to show that 
\begin{align}
&P\left(\left(\mathcal{E}^{n} \otimes \mathcal{I}_{R^n}\right) (\rho_{A^n R^n}) , \left(\mathcal{F}^{n} \otimes \mathcal{I}_{R^n}\right) (\rho_{A^n R^n})\right) \\
&\leq \sqrt{2g_{n, |A|}}\cdot P\left(\left(\mathcal{E}^{n} \otimes \mathcal{I}_{R^n R^{\prime}}\right) (\zeta_{A^n R^n R^{\prime}}) , \left(\mathcal{F}^{n} \otimes \mathcal{I}_{R^n R^{\prime}}\right) (\zeta_{A^n R^n R^{\prime}})\right)\,,
\end{align}
where $R\cong A$. First we may assume that $\rho_{A^nR^n}$ has support on $\text{Sym}^n(A\otimes R)$. To see this, define the quantum state
\begin{align}
\bar{\rho}_{A^nR^nR'} = \frac{1}{n!}\sum_\pi (\pi_{A^n}\otimes \mathcal{I}_{R^n})(\rho_{A^nR^n})\otimes\ket{\pi}\bra{\pi}_{R'}\,,
\end{align}
where $\pi$ is a permutation operation on the $n$ quantum registers, the summation is over all permutations on $A^n$ and $\{\ket{\pi}\}$ is a basis of $R'$. We now have
\begin{align}
&F\left(\left(\mathcal{E}^{n} \otimes \mathcal{I}_{R^n}\right) (\rho_{A^n R^n}) , \left(\mathcal{F}^{n} \otimes \mathcal{I}_{R^n}\right) (\rho_{A^n R^n})\right)\\ 
&= \frac{1}{n!}\sum_{\pi} F\left(\left(\pi_{B^n}\circ\mathcal{E}^{n} \otimes \mathcal{I}_{R^n}\right) (\rho_{A^n R^n}) , \left(\pi_{B^n}\circ\mathcal{F}^{n} \otimes \mathcal{I}_{R^n}\right) (\rho_{A^n R^n})\right)\\
&= \frac{1}{n!}\sum_{\pi} F\left(\left(\mathcal{E}^{n} \otimes \mathcal{I}_{R^n}\right)(\pi_{A^n}\otimes \mathcal{I}_{R^n})(\rho_{A^n R^n}) , \left(\mathcal{F}^{n} \otimes \mathcal{I}_{R^n}\right) (\pi_{A^n}\otimes \mathcal{I}_{R^n})(\rho_{A^n R^n})\right)\\
&=F\left(\left(\mathcal{E}^{n} \otimes \mathcal{I}_{R^nR'}\right) (\bar{\rho}_{A^n R^n R'}) , \left(\mathcal{F}^{n} \otimes \mathcal{I}_{R^nR'}\right) (\bar{\rho}_{A^n R^n R'})\right)\,,
\end{align}
where the first equality holds due to the invariance of the fidelity under isometries, the second equality holds due to the permutation covariance of $\cE^n$ and $\cF^n$, the third equality is due to Lemma \ref{lem:fidelity_cq_states} on the fidelity of classical-quantum states and the linearity of the channels $\cE^n$ and $\cF^n$. 

The reduced state $\bar{\rho}_{A^n}$ is permutation invariant and therefore has a permutation invariant purification $\hat{\rho}_{A^n A'{^n}}$ with $A'\cong A$~\cite[Lemma 4.2.2]{renner2005security}. Since all purifications are equivalent up to isometries on the purifying quantum register, there exists a quantum channel $\cM_{A'^n\rightarrow R^nR'}$ such that $\bar{\rho}_{A^nR^nR'} = (\mathcal{I}_{A^n}\otimes \cM)(\hat{\rho}_{A^n A'{^n}})$. Switching to the purified distance, we have
\begin{align}
    &P\left(\left(\mathcal{E}^{n} \otimes \mathcal{I}_{R^n}\right)(\rho_{A^n R^n}) , \left(\mathcal{F}^{n} \otimes \mathcal{I}_{R^n}\right)(\rho_{A^n R^n})\right)\\
    &=P\left(\left(\mathcal{E}^{n} \otimes \mathcal{I}_{R^nR'}\right)\circ(\mathcal{I}_{A^n}\otimes \cM)(\hat{\rho}_{A^n A'{^n}}) , \left(\mathcal{F}^{n} \otimes \mathcal{I}_{R^nR'}\right)\circ (\mathcal{I}_{A^n}\otimes \cM)(\hat{\rho}_{A^n A'{^n}})\right) \\
    &\leq P\left(\left(\mathcal{E}^{n} \otimes \mathcal{I}_{A'^n}\right)(\hat{\rho}_{A^n A'{^n}}) , \left(\mathcal{F}^{n} \otimes \mathcal{I}_{A'^n}\right)(\hat{\rho}_{A^n A'{^n}})\right)\,,
\end{align}
where the inequality holds because the purified distance is non-increasing under quantum channels. Now, we have
\begin{align}
    &P\left(\left(\mathcal{E}^{n} \otimes \mathcal{I}_{A'^n}\right)(\hat{\rho}_{A^n A'{^n}}) , \left(\mathcal{F}^{n} \otimes \mathcal{I}_{A'^n}\right)(\hat{\rho}_{A^n A'{^n}})\right) \\
    &= P\left(\left(\mathcal{E}^{n} \otimes \mathcal{I}_{A'^n}\right)\circ g_{n,|A|}(\mathcal{I}_{A^nA'^n}\otimes\cG)(\zeta_{A^nA'^nR'}) , \left(\mathcal{F}^{n} \otimes \mathcal{I}_{A'^n}\right)\circ g_{n,|A|}(I_{A^nA'^n}\otimes\cG)(\zeta_{A^nA'^nR'})\right) \\
    &\leq \sqrt{2g_{n,|A|}}P\left(\left(\mathcal{E}^{n} \otimes \mathcal{I}_{A'^n}\right)\circ(\mathcal{I}_{A^nA'^n}\otimes\cG)(\zeta_{A^nA'^nR'}) , \left(\mathcal{F}^{n} \otimes \mathcal{I}_{A'^n}\right)\circ(\mathcal{I}_{A^nA'^n}\otimes\cG)(\zeta_{A^nA'^nR'})\right) \\
    &\leq \sqrt{2g_{n,|A|}}P\left(\left(\mathcal{E}^{n} \otimes \mathcal{I}_{A'^n}\right)(\zeta_{A^nA'^nR'}) , \left(\mathcal{F}^{n} \otimes \mathcal{I}_{A'^n}\right)(\zeta_{A^nA'^nR'})\right)\,,
\end{align}
where the first equality follows from Lemma \ref{lem:postselection_technique_symmetric_state}, the first inequality is due to Lemma \ref{lem:purified_dist_scaling} and the second inequality is because the purified distance cannot increase under completely positive trace non-increasing maps.
\end{proof}
\vspace{5mm}

\textbf{Proof of Lemma~\ref{lem:perm_inv_state_splitting}:} We construct $\bar{\cT}^{\omega}$ from $\cT^{\sigma}$ as follows. The resource state is $\omega = \sigma_{KL}\otimes\sigma'_{K'L'}$, where
\begin{align}
    \sigma'_{K'L'} = \frac{1}{n!}\sum_{\pi}\ket{\pi\pi}\bra{\pi\pi}_{K'L'}\,.
\end{align}
The encoders and decoders are modified from $\cE$ and $\cD$ respectively to $\cE' = \frac{1}{n!}\sum_{\pi} \cE\circ(\pi_{A^n}\otimes\pi_{A_1^n}) \otimes \bra{\pi}\cdot\ket{\pi}_{K'}$ and $\cD' = \frac{1}{n!}\sum_{\pi} (\pi^{-1}_{A^n}\otimes\pi_{B^n}^{-1})\circ\cD \otimes \bra{\pi}\cdot\ket{\pi}_{L'}$. The following properties are clear:
\begin{enumerate}
    \item $\bar{\cT}^{\omega} = \frac{1}{n!}\sum_{\pi} (\pi_{A^n}^{-1}\otimes\pi_{B^n}^{-1})\circ \cT^{\sigma}\circ (\pi_{A^n}\otimes\pi_{A_1^n})$
    \item $\bar{\cT}^{\omega} = \cD'\circ\cE'\circ\cP^\omega$ satisfies $\bar{\cT}^{\omega}\circ(\bar{\pi}_{A^n}\otimes\bar{\pi}_{A_1^n}) = (\bar{\pi}_{A^n}\otimes\bar{\pi}_{B^n})\circ \bar{\cT}^{\omega} $
    \item The communication cost $q$ associated with $\cT^\sigma$ is the same as that of $\bar{\cT}^\omega$. 
\end{enumerate}

Since $\rho_{A^nB^n}$ is permutation invariant, it has a permutation invariant purification $\rho_{A^nB^nR^n}$ i.e., $(\pi_{A^n}\otimes\pi_{B^n}\otimes\pi_{R^n})\rho_{A^nB^nR^n} = \rho_{A^nB^nR^n}$ for any permutation $\pi$~\cite[Lemma 4.2.2]{renner2005security}. We can, without loss of generality, consider this purification due to Remark~\ref{rmk:equivalence_of_purifications_state_splitting}.
We have that
\begin{align}
    &P\left(\rho_{A^nB^nR^n} , (\bar{\cT}^{\omega}\otimes \mathcal{I}_{R^n})(\rho_{A^nA_1^nR^n})\right) \nonumber \\
    &=P\left(\rho_{A^nB^nR^n} , \left(\frac{1}{n!}\sum\limits_{\pi}(\pi_{A^n}^{-1}\otimes\pi_{B^n}^{-1})\circ\cT^\sigma\circ(\pi_{A^n}\otimes\pi_{A_1^n})\otimes \mathcal{I}_{R^n}\right)(\rho_{A^nA_1^nR^n})\right) \\
    &= P\left(\rho_{A^nB^nR^n} , \left(\frac{1}{n!}\sum\limits_{\pi}(\pi_{A^n}^{-1}\otimes\pi_{B^n}^{-1})\circ\cT^\sigma \otimes \pi^{-1}_{R^n}\right)(\rho_{A^nA_1^nR^n})\right) \\ 
    &= P\left(\rho_{A^nB^nR^n} , \left(\frac{1}{n!}\sum\limits_{\pi}(\pi_{A^n}^{-1}\otimes\pi_{B^n}^{-1}\otimes\pi^{-1}_{R^n})\circ\cT^\sigma \otimes \cI_{R^n}\right)(\rho_{A^nA_1^nR^n})\right) \\ 
    &\leq \max_{\pi}P\left(\rho_{A^nB^nR^n} , \left((\pi_{A^n}^{-1}\otimes\pi_{B^n}^{-1}\otimes\pi^{-1}_{R^n})\circ\cT^\sigma \otimes \cI_{R^n}\right)(\rho_{A^nA_1^nR^n})\right) \\ 
    &= \max_{\pi}P\left((\pi_{A^n}\otimes\pi_{B^n}\otimes\pi_{R^n})(\rho_{A^nB^nR^n}) , \left(\cT^\sigma \otimes \cI_{R^n}\right)(\rho_{A^nA_1^nR^n})\right) \\ 
    &=P\left(\rho_{A^nB^nR^n} , \left(\cT^\sigma \otimes \mathcal{I}_{R^n}\right)(\rho_{A^nA_1^nR^n})\right)\,,
\end{align}
where the second equality uses the fact that the $\rho$ is permutation invariant, the third equality is because the state splitting protocol commutes with operations on the $R^n$ quantum registers, the first inequality uses the quasi-convexity of the purified distance (Lemma \ref{lem:quasi_convexity_purified_dist}), the fourth equality uses the invariance of the purified distance under isometries and the last equality uses that $\rho$ is permutation invariant. 
$\qedsymbol$

%%%%%%%%%%%%%%%%%%%%%%%%%%%%%%%%%%%%%%%%%%%%%%%%%%%%%%%%%%%%%%%%%%%%%%%%%%%%%%%%%%%%%%%%%%%

\subsection{Miscellaneous}

\begin{lemma}[{\cite[Lemma 48]{polyanskiythesis}}]\label{lem:polyanskiy_maximization}
Let $D$ be a compact metric space and $\{a_n\}$ be a moderate sequence. For any continuous functions $f:D\rightarrow\mathbb{R}$ and $g:D\rightarrow\mathbb{R}$ we have
\begin{align}
    \max _{x \in D}\Big\{f(x)+ a_ng(x)\Big\}= f^{*}+a_ng^{*}+o(a_n)
\end{align}
for $f^{*} =\max _{x \in D} f(x)$ and $g^{*}=\sup _{\left\{x: f(x)=f^{*}\right\}} g(x)$.
\end{lemma}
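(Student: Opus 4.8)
The plan is to establish matching first-order asymptotics for $M_n := \max_{x\in D}\{f(x)+a_ng(x)\}$; dividing $M_n - f^* - a_ng^*$ by $a_n$ (which is strictly positive for all large $n$, since $na_n^2\to\infty$ forces $a_n\neq 0$ eventually) then gives the claimed $o(a_n)$ statement. First I would record the structural fact that the top level set $L := \{x\in D : f(x)=f^*\}$ is nonempty — a continuous function on a compact space attains its maximum — and closed, hence a compact subset of $D$; therefore $g$ attains its supremum over $L$, i.e. $g^* = g(x_0)$ for some $x_0\in L$. This is also what makes the quantity $g^*$ well-behaved.

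The lower bound is then immediate: evaluating the objective at $x_0$ gives $M_n \geq f(x_0)+a_ng(x_0) = f^* + a_ng^*$, so $(M_n-f^*-a_ng^*)/a_n \geq 0$ for all large $n$, and in particular $\liminf_n (M_n-f^*-a_ng^*)/a_n \geq 0$.

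For the upper bound I would argue by contradiction and compactness. Suppose $\limsup_n (M_n - f^* - a_ng^*)/a_n > 0$; then there exist $\delta>0$ and a subsequence $(n_k)$ with $M_{n_k} > f^* + a_{n_k}(g^*+\delta)$. Choose maximizers $x_{n_k}\in\argmax_{x\in D}\{f(x)+a_{n_k}g(x)\}$ (they exist by compactness and continuity) and, passing to a further subsequence, assume $x_{n_k}\to x^\star\in D$. Since $g$ is continuous on the compact $D$ it is bounded, so $a_{n_k}g(x_{n_k})\to 0$, whence $M_{n_k} = f(x_{n_k}) + a_{n_k}g(x_{n_k}) \to f(x^\star)$ by continuity of $f$; combined with the lower bound $M_{n_k}\geq f^* + a_{n_k}g^*\to f^*$, this forces $f(x^\star)\geq f^*$ and hence $f(x^\star)=f^*$, i.e. $x^\star\in L$. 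On the other hand, $f(x_{n_k})\leq f^*$ together with the assumed inequality gives $f^* + a_{n_k}g(x_{n_k}) \geq M_{n_k} > f^* + a_{n_k}(g^*+\delta)$, so $g(x_{n_k}) > g^* + \delta$; letting $k\to\infty$ yields $g(x^\star)\geq g^*+\delta$, contradicting $x^\star\in L$ and $g^* = \sup_{x\in L}g(x)$. Hence $\limsup_n (M_n-f^*-a_ng^*)/a_n \leq 0$.

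Combining the two bounds gives $(M_n - f^* - a_ng^*)/a_n\to 0$, which is exactly $M_n = f^* + a_ng^* + o(a_n)$. The only step that requires any care is the middle of the upper bound argument — showing that every accumulation point $x^\star$ of the near-optimizers $x_{n_k}$ lies in the top level set $L$ of $f$ — and this rests precisely on $g$ being bounded (so the $a_ng$ term vanishes in the limit) together with the already-proven lower bound; no further property of moderate sequences beyond $a_n\to 0$ (and $a_n>0$ eventually) is used.
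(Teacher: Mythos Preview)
The paper does not supply its own proof of this lemma; it is merely stated with a citation to \cite[Lemma 48]{polyanskiythesis}. Your argument is correct and self-contained: the lower bound by evaluating at a maximizer $x_0\in L$ of $g$, and the upper bound via a compactness--subsequence contradiction showing any accumulation point of near-optimizers must lie in $L$, together give the result, and indeed only $a_n\to 0$ with eventual positivity is needed.
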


\begin{lemma}[Strong converse for quantum identity channel]\label{lem:strong_converse_ea_identity_channel}
Consider a quantum identity channel $\cI_{A\rightarrow B}$ where $A\cong B$. The average success probability of correctly transmitting $2^{R}$ classical messages over $\cI$ with entanglement-assistance is bounded as  
\begin{align}
    P_{\text{succ}}(\cI_{A\rightarrow B}, R) \leq 2^{-(R - 2\log|A|)}\, .
\end{align}
\end{lemma}
\begin{proof}
Let Alice send a uniformly random message $m$ from the set $M$ where $|M| = 2^R$ with average success probability $P_{\rm succ}$ using the channel $\cI_{A\rightarrow B}$. To do so, Alice and Bob share an entangled resource $\sigma_{KL}$. For each message $m\in M$, the encoder can be chosen without loss of generality as $\cE^m_{K\rightarrow A}$ taken from the set $\{\cE^m_{K\rightarrow A}\}$. Let $\rho_{A} =\frac{1}{|M|}\sum_{m\in M}\cE^m_{K\rightarrow A}(\sigma_K)$. By the meta-converse argument of~\cite{matthews2014finite} (see also~\cite[Proposition 1]{gupta2015multiplicativity}), there exists a two-element POVM $\{T_{RB}, I - T_{RB}\}$ such that for any purification $\psi_{RA}$ of $\rho_{A}$ and any state $\omega_{A}$, we have
\begin{align}
P_{\rm succ} &= \Tr{T_{RB}\cI_{A\rightarrow B}(\psi_{RA})} \\
\frac{1}{|M|}&= \Tr{T_{RB}\cI_{A\rightarrow B}(\psi_{R}\otimes\omega_{A})} 
\end{align}
Finally, define the state $\rho_p = p\ketbra{0}{0} + (1-p)\ketbra{1}{1}$ for any $p\in[0,1]$. We now have the following chain of inequalities
\begin{align}
    \log\frac{P_{\rm succ}}{\frac{1}{|M|}}&\leq D_{\max}\left(\rho_{P_{\rm succ}}\|\rho_{\frac{1}{|M|}}\right)\\
    &\leq D_{\max}\left(\psi_{RA}\|\psi_{R}\otimes\frac{I_{A}}{|A|}\right) \\
    &\leq 2\log|A|, 
\end{align}
where the first inequality is due to the definition of the max-relative entropy, the second inequality is due to the monotonicity of the max-relative entropy under the test $\{T_{RB}, I - T_{RB}\}$ and the choice of $\omega_{A} = \frac{I_{A}}{|A|}$, and the final inequality is due to~\cite[Lemma B.6]{berta2013quantum}.
\end{proof}
\vspace{5mm}

\textbf{Proof of Proposition~\ref{prop:achievability_channel_sim}:}
Let $\zeta_{A^n}$ be the de Finetti state as given in Lemma \ref{lem:postselection_technique_symmetric_state}. Let $U^{\otimes n}_{A\rightarrow A_1E}$ be the Stinespring dilation of the $n$-fold tensor product of the channel $\cN^{\otimes n}_{A\rightarrow A_1}$ and denote $U^{\otimes n}_{A\rightarrow A_1E}(\zeta_{A^n}) = \Phi_{A_1^nE^n}$. Since $\zeta_{A^n}$ is permutation invariant, we have that $\Phi_{A_1^nE^n}$ is also permutation invariant. For any $\varepsilon'\in (0,1]$ and $\delta\in (0, \varepsilon')$, let $\cT_{n}$ be $\{q_n, \varepsilon'\}$-one-shot state splitting protocol of $\Phi_{A_1^nE^n}$. By Lemma~\ref{lem:perm_inv_state_splitting}, we can assume that $\cT_n$ is permutation covariant. Let $\cT'_n = \text{Tr}_{E^n}\circ\cT_n\circ U^{\otimes n}$. We make the following observations about $\cT'_n$:
\begin{enumerate}
    \item $\cT'_n$ is permutation covariant
    \item For any purification $\zeta_{A^nR}$ of $\zeta_{A^n}$, we have $P\left(\left(\mathcal{N}_{A\rightarrow B}^{\otimes n}\otimes \mathcal{I}_{R}\right)(\zeta_{A^n R}), \left(\cT'_{n} \otimes \mathcal{I}_{R}\right)(\zeta_{A^n R })\right) \leq \varepsilon'$
\end{enumerate}
By choosing $\varepsilon' \leq \frac{\varepsilon}{\sqrt{2}}(n+1)^{\frac{1-|A|^2}{2}}$, we obtain that $P\left(\mathcal{N}_{A\rightarrow B}^{\otimes n},\cT'_n\right) \leq \varepsilon$ through the de Finetti reduction (Proposition \ref{prop:postselection}). $\cT'_n$ is therefore a $\{q_n, \varepsilon\}$-one-shot simulation protocol of $\cN^{\otimes n}_{A\rightarrow B}$.

It now remains to pick a purification of $\zeta_{A^n}$ to calculate $q_n$. By \cite[Corollary D.6]{berta2011quantum}, we may choose the extension $\zeta_{A^nR^n} = \sum\limits_{i\in I} p_i(\omega^i_{AR})^{\otimes n}$ where $I = \left\{1,2, \ldots,(n+1)^{2|A||R|-2}\right\}$ and $\omega^i_{AR}$ are pure states. Since this state is permutation invariant, we may choose a permutation invariant purification $\zeta_{A^nR^nR'}$ with $|R'|\leq (n+1)^{|A|^2-1}$. By the one-shot characterization of quantum state splitting (Theorem~\ref{thm:state_splitting_smooth_imax}) and by choosing $\delta = \frac{\varepsilon'}{2}$, we have
\begin{align}
    q^\star_{\varepsilon}(\cN^{\otimes n}_{A\rightarrow B}) &\leq q^\star_{\varepsilon'}(\Phi_{B^nE^n}) \\
    &\leq \frac{1}{2}I^{\frac{\varepsilon'}{2}}_{\max}(\dot{R^nR'};B^n)_{\Phi} + \log\frac{4}{\varepsilon'}\\
    &\leq \frac{1}{2}I^{\frac{\varepsilon'}{8}}_{\max}(R^nR';B^n)_{\Phi}+  \frac{1}{2}\log\frac{8+(\frac{\varepsilon'}{4})^2}{(\frac{\varepsilon'}{4})^2} + \log\frac{4}{\varepsilon'}\\
    &\leq \frac{1}{2}I^{\frac{\varepsilon'}{24}}_{\max}(B^n;R^nR')_{\Phi}+ \frac{1}{2}\log\left(\frac{2}{(\varepsilon'/24)^2}+2\right) + \frac{1}{2}\log\frac{8+(\frac{\varepsilon'}{4})^2}{(\frac{\varepsilon'}{4})^2} + \log\frac{4}{\varepsilon'}\, ,
\end{align}
where the second inequality holds due to the cost of one-shot state splitting (Theorem \ref{thm:state_splitting_smooth_imax}), the third inequality follows due to the bound on the partially smoothed max-information using the smoothed max-information (Lemma~\ref{lem:theorem4_anshu2020partially}) and the fourth inequality follows due to the approximate symmetry of the smooth max-information \cite[Lemma 2.11]{berta2013quantum}). We now have
\begin{align}
   \frac{1}{2}I^{\frac{\varepsilon'}{24}}_{\max}(B^n;R^nR')_{\Phi}
    &\leq \frac{1}{2}I^{\frac{\varepsilon'}{24}}_{\max}(B^n;R^n)_{\Phi}+ \log|R'| \\
    &\leq \max\limits_{i\in I} \frac{1}{2}I^{\frac{\varepsilon'}{24}}_{\max}(B^n;R^n)_{((\cN\otimes I)(\omega^i))^{\otimes n}}+ \frac{1}{2}\log|I| + \log|R'| \\
    &\leq  \max\limits_{\tau_{AR}} \frac{1}{2}I^{\frac{\varepsilon'}{24}}_{\max}(B^n;R^n)_{((\cN\otimes I)(\tau))^{\otimes n}}+ 2(|A|^2 - 1)\log(n+1)\,,
\end{align}
where the first inequality holds due to a bound on the increase of the max-information due to an additional quantum register (Lemma B.9 of~\cite{berta2011quantum}), the second inequality follows by the quasi-convexity of the smooth max-entropy (Lemma B.18 of~\cite{berta2011quantum}) and the third inequality holds due to the bound on the dimension of the purification of the de Finetti state in the de Finetti reduction (Proposition \ref{prop:postselection}) and the bound on $|I|$. We use again the approximate symmetry of the smoothed max-information and bound the the smoothed-max information by the partially smoothed max-information to obtain
\begin{align}
    \max\limits_{\tau_{AR}} \frac{1}{2}I^{\frac{\varepsilon'}{24}}_{\max}(B^n;R^n)_{((\cN\otimes I)(\tau))^{\otimes n}} \leq \max\limits_{\tau_{AR}} \frac{1}{2}I^{\frac{\varepsilon'}{72}}_{\max}(\dot{R^n};B^n)_{((\cN\otimes I)(\tau))^{\otimes n}}+ \frac{1}{2}\log\left(\frac{2}{(\varepsilon'/72)^2}+2\right)\,.
\end{align}
$\qedsymbol$

%%%%%%%%%%%%%%%%%%%%%%%%%%%%%%%%%%%%%%%%%%%%%%%%%%%%%%%%%%%%%%%%%%%%%%%%%%%%%%%%%%%%%%%%%%%

\section{Meta-converse channel coding}\label{app:meta_converse_comparison}

Here, we compare the results from Section \ref{sec:channel_coding} with results obtained by a meta-converse argument for a certain class of quantum channels. Our meta-converse below adapts the argument in~\cite{matthews2014finite} to the channel purified distance.

\begin{prop}[Meta-converse channel coding]\label{prop:meta_converse_bound}
Let $\cN_{A'\rightarrow B'}$ be a quantum channel, registers $A, B$ and $R$ all be isomorphic, and $\varepsilon\in(0,1]$. Suppose there exists some $\widetilde{\cN}_{A\rightarrow B}$ per Definition \ref{def:one_shot_channel_coding} such that $P(\widetilde{\cN},\cI_R) \leq \varepsilon$. Then, we have that
\begin{align}
r^\star_{\varepsilon}(\cN) \leq \max_{\rho_{A'R}}\min_{\sigma_{B'}} \frac{1}{2}D^{\varepsilon^2}_h(\left(\cN\otimes\cI_R\right)(\rho_{A'R})\| \sigma_{B'}\otimes\rho_R)\,.
\end{align}
\end{prop}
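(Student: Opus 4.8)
The plan is to run the quantum meta-converse of~\cite{matthews2014finite}, adapted to the purified distance: feed the code a maximally entangled input, and turn the fidelity-type coding criterion into a feasible test for the underlying binary hypothesis-testing problem by pulling the projector onto the maximally entangled state back through the adjoint of the decoder. Fix a $\{r,\varepsilon\}$-one-shot channel coding protocol for $\cN_{A'\to B'}$ as in Definition~\ref{def:one_shot_channel_coding}, with $r=\log|A|$, resource $\sigma_{KL}$, encoder $\cE_{AK\to A'}$, decoder $\cD_{B'L\to B}$, and $\widetilde{\cN}_{A\to B}=\cD\circ\cN\circ\cE\circ\cP^{\sigma}$ satisfying $P(\widetilde{\cN},\cI_{A\to B})\le\varepsilon$; it suffices to show $r\le\frac{1}{2}\max_{\rho_{A'R}}\min_{\sigma_{B'}}D^{\varepsilon^{2}}_h\!\left((\cN\otimes\cI_R)(\rho_{A'R})\,\|\,\sigma_{B'}\otimes\rho_R\right)$. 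Write $\ket{\Phi}_{AR}=2^{-r/2}\sum_{i=1}^{2^{r}}\ket{i}_A\ket{i}_R$ with $R\cong A$, $\Phi_{AR}=\ketbra{\Phi}{\Phi}_{AR}$, $\Phi_{BR}=(\cI_{A\to B}\otimes\cI_R)(\Phi_{AR})$, and $\omega_{BR}:=(\widetilde{\cN}\otimes\cI_R)(\Phi_{AR})$. Since $P(\widetilde{\cN},\cI_{A\to B})$ dominates the purified distance evaluated on $\Phi_{AR}$ and $\Phi_{BR}$ is pure, the coding criterion reads $\Tr{\Phi_{BR}\,\omega_{BR}}\ge 1-\varepsilon^{2}$.

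Next I would unfold the protocol. Set $\rho_{A'LR}:=(\cE_{AK\to A'}\otimes\cI_{LR})(\Phi_{AR}\otimes\sigma_{KL})$; since neither $\cE$ nor $\cN$ acts on $R$, we get $\rho_R=I_R/2^{r}$ and $\rho_{LR}=\sigma_L\otimes(I_R/2^{r})$ with $\sigma_L=\text{Tr}_K(\sigma_{KL})$, and I will take $LR$ as the reference for the channel input $\rho_{A'}$. Consider the test $\Lambda_{B'LR}:=(\cD^{\dagger}_{B'L\to B}\otimes\cI_R)(\Phi_{BR})$ on $B'LR$, where $\cD^{\dagger}$ is the Hilbert--Schmidt adjoint of $\cD$; because $\cD$ is trace preserving, $\cD^{\dagger}$ is completely positive and unital, so $0\preceq\Lambda_{B'LR}\preceq I$. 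The adjoint identity $\Tr{(\cD^{\dagger}\otimes\cI_R)(X)\,Y}=\Tr{X\,(\cD\otimes\cI_R)(Y)}$ together with $(\cD\otimes\cI_R)\!\left((\cN\otimes\cI_{LR})(\rho_{A'LR})\right)=\omega_{BR}$ gives feasibility,
\begin{align}
\Tr{\Lambda_{B'LR}\,(\cN\otimes\cI_{LR})(\rho_{A'LR})}=\Tr{\Phi_{BR}\,\omega_{BR}}\ge 1-\varepsilon^{2}\,,
\end{align}
while for any $\sigma_{B'}\in\cS(B')$ the same adjoint identity and the elementary fact $\Tr{\Phi_{BR}(\xi_B\otimes I_R)}=\frac{1}{2^{r}}\Tr{\xi_B}$ (for all $\xi_B$) give
\begin{align}
\Tr{\Lambda_{B'LR}\,(\sigma_{B'}\otimes\rho_{LR})}=\frac{1}{2^{r}}\Tr{\Phi_{BR}\left(\cD(\sigma_{B'}\otimes\sigma_L)\otimes I_R\right)}=\frac{1}{2^{r}}\cdot\frac{1}{2^{r}}=2^{-2r}\,.
\end{align}
Thus $\Lambda_{B'LR}$ is feasible for the $\varepsilon^{2}$-hypothesis test and attains value $2^{-2r}$, so $D^{\varepsilon^{2}}_h\!\left((\cN\otimes\cI_{LR})(\rho_{A'LR})\,\|\,\sigma_{B'}\otimes\rho_{LR}\right)\ge 2r$ for every $\sigma_{B'}$.

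Minimizing over $\sigma_{B'}$ and maximizing over channel inputs and reference registers\,---\,reducing, as usual, to a purification of $\rho_{A'}$ on a reference of dimension at most $|A'|$, which only increases the right-hand side by monotonicity of $D_h$ under partial trace and its isometric invariance\,---\,then yields $2r\le\max_{\rho_{A'R}}\min_{\sigma_{B'}}D^{\varepsilon^{2}}_h\!\left((\cN\otimes\cI_R)(\rho_{A'R})\,\|\,\sigma_{B'}\otimes\rho_R\right)$, which is the claim. I expect the step requiring most care to be the bookkeeping of the resource register $L$: consumed jointly by $\cE$ and $\cD$, it cannot be traced out before the decoding step, so it has to travel inside the reference system; what makes the final estimate clean is that $\Tr{\Lambda_{B'LR}(\sigma_{B'}\otimes\rho_{LR})}$ equals $2^{-2r}$ for \emph{every} $\sigma_{B'}$, a direct consequence of $\rho_R$ being maximally mixed.
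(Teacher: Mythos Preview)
Your proof is correct and follows essentially the same meta-converse strategy as the paper: feed the code a maximally entangled input, convert the purified-distance criterion into the feasibility condition $\Tr{\Phi_{BR}\,\omega_{BR}}\ge 1-\varepsilon^2$, and compare against a product state to extract the $2^{-2r}$ factor. The only stylistic difference is that the paper places the test $\Phi_{BR}$ at the \emph{output} of the full protocol and then invokes data processing for $D_h^{\varepsilon^2}$ to peel off the decoder, whereas you make that step explicit by pulling $\Phi_{BR}$ back through $\cD^{\dagger}$ to obtain a feasible $\Lambda_{B'LR}$ on the pre-decoder state; these are two descriptions of the same inequality. You are also more careful than the paper in tracking the resource register $L$ inside the reference and in justifying the final reduction of the reference to a purifying system of dimension $|A'|$ via purification plus isometric invariance of $D_h$---a step the paper's phrase ``removing the encoding and decoding operations'' leaves implicit.
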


\begin{proof}
We are given that $P(\tilde{\cN} , \cI_R) \leq \varepsilon$. Let $\ket{\phi} = \frac{1}{\sqrt{|R|}}\sum_{i=1}^{|R|}\ket{i}\ket{i}$ be the $|R|$-dimensional maximally entangled state on the appropriate quantum registers. It holds that 
\begin{align}
&P\left(\left(\tilde{\cN}\otimes\cI_R\right)(\ketbra{\phi}{\phi}_{AR}), \ketbra{\phi}{\phi}_{BR}\right)\leq \varepsilon\\
\implies &F\left(\left(\tilde{\cN}\otimes\cI_R\right)(\ketbra{\phi}{\phi}_{AR}), \ketbra{\phi}{\phi}_{BR}\right) \geq \sqrt{1-\varepsilon^2}\\
\implies &\Tr{\ketbra{\phi}{\phi}_{BR}\ \left(\tilde{\cN}\otimes\cI_R\right)(\ketbra{\phi}{\phi}_{AR})} \geq 1-\varepsilon^2\,.
\end{align}
We also have for $\sigma_B\in\cS(B)$ that
\begin{align}
    \Tr{\ketbra{\phi}{\phi}_{BR}\left(\sigma_B\otimes\frac{I_R}{|R|}\right)} = \frac{1}{|R|^2}\,.
\end{align}
Let $\cM_{A'\rightarrow B'}$ be a channel that outputs a fixed state $\sigma_{B'}$ for any input and $\tilde{\cM}_{A\rightarrow B}$ be defined with the encoding operation, decoding operation and resource state chosen to be the same as those of $\tilde{\cN}$. It then holds that
\begin{align}
    &D^{\varepsilon^2}_h\left(\left(\tilde{\cN}\otimes\cI_R\right)(\ketbra{\phi}{\phi})\middle\|\left(\tilde{\cM}\otimes\cI_R\right)(\ketbra{\phi}{\phi})\right)\nonumber \\ 
    &= \max_{\substack{0\leq T \leq I^{d^2}\\ \Tr{T\left(\tilde{\cN}\otimes\cI_R\right)(\ketbra{\phi}{\phi})}\geq 1-\varepsilon^2}} -\log \Tr{T\left(\tilde{\cM}\otimes\cI_R\right)(\ketbra{\phi}{\phi})} \\
    &\geq -\log \Tr{\ketbra{\phi}{\phi}\left(\tilde{\cM}\otimes\cI_R\right)(\ketbra{\phi}{\phi})} \\
    &= 2\log |R|\,.
\end{align}
Let the set of all channels $\cM_{A'\rightarrow B'}$ with constant output be denoted by $\cM_{\text{const}}$. Then we have
\begin{align}
    2\log |R| &\leq \min\limits_{\cM\in\cM_{\text{const}}}D^{\varepsilon^2}_h\left(\left(\tilde{\cN}\otimes\cI_R\right)(\ketbra{\phi}{\phi})\middle\|\left(\tilde{\cM}\otimes\cI_R\right)(\ketbra{\phi}{\phi})\right) \\
    &\leq \max_{\omega_{AR}}\min\limits_{\cM\in\cM_{\text{const}}}D^{\varepsilon^2}_h\left(\left(\tilde{\cN}\otimes\cI_R\right)(\omega_{AR})\middle\|\left(\tilde{\cM}\otimes\cI_R\right)(\omega_{AR})\right) \\
    &\leq \max_{\rho_{A'R}}\min\limits_{\cM\in\cM_{\text{const}}}D^{\varepsilon^2}_h\left(\left(\cN\otimes\cI_R\right)(\rho_{A'R})\middle\|\left(\cM\otimes\cI_R\right)(\rho_{A'R})\right) \\
    &\leq \max_{\rho_{A'R}}\min_{\sigma_{B'}}D^{\varepsilon^2}_h\left(\left(\cN\otimes\cI_R\right)(\rho_{A'R})\middle\|\sigma_{B'}\otimes\rho_R\right)\,,
\end{align}
where the third inequality follows by data processing inequalities due to removing the encoding and decoding operations and the last inequality holds since $\cM\in\cM_{\text{const}}$.
\end{proof}
\vspace{5pt}

For quantum channels that are covariant with respect to unitaries on the input quantum register (see~\cite{holevo2002remarks} or \cite{datta2016second} for a more precise definition), the maximization over input states is achieved by the $n$-fold tensor product of maximally entangled states~\cite[Eq (4.81)]{datta2016second}. For such covariant channels, we have the following
\begin{align}
    r^\star_{\varepsilon}(\cN^{\otimes n}) &\leq \min_{\sigma_{{B'}^n}}\frac{1}{2}D^{\varepsilon^2}_h\left(\left(\left(\cN\otimes\cI_R\right)(\ketbra{\phi}{\phi}_{A'R})\right)^{\otimes n}\middle\|\sigma_{{B'}^n}\otimes \left(\frac{I_{R}}{d}\right)^{\otimes n}\right) \\
    &\leq \frac{1}{2}D^{\varepsilon^2}_h\left(\left(\left(\cN\otimes\cI_R\right)(\ketbra{\phi}{\phi}_{A'R})\right)^{\otimes n}\middle\| (\cN(\rho_{A'}))^{\otimes n}\otimes\left(\frac{I_{R}}{d}\right)^{\otimes n}\right)\,.
\end{align}
For the particular case where the error is chosen to be $1-\varepsilon_n = 1-e^{-na_n^2}$ and using the AEP for the hypothesis testing relative entropy (Proposition \ref{prop:chubbs_dh_expansion}), we get that for $\eta>0$, there exists $n^\star\in\mathbb{N}$ such that for $n\geq n^\star$ we have
\begin{align}
    \frac{1}{n}r^\star_{1-\varepsilon}(\cN^{\otimes n}) &\leq  \frac{1}{2n} D^{(1 - 2\varepsilon_n +\varepsilon_n^2)}_h\left(\left(\left(\cN\otimes\cI_R\right)(\ketbra{\phi}{\phi}_{A'R})\right)^{\otimes n}\middle\| (\cN(\rho_{A'}))^{\otimes n}\otimes\left(\frac{I_{R}}{d}\right)^{\otimes n}\right) \\
    &\leq \frac{1}{2}\left(I(R:B)_{(\cN\otimes I)(\ketbra{\phi}{\phi})} + \sqrt{2V(R:B)_{(\cN\otimes I)(\ketbra{\phi}{\phi})}}a_n + \eta a_n\right) \\
    &\leq C(\cN) + \frac{a_n}{\sqrt{2}}\sqrt{V_{\max}(B:R)_{\cN}} + o(a_n)\,,
\end{align}
which matches the converse statement for channel coding in the moderate deviation regime (Proposition \ref{prop:converse_result_channel_coding_mod_dev}).

%%%%%%%%%%%%%%%%%%%%%%%%%%%%%%%%%%%%%%%%%%%%%%%%%%%%%%%%%%%%%%%%%%%%%%%%%%%%%%%%%%%%%%%%%%%

\end{document}